\documentclass[11pt]{amsart}
\usepackage{fullpage}
\usepackage[foot]{amsaddr}
\usepackage{microtype}
\usepackage[OT1]{fontenc}
\usepackage{eulervm}
\usepackage[tt=false]{libertine} 
\usepackage{bbold}
\usepackage{amsmath}
\usepackage{amssymb}
\usepackage{amsthm}
\usepackage{thmtools} 
\usepackage[linesnumbered,boxed,ruled,vlined]{algorithm2e}
\usepackage{algpseudocode}
\usepackage{enumitem}
\usepackage{xifthen}
\usepackage{xspace}

\usepackage[margin=1cm]{caption} 
\usepackage{subfig}

\usepackage[thinlines]{easytable}

\usepackage[bookmarks=true,hypertexnames=false,pagebackref]{hyperref}
\hypersetup{colorlinks=true, citecolor=blue, linkcolor=red, urlcolor=blue}

\usepackage{pgfplots}
\pgfplotsset{compat=1.16}
\usepackage{tikz}
\usetikzlibrary{arrows,arrows.meta,backgrounds,calc,fit,decorations.pathreplacing,decorations.markings,shapes.geometric}

\tikzstyle{internal} = [draw, fill, shape=circle]
\tikzstyle{external} = [shape=circle]
\tikzstyle{square}   = [draw, fill, rectangle]
\tikzstyle{triangle} = [draw, fill, regular polygon, regular polygon sides=3, inner sep=3pt]
\tikzstyle{pentagon} = [draw, fill, regular polygon, regular polygon sides=5, inner sep=2pt, minimum size=14pt]
\tikzset{every fit/.append style=text badly centered}

\usetikzlibrary{positioning,chains,fit,shapes,calc}
\usetikzlibrary{trees}
\usetikzlibrary{decorations.pathreplacing}
\usetikzlibrary{decorations.pathmorphing}
\usetikzlibrary{decorations.markings}
\tikzset{>=latex} 

\usepackage{ifthen}

\usepackage{cleveref}

\usepackage[textsize=tiny]{todonotes}

\usepackage[normalem]{ulem}

\usepackage{mleftright}

\usepackage{cool}
\Style{DSymb={\mathrm d},DShorten=true,IntegrateDifferentialDSymb=\mathrm{d}}

\newcommand{\tp}[1]{{\left( #1 \right)}}

\renewcommand{\Pr}{\mathop{\mathrm{Pr}}\nolimits}

\def\*#1{\mathbf{#1}}
\def\+#1{\mathcal{#1}}
\def\-#1{\mathrm{#1}}
\def\=#1{\mathbb{#1}}
\def\^#1{\mathbb{#1}}

\newcommand{\norm}[2]{\ensuremath{\Vert #2 \Vert_{#1}}}
\newcommand{\abs}[1]{\ensuremath{\left\vert#1\right\vert}}

\newcommand{\defeq}{:=}

\renewcommand{\P}{\Pr}

\newcommand{\NP}{\textnormal{\textbf{NP}}}
\newcommand{\BIS}{\#\textnormal{\textbf{BIS}}}

\newcommand{\DTV}[2]{\-D_{\mathrm{TV}}\left({#1},{#2}\right)}

\newcommand{\nbd}[2]{N_{#1}^{#2}}

\newtheorem{theorem}{Theorem}

\newtheorem{lemma}[theorem]{Lemma}
\newtheorem{claim}[theorem]{Claim}
\newtheorem{observation}[theorem]{Observation}
\newtheorem{proposition}[theorem]{Proposition}

\theoremstyle{definition}
\newtheorem{condition}[theorem]{Condition}
\newtheorem{definition}[theorem]{Definition}

\theoremstyle{remark}
\newtheorem*{remark}{Remark}

\crefname{theorem}{Theorem}{Theorems}
\crefname{observation}{Observation}{Observations}
\crefname{claim}{Claim}{Claims}
\crefname{condition}{Condition}{Conditions}
\crefname{algorithm}{Algorithm}{Algorithms}
\crefname{property}{Property}{Properties}
\crefname{example}{Example}{Examples}
\crefname{fact}{Fact}{Facts}
\crefname{lemma}{Lemma}{Lemmas}
\crefname{corollary}{Corollary}{Corollaries}
\crefname{definition}{Definition}{Definitions}
\crefname{remark}{Remark}{Remarks}
\crefname{proposition}{Proposition}{Propositions}
\crefname{equation}{equation}{equations}
\crefname{enumi}{Case}{Case}
\creflabelformat{enumi}{(#2#1#3)}

\makeatletter
\def\prob#1#2#3{\goodbreak\begin{list}{}{\labelwidth\z@ \itemindent-\leftmargin
      \itemsep\z@  \topsep6\p@\@plus6\p@
      \let\makelabel\descriptionlabel}
  \item[\textbf{Name}]#1
  \item[\textbf{Instance}]#2
  \item[\textbf{Output}]#3
  \end{list}}
\makeatother

\makeatletter
\providecommand\@dotsep{5}
\def\listtodoname{Todo list}
\def\listoftodos{\@starttoc{tdo}\listtodoname}
\makeatother

\usepackage{nicefrac,comment}

\newboolean{doubleblind}
\setboolean{doubleblind}{false}

\title{Rapid mixing in positively weighted restricted Boltzmann machines}

\ifdoubleblind
\author{Author(s)}
\else

  \author{Weiming Feng, Heng Guo, Minji Yang}

\address[Heng Guo]{School of Informatics, University of Edinburgh, Informatics Forum, UK}
\address[Weiming Feng and Minji Yang]{School of Computing and Data Science, The University of Hong Kong, HK}

\email{wfeng@hku.hk}
\email{hguo@inf.ed.ac.uk}
\email{ymjessen02@connect.hku.hk}

\fi

\date{}
    
\begin{document}

\begin{abstract}
  \sloppy
  We show polylogarithmic mixing time bounds for the alternating-scan sampler for positively weighted restricted Boltzmann machines.
  This is done via analysing the same chain and the Glauber dynamics for ferromagnetic two-spin systems,
  where we obtain new mixing time bounds up to the critical thresholds.
\end{abstract}

\allowdisplaybreaks
\maketitle

\section{Introduction}

The restricted Boltzmann machine (RBM) \cite{AHS85,Smo86} is a popular model to represent many different types of data \cite{HOT06,SMH07,MH10}.
Its simple two-layer structure also makes it useful as a basic building block for deep belief networks \cite{HOT06}.
The development of RBMs is recognised as a main contribution for Geoffrey E.\ Hinton's Nobel prize in physics in 2024 \cite{nobel24}.
As it would distract from the main focus of our paper, we do not attempt to give a comprehensive overview of RBMs here.

The training of RBMs relies on estimating the gradient,
which is often done via the MCMC method.
One of the most popular Markov chains here is the alternating-scan sampler \cite{Hin02}, 
which updates the two layers of the variables alternately conditioned on the other layer.
The mixing time of this sampler (namely the time it takes to converge to its stationary distribution) is very important in learning RBMs, 
as emphasised in Hinton's practical guide \cite{Hin12}.

Despite RBMs' popularity, rigorous mixing time bounds of the alternating-scan sampler are rather sparse.
The only available results require either bounded interaction strengths \cite{Tos16} or special structures \cite{KQWW26}.
The lack of good bounds is perhaps for a good reason.
Via an equivalent formulation of anti-ferromagnetic two-spin systems, 
when parameters cross the critical threshold,
the mixing time in negatively weighted RBMs is exponentially large,
and in fact, in this case sampling and approximate counting are \NP-hard \cite{Sly10,SS14,GSV16}.
On the other hand, the contrastive divergence method \cite{Hin12} in practise typically runs the alternating-scan sampler for a constant number of rounds.
In this paper, we show a polylogarithmic mixing time bound for the alternating-scan sampler on positively weighted RBMs,
bypassing the bounded interaction strengths requirement and complementing the hardness for the negative weight case.

Next we introduce our main result more precisely.
A \emph{Boltzmann machine} \cite{AHS85} with a set $V$ of variables of size $n$
is specified by an $n$-by-$n$ symmetric interaction matrix $W=\{w_{uv}\}_{u,v\in V}$ and variable weights $\theta=\{\theta_v\}_{v\in V}$.
A configuration $\sigma:V\rightarrow \{0,1\}$ is associated with the Hamiltonian or the energy function:
\begin{align}\label{eqn:BM-Hamiltonian}
  E(\sigma) \defeq \sum_{u,v \in V} w_{uv} \sigma_u\sigma_v + \sum_{v\in V}\theta_v \sigma_v.
\end{align}
Without loss of generality we may assume that the diagonal entries of $W$ are all $0$.
The Gibbs distribution $\mu$ is defined as $\mu(\sigma) = \frac{e^{E(\sigma)}}{Z}$,
where $Z\defeq\sum_{\sigma\in\{0,1\}^V} e^{E(\sigma)}$ is the normalizing constant, namely the partition function.

A \emph{restricted Boltzmann machine} (RBM) \cite{Smo86} is one where the variables can be partitioned into two parts $V=V_0\uplus V_1$ (the visible and the hidden layers) such that $w_{uv}=0$ whenever $u\in V_0$ and $v\in V_1$.
We may also view an RBM as over a bipartite graph where the edge set $E$ represents nonzero interaction weights.

A popular algorithm to sample from RBMs is the aforementioned alternating-scan sampler \cite{Hin12},
which is a systematic scan variant of the Gibbs sampler where we scan the two partitions in order.
Starting from an arbitrary configuration $X \in \{0,1\}^n$.
For any $t \geq 1$, in the $t$-th step, 
it updates the current configuration $X$ as follows: 
pick the part $V_i$ with index $i = (t \mod 2)$ and resample the configuration on $V_i$ conditional on the current configuration of the other part $V_{1-i}$.
More formally, at step $t$, 
\begin{enumerate}
    \item pick the part $V_i$ with index $i = (t \mod 2)$;
    \item resample $X_{V_i} \sim \mu_{V_i}^{X(V_{1-i})}$, where $\mu_{V_i}^{X(V_{1-i})}$ is the marginal distribution of all variables in the part $V_i$ induced by $\mu$ conditioned on the configuration $X(V_{1-i})$ on the other part $V_{1-i}$;
\end{enumerate}
The mixing time of a Markov chain is defined as the number of steps until the configuration $X$ is close to the stationary distribution $\mu$ in total variation distance. 
Formally, let $P$ be the transition matrix of the Markov chain. 
Then, the mixing time is defined as
\begin{align}\label{eq:mixing-time}
    \forall \epsilon > 0, \quad t_{\textnormal{mix}}(\epsilon) = \max_{X_0 \in \{0,1\}^V}\min\left\{t \geq 0: \DTV{{P}^t(X_0,\cdot)}{\mu} < \epsilon\right\},
\end{align}
where $\DTV{\nu}{\mu}= \frac{1}{2}\sum_{x \in \{0,1\}^V} \abs{\nu(x) - \mu(x)}$ denotes the total variation distance and $X_0$ is called the starting configuration or state.

Now we can state our main result.

\begin{theorem}\label{thm:alternating-scan-mixing-BM}
  Let $c > 0$ be an arbitrary constant. 
  For any restricted Boltzmann machine $(W,\theta)$ with $n$ variables, 
  if for all $u,v$, either $w_{uv} \geq c$ or $w_{uv}=0$, and for all $v \in V$, $\theta_v \ge 0$, 
  then the alternating-scan sampler over the Gibbs distribution $\mu$ of the RBM has mixing time at most $O((\log n)^{C} \log \frac{1}{\epsilon})$, where $C = C(c) > 0$ is a constant depending on $c$.
\end{theorem}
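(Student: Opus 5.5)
We will prove the theorem by translating the RBM into a ferromagnetic two-spin system on a bipartite graph, and then combining spectral independence with a field-dependent analysis.

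\textbf{Reduction.} With $\sigma\in\{0,1\}^V$, the weight of an edge $uv$ is $e^{w_{uv}}$ when $\sigma_u=\sigma_v=1$ and $1$ otherwise. This is a two-spin system with edge interaction $\beta=1$, $\gamma=e^{w_{uv}}\ge e^{c}>1$, so $\beta\gamma>1$ and the system is ferromagnetic. Each vertex carries an external field $\lambda_v=e^{\theta_v}\ge 1$. Ferromagnetic two-spin systems of this type are in the regime where the known monotonicity applies: the all-ones configuration is the maximal state and the Gibbs measure is monotone (FKG). The alternating-scan sampler is then a monotone systematic-scan chain with respect to the coordinatewise order on $V_1$ and $V_0$.

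\textbf{Step 1: Glauber dynamics mixes in $O(n\log n)$ steps.} The key point is that $\gamma\ge e^c$ and $\lambda\ge 1$ put every vertex in a uniqueness-type regime, even with unbounded degrees.

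Consider a vertex of large degree $d$. It sees a field of order $\lambda\gamma^{k}$ from its $k$ neighbours in state $1$. Since $\lambda\ge1$, neighbours are biased toward $1$, so once $d\gtrsim 1/c$ the vertex is pinned to $1$ with probability $1-e^{-\Omega(cd)}$. This pinning makes influences decay.

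We would establish spectral independence via the correlation-decay / zero-freeness analysis of the ferromagnetic two-spin recursion. The tree recursion $R_v=\lambda_v\prod_i \frac{\gamma R_i+1}{R_i+1}$ is contracting under a suitable potential; the relevant threshold is the critical one, which here is satisfied because $\lambda\ge1$ and $\gamma>1$. The output is $O_c(1)$ spectral independence with bounded marginals. Marginal boundedness fails only in the direction of being pinned near $1$, which is harmless.

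Entropic independence plus field dynamics / the Chen–Liu–Vigoda framework would then give approximate tensorisation of entropy with an $O_c(1)$ loss. Any remaining degree dependence, if it appears, would be absorbed into a polylog factor.

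\textbf{Step 2: transfer to the alternating scan.} For a bipartite measure, block factorisation of entropy with the two blocks $V_0,V_1$ follows from approximate tensorisation. Using the comparison that block dynamics is at least as good as Glauber up to the product of the tensorisation constant and the maximal block-conditional effect, we would show
\[
\mathrm{Ent}_\mu(f)\le K\bigl(\mu[\mathrm{Ent}_{V_0}f]+\mu[\mathrm{Ent}_{V_1}f]\bigr),\qquad K=(\log n)^{C}.
\]
The polylog factor comes from a degree-based coarse-graining. We would split each side into high-degree vertices (degree above $C'\log n/c$), which are frozen to $1$ with probability $1-n^{-10}$, and low-degree vertices, where the Glauber constants carry at most a polylog degree loss. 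An alternating scan that is bipartite and entropy-contracting with rate $1/K$ then mixes in $O(K\log(n\log\frac1\epsilon))$ steps. To improve $\log n$ to $\log\frac1\epsilon$, we would use monotone coupling from the all-ones and all-zeros starts together with censoring. Grand coupling and monotonicity give a mixing bound of the form $t(\epsilon)\le t(1/4)\log_2\frac1\epsilon$.

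\textbf{Main obstacle.} The hardest step is Step 2: getting polylog rather than polynomial dependence when the two-block chain is compared with single-site Glauber, since the naive comparison loses a factor of $n$. We would first try the ``spectral independence implies block factorisation for arbitrary partitions'' result (Chen–Liu–Vigoda style, with $\ell$-uniform block factorisation). The concern is that this yields only $n^{O(\eta)}$ loss; the polylog bound would then require proving the influence bound on low-degree vertices via the pinning argument above.
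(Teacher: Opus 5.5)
Your proposal has a genuine gap. Both steps are asserted rather than proved, and the tools you name do not deliver them here, where degrees and the weights $w_{uv},\theta_v$ are unbounded.

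\textbf{Step 1.} Spectral independence does hold (this is \Cref{thm:all-to-one-influence}), but on its own it only gives a Glauber gap of $n^{-O(\eta)}$. The Chen--Liu--Vigoda entropy-factorisation theorem needs bounded degree and a uniform lower bound $b$ on conditional marginals, and its constant blows up as $b\to 0$. Here $b$ is exponentially small in the degree: a vertex with $k$ neighbours at $1$ has $\Pr[\sigma_v=0]\le e^{-ck}$. This is the source of the $e^{e^{O(\Delta)}}$ dependence recorded in the paper. So ``degeneracy only towards $1$ is harmless'' restates the unbounded-degree difficulty rather than resolving it. Even restricted to degrees $O(\log n/c)$, this route gives a superpolynomial constant, not a polylog loss.

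Field dynamics does not help by itself. It only reduces to Glauber dynamics on a tilted system with a smaller field but the same unbounded degrees and interactions. That base case is exactly what the paper must prove by its new method (\Cref{thm:glauber-mixing-2}). Entropic independence for this class is also not known; the paper lists it as open.

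\textbf{Step 2.} You correctly flag this as the crux, but you give no mechanism for it. Given $V_{1-i}$, the law on $V_i$ is a product measure. For product measures, $\sum_{v\in V_i}\mu[\mathrm{Ent}_v f]$ can exceed $\mathrm{Ent}_{V_i} f$ by a factor $|V_i|$ (take $f$ close to $1+\epsilon\cdot\text{parity}$). So passing from single-site to $\{V_0,V_1\}$ factorisation costs $\Theta(n)$. This is why \Cref{prop:relaxation-time-alternating-scan} yields only $O(n)$ relaxation for the scan, even from an optimal $\Theta(1/n)$ Glauber gap. Freezing the high-degree vertices still leaves $\Theta(n)$ low-degree vertices per side, so the ``coarse-graining'' does not remove that factor.

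\textbf{Starting state.} Any global contraction argument from a point start pays $\log\log(1/\mu(\sigma))$ or $\log(1/\mu(\sigma))$, which is unbounded in $n$ here. Your bound $O(K\log(n\log\frac1\epsilon))$ silently assumes otherwise.

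\textbf{What the paper does instead.} The missing idea is that the paper never proves a global factorisation for the two-block chain. \Cref{thm:mixing} is a Mossel--Sly type local-to-global argument for monotone systematic-scan block dynamics. It uses the monotone grand coupling from $\boldsymbol{1}$ and $\boldsymbol{0}$ and censoring outside a region $S_v$. The maximal per-site disagreement probability then contracts by a constant factor (up to $n^{-2}$) every $T_{\mathrm{local}}\log(20|R_v|)$ steps, under three conditions:
\begin{enumerate}
\item Aggregate strong spatial mixing $\sum_{u\in\partial S_v}a_u\le\frac1{20}$ holds, but only over \emph{typical} boundary conditions. By \Cref{def:good-boundary-configuration}, a vertex with more than $(\log n)^3/3$ boundary neighbours must see at least a $1/\log n$ fraction of them at $1$; this is the rigorous form of your freezing heuristic.
\item Both coupled chains reach typical boundary conditions with high probability after burn-in. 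For the alternating scan this takes two steps, by conditional independence within a layer and a Chernoff bound.
\item The censored chain mixes on $S_v$.
\end{enumerate}
The region $S_v$ comes from a DFS on the self-avoiding-walk tree (\Cref{lemma:saw-tree-construction-nonsimple}) and has $\mathrm{polylog}(n)$ size. So the lossy Glauber-to-scan comparison and the $N^{-O(1)}$ spectral-independence gap are only ever applied to a system of polylogarithmic size. The warm-start argument of \Cref{lemma:warm-start-configuration-general} controls $\pi_{\min}$ there. This gives $T_{\mathrm{local}}=\mathrm{polylog}(n)$ and hence the theorem.

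The typical-case ASSM is proved on the SAW tree with the Guo--Lu potential. \Cref{lemma:monotone-potential} lets the worst-case pinnings for different boundary vertices be aligned level by level. It uses $\lambda<\sqrt{\gamma}$, which holds here since the field on spin $0$ is $e^{-\theta_v}\le 1<e^{c/2}$. None of this structure appears in your plan: localisation to polylog-size regions, typical-case spatial mixing, and burn-in to good boundaries. Without it, neither of your two steps goes through.
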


We note that the lower bound $c>0$ is to avoid cases where, for example, some $w_{uv}=1/n$.
Certain technical conditions we rely on would break in such a case.
We believe that this is an artifact of our proof, and the theorem should hold with $c=0$.
On the other hand, the main strength of \Cref{thm:alternating-scan-mixing-BM} is that we do not need to assume any upper bound on $w_{uv}$'s.

Previously, Tosh \cite{Tos16} showed that the alternating-scan sampler mixes in logarithmic time when $\norm{1}{W}\norm{1}{W^{\texttt{t}}}< 4$ via a one-step coupling,
where $\norm{1}{\cdot}$ denotes the $1$-norm of matrices.
Kwon, Qin, Wang, and Wei \cite{KQWW26} considered the setting where $W_{uv}=c/n$ for any $u\in V_0$ and $v\in V_1$ for some $c$.
They obtained logarithmic mixing time as long as $c>-5.87$ via a drift and contraction coupling technique.
In contrast, \Cref{thm:alternating-scan-mixing-BM} does not have any upper bound on the interaction strength or assumption on the structure, and the proof technique is a significant departure from these two results.

Alternatively, rigorous efficient algorithms for positively weighted RBMs can be obtained via an equivalent formulation of ferromagnetic two-spin systems \cite{GoldbergJP03,LiuLZ14,GuoL18,GLL20}.
\Cref{thm:alternating-scan-mixing-BM} is also proved via this connection, so we will explain it next.

\subsection{Ferromagnetic two-spin systems}

Boltzmann machines are a special case of the so-called two-spin systems.
Let $G=(V,E)$ be a graph. 
For each edge $e \in E$, let $\beta_e,\gamma_e > 0$ be the edge activity at $e$.
For each vertex $v \in V$, let $\lambda_v \leq 1$ be the external field at $v$.
A two-spin system $(G,(\beta_e,\gamma_e)_{e \in E},(\lambda_v)_{v \in V})$ defines a Gibbs distribution $\mu$ over $\Omega =\{0,1\}^V$ such that 
\begin{align*}
    \forall \sigma \in \{0,1\}^V,\quad \mu(\sigma) \propto \prod_{v \in V: \sigma_v = 0}\lambda_v \prod_{\{u,v\}\in E: \sigma_u = \sigma_v = 0} \beta_e \prod_{\{u,v\}\in E: \sigma_u = \sigma_v = 1} \gamma_e.
\end{align*} A two-spin system is said to be ferromagnetic if $\beta_e\gamma_e \ge 1$ for all $e \in E$.

Positively weighted Boltzmann machines with parameters $(W,\theta)$ can be viewed as ferromagnetic two-spin systems over the complete graph via the following reparameterisation:
\begin{align*}
  \forall v\in V,\ \lambda_v = \exp(-\theta_v)
  \text{ and }\forall u,v\in V,\ \beta_{uv} = 1, \gamma_{uv} = \exp(w_{uv}).
\end{align*}
We may also remove edges with zero weights.
This way, restricted Boltzmann machines become ferromagnetic two-spin systems defined over bipartite graphs.

We consider the following family of positively weighted restricted Boltzmann machines.
\begin{definition}[$(\gamma,\lambda)$-RBM]\label{def:gamma-lambda-rbm}
    Let $\gamma > 1$ and $\lambda > 0$ be two constants.
    A ferromagnetic two-spin system $(G,(\beta_e,\gamma_e)_{e \in E},(\lambda_v)_{v \in V})$ on a bipartite graph $G=(V,E)$ is said to be a $(\gamma,\lambda)$-RBM 
    if $\lambda_v < \lambda$ for all $v \in V$ and $\beta_e = 1$, $\gamma_e \geq \gamma$ for all $e \in E$.
  \end{definition}
Restricted Boltzmann machines in \Cref{thm:alternating-scan-mixing-BM} are special cases of $(\gamma,\lambda)$-RBM in \Cref{def:ferromagnetic-two-spin-system} with $\beta = 1$, $\gamma = \exp(c)$, and $\lambda = 1+\epsilon$ for an arbitrarily small $\epsilon>0$. 
It is important that here $\gamma$ and $\lambda$ are constants, and we do not need to assume any of the $\gamma_e$, or $\lambda_v$ to be constants.
\Cref{thm:alternating-scan-mixing-BM} is in fact implied by the following more general result for the mixing time of the alternating-scan sampler on ferromagnetic two-spin systems in bipartite graphs, where $\lambda$ is allowed to take a larger value. 
\begin{theorem}\label{thm:alternating-scan-mixing-ferromagnetic-two-spin-system}
    Let $\gamma > 1$ and $\lambda < \sqrt{\gamma}$ be two constants.
    For any $(\gamma,\lambda)$-RBM over a bipartite graph with $n$ vertices, 
    the alternating-scan sampler on the Gibbs distribution has mixing time at most $O((\log n)^{C} \log \frac{1}{\epsilon})$, where $C = C(\gamma,\lambda) > 0$ is a constant depending on $(\gamma,\lambda)$.
\end{theorem}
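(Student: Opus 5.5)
The plan is to reduce the alternating-scan sampler to a block dynamics that can be compared with the Glauber dynamics. For the Glauber dynamics I would then prove a mixing bound of order $n(\log n)^{C}$ with a constant $C$, by combining spectral independence with a localisation scheme. I would work with the total-influence and spectral-independence framework for the ferromagnetic two-spin system $(G,(1,\gamma_e),(\lambda_v))$ with $\gamma_e\ge\gamma>1$ and $\lambda_v<\lambda<\sqrt{\gamma}$. The condition $\lambda<\sqrt\gamma$ is exactly what places every vertex in the uniqueness (correlation-decay) regime for all degrees, with \emph{unbounded} $\gamma_e$. To see this, use the tree recursion for the ratio $R_v=\mu(\sigma_v=0)/\mu(\sigma_v=1)$, namely $R_v=\lambda_v\prod_{u}\frac{R_u+1}{R_u+\gamma_e}$. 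The contraction for this recursion can be established with a potential $\Phi(R)=\log\frac{\ldots}{\ldots}$ (in the style of Guo--Lu \cite{GuoL18}). Because each factor satisfies $\frac{R+1}{R+\gamma_e}\le 1$, large degrees only push $R_v$ towards $0$ (all-ones dominant), and there the derivative is small. This yields an $\ell_\infty$-type bound on total influence, hence $O(1)$ spectral independence uniformly under all pinnings and all degrees.

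Spectral independence alone only gives $n^{O(1)}$ mixing when degrees are unbounded. I would therefore combine it with field dynamics or stochastic localisation and a coupling/comparison argument, which produces a mixing time of $n(\log n)^{C}$ for Glauber. Concretely, I would use the entropy factorisation approach: spectral independence gives approximate block factorisation of entropy for blocks of size $\theta n$. Under a random field adjustment the magnetisation is then pushed towards the ``all ones'' side, where the system is effectively high-temperature, and there Glauber mixes in $O(n\log n)$. The constant lower bound $\gamma>1$ is needed so that, for instance, a vertex of degree $\Omega(\log n)$ is frozen to $1$ with probability $1-n^{-\Omega(1)}$. This permits a split into high-degree vertices, which are almost deterministic, and low-degree vertices, on which bounded-degree tools apply. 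I would do this split by degree threshold $D=\Theta(\log n)$.

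Next, pass from Glauber to the alternating scan. Since $G$ is bipartite, each $V_i$ is an independent set and the scan is the even--odd block dynamics. I would prove approximate tensorisation of entropy with respect to the two blocks $\{V_0,V_1\}$ with constant $(\log n)^{C}$. To do this, start from the approximate tensorisation for single sites that the Glauber analysis gives with constant $C_{\mathrm{AT}}=(\log n)^C$ (relative entropy factorising over sites with loss $C_{\mathrm{AT}}$). Conditional on $X(V_{1-i})$, the measure on $V_i$ is a product measure, so the single-site entropies in $V_i$ are bounded by the block entropy of $V_i$. Summing gives $\mathrm{Ent}_\mu(f)\le C_{\mathrm{AT}}\sum_i\mu[\mathrm{Ent}_{V_i}f]$. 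The standard argument for the sequential (systematic) scan, as opposed to a random block choice, then gives entropy contraction per round of the form $1-1/(2C_{\mathrm{AT}})$. That yields mixing in $O(C_{\mathrm{AT}}\log(\log(1/\mu_{\min})/\epsilon))$ rounds. To avoid the $\log\log(1/\mu_{\min})$ term, which is polynomial in $n$ only logarithmically but involves unbounded weights, I would use a warm-start/burn-in: after one half-step the state is a product measure sample. Alternatively I would bound the mixing via the spectral gap with $\log(1/\epsilon)$ dependence, appealing to the definition of $t_{\mathrm{mix}}$ with boosting. This gives $(\log n)^C\log\frac1\epsilon$.

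The main obstacle is obtaining single-site approximate tensorisation with only a polylogarithmic constant when degrees and $\gamma_e$ are unbounded. Standard spectral-independence bounds lose $\Delta^{O(1)}$ or $e^{O(\max\log\gamma_e)}$ factors, and the marginal lower bounds fail when $\gamma_e$ is huge. I would first try to handle this by conditioning on the (whp frozen) high-degree vertices and on edges with huge $\gamma_e$: contract heavy edges into clusters, which keeps the system ferromagnetic with field $<\sqrt\gamma$ per... (this needs checking). I would then apply the field-dynamics comparison to the remaining system, whose effective degrees are $O(\log n)$, accepting a $(\log n)^{O(1)}$ loss.
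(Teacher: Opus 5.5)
There are two genuine gaps.

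\textbf{The global Glauber bound is never established.} Everything rests on a Glauber bound $n(\log n)^{C}$, and you do not prove it. Field dynamics \cite{CFYZ21} only transfers the problem to Glauber dynamics on the tilted measures $(\theta*\mu)^\sigma$ under all pinnings. Your claim that these are ``effectively high-temperature'' with $O(n\log n)$ mixing is exactly what fails with unbounded degrees. Take a star with centre $v$ ($\lambda_v$ a fixed constant) and $d$ leaves. The worst-case influence of each leaf on $v$ (all other leaves at $0$) is $\Omega(\theta\lambda_v)$, so the worst-case influence sum at $v$ is $\Theta(d\,\theta)$, unbounded for every constant $\theta$.

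The paper in fact uses field dynamics in the opposite direction: in \Cref{sec:glauber-mixing-1} the tilted measures are handled by \Cref{thm:glauber-mixing-2}. The missing idea is \emph{typical-case} ASSM. Influences are controlled only under good boundary conditions (\Cref{def:good-boundary-configuration}) on polylog-size regions $S_v$ built from the SAW tree, and a burn-in shows the boundary is good with high probability. The level-by-level decay then needs \Cref{lemma:monotone-potential} to replace the $w$-dependent worst pinnings above level $k$ by a single pinning $\sigma^*$.

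That lemma is the only place $\lambda<\sqrt\gamma$ is used, through $\gamma-\beta xy>0$ in \Cref{lemma:one-step-relation-preserve}. Correlation decay itself holds up to $\lambda_c(1,\gamma)=\gamma^{\sqrt\gamma/(\sqrt\gamma-1)}>\sqrt\gamma$, so $\sqrt\gamma$ is not a uniqueness threshold. Your sketch never uses $\lambda<\sqrt\gamma$. If it worked, it would give near-linear Glauber mixing up to $\lambda_c$, which the paper explicitly leaves open.

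Your proposed repair (condition on frozen high-degree vertices, contract heavy edges) is not an argument. Freezing only holds after a burn-in, and making that rigorous is precisely the typical-case machinery above. Moreover, an entropy (approximate tensorisation) version of the field-dynamics step would need entropic independence, which is not known for this class.

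\textbf{The passage from Glauber to the two-block scan is invalid.} For a product measure on $V_i$, tensorisation gives $\mathrm{Ent}_{V_i}f\le\sum_{v\in V_i}\mathrm{Ent}_v f$ (in expectation), not the reverse. The reverse fails by a factor $|V_i|$: for uniform bits and $f=1+\epsilon\prod_{v\in V_i}(-1)^{x_v}$, each expected single-site entropy, and also the block entropy, is about $\epsilon^2/2$. What your reasoning actually yields is two-block factorisation with constant $C_{\mathrm{AT}}\max_i|V_i|$, i.e.\ $n\,\mathrm{polylog}(n)$ rounds. This is the same factor-$n$ loss as in $T_{\text{rel}}(Q)\le 2/\gamma_{\text{GD}}$ (\Cref{prop:relaxation-time-alternating-scan}), since $\gamma_{\text{GD}}=O(1/n)$.

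The paper never compares the global scan with Glauber. It treats the alternating scan directly as a systematic-scan block dynamics, via the monotone grand coupling and censoring (\Cref{thm:mixing}, with $T_{\textnormal{burn-in}}=2$). It uses the Glauber-gap comparison, spectral independence and a warm start only for the censored chain on $S_v$, where $N=|S_v|=\mathrm{polylog}(n)$ makes an $N^{O(1)}$ loss harmless.

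Even there, one half-step is not a warm start when the $\gamma_e$ and $1/\lambda_v$ are unbounded. The paper needs $O(\log N)$ rounds to eliminate bad vertices and edges before $\log(1/\pi(\tau))$ becomes polylogarithmic.
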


For more general ferromagnetic two-spin systems,
we mainly consider the following families.
\begin{definition}[$(\beta,\gamma,\lambda)$-ferromagnetic two-spin systems]\label{def:ferromagnetic-two-spin-system}
  Let $\beta$, $\gamma$, and $\lambda$ be three constants such that $\beta \leq 1 < \gamma$, $\beta\gamma > 1$, and $\lambda > 0$.
  A ferromagnetic two-spin system $(G,(\beta_e,\gamma_e)_{e \in E},(\lambda_v)_{v \in V})$ is said to be a $(\beta,\gamma,\lambda)$-ferromagnetic two-spin system 
  if $\lambda_v < \lambda$ for all $v \in V$ and $\beta_e \leq \beta \leq 1 < \gamma \leq \gamma_e$, $\beta\gamma \geq \beta_e \gamma_e > 1$ for all $e \in E$.
\end{definition}

Note that $(\gamma,\lambda)$-RBMs and $(1,\gamma,\lambda)$-ferromagnetic two-spin systems are incomparable. On one hand, a $(\gamma,\lambda)$-RBM is not necessarily a $(1,\gamma,\lambda)$-ferromagnetic two-spin system, since the RBM definition does not impose $\beta_e\gamma_e \le \beta\gamma$, and $\beta_e\gamma_e = \gamma_e$ may exceed $\gamma$ for some $e\in E$. On the other hand, a $(1,\gamma,\lambda)$-ferromagnetic two-spin system is not necessarily a $(\gamma,\lambda)$-RBM, since $\beta_e$ is allowed to take any value in $[0,1]$ rather than only $\beta_e = 1$.


In addition to the alternating-scan sampler, we also analyse Glauber dynamics,
which is another fundamental Markov chain to sample from Gibbs distributions.
Starting from an arbitrary configuration $X \in \{0,1\}^V$, in each step, the Glauber dynamics updates the current configuration $X$ as follows:
\begin{itemize}
    \item pick a vertex $v$ uniformly at random from $V$;
    \item resample $X_v \sim \mu_{v}^{X(V \setminus \{v\})}$, where $\mu_{v}^{X(V \setminus \{v\})}$ is the marginal distribution on $v$ induced by $\mu$ conditioned on the configuration $X(V \setminus \{v\})$ on other variables $V \setminus \{v\}$;
\end{itemize}
We show that under the same conditions as in \Cref{thm:alternating-scan-mixing-ferromagnetic-two-spin-system},
Glauber dynamics mixes in near-linear time.

\begin{theorem}\label{thm:glauber-mixing-2}
  Let $\beta,\gamma,\lambda > 0$ be three constants such that $\beta \leq 1 < \gamma$, $\beta\gamma > 1$, and $\lambda < \lambda_0 \defeq \sqrt{\gamma/\beta}$. 
    For any $(\beta,\gamma,\lambda)$-ferromagnetic two-spin system with $n$ vertices, 
    the Glauber dynamics on the Gibbs distribution has mixing time at most $n (\log n)^{C} \log \frac{1}{\epsilon}$, where $C = C(\beta,\gamma,\lambda) > 0$ is a constant depending on $(\beta,\gamma,\lambda)$.
\end{theorem}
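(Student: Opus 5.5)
The plan is to prove \Cref{thm:glauber-mixing-2} with the spectral independence / field dynamics framework for ferromagnetic two-spin systems, which gives near-linear mixing without a degree bound. There are three ingredients: spectral independence under all pinnings, a localization scheme, and a rapid-mixing base case.

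\textbf{Step 1: correlation decay.} We reduce the field using the condition $\lambda<\sqrt{\gamma/\beta}$. In a ferromagnetic system with $\lambda_v<\lambda<\lambda_0$, the standard analysis of the tree recursion (in the style of Guo--Lu and Liu--Lu--Zhang) shows that any vertex of sufficiently large degree $d$ is pushed towards spin $1$. Indeed, its marginal ratio $R_v=\mu(\sigma_v=0)/\mu(\sigma_v=1)$ satisfies $R_v\le \lambda_v\prod_{i}\frac{\beta_e R_i+1}{R_i+\gamma_e}$. The condition $\lambda<\sqrt{\gamma/\beta}$ is exactly what makes the potential $\Phi(x)=\log\frac{x}{?}$, chosen as in those works, contract under this recursion uniformly in the degree. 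I will use the potential function $\varphi(x)=\frac{1}{\sqrt{x}\,(\ldots)}$ adapted from their symmetric-version analysis. The goal is to show that the contraction rate of the message is $1-\delta$ for some $\delta=\delta(\beta,\gamma,\lambda)$, with $\sum_i$ influences bounded as well. This yields $\eta$-spectral independence with $\eta=O(1)$ for every pinning and every field reduced by a factor $\theta\in(0,1]$, since smaller fields only help when $\lambda_v\le1$ near the threshold. Monotonicity in the field will have to be checked for $\lambda_v>1$.

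\textbf{Step 2: field dynamics.} We apply the field dynamics of Chen--Feng--Yin--Zhang with spectral independence under field reductions. Choose $\theta$ a small constant, so that the system with fields $\theta\lambda_v$ has all $\lambda_v\theta$ small. From the spectral independence bound, the field dynamics has modified log-Sobolev constant $\Omega(\theta^{O(\eta)})=\Omega(1)$. Then the Glauber dynamics mixing time is bounded by the mixing time of Glauber on the conditional measures (which are ferromagnetic systems with tiny field) times polylogarithmic factors. For the tiny-field base case I will use a sub-critical percolation / coupling-from-the-all-ones argument. Near the all-$1$ configuration, disagreements propagate only through $0$-spins, which have probability $O(\theta\lambda)$ each. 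This gives a bound of $O(n\log n)$ when the degree is bounded. The unbounded degree is handled by noting that high-degree vertices are frozen to $1$ with overwhelming probability.

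\textbf{Step 3: combining.} The final bound is $n(\log n)^{C}\log\frac1\epsilon$, where the polylog comes from the base-case bound under unbounded degrees. That case should be controlled via the $\log n$ losses in converting relative entropy decay to mixing from a worst-case start, and from a burn-in showing that after $O(n\log n)$ steps high-degree vertices are $1$.

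\textbf{Main obstacle.} I expect Step 1 to be the main obstacle: finding a potential that gives degree-independent contraction all the way up to $\sqrt{\gamma/\beta}$ with non-uniform $\beta_e,\gamma_e$. I would first try $\varphi(x)=1/\sqrt{x(\beta x+1)(x+\gamma)}$-type potentials. I would then verify the worst case via the symmetric-edge reduction, using that $\beta_e\gamma_e\le\beta\gamma$.
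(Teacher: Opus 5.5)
\textbf{There is a genuine gap, and it is in Step~2.} The field-dynamics result you invoke (\Cref{prop:field-dynamics}) turns bounded all-to-one influence under all tilts and pinnings into a \emph{spectral gap} $\theta^{O(\eta)}$ for the field dynamics. The comparison \Cref{prop:gamma-glauber-field-min} then gives only $\gamma_{\mathrm{Glauber}}(\mu)\ge\gamma_{\mathrm{field}}(\mu,\theta)\cdot\gamma_{\min}(\theta)$. Neither yields a modified log-Sobolev constant. The entropy version of this boosting needs entropic independence (or marginal-stability type hypotheses). As the paper remarks, this is not known for the present class, where $\lambda_v,\beta_e$ may be arbitrarily small and $\gamma_e$ arbitrarily large.

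With only a gap, mixing costs $\frac{1}{\mathrm{gap}}\log\frac{1}{\epsilon^2\mu(X_0)}$. Here $\log\frac{1}{\mu(X_0)}$ can be of order $n$ even for $X_0=\mathbf{1}$, and is unbounded for arbitrary $X_0$, which already forces a warm-start argument. So even with a perfect base case, your route gives roughly $n^2(\log n)^{O(1)}$. That is exactly what the paper extracts from field dynamics in \Cref{thm:glauber-mixing-1}.

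A sanity check exposes the problem: your argument never genuinely uses $\lambda<\sqrt{\gamma/\beta}$.
\begin{itemize}
\item Bounded all-to-one influence holds all the way to $\lambda_c$ (\Cref{thm:all-to-one-influence}, with the Guo--Lu potential $\phi(x)=\min\{1/(x\log\frac{\lambda}{x}),1/t\}$, not a $\sqrt{x}$-type one).
\item The base case has tiny fields.
\end{itemize}
So if your scheme worked, it would give $\widetilde{O}(n)$ mixing for every $\lambda<\lambda_c$. The paper explicitly leaves this open, for precisely this reason.

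\textbf{The base case is also not established.} With $\beta_e=1$ (allowed when $\beta=1$, e.g.\ for RBMs), worst-case influences do not sum on unbounded-degree graphs, even at field $\theta\lambda$. On a star whose other leaves are $0$, each leaf moves the centre's marginal by $\Theta(\theta\lambda)$, for a total of $\Theta(d\,\theta\lambda)$. So ``high-degree vertices are frozen to $1$'' is a typical-case statement that has to be proved, and it is the heart of the paper.

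The paper proves it directly, without field dynamics:
\begin{itemize}
\item A Mossel--Sly type local-to-global theorem for the monotone grand coupling (\Cref{thm:mixing}). It needs ASSM only under \emph{good} boundary conditions, which are reached after an $O(n\log n)$ burn-in.
\item Regions $S_v$ of size $(\log n)^{O(1)}$, carved from the SAW tree with thresholds $D_1=C_D\log\log n$ and $D_2=(\log n)^3$ (\Cref{lemma:saw-tree-construction-nonsimple}).
\item Typical-case ASSM (\Cref{lemma:assm-with-good-boundary}). This is the only place $\lambda<\lambda_0$ is used, through \Cref{lemma:monotone-potential} via \Cref{lemma:one-step-relation-preserve}, which needs $\gamma>\beta\lambda^2$. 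It replaces the $w$-dependent worst pinnings above level $k$ by the single pinning $\sigma^*$.
\item Local mixing on $S_v$ from spectral independence plus a warm start. Here the $\log(1/\mu)$ loss is harmless because $|S_v|$ is only polylogarithmic.
\end{itemize}
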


The same threshold $\lambda_0 = \sqrt{\gamma/\beta}$ also appeared in~\cite{GoldbergJP03}, where the authors showed that for ferromagnetic two-spin systems with uniform parameters $\lambda_v = \lambda$, $\beta_e = \beta$, and $\gamma_e = \gamma$, there exists a polynomial-time sampling algorithm if $\beta \gamma > 1$ and $\lambda \leq \lambda_0$. 
The condition for a polynomial-time sampling algorithm was later shown to be $\lambda \leq {\gamma}/{\beta}$ by~\cite{LiuLZ14}.
Both algorithms are obtained by reducing the problem of sampling from ferromagnetic two-spin systems to that of sampling from a \emph{ferromagnetic Ising model} with consistent external fields. Specifically, the resulting Ising model is a two-spin system $(G,(\beta_e,\gamma_e)_{e \in E},(\lambda_v)_{v \in V})$ such that every edge has interaction strength $\beta_e = \gamma_e = \sqrt{\beta \gamma} > 1$ and every vertex has external field $\lambda_v \leq 1$.
Jerrum and Sinclair \cite{JerrumS93} gave the first polynomial-time sampling algorithm to this Ising model.
After the reduction in \cite{LiuLZ14}, there is a constant gap between the external field $\lambda$ and $1$.
In this case, the best sampling algorithm runs in near-linear time as well \cite{CZ23}, via yet another connection \cite{ES88,GuoJ18,FengGW23} to the \emph{random cluster model} \cite{FK72}.

Our results in \Cref{thm:alternating-scan-mixing-ferromagnetic-two-spin-system} and \Cref{thm:glauber-mixing-2} are the first near-optimal mixing results for the alternating-scan sampler and Glauber dynamics on ferromagnetic two-spin systems with $\lambda < \lambda_0$, whereas all previous algorithms rely on a reduction to sampling from other models. 
From a technical perspective, our approach is completely different from the reduction technique used in~\cite{GoldbergJP03, LiuLZ14}. 
We develop a unified framework for analyzing the mixing of a family of \emph{heat-bath} and \emph{systematic scan} dynamics on ferromagnetic two-spin systems, which covers the alternating-scan sampler and Glauber dynamics as special cases.  
We give a proof overview in \Cref{sec:proof-overview}.

Another advantage of the direct mixing time bound in \Cref{thm:glauber-mixing-2} is that, unlike previous results, it allows us to extend our mixing time analysis beyond $\lambda_0$ to a larger threshold
\begin{align*}
  \lambda_c(\beta,\gamma) \defeq (\gamma/\beta)^{\frac{\sqrt{\beta \gamma}}{\sqrt{\beta \gamma}-1}} \geq \lambda_0(\beta,\gamma),
\end{align*}
which was previously identified as the potentially critical threshold for ferromagnetic 2-spin systems \cite{GuoL18}.\footnote{Roughly speaking, up to an integral gap, systems above this threshold are \BIS-hard \cite{LiuLZ14}, where \BIS{} is conjectured to be computationally hard \cite{DGGJ04}. }
In fact, Guo and Lu \cite{GuoL18} designed efficient sampling and approximate counting algorithms for ferromagnetic 2-spin systems below this threshold via correlation decay \cite{weitz2006counting}.
However, their algorithms run in time $O(n^C)$ where $C$ is a large constant depending on $(\beta,\gamma,\lambda)$.
Later, Guo, Liu, and Lu \cite{GLL20} designed another algorithm based on the zeros of polynomials method \cite{Bar16,PR17},
which works for all $\beta,\gamma$ such that $\beta\gamma>1$ but with a lower threshold for $\lambda$\footnote{Their threshold is roughly $\sqrt{\lambda_c}$.} and requires bounded degree graphs.
In any case, it has a similar $O(n^C)$ running time.
Our next result improves the exponent in the running times for sampling and approximate counting to absolute constants.
For sampling, our time bound is $\widetilde{O}(n^2)$.
\begin{theorem}\label{thm:glauber-mixing-1}
        Let $\beta,\gamma,\lambda > 0$ be three constants such that $\beta \leq 1 < \gamma$, $\beta\gamma > 1$, and $\lambda < \lambda_c \defeq (\gamma/\beta)^{\frac{\sqrt{\beta \gamma}}{\sqrt{\beta \gamma}-1}} $.  There exists a constant $C = C(\beta,\gamma,\lambda) > 0$ such that for any $(\beta,\gamma,\lambda)$-ferromagnetic two-spin system with $n$ vertices, the Glauber dynamics on the Gibbs distribution $\mu$ has mixing time
        \begin{itemize}
            \item at most $n^2 \cdot (\log n)^{C} \cdot \log \frac{1}{\epsilon}$ starting from the all-1 configuration;
            \item at most $n^3 \cdot (\log n)^{C} \cdot \log \frac{1}{\epsilon}$ starting from an arbitrary configuration.
        \end{itemize}
\end{theorem}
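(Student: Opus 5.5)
My plan is to combine a \emph{spectral independence} (or entropic independence) bound, derived from correlation decay up to $\lambda_c$, with a field-dynamics / localization argument that lets the Glauber dynamics cope with the possibly unbounded degrees and unbounded $\gamma_e$. First, I would invoke the correlation-decay analysis of Guo and Lu \cite{GuoL18}. Below $\lambda_c$ it gives, for the ferromagnetic two-spin system with arbitrary pinnings, a contracting potential function on the tree recursion. I would convert this potential-function contraction, in the standard way, into a bound on the total influence. This bound is not uniform. It holds only for pinnings (and external-field perturbations) that keep the model inside the $(\beta,\gamma,\lambda')$ family with $\lambda'<\lambda_c$. Since lowering $\lambda_v$ only helps, I would work with the \emph{field dynamics} that stochastically reduces external fields, i.e.\ pins vertices towards spin $1$. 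This keeps us inside the good regime, and I would use it to obtain entropy factorisation with constant $O_{\beta,\gamma,\lambda}(1)$ for a suitably weakened field, following the Chen--Feng--Yin--Zhang field-dynamics framework.

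Second, I would treat the model with strongly reduced fields, $\lambda_v\le\theta\lambda_v$ for a small constant $\theta$. There I expect the measure to be concentrated on configurations that are mostly $1$, and I would show that Glauber dynamics mixes quickly. The intended tool is the unified heat-bath framework mentioned in the overview: run a coupling or monotone grand coupling from the all-$1$ state and compare with the all-$0$ state. The hope is to get an $n(\log n)^{C}$ bound in this regime. I would then compose the two pieces, field dynamics plus Glauber on the low-field model, via the comparison theorem, which yields a Glauber mixing bound.

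The main obstacle is that spectral independence via correlation decay degrades with the maximum degree and with large $\gamma_e$, so the usual $O(n\log n)$ conclusion is unavailable. Instead, I expect only a relaxation-time or approximate-tensorisation bound of order $n\cdot\mathrm{poly}(\log n)$ in \emph{variance}, together with an extra factor of $n$ lost when passing from the spectral gap to mixing. That loss is $\log(1/\mu_{\min})=O(n)$ in the worst case, but only $O(n)$ total from the all-$1$ state, since $\mu(\mathbf{1})$ is not exponentially small relative to the worst case for the relevant warm start. Concretely, I would aim for a spectral gap of at least $1/(n^{2}(\log n)^C)$ for general starts. I would then bound $\log(1/\mu(\mathbf 1))$ by $O(n\log n)$ in one case, whereas for the all-$1$ start I would use monotonicity and the censoring inequality to save the factor $n$, giving the $n^2$ versus $n^3$ bounds in \Cref{thm:glauber-mixing-1}. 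The step I expect to be hardest is establishing the $\mathrm{poly}(\log n)$ total-influence bound uniformly over pinnings for unbounded-degree graphs. For that I would first try the potential $\Phi(x)=\log x$ with the Guo--Lu contraction, truncating at depth $O(\log n)$.
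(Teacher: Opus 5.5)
Your architecture is the paper's. Complete spectral independence below $\lambda_c$ gives a field dynamics with gap $\theta^{O(\eta)}$. Choosing a constant $\theta$ with $\theta\lambda_c<\lambda_0$ puts every pinning of $\theta*\mu$ under \Cref{thm:glauber-mixing-2}, and \Cref{prop:gamma-glauber-field-min} multiplies the two gaps. But two quantitative steps fail as you have set them up.

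First, a $\mathrm{poly}(\log n)$ all-to-one influence bound is not enough. Since $\theta$ must be a constant bounded away from $1$, $\eta=(\log n)^{c}$ only yields a field-dynamics gap of $\exp(-\Theta((\log n)^{c}))$. That rules out any $n^{2}\,\mathrm{polylog}(n)$ bound. You need $\eta=O_{\beta,\gamma,\lambda}(1)$ (at worst $O(\log\log n)$) uniformly over all graphs, tiltings and pinnings. This is exactly \Cref{thm:all-to-one-influence}, and contrary to your expectation it does not degrade with the degree.

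The tool for unbounded degree is not $\Phi=\log x$ with depth truncation. In the log-metric the total sensitivity of $F_u$ can be of order $d$, so there is no degree-independent contraction. The paper instead uses:
\begin{itemize}
\item the Guo--Lu potential $\phi(x)=\min\{1/(x\log(\lambda/x)),1/t\}$, which is bounded above and below, so potential differences are comparable to ratio differences;
\item the per-child bound $C_{\text{trl}}\lambda_u e^{-C_{\text{decay}}d}$ of \Cref{lemma:trivial-bound-for-C-ver}, so a high-degree vertex contributes only $d e^{-C_{\text{decay}}d}=O(1)$;
\item an ALO-style level-by-level argument that uses SSM to replace the $w$-dependent pinnings by a common one.
\end{itemize}
Also keep everything at the level of spectral gaps. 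A mixing-time bound for the low-field chain gives a gap of $\Omega(1/(n(\log n)^{C}))$, but not an entropy-factorisation constant.

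Second, your bookkeeping for the starting state is off. The resulting gap is $\Omega(1/(n(\log n)^{C}))$, not $1/(n^{2}(\log n)^{C})$. The all-$\boldsymbol{1}$ bound then follows directly from \eqref{eq:mixing-time-glauber-gap}, because $\mu(\boldsymbol{1})\geq(2\lambda_c)^{-n}$; no censoring argument is involved.

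For an arbitrary start, $\log(1/\mu_{\min})$ is not $O(n)$. The $\lambda_v$ and $\beta_e$ may be arbitrarily small and the $\gamma_e$ arbitrarily large, so $\mu_{\min}$ has no lower bound in terms of $n$ at all. Even with constant parameters it is $e^{-\Theta(n^{2})}$ on dense graphs.

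The missing idea is the warm-start argument of \Cref{lemma:warm-start-configuration-general}. After $O(n(\log n)^{2})$ Glauber steps, with probability at least $1-\frac{1}{10e}$, every vertex with $\lambda_u\leq 1/(100n^{5})$ is at $1$, and both endpoints of every edge with $\gamma_e\geq 100n^{5}$ are at $1$. Every such configuration $\tau$ satisfies $\mu(\tau)\geq\exp(-O(n^{2}\log n))$. Multiplying $\log(1/\mu(\tau))=O(n^{2}\log n)$ by the relaxation time $n(\log n)^{C}$ is what produces the $n^{3}$ bound.
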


\begin{remark}
In the proof of \Cref{thm:glauber-mixing-1}, we show that the spectral gap of the Glauber dynamics is $\textsf{gap} = \widetilde{\Omega}(n^{-1})$, which is optimal.
Since $\mu(\boldsymbol{1}) = 2^{-\Omega(n)}$ for the all-1 configuration, this yields the upper bound $O\left(\frac{1}{\textsf{gap}} \log \frac{1}{\epsilon \mu(\boldsymbol{1})}\right)$
on the mixing time from the all-1 starting configuration. For the mixing time from an arbitrary starting configuration, the standard approach is to use the bound
$O\left(\frac{1}{\textsf{gap}} \log \frac{1}{\epsilon \mu_{\min}}\right)$,
where $\mu_{\min} = \min_{x \in \{0,1\}^V} \mu(x)$. However, for spin systems in \Cref{def:ferromagnetic-two-spin-system}, 
the parameters $\lambda_v$ and $\beta_e$ may be arbitrarily small, while $\gamma_e$ may be arbitrarily large, resulting in a potentially arbitrarily small $\mu_{\min}$. 
We resolve this issue by showing that the Glauber dynamics quickly reaches a warm-start configuration with high probability, and then bounding the mixing time from such a warm-start configuration (see \Cref{lemma:warm-start-configuration-general}). 
We remark that even if all parameters $\lambda_v, \beta_e, \gamma_e$ are assumed to be constants, $\mu_{\min}$ can still be as small as $\exp(-O(n^2))$. 
The reason is that the graph can be very dense and contain an $\Omega(n^2)$ number of edges. 
\end{remark}

Our result is also the first polynomial mixing time bound for Glauber dynamics on ferromagnetic two-spin systems with $\lambda < \lambda_c$ in general graphs. All previous polynomial mixing time results work only on \emph{bounded degree} graphs.
Let $\Delta$ denote the maximum degree of the graph.
Chen, Liu, and Vigoda first proved $n^{{e^{O(\Delta)}}}$ mixing time bound for Glauber dynamics~\cite{CLV23}, and later they improved the bound to $e^{e^{O(\Delta)}} n \log n$ ~\cite{CLV21}.
In contrast, our \Cref{thm:glauber-mixing-1} does not depend on $\Delta$.

\Cref{thm:glauber-mixing-1} is proved by combining \Cref{thm:glauber-mixing-2} with a mixing time boosting technique developed by Chen, Feng, Yin, and Zhang~\cite{CFYZ21}. Roughly speaking, by verifying a certain spectral independence condition \cite{ALO24} for ferromagnetic two-spin systems when $\lambda < \lambda_c$, we can reduce the analysis to the case $\lambda < \lambda_0$, which is handled by \Cref{thm:glauber-mixing-2}.
It is important that \Cref{thm:glauber-mixing-2} provides a direct mixing time bound rather than a reduction based sampling algorithm as in~\cite{GoldbergJP03, LiuLZ14}; otherwise, the mixing time boosting technique would not be applicable. The detailed proof is given in \Cref{sec:glauber-mixing-1}.

As mentioned before, we believe that the lower bound $c>0$ requirement can be removed in \Cref{thm:alternating-scan-mixing-BM},
but some new ideas are required to handle the case where, for example, some $w_{uv}=1/n$.
Another interesting open problem is to prove a near-optimal $\widetilde{O}(n)$ mixing time bound for ferromagnetic two-spin systems when $\lambda < \lambda_c$. Due to technical obstacles (see \Cref{sec:proof-overview}), we cannot directly extend the analysis of \Cref{thm:glauber-mixing-2} to this regime. A possible alternative is to use the refined mixing-time boosting techniques developed in~\cite{Chen0YZ22,ChenE22,FY26}. 
However, this approach requires a stronger \emph{entropic independence}~\cite{AJKPV22} condition, which is not known to hold for the class of ferromagnetic two-spin systems studied here.
More broadly, our proof framework applies to general ferromagnetic two-spin systems,
for which there is still a big gap between the known algorithmic~\cite{GuoL18,GLL20,shao2021contraction} and the hardness threshold \cite{LiuLZ14},
especially when $\beta,\gamma>1$.
In that case, worst-case correlation decay results, such as those in \cite{GuoL18}, no longer hold. 
We believe that our ``typical-case'' SSM (more detail in \Cref{sec:proof-overview}) is the first step on the right direction.

\section{Proof overview}\label{sec:proof-overview}
We give a proof overview for the mixing time of Glauber dynamics on ferromagnetic two-spin systems. For the simplicity of the overview, consider a ferromagnetic two-spin system $\mu$ defined on a graph $G=(V,E)$ with unified parameters, where $\lambda_v = \lambda$ for all $v \in V$ and $\beta_e = \beta, \gamma_e = \gamma$ for all $e \in E$ for constants $\lambda,\beta,\gamma$. We outline the proof of $n \cdot \mathrm{polylog}(n)$ mixing time bound in \Cref{thm:glauber-mixing-2} when $\lambda < \lambda_0 = \sqrt{\gamma/\beta}$. Other results can be proved as follows.
\begin{itemize}
    \item The proof technique of \Cref{thm:glauber-mixing-2} can be generalized to the alternating-scan sampler in \Cref{thm:alternating-scan-mixing-ferromagnetic-two-spin-system}.
    \item The mixing result in \Cref{thm:glauber-mixing-1} when $\lambda < \lambda_c$ can be proved by combining the mixing result in \Cref{thm:glauber-mixing-2} with the existing results in~\cite{CFYZ21}.
\end{itemize}

\subsection{All-to-one influence bound}
Let $\mu$ over $\{0,1\}^V$ be a Gibbs distribution defined on variable set $V$. For any two variables $u,v \in V$, the influence from $u$ on $v$ is defined as
\begin{align*}
    \Psi(u,v) \defeq \Pr[X_v = 1 \mid X_u = 1] - \Pr[X_v = 1 \mid X_u = 0].
\end{align*}
Anari, Liu, and Oveis Gharan \cite{ALO24} showed that if the maximum eigenvalue of the influence matrix $\Psi$ is bounded by a constant, then the Glauber dynamics mixes in polynomial time. The maximum eigenvalue of the influence matrix is bounded by the \emph{all-to-one influence} $\max_{v \in V} \sum_{u \in V} \Psi(u,v)$.
We show in \Cref{thm:all-to-one-influence} that if $\lambda < \lambda_c$, then the all-to-one influence is $O(1)$. The proof is inspired by the analysis in~\cite{ALO24}, where they analysed the all-to-one influence of the hardcore model in the uniqueness regime. Here, we need to deal with the ferromagnetic spin system in general graph with possibly unbounded degree. We use the correlation decay technique developed in~\cite{GuoL18} to prove the bound.

The all-to-one influence only gives an $n^{O(C)}$ mixing time bound, where the influence bound $C$ can be a very large constant. 
However, this is still useful in getting the local mixing bounds we need later.
To obtain our $n \cdot \mathrm{polylog}(n)$ mixing result in general graphs, we use a local mixing to global mixing argument based on the \emph{aggregate strong spatial mixing} (\emph{ASSM}) property.

\subsection{Mixing from typical-case ASSM}
A ferromagnetic two-spin system is a monotone system. 
Mossel and Sly~\cite{MS13} showed that the mixing of Glauber dynamics on monotone systems can be proved via the ASSM property. 
Let $v \in V$ and $S_v \subseteq V$  a subset of vertices containing $v$. Let $\partial S_v$ be the outer \emph{boundary} of $S_v$, which is the set of vertices not in $S_v$ but adjacent to $S_v$. Define the \emph{influence} of $u$ on $v$ by
\begin{align}\label{eq:influence-intro}
  \widehat{a}_u \defeq \max_{\sigma \in \{0,1\}^{\partial S_v}} \DTV{\mu^{\sigma^{u \gets 0}}_v}{\mu^{\sigma^{u \gets 1}}_v},
\end{align}
where $\sigma^{u \gets c}$ denotes the configuration on $\partial S_v$ obtained from $\sigma$ by changing the value of $u$ to $c$.
Mossel and Sly showed that if the ASSM property $\sum_{u \in \partial S_v} \widehat{a}_u \leq \frac{1}{20}$ holds and the mixing time of Glauber dynamics on the conditional distribution $\mu^{\sigma}_{S_v}$ is at most $T_{\text{local}}$ for any $\sigma \in \{0,1\}^{\partial S_v}$, then the mixing time of Glauber dynamics on $\mu$ is at most $O(T_{\text{local}} \cdot n \log n \cdot \max_{v \in V} \log |S_v \cup \partial S_v|)$.
Their result works for ferromagnetic two-spin systems on graphs with \emph{bounded degrees}. 
For the Ising model in the uniqueness regime, the ASSM property can be verified if the region $S_v$ is a ball centered at $v$ with radius $\ell_0 = O(1)$~\cite{MS13}.
Since the degrees are bounded, $|S_v \cup \partial S_v|$ is a constant, implying that $T_{\text{local}} = O(1)$. The overall mixing time of the Glauber dynamics on $\mu$ is $O(n \log n)$.

However, what we consider are general graphs with possibly unbounded degrees. Consider a star centered at $v$. If we choose $S_v=\{v\}$, then ASSM does not necessarily hold. If we choose $S_v$ as a ball centered at $v$ with radius $1$, the resulting $S_v$ is the whole $V$, and bounding local mixing $T_{\text{local}}$ is the same as bounding the mixing time of Glauber dynamics on $\mu$. To resolve these issues, we introduce a weaker version of the ASSM property. For each vertex $v \in V$, we algorithmically choose a region $S_v$ and also define a set of good boundary configurations $\Omega_{\partial S_v} \subseteq \{0,1\}^{\partial S_v}$. The specific choice of $S_v$ and $\Omega_{\partial S_v}$ will be given in later. We define a new influence bound $a_u$ as
\begin{align*}
  a_u \defeq \max_{\sigma \in \Omega_{\partial S_v}} \DTV{\mu^{\sigma^{u \gets 0}}_v}{\mu^{\sigma^{u \gets 1}}_v},
\end{align*}
Compared with~\eqref{eq:influence-intro}, the new influence considers only ``typical'' boundary conditions, namely those from $\Omega_{\partial S_v}$, on $\partial S_v$. Using the monotone coupling technique, we show that the mixing time of Glauber dynamics on $\mu$ is at most $O(T_{\textnormal{burn-in}} + T_{\textnormal{local}} \cdot n \log n \cdot \max_{v \in V} \log |S_v \cup \partial S_v|) = O(T_{\textnormal{burn-in}} + T_{\textnormal{local}} \cdot n \log^2 n)$ as long as the following conditions holds for two parameters $T_{\textnormal{burn-in}}$ and $T_{\textnormal{local}}$.
\begin{itemize}
  \item For the Glauber dynamics $(X_t)_{t \geq 0}$ on $\mu$,
    starting from an arbitrary $X_0 \in \{0,1\}^V$, for any $t \geq T_{\textnormal{burn-in}}$, any $v \in V$, with probability at least $1 - \frac{1}{\mathrm{poly}(n)}$, it holds that $X_t(\partial S_v) \in \Omega_{\partial S_v}$.
    \item ASSM holds for typical boundary conditions: $\sum_{u \in \partial S_v} a_u \leq \frac{1}{20}$ for all $v \in V$.
    \item For any vertex $v \in V$ and any $\sigma \in \{0,1\}^{\partial S_v}$, the Glauber dynamics on $\mu^\sigma_{S_v}$ has mixing time $T_{\textnormal{local}}$.
\end{itemize}

Compared with the result of Mossel and Sly, the key advantage is that we only require the ASSM property to hold under a typical-boundary condition after burn-in, while they require the ASSM property to hold for worst-case boundary conditions. For the star graph centered at $v$, we can simply take $S_v = \{v\}$ and let $\Omega_{\partial S_v}$ contains all configurations on neighbors of $v$ such that at least a constant fraction of them are assigned $1$. Note that the parameter setting is $\beta \leq 1 < \gamma$. If $\Omega(n)$ neighbors of $v$ are in state 1, then since $\gamma > 1$, the value on $v$ is almost fixed to be 1, so we can bound the sum of the influences. However, in the original definition of Mossel and Sly, the maximum influence for $w$ to $v$ is achieved when all other vertices are $0$. In this case, if $\beta = 1$, then each $a_u = \Omega(1)$, so the total influence is $\Omega(n)$.

\subsection{Typical-case ASSM for ferro spin systems}\label{sec:T-ASSM-overview}
To carry out the ideas in the previous section, 
we need to carefully choose the region $S_v$ and the set of good boundary configurations $\Omega_{\partial S_v}$ so that all the above conditions hold, which is the most technical part of the proof. 
We will guarantee that $|S_v| = \mathrm{polylog}(n)$.
Then, for the local mixing bound $T_{\textnormal{local}}$, the conditionally distribution is defined on $N = \mathrm{polylog}(n)$ vertices. Using the all-to-one influence bound and the result in~\cite{ALO24}, we have $T_{\textnormal{local}} \leq N^{O(1)} = \mathrm{polylog}(n)$.

We next give a detailed construction of the region $S_v$. To illustrate the idea, let us first focus on a special case when the graph $G$ is a tree. We run a DFS starting from the root $v$. Suppose the DFS procedure visits a vertex $w$. We first add $w$ into the region $S_v$. Next, let $u_0 = v, u_1, \ldots, u_k = w$ be the path from $v$ to $w$ in the tree. For each $u_i$, let $d_i$ denote the number of children of $u_i$ in the tree rooted at $v$. 
\begin{itemize}
    \item If $\sum_{i=1}^k d_i < D_1 = O(\log \log n)$, we recursively do the DFS on all children of $w$;
    \item If $\sum_{i=1}^k d_i \geq D_1$, we will \emph{not} recursively do the DFS on any child of $w$. 
      Instead, if the number of children of $w$ is less than $D_2 = (\log n)^3$, we add all these children into the region $S_v$ and terminate the exploration in this branch.
      Otherwise, we stop at $w$.
\end{itemize}
Overall, the DFS procedure will construct a region $S_v$, where the induced subgraph $T_{S_v} = G[S_v]$ is a subtree rooted at $v$. For any vertex $w \in S_v$, in the subtree $G[S_v]$, we can upper bound the degree sum of all vertices on the path from $v$ to $w$. Using this property, we can show that $|S_v| = \mathrm{polylog}(n)$.  

Let $\partial S_v$ be the outer boundary of $S_v$.
Define $\Omega_{\partial S_v}$ as the set of all boundary configurations $\sigma \in \{0,1\}^{\partial S_v}$ such that for any vertex $w \in S_v$ with $K$ neighbors in $\partial S_v$, if $K \geq D_2/3$, then at least $K/\log n = \Omega((\log n)^2)$ neighbors are assigned $1$ in $\sigma$. In other words, if $w$ has many neighbors in $\partial S_v$, then a significant proportion of them are assigned $1$. Since $\beta \leq 1$, when Glauber dynamics updates a vertex $u$, with a constant probability, the value on $u$ is updated to 1. After running Glauber dynamics for $T_{\textnormal{burn-in}} = O(n \log n)$ steps, a simple coupon collector and Chernoff bound argument shows that with high probability, the configuration on $\partial S_v$ is in $\Omega_{\partial S_v}$.

We next bound the sum $\sum_{u \in \partial S_v} a_u$. Fix a vertex $u \in \partial S_v$.
We first explain why a single influence $a_u$ is small. Then we give some high level ideas on how to bound the sum of the influences. To analyze the influence $a_u$, we need to consider a spin system with pinnings defined on the induced subgraph $G[S_v \cup \partial S_v]$. Using the self-reducibility property of ferromagnetic two-spin systems, we can remove the pinning and analyze a spin system on $T' = G[S_v \cup \{u\}]$ with some effective external fields on the inner boundary of $S_v$. Then, $a_u$ is the one-to-one influence from $u$ to $v$ in $T'$. Guo and Lu~\cite{GuoL18} showed the following \emph{computationally efficient} correlation decay result. Let $v_0 = v, v_1, \ldots, v_k = u$ be the path from $v$ to $u$ in the tree $T'$. Let $d'_i$ be the number of children of $v_i$ in $T'$. Then
\begin{align*}
   a_u \leq C_1  \exp\left(-\sum_{i=1}^{k-1}d'_i / C_2\right),
\end{align*}
for some sufficiently large constants $C_1, C_2 > 0$. Let $d_1,d_2,\ldots,d_{k}$ be the number of children of $v_i$ in the tree $G$ rooted at $v$. By the definition of $T' = G[S_v \cup \{u\}]$ and the construction of $S_v$, we have $d'_i = d_i$ for $1\leq i \leq k -2$ and $d'_{k-1} = 1$. 
Depending on how $v_{k-1}$ is added to $S_v$, there are two cases.
\begin{itemize}
    \item The vertex $v_{k-1}$ is added to $S_v$ because the DFS stops at the vertex $v_{k-2}$ and $v_{k-2}$ has less than $ D_2$ children. 
      However, stopping at $v_{k-2}$ means that $\sum_{i=1}^{k-2} d_i \geq D_1$.
      Thus $\sum_{i=1}^{k-1} d_i' \geq \sum_{i=1}^{k-2} d_i \geq D_1 = \Omega(\log \log n)$ and then $a_u$ is small.
    \item The vertex $v_{k-1}$ is added to $S_v$ because the DFS stops at the vertex $v_{k-1}$ and $v_{k-1}$ has at least $D_2$ children.
      Now, although $\sum_{i=1}^{k-1}d_i \geq D_1$, we have no lower bound on $\sum_{i=1}^{k-1} d_i'$ because $d'_{k-1}$ can be much smaller than $d_{k-1}$. However, in this case $v_{k-1}$ has many neighbors in $\partial S_v$ because $d_{k-1} \geq D_2$. By the definition of $\Omega_{\partial S_v}$, many neighbors of $v_{k-1}$ are assigned 1. Since the spin system is ferromagnetic, the value on $v_{k-1}$ is almost  fixed to be 1. The vertex $v_{k-1}$ \emph{blocks} the influence from $u$ to $v$ and $a_u$ is small.
\end{itemize}


To bound the sum of influences $\sum_{u \in \partial S_v} a_u$, we decompose the sum as $\sum_{k \geq 1} \sum_{u \in L_k(v) \cap \partial S_v}a_u \defeq \sum_{k \geq 1} \text{Inf}(k)$, where $L_k(v)$ denotes the set of vertices at level $k$ in the tree $G$ rooted at $v$ and $\text{Inf}(k)$ is the sum of influences at level $k$. 
We then use correlation decay analysis to bound $\text{Inf}(k)$ for each level $k$.
Compared to the all-to-one influence bound, which is proved using a similar methodology, a new challenge is the presence of the boundary conditions in $a_u$'s.
For two vertices $u$ and $u'$ in $\partial S_v$ at the same level $k$, 
the boundary conditions to achieve $a_{u}$ and $a_{u'}$ may be very different and some of the disagreements may be very close to the root $v$.
This makes the correlation decay analysis difficult to carry out.

To resolve this issue, we showed that, roughly speaking, if $\lambda < \lambda_0$ (for the definition of $\lambda_0$, recall \Cref{thm:alternating-scan-mixing-ferromagnetic-two-spin-system}), then we can assume that 
\begin{align}\label{eq:consistent-pinnings}
  \forall w \in L_{<k}(v) \cap \partial S_v, \quad \sigma(w) = \tau(w), \text{ where } L_{<k}(v) \defeq \cup_{j=1}^{k-1} L_j(v),
\end{align}
where $\sigma$ and $\tau$ are two boundary conditions that achieve $a_{u}$ and $a_{u'}$.
Hence, when analysing $\text{Inf}(k)$, we can assume all pinnings above level $k$ are consistent for all $u \in L_k(v) \cap \partial S_v$.
The disagreements only appear after level $k$.
Details of this argument are in \Cref{lemma:monotone-potential}.
With its help we then can apply the correlation decay analysis to establish ASSM. 
We remark that~\eqref{eq:consistent-pinnings} is the only place where we need to use the stronger condition $\lambda < \lambda_0$ in instead of $\lambda < \lambda_c$. 
If one can verify the typical-case ASSM property when $\lambda < \lambda_c$, then the above analysis framework gives an improved $\tilde{O}(n)$ mixing time to \Cref{thm:glauber-mixing-1}.


So far, all the discussion above assumes the graph $G$ itself is a tree. For a general graph $G$, the set $S_v$ can be constructed as follows.
We first construct a \emph{self-avoiding walk} (SAW) tree $T_{\text{SAW}}$ of the graph $G$ rooted at $v$ (a tree enumerating all self-avoiding walks from $v$ in graph $G$). Then, using the same construction as in the tree case, we construct the region $S_v^T$ for the SAW tree $T_{\text{SAW}}$ and then map all vertices in $S_v^T$ back to the original graph $G$ to obtain $S_v$. 
Details of this construction are in \Cref{sec:construct-region}.
Let $\partial S_v$ be the outer boundary of $S_v$ in $G$. The good boundary condition $\sigma \in \Omega_{\partial S_v}$ is defined similarly as above: if a vertex $w \in S_v$ has many neighbors in $\partial S_v$, then many of them are assigned 1 in $\sigma$.
To prove the ASSM property in $G$, we reduce the task to analyzing influences in the self-avoiding walk tree $T_{\text{SAW}}$. 
Using the ideas above, we show that every path in the SAW tree contributes a good decay of correlation, so that typical-case ASSM holds in general graphs.

\section{Preliminaries}

\subsection{Markov chain and mixing time}
Let $X_t$ be a Markov chain on a state space $\Omega$ with transition matrix $P$. 
We call a Markov chain \emph{irreducible} if for any two states $x,y \in \Omega$, there exists a positive integer $t$ such that $P^t(x,y) > 0$,
\emph{aperiodic} if for any $x \in \Omega$, $\gcd\{t \geq 1: P^t(x,x) > 0\} = 1$,
and \emph{reversible} with respect to a distribution $\mu$ if $\mu(x)P(x,y) = \mu(y)P(y,x)$ for all $x,y \in \Omega$. 
An irreducible, aperiodic, and reversible Markov chain has a unique stationary distribution $\mu$.
The mixing time is defined as
\begin{align*}
    t_{\textnormal{mix}}^P(\epsilon) = \max_{x \in \Omega} \min\{t \geq 0: \DTV{P^t(x,\cdot)}{\mu} < \epsilon\}.
\end{align*}
We often consider the mixing time when $\epsilon = 1/(4e)$ because of the following general bound
\begin{align}\label{eq:general-bound-on-mixing-time}
\forall \epsilon > 0, \quad t_{\textnormal{mix}}^P(\epsilon) \leq t_{\textnormal{mix}}^P\left(\frac{1}{4e}\right)\log \frac{1}{\epsilon}.
\end{align}

Let $\mu$ be a distribution over $\Omega = \{0,1\}^V$. 
Let $P$ be the Glauber dynamics on $\mu$. Then, the transition matrix $P$ is positive semi-definite with real non-negative eigenvalues $1 = \lambda_1 \geq \lambda_2 \geq \cdots \geq \lambda_{|\Omega|} \geq 0$. The spectral gap of $P$ is defined as $\gamma_{\text{GD}} = 1 - \lambda_2$. For any distribution $\nu$ over $\Omega$, it is well known that
\begin{align*}
D_{\chi^2}(\nu P \Vert \mu P) \leq \tp{1 - \gamma_{\text{GD}}}^t \cdot D_{\chi^2}(\nu \Vert \mu),
\end{align*}
where $D_{\chi^2}(\nu \Vert \mu) = \sum_{x \in \Omega} \frac{(\nu(x) - \mu(x))^2}{\mu(x)}$ is the chi-squared divergence between $\nu$ and $\mu$. The following relationship between the total variation distance and the chi-squared divergence holds:
\begin{align*}
\DTV{\nu}{\mu} \leq \sqrt{ D_{\chi^2}(\nu \Vert \mu)}.
\end{align*}
As a consequence, for the Glauber dynamics on $\mu$ starting from an arbitrary configuration $X_0 = \sigma$,
\begin{align*}
\DTV{X_t}{\mu} \leq \sqrt{ D_{\chi^2}(X_t \Vert \mu)} \leq \sqrt{ \tp{1 - \gamma_{\text{GD}}}^t \cdot D_{\chi^2}(X_0 \Vert \mu)} \leq \sqrt{ \tp{1 - \gamma_{\text{GD}}}^t \cdot \frac{1}{\mu(\sigma)}}.
\end{align*}
where $X_t$ is the distribution of the Glauber dynamics on $\mu$ after $t$ steps starting from $X_0$. Hence, the mixing time of the Glauber dynamics on $\mu$ is at most
\begin{align}\label{eq:mixing-time-glauber-gap}
t_{\textnormal{mix}}^P(\epsilon) \leq \frac{1}{\gamma_{\text{GD}}}\log \frac{1}{\epsilon^2 \mu(\sigma)}.
\end{align}
The ratio $\frac{1}{\gamma_{\text{GD}}}$ is called the relaxation time of the Glauber dynamics on $\mu$. For the other direction,
\begin{align}\label{eq:mixing-time-glauber-gap-other-direction}
\forall \epsilon > 0, \quad t_{\textnormal{mix}}^{P}(\epsilon) \geq \tp{\frac{1}{\gamma_{\text{GD}}} - 1} \log \frac{1}{2\epsilon}.
\end{align}

Next, consider a Gibbs distribution $\mu$ defined on a bipartite graph $G=(V_0,V_1,E)$. Let $Q$ be the alternating-scan sampler on $\mu$. Formally, let $P_0$ denote the transition matrix of updating the configuration on $V_0$ conditional on the current configuration of the other part $V_1$, and let $P_1$ denote the transition matrix of updating the configuration on $V_1$ conditional on the current configuration of the other part $V_0$. Then, the transition matrix $Q$ of the alternating-scan sampler is defined as
\begin{align*}
Q = P_1P_0.
\end{align*}
When $\mu$ is the Gibbs distribution of a restricted Boltzmann machine, the Markov chain $Q$ is irreducible, aperiodic, and has the unique stationary distribution $\mu$. However, $Q$ may not be reversible with respect to $\mu$. Let the multiplicative reversiblization be $R(Q) = QQ^*$, where $Q^*$ is defined by
\begin{align*}
Q^*(\sigma,\tau) = \frac{\mu(\tau)}{\mu(\sigma)} Q(\tau,\sigma).
\end{align*}
Then $R(Q)$ is reversible with respect to $\mu$.
Furthermore, all eigenvalues of $R(Q)$ are real and non-negative~\cite{fill1991eigenvalue}.
The relaxation time of the alternating-scan sampler is defined by
\begin{align*}
T_{\text{rel}}(Q) = \frac{1}{1 - \sqrt{1 - \gamma(R(Q))}},
\end{align*}
where $\gamma(R(Q)) = 1 - \lambda_2(R(Q))$, and $\lambda_2(R(Q))$ is the second largest eigenvalue of $R(Q)$.

\begin{proposition}[\text{\cite[Theorem 1]{GuoKZ18}}]\label{prop:relaxation-time-alternating-scan}
    For a RBM on a bipartite graph with Gibbs distribution $\mu$,
    \begin{align*}
     T_{\text{rel}}(Q) \leq \frac{2}{\gamma_{\text{GD}}},
    \end{align*}
where $\gamma_{\text{GD}}$ is the spectral gap of the Glauber dynamics on $\mu$.
\end{proposition}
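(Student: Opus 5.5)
The plan is to compare both chains through their Dirichlet forms, using the structure of $Q=P_1P_0$. For a bipartite Gibbs distribution, each $P_i$ is the orthogonal projection in $L^2(\mu)$ onto the functions that depend only on $X(V_{1-i})$. Indeed, $P_i f = \mathbb{E}_\mu[f \mid X(V_{1-i})]$, so $P_i$ is self-adjoint and idempotent. Hence $Q^* = P_0 P_1$ and
\begin{align*}
R(Q) = P_1 P_0 P_0 P_1 = P_1 P_0 P_1 .
\end{align*}
Write $\gamma(R(Q)) = 1-\lambda_2(R(Q))$. The first step is to show that $1-\gamma(R(Q)) = \lambda_2(P_1P_0P_1)$ equals $\|P_1P_0 - \Pi\|^2$, where $\Pi$ denotes the projection onto constants. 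The second step is to bound this norm in terms of the spectral gap of the Glauber dynamics.

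Concretely, I would use the standard fact (via Fill's bound, or directly for products of two projections) that the rate of the systematic scan is controlled by the Dirichlet forms of the individual block updates. For $f \perp 1$,
\begin{align*}
\langle (I-R(Q)) f, f\rangle_\mu = \|f\|^2 - \|P_0P_1 f\|^2 \;\geq\; \tfrac12\bigl(\mathcal{E}_{P_0}(f,f) + \mathcal{E}_{P_1}(f,f)\bigr),
\end{align*}
where $\mathcal{E}_{P_i}(f,f) = \|f - P_i f\|^2$. This is where the Pythagorean identity for projections is used: $\|f\|^2 - \|P_0 P_1 f\|^2 = \|f - P_1 f\|^2 + \|P_1 f - P_0P_1 f\|^2$. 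I then bound the second term from below against $\|f-P_0 f\|^2$ by the triangle inequality, $\|f-P_0f\| \le \|f - P_0P_1 f\| \le \|f-P_1 f\| + \|P_1 f - P_0 P_1 f\|$. Getting the precise constant right here is the only delicate point, and it is what produces the factor $2$.

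Next I compare the block heat-bath Dirichlet form with the Glauber one. Within $V_i$ the conditional distribution given $X(V_{1-i})$ is a product measure, since there are no edges inside a part. For a product measure, the single-site heat-bath Dirichlet form dominates the block variance up to the number of sites, by tensorization of variance (Efron--Stein). Summing the conditional Efron--Stein inequality over $V_0$ and $V_1$ gives
\begin{align*}
\mathcal{E}_{P_0}(f,f)+\mathcal{E}_{P_1}(f,f) \;\ge\; n\,\mathcal{E}_{\mathrm{GD}}(f,f) \;\ge\; n\,\gamma_{\text{GD}}\mathrm{Var}_\mu(f),
\end{align*}
where the factor $n$ cancels the $1/n$ in the uniform choice of vertex by Glauber dynamics. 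I would have to check the normalisation carefully: the statement presumably uses the convention in \cite{GuoKZ18}, where $\gamma_{\text{GD}}$ is measured per sweep or the relaxation time is counted in rounds, so I would align it with that.

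Combining the two steps gives $\gamma(R(Q)) \ge \gamma_{\text{GD}}$ in the appropriate normalisation. Then
\begin{align*}
1-\sqrt{1-\gamma(R(Q))} \;\ge\; \tfrac12\gamma(R(Q)) \;\ge\; \tfrac12 \gamma_{\text{GD}},
\end{align*}
which yields $T_{\text{rel}}(Q) \le 2/\gamma_{\text{GD}}$. I expect the main obstacle to be the first inequality: extracting a clean constant for $\|f\|^2-\|P_0P_1f\|^2$ in terms of both block Dirichlet forms, rather than only $\mathcal{E}_{P_1}$. My plan there is to use the alternating-projection (Friedrichs angle) identity, which states that $\|P_0P_1 - \Pi\|$ equals the cosine of the angle between the ranges of $P_0$ and $P_1$.
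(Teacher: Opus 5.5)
Your setup is correct: the $P_i$ are the conditional-expectation projections, $Q^*=P_0P_1$, $R(Q)=P_1P_0P_1$, and $\lambda_2(R(Q))=\sup_{f\perp 1}\|P_0P_1f\|^2/\|f\|^2$. The last step $1-\sqrt{1-x}\ge x/2$ is also fine. (The paper gives no proof of its own; it imports Theorem~1 of \cite{GuoKZ18}, which is stated for lazy Glauber, and pays the factor $2$ for that.) However, both middle inequalities in your plan are false as written.

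\emph{The block Dirichlet form bound.} The pointwise inequality $\|f\|^2-\|P_0P_1f\|^2\ge\frac12(\mathcal{E}_{P_0}+\mathcal{E}_{P_1})(f,f)$ fails. Take a single edge with $V_0=\{a\}$ and $V_1=\{b\}$. Let $u,w$ be $x_b,x_a$ centred and normalised in $L^2(\mu)$, so that $\mathrm{ran}\,P_0=\mathrm{span}(1,u)$ and $\mathrm{ran}\,P_1=\mathrm{span}(1,w)$. Put $\rho=\langle u,w\rangle_\mu$ and $f=w-\rho u$. Then $P_0f=0$, $P_1f=(1-\rho^2)w$, and
\[
\|f\|^2-\|P_0P_1f\|^2-\tfrac12\bigl(\mathcal{E}_{P_0}+\mathcal{E}_{P_1}\bigr)(f,f)=(1-\rho^2)\bigl(\rho^2-\tfrac12\bigr)\bigl(\rho^2-1\bigr),
\]
which is negative whenever $\tfrac12<\rho^2<1$. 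Your Pythagoras-plus-triangle argument gives exactly the constant $(3-\sqrt5)/2$. That constant is the best possible pointwise (it is approached as $\rho\to1$), so no refinement of a pointwise argument reaches $1/2$. The fix is to evaluate the forms at $g=P_1f$ instead of at $f$. For $g\in\mathrm{ran}\,P_1$ with $g\perp1$ you have $\mathcal{E}_{P_1}(g,g)=0$ and $\mathcal{E}_{P_0}(g,g)=\|g\|^2-\|P_0g\|^2$. So with $\kappa\defeq\inf_{h\perp1}(\mathcal{E}_{P_0}+\mathcal{E}_{P_1})(h,h)/\|h\|^2$,
\[
\|P_0P_1f\|^2\le(1-\kappa)\|P_1f\|^2\le(1-\kappa)\|f\|^2 ,
\]
and hence $\gamma(R(Q))\ge\kappa$. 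No Friedrichs-angle identity is needed.

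\emph{The comparison with Glauber} goes the wrong way. Write $e_v(f)\defeq\mathbb{E}_\mu[\mathrm{Var}(f\mid X_{V\setminus v})]$, so that $\mathcal{E}_{\text{GD}}(f,f)=\frac1n\sum_v e_v(f)$. Efron--Stein for the conditional product measure $\mu(\cdot\mid X_{V_{1-i}})$ gives
\[
\mathcal{E}_{P_0}(f,f)+\mathcal{E}_{P_1}(f,f)\le\sum_v e_v(f)=n\,\mathcal{E}_{\text{GD}}(f,f).
\]
This is an upper bound, which is useless for a lower bound on the gap. The inequality you wrote is false: for the uniform product measure and $f=\prod_{v\in V_0}(2x_v-1)$, the left side is $1$ and the right side is $|V_0|$.

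What is true is monotonicity of expected conditional variance. Since $\sigma(X_{V_{1-i}})\subseteq\sigma(X_{V\setminus v})$ for $v\in V_i$, you get $\mathcal{E}_{P_i}(f,f)\ge e_v(f)$ for every $v\in V_i$. Therefore
\[
\mathcal{E}_{P_0}+\mathcal{E}_{P_1}\ge\max_{v\in V_0}e_v(f)+\max_{v\in V_1}e_v(f)\ge\mathcal{E}_{\text{GD}}(f,f),
\]
so $\kappa\ge\gamma_{\text{GD}}$, with no factor $n$. No factor $n$ is wanted. $\gamma_{\text{GD}}$ here is the ordinary per-step gap of non-lazy Glauber dynamics; the paper's remark ties the $2$ to laziness in the cited formulation, not to a per-sweep normalisation. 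The stated bound is simply weak by a factor of order $n$. Taken literally, your two constants would give $T_{\text{rel}}(Q)\le 4/(n\gamma_{\text{GD}})$, far stronger than the statement, which should have signalled the error.

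With both repairs you get $\gamma(R(Q))\ge\gamma_{\text{GD}}$, and your final display then yields $T_{\text{rel}}(Q)\le 2/\gamma_{\text{GD}}$ exactly as you wrote it.
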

\begin{remark}
Theorem 1 in \cite{GuoKZ18} considers the spectral gap of the lazy version of the Glauber dynamics on $\mu$, which is $\frac{1}{2}I + \frac{1}{2}P$, where $I$ is the identity matrix. Hence, we add a factor of 2 in the above proposition.
\end{remark}

The mixing time of the alternating-scan sampler on $\mu$ can be bounded by the following proposition.

\begin{proposition}[\text{\cite[Theorem 3]{GuoKZ18}}]\label{prop:mixing-time-alternating-scan}
For the alternating-scan sampler $Q$ on an RBM, starting from a configuration $\sigma \in \{0,1\}^V$, after running $Q$ for $T_{\text{rel}}(Q) \log \frac{4e^2}{\epsilon^2 \mu(\sigma)}$ steps, the total variation distance between the resulting distribution and the stationary distribution is at most $\epsilon$.
\end{proposition}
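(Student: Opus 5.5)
The plan is a standard $L^2(\mu)$ contraction argument for the non-reversible kernel $Q = P_1P_0$, using the multiplicative reversiblization $R(Q) = QQ^*$ to control the operator norm of $Q$ on mean-zero functions.

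\emph{Setup.} Let $\nu_0 = \mathbb{1}_\sigma$ be the point mass at $\sigma$, and write its density with respect to $\mu$ as $h_0 = \nu_0/\mu$. Then $\|h_0 - 1\|_{L^2(\mu)}^2 = D_{\chi^2}(\nu_0\Vert\mu) \le 1/\mu(\sigma)$. After one application of $Q$, the density of $\nu_0 Q$ is $Q^* h_0$. Since $\mu$ is stationary for both $Q$ and $Q^*$, the map $Q^*$ preserves the space $L^2_0(\mu)$ of mean-zero functions. So after $s$ full rounds,
\begin{align*}
\|h_s - 1\|_{L^2(\mu)} \le \|Q^*\|_{L^2_0(\mu)}^{\,s}\,\|h_0-1\|_{L^2(\mu)} .
\end{align*}

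\emph{Operator norm.} By definition $Q^*$ is the $L^2(\mu)$-adjoint of $Q$. Hence $\|Q^*\|_{L^2_0}^2 = \|Q\|_{L^2_0}^2$, which equals the largest eigenvalue of $QQ^* = R(Q)$ restricted to $L^2_0(\mu)$. That eigenvalue is $\lambda_2(R(Q)) = 1-\gamma(R(Q))$; here I use that the eigenvalues of $R(Q)$ are real and non-negative, and that the constant function spans the top eigenspace by irreducibility. Therefore $\|Q\|_{L^2_0} = \sqrt{1-\gamma(R(Q))} = 1 - 1/T_{\text{rel}}(Q) \le e^{-1/T_{\text{rel}}(Q)}$. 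It follows that
\begin{align*}
\DTV{\nu_0 Q^s}{\mu} \le \tfrac12\|h_s-1\|_{L^2(\mu)} \le \tfrac12\, e^{-s/T_{\text{rel}}(Q)}\,\mu(\sigma)^{-1/2},
\end{align*}
using Cauchy--Schwarz for the first inequality.

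\emph{Half steps and constants.} The sampler's step counter advances one half-update at a time, so $t$ steps consist of $\lfloor t/2\rfloor$ full applications of $Q$, possibly followed or preceded by a single $P_i$. Each $P_i$ is a conditional-expectation (projection) operator that is reversible with respect to $\mu$, so it has norm at most $1$ on $L^2_0(\mu)$. An extra half step therefore never increases the $\chi^2$ distance. Choosing $t \ge T_{\text{rel}}(Q)\log\frac{4e^2}{\epsilon^2\mu(\sigma)}$ makes the right-hand side at most $\epsilon$. The factor $e^2$ in the logarithm absorbs the loss from the floor, since it corresponds to one extra round of $Q$ when $T_{\text{rel}}\ge 1$. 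The factor $4$ absorbs the factor $\tfrac12$ from Cauchy--Schwarz.

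\emph{Main obstacle.} The only non-routine point is the identification of $\|Q\|_{L^2_0}$ with $\sqrt{\lambda_2(QQ^*)}$ for a non-reversible chain, together with keeping track of left versus right action (densities evolve under $Q^*$). Once that is set up carefully, the rest is bookkeeping of constants.
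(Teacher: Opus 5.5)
Your argument is correct and is essentially the route the paper relies on. The paper gives no proof of its own but cites \cite[Theorem 3]{GuoKZ18}, which rests on Fill's $\chi^2$ contraction bound via the multiplicative reversiblization $R(Q)=QQ^*$ \cite[Theorem 2.1]{fill1991eigenvalue}; you reproduce exactly that argument, with $\mu(\sigma)$ in place of $\mu_{\min}$ and a general $\epsilon$. Two minor remarks. First, the appeal to irreducibility is superfluous: the largest eigenvalue of $R(Q)$ on $L^2_0(\mu)$ equals $\lambda_2(R(Q))$ simply because the constants form an invariant eigenspace whose orthogonal complement is also invariant. Second, if ``steps'' are read as full applications of $Q$ rather than half-updates, your bound only gets stronger.
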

\begin{remark}
The mixing time upper bound stated in \cite[Theorem 3]{GuoKZ18} is $T_{\text{rel}}(Q) \log \frac{4e^2}{\mu_{\min}}$, 
where $\mu_{\min} \defeq \min_{\sigma \in \{0,1\}^V} \mu(\sigma)$ and they define the mixing time by setting $\epsilon=1/(2e)$. 
To get \Cref{prop:mixing-time-alternating-scan}, generalising from $1/(2e)$ to an arbitrary $\epsilon$ is straightforward,
and the proof of \cite[Theorem 2.1]{fill1991eigenvalue}, which the proof in \cite{GuoKZ18} is based on, already deals with $\mu(\sigma)$ instead of $\mu_{\min}$.
\end{remark}

\subsection{Self-reducibility}
Let $G = (V,E)$ be a graph. Let $\mu$ be the Gibbs distribution of a ferromagnetic two-spin system on $G$ with parameters $(\beta_e,\gamma_e)_{e \in E},(\lambda_v)_{v \in V}$.

Fix a subset $\Lambda \subseteq V$. Let $\sigma \in \{0,1\}^{V \setminus \Lambda}$ be a configuration on $V \setminus \Lambda$. We use $\mu^\sigma$ to denote the distribution of $X \sim \mu$ conditional on $X(\Lambda) = \sigma$. The pinning $\sigma$ induces a conditional distribution $\mu^\sigma_{\Lambda}$ on $\Lambda$ given $\sigma$. Note that $\mu^\sigma_{\Lambda}$ is a Gibbs distribution of a ferromagnetic two-spin system on $G[\Lambda]$ with edge activities $(\beta_e,\gamma_e)_{e \in G[\Lambda]}$. For all vertices $v \in \Lambda$, the vertex activity at $v$ is updated to $\lambda_v'=\lambda_v \prod_{e \in E_c} \beta_e \prod_{e \in E_1} \frac{1}{\gamma_e} \leq \lambda_v$, where $E_c$ is the set of edges $\{v,u\}$ for $u \in V \setminus \Lambda$ and $\sigma_u = c$ for $c \in \{0,1\}$.

\begin{observation}[Self-reducibility under pinning]\label{obs:self-reducibility}
  Let $\beta \leq 1 < \gamma$, $\beta\gamma > 1$ and $\lambda < \lambda_c(\beta,\gamma)$. For any $(\beta,\gamma,\lambda)$-ferromagnetic two-spin system with Gibbs distribution $\mu$ in $G=(V,E)$, for any pinning $\sigma$ on a subset $\Lambda \subseteq V$, $\mu^\sigma_\Lambda$ is also the Gibbs distribution of a $(\beta,\gamma,\lambda)$-ferromagnetic two-spin system on $G[\Lambda]$. 
  Similarly, for any $(\gamma,\lambda)$-RBM with Gibbs distribution $\mu$, $\mu^\sigma_\Lambda$ is also the Gibbs distribution of a $(\gamma,\lambda)$-RBM on $G[\Lambda]$.
\end{observation}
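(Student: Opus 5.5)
The plan is to write out the conditional Gibbs measure explicitly, identify the new parameters, and then check each defining constraint of \Cref{def:ferromagnetic-two-spin-system} (respectively \Cref{def:gamma-lambda-rbm}) one at a time. Throughout I read the pinned set as the complement $V\setminus\Lambda$, as in the preceding paragraph, so that $\sigma\in\{0,1\}^{V\setminus\Lambda}$.

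\textbf{Step 1: factor the weight.} For $\tau\in\{0,1\}^\Lambda$, the weight of the joint configuration $(\tau,\sigma)$ splits into three groups of factors:
\begin{itemize}
\item factors from vertices and edges lying entirely in $V\setminus\Lambda$, which do not depend on $\tau$ and are absorbed into the normalising constant;
\item factors from edges of $G[\Lambda]$, which appear unchanged with activities $(\beta_e,\gamma_e)$;
\item factors from a vertex $v\in\Lambda$ together with its crossing edges $e=\{v,u\}$, $u\notin\Lambda$.
\end{itemize}
For the last group, if $\tau_v=0$ the contribution is $\lambda_v\prod_{e\in E_0}\beta_e$. If $\tau_v=1$ it is $\prod_{e\in E_1}\gamma_e$. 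Dividing by the second expression, the effective field is
\[
\lambda_v'=\lambda_v\prod_{e\in E_0}\beta_e\prod_{e\in E_1}\gamma_e^{-1}.
\]
Hence $\mu^\sigma_\Lambda$ is the Gibbs distribution of the two-spin system on $G[\Lambda]$ with edge activities $(\beta_e,\gamma_e)_{e\in E(G[\Lambda])}$ and fields $(\lambda_v')_{v\in\Lambda}$.

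\textbf{Step 2: check the constraints.} The edge constraints $\beta_e\le\beta\le1<\gamma\le\gamma_e$ and $1<\beta_e\gamma_e\le\beta\gamma$ are inherited, because the edge set of $G[\Lambda]$ is a subset of $E$ with the same activities. For the fields, every $\beta_e\le1$ and every $\gamma_e^{-1}<1$, so $\lambda_v'\le\lambda_v<\lambda$. In the RBM case, $G[\Lambda]$ is still bipartite and still has $\beta_e=1$ and $\gamma_e\ge\gamma$, with the same field bound.

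\textbf{Anticipated obstacle: degenerate parameters.} There is essentially no real obstacle; the only care needed is here.
\begin{itemize}
\item Positivity: some $\beta_e$ may be very small, but the $\beta_e$ and $\lambda_v$ are positive, so $\lambda_v'>0$ whenever $\lambda_v>0$. If zero fields are allowed, they are preserved as well.
\item Empty subgraph: if $G[\Lambda]$ has no edges, the definition holds vacuously.
\item Role of $\lambda<\lambda_c(\beta,\gamma)$: this hypothesis plays no role in the argument, since all the conditions are closed under decreasing $\lambda_v$ and passing to induced subgraphs.
\end{itemize}
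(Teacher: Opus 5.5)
Your proposal is correct and follows essentially the same route as the paper, which also writes $\mu^\sigma_\Lambda$ as the two-spin system on $G[\Lambda]$. In that system the edge activities are inherited and the effective fields are $\lambda_v'=\lambda_v\prod_{e\in E_0}\beta_e\prod_{e\in E_1}\gamma_e^{-1}\le\lambda_v$, from which all the defining constraints follow. Your added remarks are also right: the pinned set should be read as $V\setminus\Lambda$, and the hypothesis $\lambda<\lambda_c$ plays no role.
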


\subsection{Self-avoiding walk tree}\label{sec:saw-tree}
Let $G = (V,E)$ be a graph.
Assume that there is a total ordering of all vertices $V$ in $G$.
The self-avoiding walk (SAW) tree is defined as follows.
\begin{definition}[\text{SAW tree~\cite{weitz2006counting}}]\label{def:saw-tree}
 Let $G=(V,E)$ be a graph. For any vertex $v\in V$, the SAW tree $T_{\textnormal{SAW}}(G,v)$ rooted at $v$ enumerates all SAWs from $v$ such that every path $v_0-v_1-\cdots-v_\ell$ from root to leaf satisfies that either it is a SAW that ends at $v_\ell$ (namely the degree $\deg_G(v_\ell)$ of $v_\ell$ is $1$) or it is a SAW that ends at a cycle-closing vertex $v_\ell$ ($v_0-v_1-\cdots-v_{\ell-1}$ is a SAW and $v_\ell = v_i$ for some $0\leq i \leq \ell - 2$).
\end{definition}

In addition, we also need to consider SAW trees when a boundary is present.
Let $S\subseteq V$ be a set of boundary vertices.
The SAW tree $T = T_{\textnormal{SAW}}(G,v,S)$ rooted at $v$ with boundary $S$ is same as $T = T_{\textnormal{SAW}}(G,v)$ defined in \Cref{def:saw-tree},
except that any SAW stops immediately after reaching a boundary vertex $u \in S$, in which case $u$ is the last vertex in that SAW.
Thus, $ T_{\textnormal{SAW}}(G,v)$ is the same as $ T_{\textnormal{SAW}}(G,v,\emptyset)$.

The following two observations are straightforward to verify from the definition.
\begin{observation}\label{obs:non-leaf-vertices-in-saw-tree}
  For any non-leaf vertex $u$ in $T_{\textnormal{SAW}}(G,v,S)$, the degree of $u$ in $T$ is the same as the degree of its preimage $f(u)$ in $G$. 
\end{observation}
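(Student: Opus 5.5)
The plan is to argue directly from \Cref{def:saw-tree} and the construction of $T = T_{\textnormal{SAW}}(G,v,S)$, using the natural projection $f$. This map sends each tree node, which corresponds to a walk $v_0-v_1-\cdots-v_j$, to the endpoint $v_j$ of that walk. I will fix a non-leaf node $u$ of $T$, given by a walk $W = v_0-v_1-\cdots-v_j$ with $f(u)=v_j$. The goal is a bijection between the edges at $u$ in $T$ and the edges at $v_j$ in $G$.

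The first step is to identify when a node is a non-leaf. By the definition, a walk is extended only if it is self-avoiding. It must also not have stopped, which rules out three cases: it did not close a cycle, it did not reach a vertex of degree $1$, and it did not hit a boundary vertex of $S$. So for non-leaf $u$, the walk $W$ is a genuine SAW, $v_j\notin S$ (or $j=0$), and $v_j$ is not a cycle-closing repetition.

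The second step is to describe the children of $u$. They are exactly the walks $W-w$ for neighbors $w$ of $v_j$ in $G$ with $w\neq v_{j-1}$. Each such $w$ falls into one of three cases, and each gives exactly one child. If $w$ is a new vertex, $W-w$ is a SAW and is a child, whether or not it later continues. If $w=v_i$ for some $i\le j-2$, $W-w$ is a cycle-closing child. If $w$ lies in $S$, the walk simply stops there, but the child is still present. Hence the number of children equals $\deg_G(v_j)-1$ when $j\ge 1$, and $\deg_G(v_0)$ at the root.

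The final step adds the parent edge to $v_{j-1}$ when $j\ge1$. This gives $\deg_T(u)=\deg_G(f(u))$ in all cases.

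The main point needing care is that no neighbor is lost or duplicated. Distinct neighbors give distinct walks, so there is no duplication. The only excluded neighbor is $v_{j-1}$, and it is accounted for by the parent edge. This relies on the convention that stopping rules (cycle-closing, degree $1$, boundary) only prevent a node from having children; they never remove the node itself.
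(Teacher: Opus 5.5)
Your proof is correct. It is the same direct check from the definition that the paper leaves to the reader, stating only that the observation is ``straightforward to verify'': children of a non-leaf node correspond bijectively to the neighbours of $f(u)$ other than the predecessor on the walk, and the parent edge accounts for that predecessor. Your count uses Weitz's convention that a cycle-closing child $v_\ell=v_i$ requires $i\le \ell-3$, so stepping back from $v_{\ell-1}$ to $v_{\ell-2}$ is never a child, and this is what the observation needs; read literally, the range $0\le i\le \ell-2$ in \Cref{def:saw-tree} would admit such a backtracking leaf and make every non-root degree exceed $\deg_G(f(u))$ by one.
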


\begin{observation}\label{obs:leaf-vertices-in-saw-tree-with-boundary}
    Any leaf $u$ in $T_{\textnormal{SAW}}(G,v,S)$ falls into three disjoint types: (1) $u$ is a copy of some vertex in the boundary $S$; (2) $u$ is a cycle-closing vertex; (3) $u$ has degree one in $G$ and is not a copy of any vertex in $S$. As a corollary, any cycle-closing vertex $u$ cannot be a copy of any vertex in $S$.
\end{observation}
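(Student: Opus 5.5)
The plan is to unwind \Cref{def:saw-tree} together with the boundary modification $T=T_{\textnormal{SAW}}(G,v,S)$, and classify a leaf $u$ of $T$ by the reason its self-avoiding walk $v_0-v_1-\cdots-v_\ell$ (with $v_\ell=f(u)$) stops. Each node of $T$ corresponds to a walk from the root, and the walk is extended by every neighbour of its endpoint in $G$, except the predecessor $v_{\ell-1}$, unless one of the stopping rules applies. The stopping rules are:
\begin{enumerate}
\item[(a)] $v_\ell\in S$, so the walk stops immediately at a boundary vertex;
\item[(b)] $v_\ell=v_i$ for some $0\le i\le \ell-2$, so $u$ is a cycle-closing vertex;
\item[(c)] $v_\ell$ has no neighbour other than $v_{\ell-1}$, i.e.\ $\deg_G(v_\ell)=1$.
\end{enumerate}
I would first argue that these are the only ways a node can be a leaf. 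If none applies, then $v_0,\dots,v_\ell$ is a genuine SAW ending at a non-boundary vertex of degree at least two. Then every neighbour $w\neq v_{\ell-1}$ of $v_\ell$ yields a child: either $w$ is new, giving a longer SAW, or $w=v_i$ with $i\le \ell-2$, giving a cycle-closing child. So the node is not a leaf. This establishes that every leaf is of type (1), (2) or (3). One small edge case is the root with $\ell=0$: it has no predecessor, so it is a leaf only if $v\in S$ or $\deg_G(v)=0$. I would treat that degenerate case separately, or note that the type (3) phrasing covers it in the intended setting.

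Next comes disjointness, where the key point is the order in which the rules are applied in the construction. A cycle-closing node is created as a child of a node whose walk $v_0\cdots v_{\ell-1}$ is a SAW, and by construction a SAW is not extended past a boundary vertex. So the cycle-closing copy of $v_i$ arises only when the walk reached the earlier occurrence $v_i$ without stopping there. Since every walk stops immediately on reaching a vertex of $S$, and $i\le \ell-2<\ell$, it follows that $v_i\notin S$. Hence a cycle-closing vertex is never a copy of a vertex of $S$, so types (1) and (2) are disjoint; this is exactly the stated corollary. Type (3) is defined to exclude copies of $S$, so it is disjoint from (1). Type (3) is also disjoint from (2): a degree-one endpoint $v_\ell$ is reached via its unique neighbour $v_{\ell-1}$, whereas a cycle-closing endpoint $v_\ell=v_i$ with $i\le\ell-2$ would require $v_i$ to have two distinct neighbours, namely $v_{i+1}$ (or $v_{i-1}$) and $v_{\ell-1}$.

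The step I expect to need care is this last one: checking that $v_{\ell-1}\neq v_{i\pm1}$, or handling the case $i=0$. For $i=0$, the root $v_0$ has neighbours $v_1$ and $v_{\ell-1}$, and these are distinct because $v_0\cdots v_{\ell-1}$ is self-avoiding and $\ell-1\ge 2$. That gives $\deg_G(v_0)\ge2$, so $v_0$ cannot be of type (3). Everything else is direct from the definition.
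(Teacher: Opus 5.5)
Your proof is correct and is the direct unwinding of the SAW-tree definition with the boundary-stopping rule, which is all the paper intends; it calls the observation straightforward and gives no written proof. You also checked the two points that need checking. The first is the degenerate isolated root. The second is the convention that the predecessor is never a child: it forces $i\le \ell-3$, hence $v_{i+1}\neq v_{\ell-1}$ and $\deg_G(v_i)\ge 2$, which separates type (2) from type (3).
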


Consider a spin system on graph $G$ with parameters $(\beta_e,\gamma_e)_{e \in E},(\lambda_v)_{v \in V}$ and the Gibbs distribution $\mu$. Fix a vertex $v$ and a pinning $\sigma \in \{0,1\}^S$ over boundary $S$. To analyse the conditional marginal distribution $\mu_v^\sigma$, we need to use the following construction of SAW trees with pinnings.

\begin{definition}[SAW tree with pinning]\label{def:saw-tree-with-pinning}
    Let $\sigma \in \{0,1\}^S$ be a partial pinning on $S$, where $S \subseteq V$ is a set of boundary vertices.
    The SAW tree $T_{\textnormal{SAW}}(G,v,\sigma)$ rooted at $v$ with pinning $\sigma$ is constructed as follows.
    \begin{enumerate}
        \item Construct the SAW tree $T = T_{\textnormal{SAW}}(G,v,S)$ with boundary $S$.
        \item For any leaf vertex in $T$ that is a copy of some $u \in S$, pin its value to be $\sigma(u)$.
        \item For any cycle-closing leaf vertex $v_\ell$ in $T$, say $v_\ell = v_i$ for some $0\leq i \leq \ell - 2$ in the SAW, we pin the value of $v_\ell$ to be $0$ if $v_{i+1} > v_\ell$ and pin the value of $v_\ell$ to be $1$ if $v_{i+1} < v_\ell$ according to the total order of $V$.
    \end{enumerate}
    \end{definition}
         
    By \Cref{obs:leaf-vertices-in-saw-tree-with-boundary}, if some leaf vertex $u$ in $T_{\textnormal{SAW}}(G,v,S)$ gets pinned in the second step of \Cref{def:saw-tree-with-pinning}, then the pinning on $u$ will not be changed in the third step because $u$ cannot be a cycle-closing vertex.

    Let $T = T_{\textnormal{SAW}}(G,v,\sigma)$.
    Denote $T = (V_T,E_T)$, where $V_T$ is all vertices in $T$ and $E_T$ are all edges in $T$.
    By \Cref{def:saw-tree}, some leaf vertices of $T$ are cycle-closing vertices and we define
    \begin{align}\label{eq:gamma}
        \Gamma \defeq \{w \in V_T: w \text{ is a cycle-closing leaf vertex of } T\}.
    \end{align}
    We remark that $\Gamma \subseteq V_T$ is determined by the tuple $(G,v,S)$ and all vertices in $\Gamma$ are leaf vertices of $T$. We use $\rho_\Gamma$ to denote the pinning on all cycle-closing leaf vertices of $T$.

For a vertex $w$ in graph $G$, it may have multiple copies in $T$. We use $\text{copy}(w)$ to denote the set of all copies of $w$ in $T$.
Define the set of all copies of vertices in $S$ as
\begin{align}\label{eq:lambda}
  \bar{S} \defeq \bigcup_{w \in S} \text{copy}(w).
\end{align}
By the construction of $T$, $\bar{S}$ is a subset of leaf vertices in $T$. We use $\sigma_{\bar{S}}$ to denote the pinning on all vertices in $\bar{S}$. 
Note that $\sigma_{\bar{S}}$ is determined by the pinning $\sigma \in \{0,1\}^S$.

Every vertex in $T$ is a copy of some vertex in $G$ and every edge in $T$ is a copy of some edge in $G$. We can naturally define a Gibbs distribution on $T$ by inheriting the parameters of the two-spin systems on $G$. Denote the Gibbs distribution on $T$ as $\pi$. 
Let $\pi^{\bar{\sigma}}$ be the Gibbs distribution on $T$ with pinning $\bar{\sigma} = \rho_\Gamma \cup \sigma_{\bar{S}}$.  
The main point of all these constructions is the following well-known result by Weitz~\cite{weitz2006counting}. 
    
 \begin{proposition}[\text{\cite{weitz2006counting}}]\label{prop:marginal-distributions-are-identical}   
    For the root vertex $v$, two marginal distributions $\mu_v^\sigma$ and $\pi_v^{\bar{\sigma}}$ are identical. 
\end{proposition}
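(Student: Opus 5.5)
The plan is the standard Weitz argument: reduce the marginal on $G$ to a marginal on a graph with one fewer cycle-bearing vertex, recurse on the ratio $R_v^\sigma \defeq \mu_v^\sigma(0)/\mu_v^\sigma(1)$, and run the same recursion on $T$. The argument is by induction on the number of unpinned vertices $|V\setminus S|$. It is cleaner to prove the slightly stronger statement that the two ratios agree for every graph, every total order, every boundary $S$ and every pinning $\sigma$, since pinnings will accumulate during the recursion. In the base case $V\setminus S=\{v\}$, the tree $T$ is a star: $v$ together with copies of its neighbours, all of which lie in $S$ and are pinned by $\sigma$. The two marginals are then given by the same explicit product, so they coincide.

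For the inductive step, let $u_1<\cdots<u_d$ be the neighbours of $v$ in $V\setminus S$ (pinned neighbours only contribute an effective field, identically in $G$ and $T$). Build the auxiliary graph $G'$ by splitting $v$ into $d$ copies $v_1,\dots,v_d$, where $v_i$ is attached only to $u_i$, and keep the pinned neighbours attached to a single copy or absorb them into the field. Put the external field $\lambda_v$ on exactly one copy so that the weights match. Then $R_v^\sigma$ in $G$ equals the ratio of the probabilities that all copies are $0$ versus all copies are $1$ in $G'$. This telescopes as
\[
R_v^\sigma=\lambda_v' \prod_{i=1}^d \frac{\mu'(v_i=0\mid \tau_i)}{\mu'(v_i=1\mid \tau_i)},
\]
where $\tau_i$ pins $v_j=0$ for $j<i$ and $v_j=1$ for $j>i$. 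This ordering convention is exactly what produces the rule in step (3) of \Cref{def:saw-tree-with-pinning}. Each factor is a two-spin edge function of the ratio $R_{u_i}$ in $G'-v_i$ with pinning $\sigma\cup\tau_i$, namely
\[
\frac{\beta_e R_{u_i}+1}{R_{u_i}+\gamma_e}.
\]

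Next I apply the induction hypothesis to each $R_{u_i}$. The vertex $v$ is now effectively pinned, because all its copies other than $v_i$ are pinned and $v_i$ has been removed, so the number of free vertices drops by one. By the hypothesis, $R_{u_i}$ equals the root ratio of the SAW tree rooted at $u_i$ in $G'-v_i$ with pinning $\sigma\cup\tau_i$.

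The key step, and the main bookkeeping obstacle, is to check that this subtree is the same as the subtree of $T_{\textnormal{SAW}}(G,v,\sigma)$ hanging below the child $u_i$ of the root, with the same pinnings. Self-avoiding walks from $u_i$ that avoid $v$ are identical in the two settings. A walk that returns to $v$ through $u_j$ ends at a copy pinned by $\tau_i$: it gets $0$ if $j<i$ and $1$ if $j>i$. In the SAW tree this walk is $v,u_i,\dots,u_j,v$, a cycle closing at $v_0$ with $v_1=u_i$. Its pin should therefore be $0$ if $u_i>u_j$ and $1$ if $u_i<u_j$, which matches $\tau_i$ given that the $u$'s are indexed in increasing order. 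Deeper cycle-closings, and leaves that are copies of vertices of $S$, are handled by the inductive hypothesis, since the construction is the same recursive one and vertices pinned by $\sigma$ never become cycle-closing (\Cref{obs:leaf-vertices-in-saw-tree-with-boundary}).

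The final step is on the tree side. On $T$, the ratio at the root satisfies the same product recursion over its children, because on a tree the subtrees are conditionally independent given the root. Matching the recursion factor by factor gives $R_v^\sigma=R^{\bar\sigma}_{\mathrm{root}}$, and hence $\mu_v^\sigma=\pi_v^{\bar\sigma}$. The only delicate points are the order convention and the degenerate cases where a ratio is $0$ or $\infty$ because of pinnings. I would handle the latter by working with the unnormalised pair $(\mu(0),\mu(1))$ instead of the ratio.
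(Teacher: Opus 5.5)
Your proposal is Weitz's original argument (split $v$ into copies, telescope over the $\tau_i$ pinnings, induct on the number of free vertices, and match the subtree below each child $u_i$); the paper does not reproduce a proof but simply invokes exactly this argument by citing \cite{weitz2006counting}, and your version is correct. One remark: in the matching step you compare $v_{i+1}$ with the penultimate vertex $v_{\ell-1}$ (i.e.\ $u_i$ with $u_j$), which is the right convention, whereas step (3) of \Cref{def:saw-tree-with-pinning} as typed compares $v_{i+1}$ with $v_\ell=v_i$ itself; that literal rule would pin every copy of $v$ below $u_i$ identically, which breaks the telescoping already on a triangle, so your reading is the one under which the proposition holds.
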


\subsection{Tree recursion and potential function}
Consider a SAW tree $T$ rooted at $v$ with pinning $\bar{\sigma}$ on a subset of leaf vertices. For each vertex $w \in T$, let $T_w$ be the sub-tree of $T$ rooted at $w$.
Consider the spin system induced by the sub-tree $T_w$ on the vertices in $T_w$.
Let $p_w(0)$ and $p_w(1)$ be the marginal probabilities of $w$ being 0 and 1 in the Gibbs distribution induced by the sub-tree $T_w$ respectively. Define
\begin{align}\label{eq:R-w}
    R_w = \frac{p_w(0)}{p_w(1)}.
\end{align}
If the value of $w$ is pinned to be 0, then $p_w(1) = 0$ and $R_w = \infty$. This happens only at leaves. 

Let $u$ be a vertex in $T$. Let $u_1,u_2,\ldots,u_d$ be the children of $u$. The tree recursion function $F_u: [0,\infty]^d \to \mathbb{R}$ at the vertex $u$ is defined as
\begin{align}\label{eq:tree-recursion}
F_u(x_1,x_2,\ldots,x_d) = \lambda_u \prod_{i=1}^d \frac{\beta_{u,u_i} x_i +1}{x_i+\gamma_{u,u_i}}.
\end{align}
Weitz~\cite{weitz2006counting} shows a well-known recursion relation 
\begin{align*}
    R_u = F_u(R_{u_1},\ldots,R_{u_d}).
\end{align*}

Let $\beta \leq 1 < \gamma$, $\beta \gamma > 1$ and $\lambda >0$ be three parameters. 
Now, let us consider a $(\beta,\gamma,\lambda)$-ferromagnetic two-spin system on graph $G$ in \Cref{def:ferromagnetic-two-spin-system}.
Guo and Lu~\cite{GuoL18} used a potential function method to analyze the recursion function. 
By~\eqref{eq:tree-recursion}, the image space of $F_u$ is within $[0,\lambda)$.
Let $\Phi: [0,\lambda) \to \mathbb{R}$ be a differentiable and increasing potential function. Instead of analyzing the recursion of $R_w$, they analyze the recursion of $\Phi(R_w)$.
The tree recursion in~\eqref{eq:tree-recursion} at vertex $u$ with potential function $\Phi$ is
\begin{align*}
F^\Phi_u (y_1,y_2,\ldots,y_d) = (\Phi \circ F_u \circ \Phi^{-1})(y_1,y_2,\ldots,y_d),
\end{align*}
where $y= \Phi(x)$, $y_i$ = $\Phi(x_i)$ and all $x_i \in [0,\lambda)$.

The potential function used by Guo and Lu~\cite{GuoL18} for ferromagnetic two-spin systems is given implicitly via its derivative $\phi(x) = \Phi'(x)$,
which is
\begin{align}\label{eq:phi-definition}
  \phi(x)\defeq \min\left \{\frac{1}{x\log\frac{\lambda}{x}},\frac{1}{t}\right \}, \text{ where } t = t(\beta,\gamma,\lambda) > 0\text{ is a constant}.
\end{align}

Let $\gamma > 1$ be a constant. For $(\lambda,\gamma)$-RBMs, which is a special case of ferromagnetic two-spin systems, we can define $\phi(x)$ in the same form as
\begin{align}\label{eq:phi-definition-1}
    \phi(x)\defeq \min\left \{\frac{1}{x\log\frac{\lambda}{x}},\frac{1}{t}\right \}, \text{ where } t = t(\gamma,\lambda) > 0\text{ is a constant}.
\end{align}

The following condition is easy to prove using the definition of $\phi(x)$.
\begin{condition}\label{condition:phi-bound}
There exist constants $C_{\max}(t,\lambda) > 0$ and $C_{\min}(t,\lambda) > 0$ such that
$$\forall x \in [0,\lambda), \quad C_{\min} \leq \phi(x) \leq C_{\max}.$$
\end{condition}

The specific definition of the constant $t$ can be found in \cite{GuoL18}. The potential function is then
\begin{align}\label{eq:Phi-definition}
  \Phi(x) = \int_0^x \phi(s)\, ds.
\end{align}
We also require that the potential function $\Phi(x)$ satisfies the following property.

\begin{condition}\label{condition:potential-function-bound}
There exist a constant $0< \alpha = \alpha(\beta,\gamma,\lambda) < 1$ such that for all $x_1,\ldots,x_d \in (0, \lambda)$,
    \begin{align*}
    C_{\phi,d}(\boldsymbol{x}) \defeq \phi(F_u(\boldsymbol{x})) \sum_{i=1}^d \left\vert  \frac{\partial F_u}{\partial x_i}(\boldsymbol{x})  \right\vert \frac{1}{\phi(x_i)}  \leq 1 - \alpha.
    \end{align*}
\end{condition}

\begin{condition}\label{condition:trivial-bound-for-C}
There exist constants $C_{\text{trl}} > 0$ and $C_{\text{decay}} > 0$ such that for any $x_1,x_2,\ldots,x_d \in (0, \lambda)$,
\begin{align*}
    \phi(F_u(\boldsymbol{x})) \left\vert  \frac{\partial F_u}{\partial x_i}(\boldsymbol{x})  \right\vert \frac{1}{\phi(x_i)}
    \leq C_{\text{trl}} \lambda_u \exp(-C_{\text{decay}} d).
\end{align*}
\end{condition}

In \cite{GuoL18}, \Cref{condition:potential-function-bound} is proved for uniform parameters, namely, the same $\beta,\gamma$ for all edges and the same $\lambda$ for all vertices. For non-uniform parameters $(\lambda_v)_{v \in V}$ and $(\beta_e,\gamma_e)_{e \in E}$ and RBMs, the proof is similar. The proof of the following lemma can be found in Appendix \ref{sec:decay-general}. 
\begin{lemma}[\text{\cite{GuoL18}}]\label{lemma:potential-function-bound-GL}
    Let $\beta \leq 1 < \gamma$, $\beta \gamma > 1$ and $\lambda < \lambda_c(\beta,\gamma) \defeq (\gamma/\beta)^{\frac{\sqrt{\beta \gamma}}{\sqrt{\beta \gamma}-1}}$ be three parameters.
    Consider the recursion function $F_u$ in \eqref{eq:tree-recursion} with $\lambda_u < \lambda$ and for any edge $e = \{u,u_i\}$, $\beta_e \leq \beta \leq 1 < \gamma \leq \gamma_e$, $\beta\gamma \geq \beta_e \gamma_e > 1$.
    Then, \Cref{condition:phi-bound} and \Cref{condition:potential-function-bound} hold.

    Similarly, let $\gamma > 1$ and $\lambda < \lambda_c(\gamma,1) \defeq \gamma^{\frac{\sqrt{\gamma}}{\sqrt{\gamma}-1}} $ be two constants. Consider the recursion function $F_u$ in \eqref{eq:tree-recursion} with $\lambda_u < \lambda$ and for any edge $e = \{u,u_i\}$, $\beta_e = 1 < \gamma \leq \gamma_e$. Then, \Cref{condition:phi-bound} and \Cref{condition:potential-function-bound} hold.
\end{lemma}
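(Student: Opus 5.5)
The plan is to reduce the non-uniform statement to the uniform analysis of Guo and Lu by a monotonicity argument. \Cref{condition:phi-bound} is immediate: $\phi$ is continuous on $[0,\lambda)$, equals $1/t$ near $0$, and $x\log\frac{\lambda}{x}$ is bounded above by $\lambda/e$ on $(0,\lambda)$. Hence $C_{\min}=\min\{e/\lambda,1/t\}$ and $C_{\max}=1/t$; both depend only on $(t,\lambda)$.

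For \Cref{condition:potential-function-bound}, I first write the summand explicitly. We have
\begin{align*}
\frac{\partial F_u}{\partial x_i}(\boldsymbol{x}) = -F_u(\boldsymbol{x})\,\frac{\beta_i\gamma_i-1}{(\beta_i x_i+1)(x_i+\gamma_i)},
\end{align*}
so, with $y=F_u(\boldsymbol{x})$,
\begin{align*}
C_{\phi,d}(\boldsymbol{x}) = \phi(y)\,y\sum_{i=1}^d \frac{\beta_i\gamma_i-1}{(\beta_i x_i+1)(x_i+\gamma_i)}\cdot\frac{1}{\phi(x_i)}.
\end{align*}
Following \cite{GuoL18}, I split into two regimes according to which branch of the $\min$ defines $\phi(y)$.

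When $y\le t$ (the \emph{small} regime), $\phi(y)y\le y/t$. Each factor $\frac{\beta_i x_i+1}{x_i+\gamma_i}$ is at most $\max\{\beta,1/\gamma\}$-type bounds strictly below $1$. So $y\le\lambda_u\prod_i(\cdot)$ decays exponentially in $d$, while the sum is $O(d)$, and the product is below $1-\alpha$ once $t$ is chosen small. This same estimate, term by term, is exactly what gives \Cref{condition:trivial-bound-for-C}, so I would prove the two conditions together.

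When $y>t$, $\phi(y)y=1/\log(\lambda/y)$, and I would use $\log(\lambda/y)=\log(\lambda/\lambda_u)+\sum_i\log\frac{x_i+\gamma_i}{\beta_ix_i+1}$. The quantity to bound becomes a ratio of sums over $i$ of per-edge functions $g_{e}(x_i)$ and $h_e(x_i)$. By the mediant inequality, it suffices to bound $\max_i g_{e_i}(x_i)/h_{e_i}(x_i)$ by $1-\alpha$. This works because $\log(\lambda/\lambda_u)\ge\log(\lambda/\lambda)$ only helps, i.e.\ decreasing $\lambda_u$ enlarges the denominator. Thus the problem reduces to a single-variable inequality for one edge with parameters $(\beta_e,\gamma_e)$.

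The main obstacle is showing this one-edge ratio is monotone in $(\beta_e,\gamma_e)$ in the right direction, so the worst case is $\beta_e=\beta$, $\gamma_e=\gamma$ (respectively $\beta_e=1$, $\gamma_e=\gamma$ in the RBM case). There the uniform computation of \cite{GuoL18} applies under $\lambda<\lambda_c$. I would try to parametrize by $b=\beta_e\gamma_e\in(1,\beta\gamma]$ and $\gamma_e\ge\gamma$, and check by differentiation that the ratio increases in $b$ and decreases in $\gamma_e$.

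In the RBM case, $\beta_e\gamma_e=\gamma_e$ is unbounded, but the constraint $\beta_e=1$ still gives monotonicity in $\gamma_e$. The threshold is then $\lambda_c(\gamma,1)$ with worst case $\gamma_e=\gamma$.
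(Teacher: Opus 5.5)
There is a genuine gap in the step you flag as the main obstacle. Write the one-edge ratio as $G(\beta_e,\gamma_e;x)=\frac{\beta_e\gamma_e-1}{(\beta_e x+1)(x+\gamma_e)\log\frac{x+\gamma_e}{\beta_e x+1}}$; the factor $1/\phi(x)$ does not depend on the edge.

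Two of your monotonicity claims are true. $G$ increases in $\beta_e$ for fixed $\gamma_e$. For $\beta_e=1$, $G$ decreases in $\gamma_e$; this is exactly what the paper uses in the RBM case, so that part of your plan is fine.

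The third claim fails. For $\beta_e<1$, $G$ does not decrease in $\gamma_e$ at fixed $b=\beta_e\gamma_e$. For $\gamma_e\ll x$, its $\gamma_e$-derivative at fixed $b$ has the sign of $\log(1/\beta_e)-1$. Concretely, take $b=1.1$ and $x=1000$. Then $G(0.11,10;x)\approx 4.0\cdot 10^{-7}$ but $G(0.055,20;x)\approx 6.0\cdot 10^{-7}$. This is admissible: for global $(\beta,\gamma)=(0.11,10)$ the edge $(0.055,20)$ meets every hypothesis, and $\lambda_c(0.11,10)\approx e^{97}$. So the edge carrying the global parameters is not pointwise the worst case, and your reduction to the uniform Guo--Lu computation does not go through.

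The missing idea is to compare thresholds rather than ratios, as the paper does. Set $g_{\lambda,e}(x)=G(\beta_e,\gamma_e;x)\,x\log(\lambda/x)$.
\begin{itemize}
\item Apply the one-edge inequality $g_{\lambda_c(\beta_e,\gamma_e),e}(x)\le 1$ (\Cref{lemma:guo-lemma-3.3}) at each edge's \emph{own} parameters.
\item Note $\lambda_c(\beta,\gamma)\le\lambda_c(\beta_e,\gamma_e)$: the base $\gamma_e/\beta_e$ is larger, and the exponent is larger since $\beta_e\gamma_e\le\beta\gamma$.
\item Use $g_{\lambda,e}(x)=g_{\lambda_c,e}(x)\cdot\frac{\log(\lambda/x)}{\log(\lambda_c/x)}$ to get a factor bounded away from $1$ on $[x_0,\lambda)$.
\item Near $0$, use the crude bound $g_{\lambda,e}(x)\le(\beta\gamma-1)x\log(\lambda/x)/\log\frac{\lambda+\gamma}{\lambda+1}$. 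Together these make $\alpha$ independent of the edge.
\item Handle the $1/t$ branch of $1/\phi(x_i)$ edge by edge, via the explicit choice $t=\frac{(1-\alpha)\gamma}{\beta\gamma-1}\log\frac{\lambda+\gamma}{\beta\lambda+1}$ and crude bounds, not via any monotonicity.
\end{itemize}

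Your small-regime argument is also broken, though easy to repair. Bounding $\phi(y)\,y\le y/t$ and then taking $t$ small pushes in the wrong direction, and for bounded $d$ no choice of $t$ helps. Take $d=1$, $(\beta,\gamma)=(0.01,200)$, $\lambda=e^{33}<\lambda_c\approx e^{33.8}$, $x=100$, and $\lambda_u$ tuned so that $y=t$. Your bound is then at least $\frac{1}{600}\cdot 100\log(\lambda/100)\approx 4.7$.

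The split is unnecessary anyway. We have $\phi(y)\le 1/(y\log(\lambda/y))$ for every $y\in(0,\lambda)$; for $y>t$ you only have this inequality, not the equality you wrote. So your mediant argument applies to all $y$. That is precisely the paper's concluding chain:
\begin{itemize}
\item the per-edge bound $\frac{\beta_e\gamma_e-1}{(\beta_e x+1)(x+\gamma_e)}\cdot\frac{1}{\phi(x)}\le(1-\alpha)\log\frac{x+\gamma_e}{\beta_e x+1}$ is summed over the children;
\item the sum of logarithms equals $\log(\lambda_u/y)\le\log(\lambda/y)$;
\item $\phi(y)\,y\log(\lambda/y)\le 1$ finishes the proof.
\end{itemize}
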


    In addition, we also have the following trivial bound for each term in the sum $C_{\phi,d}(\boldsymbol{x})$, which verifies \Cref{condition:trivial-bound-for-C}. We remark that the following lemma does \emph{not} require that $\beta_e\gamma_e \leq \beta \gamma$ for all $e \in E$. Hence, it verifies \Cref{condition:trivial-bound-for-C} for both $(\beta,\gamma,\lambda)$-ferromagnetic two-spin systems and $(\gamma,\lambda)$-RBMs.

    \begin{lemma}\label{lemma:trivial-bound-for-C-ver}
        Let $\gamma > 1$ and $\lambda > 0$ be parameters.
        Consider the recursion function $F_u$ in \eqref{eq:tree-recursion} with $\lambda_u < \lambda$ and for any edge $e = \{u,u_i\}$, $\beta_e \leq 1 < \gamma \leq \gamma_e$ and $\beta_e\gamma_e > 1$ for all $e \in E$. For any $1\leq i \leq d$, it holds that for any $x_1,x_2,\ldots,x_d \in (0, \lambda)$,
    \begin{align*}
        \phi(F_u(\boldsymbol{x})) \left\vert  \frac{\partial F_u}{\partial x_i}(\boldsymbol{x})  \right\vert \frac{1}{\phi(x_i)} &\leq C_{\text{trl}} \cdot  \lambda_u  \tp{\frac{\lambda + 1}{\lambda + \gamma}}^{d-1}
        \leq C_{\text{trl}} \lambda_u \exp(-C_{\text{decay}} d),
    \end{align*}
    where $C_{\text{trl}} > 0$ and $C_{\text{decay}}  > 0$ are two constants.
    \end{lemma}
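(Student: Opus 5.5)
We compute the partial derivative of $F_u$ directly and then bound the two factors involving $\phi$ using \Cref{condition:phi-bound}. Taking the logarithmic derivative of \eqref{eq:tree-recursion} gives
\begin{align*}
\frac{\partial F_u}{\partial x_i}(\boldsymbol{x}) = F_u(\boldsymbol{x}) \cdot \tp{\frac{\beta_{e_i}}{\beta_{e_i}x_i+1} - \frac{1}{x_i+\gamma_{e_i}}}
= F_u(\boldsymbol{x})\cdot \frac{\beta_{e_i}\gamma_{e_i}-1}{(\beta_{e_i}x_i+1)(x_i+\gamma_{e_i})},
\end{align*}
where $e_i=\{u,u_i\}$. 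This quantity is positive since $\beta_{e_i}\gamma_{e_i}>1$. We substitute $F_u(\boldsymbol{x}) = \lambda_u\prod_j \frac{\beta_{e_j}x_j+1}{x_j+\gamma_{e_j}}$. The $i$-th factor of the product cancels the factor $\beta_{e_i}x_i+1$ in the denominator, and we are left with
\begin{align*}
\left|\frac{\partial F_u}{\partial x_i}\right| = \lambda_u \frac{\beta_{e_i}\gamma_{e_i}-1}{(x_i+\gamma_{e_i})^2}\prod_{j\neq i}\frac{\beta_{e_j}x_j+1}{x_j+\gamma_{e_j}}.
\end{align*}

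Next we bound each factor uniformly, without using any upper bound on $\gamma_e$ or $\beta_e\gamma_e$. For the middle factor, $\beta_{e_i}\gamma_{e_i}-1 \le \gamma_{e_i}$ because $\beta_{e_i}\le 1$, and $(x_i+\gamma_{e_i})^2\ge \gamma_{e_i}^2$. Hence the middle factor is at most $1/\gamma_{e_i}\le 1/\gamma<1$. For each $j\ne i$, we have $\beta_{e_j}x_j+1\le x_j+1$, so the factor is at most $\frac{x_j+1}{x_j+\gamma_{e_j}}\le \frac{x_j+1}{x_j+\gamma}$. The map $s\mapsto \frac{s+1}{s+\gamma}$ is increasing in $s$ because $\gamma>1$. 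Since $x_j<\lambda$, each such factor is therefore at most $\frac{\lambda+1}{\lambda+\gamma}$.

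Finally, \Cref{condition:phi-bound} gives $\phi(F_u(\boldsymbol{x}))\le C_{\max}$ and $1/\phi(x_i)\le 1/C_{\min}$. Here $F_u(\boldsymbol{x})\in[0,\lambda)$, so $\phi$ is evaluated inside its domain. Combining the bounds, we take $C_{\text{trl}} = \frac{C_{\max}}{C_{\min}}\cdot \frac{\lambda+\gamma}{\lambda+1}$, which absorbs the shift from $d-1$ to $d$, and $C_{\text{decay}} = \log\frac{\lambda+\gamma}{\lambda+1}>0$. This yields both inequalities. There is no real obstacle in this argument. The only points needing care are the following:
\begin{itemize}
\item not using $\beta_e\gamma_e\le\beta\gamma$, which is handled by the bound $\beta_e\gamma_e-1\le\gamma_e$;
\item noting that \Cref{condition:phi-bound} depends only on the definition of $\phi$ through $t$ and $\lambda$, so it is available in both the RBM setting and the ferromagnetic two-spin setting.
\end{itemize}
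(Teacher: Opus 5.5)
Your proposal is correct and follows the paper's proof essentially line by line: you compute $\partial F_u/\partial x_i$ explicitly, bound the distinguished factor by a constant using $\beta_e\le 1$, bound each of the other $d-1$ factors by $\frac{\lambda+1}{\lambda+\gamma}$ via $\beta_e\le 1<\gamma\le\gamma_e$ and monotonicity in $x_j$, and absorb the two $\phi$-factors with \Cref{condition:phi-bound}. Your explicit inflation of $C_{\text{trl}}$ by $\frac{\lambda+\gamma}{\lambda+1}$ to pass from exponent $d-1$ to $d$ handles a point the paper leaves implicit (with the same $C_{\text{trl}}$ on both sides, the middle-to-right step fails literally at $d=1$), and bounding the left-hand side directly is all that \Cref{condition:trivial-bound-for-C} actually needs.
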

    \begin{proof}
    By \Cref{condition:phi-bound}, we have $\phi(F_u(\boldsymbol{x})) \leq C_{\max}$ and $\frac{1}{\phi(x_i)} \leq \frac{1}{C_{\min}}$. Further, since $\beta_e \leq 1$, 
    \begin{align*}
     \abs{\frac{\partial F_u}{\partial x_i}(\boldsymbol{x})} = \lambda_u \frac{\beta_{u,u_i}\gamma_{u,u_i} - 1}{(x_i + \gamma_{u,u_i})^2} \prod_{1 \leq j \leq d: j \neq i} \frac{\beta_{u,u_j} x_j +1}{x_j+\gamma_{u,u_j}} &\leq \lambda_u \frac{\gamma_{u,u_i} - 1}{\gamma_{u,u_i}^2} \tp{\frac{\lambda + 1}{\lambda + \gamma}}^{d-1}\\
     \text{(by $\gamma_{u,u_i} > 1$)}\quad &\leq \lambda_u \tp{\frac{\lambda + 1}{\lambda + \gamma}}^{d-1}. \qedhere
    \end{align*}
    \end{proof}

    Using the potential function and the above property, Guo and Lu~\cite{GuoL18} showed the following strong spatial mixing (SSM) result for $(\beta,\gamma,\lambda)$-ferromagnetic two-spin systems.

    \begin{definition}[SSM property]\label{def:ssm-property}
        A spin system on graph $G=(V,E)$ with Gibbs distribution $\mu$ satisfies the strong spatial mixing (SSM) property if there exist constants $A > 0$ and $0 < B < 1$ such that for any two configurations $\sigma$ and $\tau$ in a subset $\Lambda \subseteq V$, where $\sigma$ and $\tau$ differ only at subset $D \subseteq \Lambda$, then for any vertex $v \not\in \Lambda$, it holds that 
        \begin{align*}
        \abs{\frac{\mu^\sigma_v(0)}{\mu^\sigma_v(1)} - \frac{\mu^\tau_v(0)}{\mu^\tau_v(1)}} \leq A (1 - B)^\ell,
        \end{align*}
        where $\ell = \min_{u \in D} d(u,v)$ is the distance from $v$ to the closest vertex in $D$.
    \end{definition}

\begin{lemma}[\text{\cite{GuoL18}}]\label{lemma:ssm-property}
    Let $\beta \leq 1 < \gamma$, $\beta \gamma > 1$ and $\lambda < \lambda_c(\beta,\gamma) \defeq (\gamma/\beta)^{\frac{\sqrt{\beta \gamma}}{\sqrt{\beta \gamma}-1}}$ be three constants.
    Any $(\beta,\gamma,\lambda)$-ferromagnetic two-spin system satisfies the SSM property.
\end{lemma}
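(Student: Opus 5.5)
The plan is the standard Weitz/Guo--Lu route: transfer to the SAW tree, then contract along the path to the nearest disagreement using the potential function $\Phi$. The proof of \Cref{lemma:ssm-property} proceeds in three steps.

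\textbf{Step 1: reduction to a tree.} Fix $v\notin\Lambda$ and two pinnings $\sigma,\tau$ on $\Lambda$ that differ exactly on $D$. Build the SAW trees $T_{\textnormal{SAW}}(G,v,\sigma)$ and $T_{\textnormal{SAW}}(G,v,\tau)$. They have the same underlying tree and the same cycle-closing pinning $\rho_\Gamma$, and they differ only at copies of vertices of $D$. By \Cref{prop:marginal-distributions-are-identical}, the ratios $\mu^\sigma_v(0)/\mu^\sigma_v(1)$ and $\mu^\tau_v(0)/\mu^\tau_v(1)$ equal the root ratios $R_v^\sigma$ and $R_v^\tau$ on the tree. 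Every copy of a vertex $u\in D$ sits at tree depth at least $d(u,v)\ge \ell$. Moreover, by \Cref{obs:self-reducibility} (or directly from \eqref{eq:tree-recursion}), every unpinned vertex $w$ has $R_w\in(0,\lambda)$, and pinned leaves contribute only through the factor $(\beta x+1)/(x+\gamma)$ at their parent.

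\textbf{Step 2: one-step contraction.} Pass to potentials $y_w=\Phi(R_w)$. Define
\[
\delta_k \defeq \max\bigl\{|\Phi(R^\sigma_w)-\Phi(R^\tau_w)| : w \text{ at depth } k \text{ and not pinned}\bigr\}.
\]
Take an internal vertex $u$ at depth $k<\ell-1$. Its children are either unpinned vertices or pinned leaves with the same pin under both $\sigma$ and $\tau$, since they are at depth $<\ell$. Apply the mean value theorem to $F^\Phi_u$ along the segment between the two child potential vectors; this is legitimate because $\Phi$ is increasing and $(0,\lambda)^d$ is convex. Agreeing pinned children can be absorbed into an effective field $\lambda_u'\le\lambda_u$, so \Cref{condition:potential-function-bound} still applies. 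Since $\partial F^\Phi_u/\partial y_i=\phi(F_u)\,\partial_iF_u/\phi(x_i)$, we get
\[
|\Phi(R^\sigma_u)-\Phi(R^\tau_u)|\le C_{\phi,d}(\boldsymbol{x}^*)\max_i|y_i^\sigma-y_i^\tau|\le(1-\alpha)\,\delta_{k+1}.
\]

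\textbf{Step 3: base case and conversion.} At depth $\ell-1$ the children may include disagreeing pinned leaves, where $R\in\{0,\infty\}$. Here I bound the difference crudely: both root-level values lie in $[0,\lambda)$, so $\delta_{\ell-1}\le\Phi(\lambda)\le C_{\max}\lambda$ by \Cref{condition:phi-bound}. Iterating Step 2 gives
\[
|\Phi(R^\sigma_v)-\Phi(R^\tau_v)|\le C_{\max}\lambda(1-\alpha)^{\ell-1}.
\]
Since $\Phi'=\phi\ge C_{\min}$, we obtain $|R^\sigma_v-R^\tau_v|\le C_{\min}^{-1}C_{\max}\lambda(1-\alpha)^{\ell-1}$. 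This gives the claim with $A=C_{\max}\lambda/(C_{\min}(1-\alpha))$ and $B=\alpha$.

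The main obstacle is the edge cases in Step 2. First, the conditions are stated for $x_i\in(0,\lambda)$ only. Children at depth $<\ell$ may be pinned to $1$ (so $R=0$), or may be unpinned vertices of degree one whose ratio equals $\lambda_w$, which could be $0$. I would handle this by absorbing all agreeing pinned children into the field, and by extending to the closed interval by continuity. Second, one must check that \Cref{lemma:potential-function-bound-GL} is uniform in the degree $d$ and in the non-uniform edge parameters, which that lemma asserts.
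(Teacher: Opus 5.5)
Your proposal is correct and follows the argument the paper relies on. The paper does not reprove this lemma; it cites Guo--Lu's potential-function proof, which is what you reconstruct: transfer to the SAW tree, contract by $1-\alpha$ per level via \Cref{lemma:potential-function-bound-GL} with agreeing pinned children absorbed into an effective field $\lambda_u'\le\lambda_u$, bound crudely by $C_{\max}\lambda$ at the level just above the nearest disagreement, and convert back using $C_{\min}$. Invoking the paper's non-uniform version of the contraction lemma, as you do, is what covers the non-uniform parameters allowed in \Cref{def:ferromagnetic-two-spin-system}.
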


Using the potential function, we can prove the SSM property for $(\gamma,\lambda)$-RBMs.
\begin{lemma}\label{lemma:ssm-property-rbm}
    Let $\gamma > 1, \lambda < \lambda_c(\gamma,1) \defeq \gamma^{\frac{\sqrt{\gamma}}{{\sqrt{\gamma}-1}}} $ be constants.
    Any $(\gamma,\lambda)$-RBM satisfies the SSM property.
\end{lemma}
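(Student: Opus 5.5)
We derive SSM for $(\gamma,\lambda)$-RBMs from the potential-function contraction of \Cref{lemma:potential-function-bound-GL}, in the same way Guo and Lu~\cite{GuoL18} derive \Cref{lemma:ssm-property}. Fix $\Lambda$, a vertex $v\notin\Lambda$, and two pinnings $\sigma,\tau$ on $\Lambda$ differing on $D$. By \Cref{prop:marginal-distributions-are-identical}, the ratios $\mu^\sigma_v(0)/\mu^\sigma_v(1)$ and $\mu^\tau_v(0)/\mu^\tau_v(1)$ equal the root ratios $R_v^\sigma$ and $R_v^\tau$ in the SAW trees $T_{\textnormal{SAW}}(G,v,\sigma)$ and $T_{\textnormal{SAW}}(G,v,\tau)$. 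These two trees have the same shape, the same parameters and the same cycle-closing pinnings $\rho_\Gamma$. They differ only at copies of vertices in $D$, and every such copy lies at depth at least $\ell=\min_{u\in D}d(u,v)$.

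For each vertex $w$ of $T$, define the potential error $\delta_w \defeq |\Phi(R_w^\sigma)-\Phi(R_w^\tau)|$. We may assume a vertex $w$ that is pinned, or that has a pinned child, is handled by absorbing pinned leaves into the external field. A child pinned to $1$ multiplies $\lambda_w$ by $1/\gamma_e\le 1$. A child pinned to $0$ multiplies it by $\beta_e=1$, which is harmless because the RBM has $\beta_e=1$. Hence every unpinned vertex has effective field below $\lambda$, and all its ratios lie in $(0,\lambda)$, the range where the conditions apply. Moreover, if a subtree contains no vertex of $D$, the two ratios agree, so $\delta_w=0$.

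At an internal vertex $u$ with unpinned children $u_1,\dots,u_d$, apply the mean value theorem to $F^\Phi_u$ along the segment between $\Phi(\boldsymbol{x}^\sigma)$ and $\Phi(\boldsymbol{x}^\tau)$. The resulting gradient term is exactly $C_{\phi,d}$ evaluated at an intermediate point, which still lies in $(0,\lambda)^d$. By \Cref{condition:potential-function-bound}, valid under $\lambda<\lambda_c(\gamma,1)$ by the second part of \Cref{lemma:potential-function-bound-GL}, this gives $\delta_u\le(1-\alpha)\max_i\delta_{u_i}$.

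Iterating the contraction from the root down to depth $\ell$ yields $\delta_v\le(1-\alpha)^{\ell-1}\max_{w}\delta_w$, where the maximum runs over vertices $w$ at depth $\ell-1$ (or an appropriate level). The starting error is bounded uniformly: all ratios lie in $[0,\lambda]$, so $\delta_w\le\Phi(\lambda)\le C_{\max}\lambda$ by \Cref{condition:phi-bound}. There is one subtlety at the disagreement leaf itself, which may be pinned to $0$, giving $R=\infty$. Its parent's ratio is nevertheless finite and below $\lambda$, so the recursion should be started at the parents of disagreeing leaves, whose $\delta$ is at most $C_{\max}\lambda$. Finally, since $\Phi^{-1}$ has derivative $1/\phi\le 1/C_{\min}$, we get $|R_v^\sigma-R_v^\tau|\le \delta_v/C_{\min}\le \frac{C_{\max}\lambda}{C_{\min}}(1-\alpha)^{\ell-2}$. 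This is the SSM bound with $B=\alpha$ and $A=C_{\max}\lambda/(C_{\min}(1-\alpha)^2)$.

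The main obstacle is making sure the mean value argument stays in the region where \Cref{condition:potential-function-bound} holds despite pinnings. Pinned leaves, and in particular $0$-pinned leaves with infinite ratio, must be folded into the vertex fields rather than fed into $F_u$. With $\beta_e=1$ these effective fields never exceed $\lambda_u<\lambda$, which is what keeps the system inside the $(\gamma,\lambda)$-RBM class, as in \Cref{obs:self-reducibility}. The rest is the standard telescoping.
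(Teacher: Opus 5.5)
Your proposal is correct and is essentially the paper's argument. The paper just says the lemma follows from the Guo--Lu potential-function argument once the one-step contraction of \Cref{lemma:potential-function-bound-GL} (second part) is available. You spell that argument out on the SAW tree, correctly absorbing pinned leaves into the fields (using $\beta_e=1$, so the effective fields stay below $\lambda$). You also correctly apply the $(1-\alpha)$-contraction only above the parents of disagreeing leaves, all of which sit at depth at least $\ell-1$.
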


Given the bound of derivative in \Cref{lemma:potential-function-bound-GL}, \Cref{lemma:ssm-property-rbm} follows from the same potential-function argument used to prove \Cref{lemma:ssm-property} in \cite{GuoL18}. 


\section{All-to-one influence bound}

We start by establishing the all-to-one influence bound.
The analysis here is also useful later to establish ASSM in \Cref{sec:assm-in-saw-tree}.

\begin{definition}[All-to-one influence]\label{def:all-to-one-influence}
    Let $\mu$ be a distribution over $\{0,1\}^V$. We say that $\mu$ has $C_{\text{inf}}$-bounded all-to-one influence if, for every vertex $v \in V$,
\begin{align*}
    \sum_{u \in V\setminus\{v\}}\abs{\Pr_{X\sim \mu}[X(v) = 0 \mid X(u) = 0] -
    \Pr_{X\sim \mu}[X(v) = 0 \mid X(u) = 1]} \leq C_{\text{inf}}.
\end{align*}
\end{definition}


\begin{theorem}\label{thm:all-to-one-influence}
    Let $\beta \leq 1 < \gamma$, $\beta\gamma > 1$, and $\lambda < \lambda_c(\beta,\gamma) \defeq (\gamma/\beta)^{\frac{\sqrt{\beta \gamma}}{\sqrt{\beta \gamma}-1}}$. 
    Let $\mu$ be the Gibbs distribution for a $(\beta,\gamma,\lambda)$-ferromagnetic two-spin system on $G=(V,E)$.
    It has $C_{\textnormal{inf}}$-bounded all-to-one influence, where $C_{\textnormal{inf}} = C_{\textnormal{inf}}(\beta,\gamma,\lambda) > 0$ is a constant depending only on $\beta,\gamma,\lambda$.

    Let $\gamma > 1, \lambda < \lambda_c(1,\gamma) \defeq \gamma^{\frac{\sqrt{\gamma}}{{\sqrt{\gamma}-1}}} $ be constants.
    Let $\mu$ be the Gibbs distribution for a $(\gamma,\lambda)$-RBM on $G=(V,E)$.
    It has $C_{\textnormal{inf}}$-bounded all-to-one influence, where $C_{\textnormal{inf}} = C_{\textnormal{inf}}(\gamma,\lambda) > 0$ is a constant depending only on $\gamma,\lambda$.
\end{theorem}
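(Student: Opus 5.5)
The plan is to bound the all-to-one influence on $v$ by passing to the SAW tree $T=T_{\textnormal{SAW}}(G,v)$ and running the Guo--Lu potential-function contraction along every root-to-leaf path, in the style of the hardcore-model argument in \cite{ALO24}. First, for a fixed $u\neq v$, pinning $u$ to $c\in\{0,1\}$ and applying \Cref{prop:marginal-distributions-are-identical} with boundary $S=\{u\}$ shows that $\mu_v^{u\gets c}$ equals the root marginal of $\pi$ on the SAW tree with all copies of $u$ pinned to $c$, together with the fixed cycle-closing pinning $\rho_\Gamma$. Since both pinnings share $\rho_\Gamma$, we can interpolate between $u\gets 0$ and $u\gets 1$ by flipping the copies of $u$ one at a time. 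The triangle inequality then bounds $|\mu_v^{u\gets 0}(0)-\mu_v^{u\gets1}(0)|$ by a sum over copies $\hat u\in\mathrm{copy}(u)$ of the one-copy root influences. Because $\Phi$ has derivative in $[C_{\min},C_{\max}]$ by \Cref{condition:phi-bound}, and $R\mapsto R/(1+R)$ is $1$-Lipschitz, it suffices to bound the change in $\Phi(R_v)$.

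Second, for a single copy $\hat u$ at depth $k$ with path $v=w_0,w_1,\dots,w_k=\hat u$, I would use the mean value theorem along the recursion. The change in $\Phi(R_v)$ is at most
\begin{align*}
\prod_{i=0}^{k-1}\phi(F_{w_i}(\boldsymbol x^{(i)}))\left|\frac{\partial F_{w_i}}{\partial x_{w_{i+1}}}\right|\frac{1}{\phi(x_{w_{i+1}})}
\end{align*}
times the change in $\Phi(R_{\hat u})$. The last factor is $O(1)$: it lies in $[0,\lambda)$ for $c=1$, and for $c=0$ we have $R_{\hat u}=\infty$, which I handle by truncating at the parent level using $F$'s boundedness. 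The intermediate points $\boldsymbol x^{(i)}$ are fixed values on the siblings, which are unaffected, plus interpolated values on the path.

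Summing over all $u$ and all copies therefore bounds the total influence by $O(1)\cdot\sum_{k\ge1}\sum_{\text{paths of length }k}\prod_i(\text{term}_i)$. The key difficulty is that the intermediate points differ between paths, so \Cref{condition:potential-function-bound} cannot be applied directly to a sum of products. I would follow \cite{ALO24}: sum first over the children of $w_{k-1}$, then over those of $w_{k-2}$, and so on, each time taking the maximum over the path-dependent intermediate points. What is needed is a uniform statement that $\sup_{\boldsymbol x}\sum_i\phi(F(\boldsymbol x))|\partial_iF(\boldsymbol x)|/\phi(x_i)\le 1-\alpha$, which is exactly \Cref{condition:potential-function-bound}. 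However, the maxima are taken separately for each child, which breaks this. To fix it, I would redo the argument with the interpolation done level by level: define hybrid trees in which all copies at depth $>j$ are switched. The difference between consecutive hybrids is then a single Lipschitz step of the whole recursion at level $j$ with a common point. This gives $\sum_{\hat u\text{ at depth }k}\le C(1-\alpha)^{k-1}$ for a fixed $u$.

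That bound per vertex is too weak, since there are many $u$. The alternative I would actually pursue uses the multivariate form: each term $\phi(F)|\partial_i F|/\phi(x_i)$ is bounded uniformly in $\boldsymbol x$ through \Cref{condition:trivial-bound-for-C}. Combined with the contraction sum, an induction on subtree height should give $\sum_{\text{leaves }\ell\text{ in }T_w}\sup\prod(\cdots)\le C'$. The main obstacle is precisely this interchange of sup and sum. My fallback is to exploit monotonicity of ferromagnetic systems: influences are nonnegative and are realized by comparing the all-$1$ and all-$0$ extremes on the relevant leaf sets. This would let me set $u\gets0$ and $u\gets1$ simultaneously for all $u$ at depth $k$, turning $\sum_u$ into a single contraction estimate $O((1-\alpha)^k)$, whose geometric sum yields $C_{\textnormal{inf}}$. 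The RBM case runs identically via the second half of \Cref{lemma:potential-function-bound-GL}.
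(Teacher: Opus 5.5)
You have the reduction right: the SAW tree, splitting the influence of $u$ over its copies, passing to $\Phi$, and cutting off $R_{\hat u}=\infty$ at the parent. You also locate the crux correctly: the intermediate points $\boldsymbol z^w$ depend on the leaf $w$, so \Cref{condition:potential-function-bound}, which controls $\sup_{\boldsymbol x}\sum_i$, cannot be applied to the level sum. None of your remedies closes this gap.

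Your induction on subtree height with per-child suprema only gives the per-level factor $\sum_i\sup_{\boldsymbol x}\phi(F_u(\boldsymbol x))\,|\partial_iF_u(\boldsymbol x)|/\phi(x_i)\le C_{\text{trl}}\lambda\, d\, e^{-C_{\text{decay}}d}$. This is the paper's \Cref{lemma:general-influence-decay-1}, and it is an $O(1)$ constant that need not be below $1$, because the suprema for different children are attained at different points. Iterated over $k$ levels it gives $O(1)^k$, not decay.

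Your monotonicity fallback is false. Monotonicity makes each single-vertex influence nonnegative, but it does not bound a sum of separately measured influences by one joint flip of the level. Take a star with center $v$ and two leaves of field $\mu$, with $\beta_e=1$ and $\gamma_e=2$. The two single-leaf influences on $R_v$ (other leaf free) sum to $\lambda_v\frac{\mu+1}{\mu+2}$. Flipping both leaves at once changes $R_v$ by only $\frac34\lambda_v$. The former is larger whenever $\mu>2$, which is allowed since $\lambda_c(1,2)=2^{2+\sqrt2}>10$.

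The missing idea is the paper's \Cref{lemma:general-influence-decay-2}, together with a decomposition arranged so that it applies.

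\emph{The decomposition.} Flipping copies $0\to1$ one at a time always leaves the other copies pinned, including shallower ones. The contexts of different $w$ then differ near the root, so no common reference point exists. The paper instead orders $\mathrm{copy}(u)$ by depth and releases the pins from the shallowest copy downwards. By monotonicity, each step costs at most $I_v^w$ from \Cref{def:influence-from-one-vertex}, whose pinning lives only on $L_k(v)\setminus\{w\}$ with the siblings of $w$ in $(0,\lambda)$ (\Cref{lemma:influence-bound-one-vertex}, at the price of a factor $2$).

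\emph{The common point.} Now the tree above level $k$ is identical and unpinned for every $w\in L_k(v)$. At any vertex at level $\ell\le k-\ell_0$, SSM (extended to ratio pinnings by monotonicity) puts every $\boldsymbol z^w$ within $\eta=Ae^{-B(\ell_0-1)}$ of a single reference point $\boldsymbol z$, obtained by pinning all of $L_k(v)$ alike. Hence each $w$-dependent coefficient is at most $(1+O(\eta))^{d+3}$ times the common one. Then one of two bounds gives a factor $1-\delta$ per level:
\begin{itemize}
\item for $d\le d_0$, \Cref{condition:potential-function-bound} applied at $\boldsymbol z$;
\item for $d>d_0$, \Cref{condition:trivial-bound-for-C}, which absorbs the $(1+O(\eta))^{d}$ factor.
\end{itemize}
Only the bottom $\ell_0$ levels use the $O(1)$ trivial bound. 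The last level contributes $O(d e^{-\Omega(d)})=O(1)$ because the siblings lie in $(0,\lambda)$. This gives $O((1-\delta)^k)$ per level, and summing the geometric series proves the theorem in both the two-spin and RBM cases.
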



We treat RBMs as a special case of ferromagnetic two-spin systems. We will highlight the differences between the two cases if necessary.
To prove this theorem, consider the SAW tree $T = T_{\textnormal{SAW}}(G,v,\emptyset)$ rooted at $v$. The cycle-closing leaves of $T$ have fixed pinned values. We use the self-reducibility property in \Cref{obs:self-reducibility} to remove all cycle-closing leaves from the SAW tree and update the external fields at their neighbours.
Thus, without loss of generality, we can assume there is no pinning on $T$. 
Let $\pi$ denote the Gibbs distribution on $T = (V_T,E_T)$, where the parameters are inherited from $\mu$. Fix a vertex $w \in V$. Let $S = \text{copy}(w)$ be the set of all copies of $w$ in $T$. By \Cref{prop:marginal-distributions-are-identical},  $\mu_v^{w \gets c}$ is identical to $\pi_v^{S \gets c}$ for $c \in \{0,1\}$, where $S \gets c$ is the pinning on $S$ such that all $x \in S$ are pinned to be $c$.

Note that tree is also a bipartite graph. The spin system on the tree is also a $(\beta,\gamma,\lambda)$-ferromagnetic two-spin system (for the two-spin system case) or a $(\gamma,\lambda)$-RBM (for the RBM case). Using \Cref{lemma:potential-function-bound-GL}, \Cref{lemma:trivial-bound-for-C-ver}, \Cref{lemma:ssm-property}, and \Cref{lemma:ssm-property-rbm}, the following condition holds for the spin system on the tree, where $t$ is the constant in $\phi(x)$ in \eqref{eq:phi-definition} and \eqref{eq:phi-definition-1} for the two-spin system case and RBM case respectively.
\begin{condition}\label{condition}
    Let $\lambda,\gamma,t > 0$ be constants, where $\gamma > 1$.
    The ferromagnetic two-spin system on the SAW tree $T = (V_T,E_T)$ satisfies the following conditions:
  \begin{itemize}
    \item for all $e \in E_T$, $\beta_e\gamma_e > 1$ and $\beta_e \leq 1 < \gamma \leq  \gamma_e$, and for all $v \in V_T$ $\lambda_v < \lambda$;
    \item there exists a potential function $\Phi$ with derivative $\phi$ in forms of \eqref{eq:phi-definition} and \eqref{eq:phi-definition-1} with $t$ and $\lambda$ satisfying:
       \begin{itemize}
         \item \Cref{condition:phi-bound} (boundedness) with constants $C_{\max}$ and $C_{\min}$,
         \item \Cref{condition:potential-function-bound} (correlation decay) with constant $\alpha$,
         \item and  \Cref{condition:trivial-bound-for-C} (large degree decay) with constants $C_{\text{trl}}$ and $C_{\text{decay}}$;
       \end{itemize}
      \item the strong spatial mixing (SSM) property holds with constants $A> 0$ and $0 < B < 1$.
    \end{itemize}
\end{condition}
By \Cref{lemma:potential-function-bound-GL}, \Cref{lemma:trivial-bound-for-C-ver}, \Cref{lemma:ssm-property}, and \Cref{lemma:ssm-property-rbm}, all the constants in \Cref{condition} depend only on $\beta,\gamma,\lambda$ for the two-spin system case and $\gamma,\lambda$ for the RBM case.

For any vertex $u \in V_T$, let $R_u$ be the marginal ratio at $u$ defined in \eqref{eq:R-w}. The ratio $R_u$ can be computed recursively using the tree recursion function $F_u$ in \eqref{eq:tree-recursion} in a bottom-up manner. From this perspective, $T$ can also be viewed as a computation tree for the ratio $R_u$.

\begin{definition} [Pinning on the computation tree]\label{def:pinning-on-computation-tree}
  Let $u \in V_T$ and $S$ be a subset of vertices in the subtree of $u$, where $u \notin S$. Let $\sigma: S \to [0,\infty]$ be a pinning on $S$ (of ratios). For each $x \in S$, we remove all the descendants of $x$ and fix the value $R_x = \sigma(x)$. Then, all pinnings are on the leaves of the subtree rooted at $u$. For all other leaf vertices $x'$, we set $R_{x'} = \lambda_{x'}$ as the definition of $R_{x'}$ in \eqref{eq:R-w}. We use $R^\sigma_u$ to denote the marginal ratio at $u$ computed via tree recursion in a bottom-up manner.

We also use the notation $R^\sigma_u$ even if $\sigma$ contains pinning outside the subtree of $u$. In this case, $R^{\sigma}_u = R^{\bar{\sigma}}_u$, where $\bar{\sigma}$ is the pinning obtained from $\sigma$ by removing the pinning outside the subtree of $u$.
\end{definition}

By definition, it is straightforward to verify that $R^{S \gets \infty}_v = \frac{\mu_v^{w \gets 0}(0)}{\mu_v^{w \gets 0}(1)}$ and $R^{S \gets 0}_v = \frac{\mu_v^{w \gets 1}(0)}{\mu_v^{w \gets 1}(1)}$, where $S$ is the set of all copies of $w$ in $T$.
Note that for the computation tree, pinnings are with respect to the ratio $R$ instead of the state, although it is easy to translate between the two.
To emphasize that the pinning is on all copies of $w$, we denote 
\begin{align*}
R_v^{w^0} = R_v^{S \gets \infty} \quad \text{and} \quad R_v^{w^1} = R_v^{S \gets 0}.
\end{align*}

The following lemma is straightforward.
\begin{lemma}\label{lemma:dtv<R}
    The influence of $w$ on $v$ can be bounded by  
    \begin{align*}
        \DTV{\mu_v^{w \gets 0}}{\mu_v^{w \gets 1}} \leq \abs{R_v^{w^0}-R_v^{w^1}},
    \end{align*}
\end{lemma}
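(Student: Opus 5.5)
The plan is to express both total variation distances through the ratios $R$ and then use the elementary fact that the map $x \mapsto \frac{1}{1+x}$ is $1$-Lipschitz on $[0,\infty)$. Since $v$ takes only two values, $\DTV{\mu_v^{w\gets 0}}{\mu_v^{w\gets 1}} = \abs{\mu_v^{w\gets 0}(1) - \mu_v^{w\gets 1}(1)}$. Let $R^{(c)} \defeq \mu_v^{w\gets c}(0)/\mu_v^{w\gets c}(1)$. Then $\mu_v^{w\gets c}(1) = \frac{1}{1+R^{(c)}}$, so
\begin{align*}
\DTV{\mu_v^{w \gets 0}}{\mu_v^{w \gets 1}} = \abs{\frac{1}{1+R^{(0)}} - \frac{1}{1+R^{(1)}}} = \frac{\abs{R^{(0)}-R^{(1)}}}{(1+R^{(0)})(1+R^{(1)})} \leq \abs{R^{(0)}-R^{(1)}}.
\end{align*}

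It remains to identify $R^{(0)} = R_v^{w^0}$ and $R^{(1)} = R_v^{w^1}$. This is exactly the identity recorded just before the statement. By \Cref{prop:marginal-distributions-are-identical}, $\mu_v^{w\gets c}$ coincides with $\pi_v^{S\gets c}$, where $S=\text{copy}(w)$. Pinning a vertex to spin $0$ corresponds to ratio $\infty$, and pinning it to $1$ corresponds to ratio $0$. The ratio at the root is then computed by the tree recursion \eqref{eq:tree-recursion}, as in \Cref{def:pinning-on-computation-tree}.

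One point needs a brief check: all ratios must be finite. Since $\lambda_v>0$ and every factor in the recursion at $v$ is positive, $R_v<\infty$ whenever $v\neq w$, so the subtraction is well defined. The case $v=w$ is excluded in \Cref{def:all-to-one-influence}. There is no real obstacle; the only care needed is this finiteness and the correct correspondence between the spin pinnings and the ratio pinnings $\infty$ and $0$.
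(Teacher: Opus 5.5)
Your proposal is correct and matches the paper's proof: you identify $\mu_v^{w\gets c}$ with the tree marginal via \Cref{prop:marginal-distributions-are-identical}, write $\mu_v^{w\gets c}(1)=1/(1+R)$, and use $\abs{\frac{1}{1+a}-\frac{1}{1+b}}=\frac{\abs{a-b}}{(1+a)(1+b)}\le\abs{a-b}$. One small correction to your side remark: $R_v$ is finite because each factor $\frac{\beta_e x+1}{x+\gamma_e}$ is at most $\max\{\beta_e,1/\gamma_e\}\le 1$ for all $x\in[0,\infty]$, which gives $R_v\le\lambda_v$; the positivity of the factors does not give finiteness.
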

\begin{proof}
    By \Cref{prop:marginal-distributions-are-identical}, $\mu_v^{w \gets c}$ coincides with $\pi_v^{S \gets c}$ for $c \in \{0,1\}$, where $S = \text{copy}(w)$ and $R_v^{w^0} = R_v^{S \gets \infty}$, $R_v^{w^1} = R_v^{S \gets 0}$ as in the notation above. So $\DTV{\mu_v^{w \gets 0}}{\mu_v^{w \gets 1}} = \DTV{\pi_v^{S \gets \infty}}{\pi_v^{S \gets 0}}$. The marginals at $v$ are Bernoulli: $\pi_v^{S \gets \infty}(1) = 1/(1+R_v^{w^0})$ and $\pi_v^{S \gets 0}(1) = 1/(1+R_v^{w^1})$. Thus
    \begin{align*}
    \DTV{\pi_v^{S \gets \infty}}{\pi_v^{S \gets 0}} = \left | \frac{1}{1+R_v^{w^0}} - \frac{1}{1+R_v^{w^1}} \right | = \frac{\bigl| R_v^{w^0} - R_v^{w^1} \bigr|}{(1+R_v^{w^0})(1+R_v^{w^1})} \leq \bigl| R_v^{w^0} - R_v^{w^1} \bigr|. &\qedhere
    \end{align*}
\end{proof}

In $R^{w^0}_v$ and $R^{w^1}_v$, a set of vertices is pinned to $0$ or $1$. Next, we decompose the influence into the sum of influences contributed by individual vertices in this set. We define the following notion of influence from one vertex in the computation tree. A similar definition and analysis for the hardcore model appears in~\cite{ALO24}, but we need a more careful definition for ferromagnetic two-spin systems. Define the set of vertices at level $k$ by
\begin{align*}
\forall k \in \mathbb{N}, \quad L_k(u) = \{v \in V_T: d(v,u) = k\},
\end{align*}
where $d(v,u)$ is the distance from $v$ to $u$ in the SAW tree $T$. A vertex $u'$ is called a sibling of $u$ if $u'$ has the same parent as $u$.

\begin{definition}[Influence from one vertex in the computation tree]\label{def:influence-from-one-vertex}
Let $u \in L_k(v)$ be a vertex in the computational tree $T$ at level $k$. Define the influence of $u$ on $v$ as
\begin{align*}
I_v^u = \sup_{\sigma \in \+S} \left | R_v^{\sigma \land u \gets \infty} - R_v^{\sigma \land u \gets 0} \right |,
\end{align*}
where $\+S$ contains all pinnings $\sigma: L_k(v) \setminus \{u\} \to [0,\infty]$ satisfying that for all siblings $u'$ of $u$, $\sigma(u') \in (0,\lambda)$.
\end{definition}

Compared to the definition in~\cite{ALO24}, our definition explicitly constrains the siblings of $u$. We next prove the following influence bound using the technique in~\cite{ALO24}.

\begin{lemma}\label{lemma:influence-bound-one-vertex}
    The influence satisfies
    \begin{align*}
    |R_v^{w^0} - R_v^{w^1}| \leq 2\sum_{u \in \text{copy}(w)} I_v^u.
    \end{align*}
\end{lemma}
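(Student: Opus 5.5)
Our plan is a hybrid (telescoping) argument in the style of \cite{ALO24}: we switch the copies of $w$ from pinning $\infty$ to pinning $0$ one at a time and charge each switch to the single-vertex influence $I_v^u$. The subtlety is that \Cref{def:influence-from-one-vertex} only allows pinnings on the level $L_k(v)$ of $u$, and requires the siblings of $u$ to have ratios in $(0,\lambda)$. So the ordering of the switches and the reduction to level-$k$ pinnings must be arranged carefully. The factor $2$ in the statement is the slack we expect to lose in this arrangement.

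\emph{Step 1 (reduction to one level at a time).} Let $S=\text{copy}(w)$. First discard every copy $x\in S$ that has an ancestor also in $S$. Pinning the ancestor cuts off the subtree of $x$, so these copies do not affect either $R_v^{w^0}$ or $R_v^{w^1}$. The remaining copies form an antichain; write $S_k = S\cap L_k(v)$. Order the levels $k_1<k_2<\cdots$. Define hybrid pinnings $\sigma_j$ in which the copies at levels $\le k_j$ are pinned to $0$ and the copies at deeper levels are pinned to $\infty$. Then
\begin{align*}
|R_v^{w^0}-R_v^{w^1}| \le \sum_j |R_v^{\sigma_{j-1}}-R_v^{\sigma_j}|,
\end{align*}
and consecutive hybrids differ only on $S_{k_j}$.

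Now fix $j$ and $k=k_j$. Every vertex of $L_k(v)$ that is not in $S_k$ has its subtree either unpinned or pinned identically in $\sigma_{j-1}$ and $\sigma_j$. We may therefore replace such a subtree by its computed ratio $R_x\in[0,\infty]$ and regard it as a pinning on $L_k(v)$. This ratio is the same in both hybrids, because all deeper copies are pinned to $\infty$ in both.

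\emph{Step 2 (within a level).} Order $S_k=\{u_1,\dots,u_m\}$ and flip them one by one from $\infty$ to $0$, which telescopes into $m$ differences. In the $i$-th difference, the pinning on $L_k(v)\setminus\{u_i\}$ must lie in $\+S$, so every sibling $u'$ of $u_i$ needs $\sigma(u')\in(0,\lambda)$. The problem is that siblings of $u_i$ lying in $S_k$ carry the ratios $0$ or $\infty$.

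This is the main obstacle, and we plan to handle it by monotonicity plus a two-sided bound. Group the copies by their common parent $p$, with children $u_{i_1},\dots,u_{i_r}\in S_k$. Because $F_p$ is a product of terms $(\beta x+1)/(x+\gamma)$, each monotone in $x$, replacing a sibling's ratio $0$ by a small $\epsilon>0$, or $\infty$ by a large value, changes $R_v$ continuously. Since $I_v^u$ is a supremum over the open range $(0,\lambda)$, we take limits: a sibling ratio $0$ is the limit $\epsilon\to0$, which is allowed. A sibling ratio $\infty$ is not in the closure of $(0,\lambda)$.

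For these we flip siblings in an order chosen so that, when $u_{i_s}$ is flipped, its already-flipped siblings sit at $0$. The siblings still at $\infty$ are treated as follows. Factor the contribution of $p$ into the part from $u_{i_s}$ and the part from the other children. A sibling at $\infty$ contributes the factor $\beta_e\le 1$, whereas a sibling at $x\in(0,\lambda)$ contributes $(\beta x+1)/(x+\gamma)$. Both are bounded by a constant multiple of the values attainable in $(0,\lambda)$. By monotonicity of the whole recursion in the field at $p$, the difference with the $\infty$-siblings is dominated by the difference at some admissible sibling value.

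In short, we compare the hybrid difference against the admissible configuration by monotonicity (ferromagnetic recursion) and by the identity $|a-b|\le|a-c|+|c-b|$. We expect this comparison to cost at most a factor $2$ per vertex. If the direct domination fails, the fallback is to bound each difference by $I_v^{u}$ twice, once for flipping $u$ and once for absorbing the sibling modification, which is exactly the constant $2$ in the statement. Summing over $u\in\text{copy}(w)$ then gives $|R_v^{w^0}-R_v^{w^1}|\le 2\sum_{u} I_v^u$.
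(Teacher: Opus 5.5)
\textbf{There is a genuine gap.} It lies in the reduction to $I_v^{u}$, and it is not where you locate it.

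In your hybrids $\sigma_{j-1},\sigma_j$, every copy of $w$ at a level $k'<k=k_j$ is pinned to $0$ in both. However, $I_v^{u}$ in \Cref{def:influence-from-one-vertex} is a supremum only over pinnings of $L_k(v)\setminus\{u\}$, with everything above level $k$ left free: ratios there are computed by the recursion, and free leaves take the value $\lambda_x$. Your sentence ``regard it as a pinning on $L_k(v)$'' only absorbs what lies at or below level $k$.

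A copy $x$ at level $k'<k$ held at ratio $0$ cannot be encoded by a level-$k$ pinning. If $x$ has no descendants at level $k$, it is simply free in every $\sigma\in\mathcal{S}$. Otherwise its computed ratio is at least $\lambda_x\prod_i\gamma_{x,x_i}^{-1}>0$.

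Nor can these pins be discarded by monotonicity. Write $g(z)=(\beta z+1)/(z+\gamma)$ and let $A>B$. The quantity $g(cA)-g(cB)$ is decreasing in a side-branch factor $c$ once $cB>\gamma$, which can happen when $\lambda>\gamma$ (allowed, since $\lambda_c>\gamma$). So pinning a side branch to $0$ can make the influence of $u$ exceed everything realised in $\mathcal{S}$. Consequently your within-level differences are not bounded by $I_v^{u_i}$. Changing the flipping order does not help, because processed copies remain pinned at shallower levels.

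The obstacle you do address is vacuous. Distinct children of a node in the SAW tree are copies of distinct neighbours in $G$, so no two copies of $w$ are siblings; the paper uses this fact in the proof of \Cref{lemma:mapping-influence-bounds}. Your treatment of it would not be valid anyway, for two reasons:
\begin{itemize}
\item The factor $\beta_e$ from an $\infty$-sibling exceeds every admissible factor $(\beta_e x+1)/(x+\gamma_e)$ with $x\in(0,\lambda)$, and the influence is not monotone in that factor, so the claimed domination by admissible values fails.
\item The ``fallback'' charges a change at a sibling to $I_v^u$, which does not control it.
\end{itemize}

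\textbf{The missing idea is to \emph{unpin} copies rather than flip them.} Order $\text{copy}(w)=\{u_1,\dots,u_m\}$ by nondecreasing depth, set $S_i=\{u_i,\dots,u_m\}$, and leave $u_1,\dots,u_j$ free at stage $j$. Then use
\begin{align*}
|R_v^{S_{j+1}\gets\infty}-R_v^{S_{j+1}\gets 0}|
&\le |R_v^{S_{j+1}\gets\infty}-R_v^{S_{j+2}\gets\infty}| + |R_v^{S_{j+2}\gets 0}-R_v^{S_{j+1}\gets 0}| \\
&\quad + |R_v^{S_{j+2}\gets\infty}-R_v^{S_{j+2}\gets 0}|.
\end{align*}
Each of the first two terms is at most $I_v^{u_{j+1}}$, for three reasons:
\begin{itemize}
\item By monotonicity of the recursion, the free ratio at $u_{j+1}$ lies between its $0$- and $\infty$-pinned values.
\item All pinned vertices lie at depth at least that of $u_{j+1}$, so they induce a pinning of $L_k(v)\setminus\{u_{j+1}\}$, with every vertex above level $k$ free.
\item The siblings of $u_{j+1}$ are free non-copies, whose computed ratios lie in $(0,\lambda)$.
\end{itemize}
Inducting on $j$ gives the lemma. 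The two unpinning terms are exactly where the factor $2$ comes from.
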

\begin{proof}
Let $u_1, \ldots, u_m$ be the vertices in $\text{copy}(w)$ in the increasing order of the distance to root $v$, which means $d(v,u_i)\leq d(v,u_j)$ for $1\leq i<j\leq m$.
Let $S_i:=\{u_i,\cdots,u_m\}$ for $1\leq i\leq m$.
For $j$ from $0$ to $m$, we inductively show that:
\begin{align*}
    |R_v^{w^0} - R_v^{w^1}|\leq 2\sum_{i=1}^{j} I_v^{u_i}+|R_v^{S_{j+1}\gets \infty} - R_v^{S_{j+1}\gets 0}|,
\end{align*}
where $S_{m+1}=\emptyset$ and $|R_v^{S_{m+1}\gets \infty}-R_v^{S_{m+1}\gets 0}|=0$. When $j=0$, the inequality holds trivially. Assume that the inequality holds for $j$ for some $0\leq j<m$. We next show that the inequality also holds for $j+1$. 
By the triangle inequality, we have
\begin{align*}
|R_v^{S_{j+1}\gets \infty} - R_v^{S_{j+1}\gets 0}|\leq |R_v^{S_{j+1}\gets \infty} - R_v^{S_{j+2}\gets \infty}|+|R_v^{S_{j+2}\gets \infty} - R_v^{S_{j+2}\gets 0}|+|R_v^{S_{j+2}\gets 0} - R_v^{S_{j+1}\gets 0}|.   
\end{align*}
To verify the $j+1$ case, using the induction hypothesis on $j$, it suffices to show that the first and third terms are each bounded by $I_v^{u_{j+1}}$. We only prove this for the first term, since the third term is analogous.
By the monotonicity of the recursion function,
\begin{align*}
|R_v^{S_{j+1}\gets \infty} - R_v^{S_{j+2}\gets \infty}|\leq |R_v^{S_{j+2}\gets \infty\land u_{j+1}\gets\infty } - R_v^{S_{j+2}\gets \infty\land u_{j+1}\gets 0}|.
\end{align*}
Because $d(v,u_{j+1}) \leq d(v,u_i)$ for all $i > j+1$, all pinnings on $S_{j+2}$ induce pinnings on $L_k(v) \setminus \{u_{j+1}\}$, where $k$ is the level of $u_{j+1}$ in $T$. Moreover, all siblings of $u_{j+1}$ are not in $\text{copy}(w)$, and thus they are unpinned. By the definition of the tree recursion, when we compute the tree recursion from bottom to top, all siblings $u'$ of $u_{j+1}$ obtain a value in $(0,\lambda)$, which is the induced pinning on $u'$. 
For all other vertices $u'' \in L_k(v) \setminus \{u_{j+1}\}$ that is not a sibling of $u_{j+1}$, the ratio on $u''$ computed via the tree recursion can be any value in $[0,\infty]$.
Therefore, by the definition of $I_v^{u_{j+1}}$ in \Cref{def:influence-from-one-vertex}, we have
\begin{align*}
|R_v^{S_{j+1}\gets \infty} - R_v^{S_{j+2}\gets \infty}|\leq I_v^{u_{j+1}}.
\end{align*}
The same argument gives $|R_v^{S_{j+2}\gets 0} - R_v^{S_{j+1}\gets 0}|\leq I_v^{u_{j+1}}$. This proves the $j+1$ case and hence the lemma.
\end{proof}

Using~\Cref{lemma:influence-bound-one-vertex} and \Cref{lemma:dtv<R}, we have the following bound 
\begin{align}\label{eq:bound-on-dtv}
    \sum_{w \in V \setminus \{v\}} \DTV{\mu_v^{w \gets 0}}{\mu_v^{w \gets 1}} \leq 2\sum_{w \in V \setminus \{v\}} \sum_{u \in \text{copy}(w)} I_v^u = 2\sum_{k \geq 1} \sum_{w \in L_k(v)} I_v^w.
\end{align}

Next, fix an integer $k \geq 1$. We bound the sum of influences over all vertices in $L_k(v)$. We also work with the potential function $\Phi$ defined in \eqref{eq:phi-definition}. Fix a vertex $w \in L_k(v)$. Let $\sigma^w$ be a pinning on $L_k(v) \setminus \{w\}$ that attains (or is arbitrarily close to) the supremum in the definition of $I_v^w$. We emphasize that $\sigma^w$ depends on $w$. Instead of directly bounding $I_v^w$, we bound the potential difference
$|\Phi(R_v^{\sigma^w \land w \gets \infty}) - \Phi(R_v^{\sigma^w \land w \gets 0})|$.
We use the following general relation.

\begin{lemma}\label{lemma:dtv-bound-potential}
For any two $x^0,x^1 \in (0,\lambda)$, we have
\begin{align*}
   \frac{1}{C_{\max}} \left | \Phi(x^0)-\Phi(x^1) \right | \leq \abs{x^0-x^1} \leq \frac{1}{C_{\min}} \left | \Phi(x^0)-\Phi(x^1) \right |,
\end{align*}
where $C_{\max}$ and $C_{\min}$ are constants defined in \Cref{condition:phi-bound}.
\end{lemma}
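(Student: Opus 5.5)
This is a direct application of the mean value theorem to $\Phi$ on the interval between $x^0$ and $x^1$.

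Since $\Phi(x)=\int_0^x\phi(s)\,ds$ by \eqref{eq:Phi-definition}, the first step is to check that $\Phi$ is differentiable on $(0,\lambda)$ with $\Phi'=\phi$. This needs $\phi$ to be continuous there. From \eqref{eq:phi-definition}, $\phi$ is the minimum of the two continuous functions $x\mapsto 1/(x\log(\lambda/x))$ and the constant $1/t$. The first function is continuous and positive on $(0,\lambda)$, so $\phi$ is continuous on $(0,\lambda)$. The fundamental theorem of calculus then gives $\Phi'=\phi$ there.

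Assume without loss of generality that $x^0\neq x^1$; the case $x^0=x^1$ is trivial. Both points lie in $(0,\lambda)$, so the closed segment between them does too. The mean value theorem gives some $\xi$ strictly between $x^0$ and $x^1$ with
\begin{align*}
\Phi(x^0)-\Phi(x^1)=\phi(\xi)\,(x^0-x^1).
\end{align*}
Alternatively, one can avoid the mean value theorem and write $\Phi(x^0)-\Phi(x^1)=\int_{x^1}^{x^0}\phi(s)\,ds$. Bounding the integrand pointwise then only requires integrability of $\phi$, which holds because $\phi$ is bounded.

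Now apply \Cref{condition:phi-bound} at $\xi$: $C_{\min}\le\phi(\xi)\le C_{\max}$. Taking absolute values gives
\begin{align*}
C_{\min}\abs{x^0-x^1}\le\abs{\Phi(x^0)-\Phi(x^1)}\le C_{\max}\abs{x^0-x^1}.
\end{align*}
Dividing the left inequality by $C_{\min}>0$ and the right by $C_{\max}>0$ gives exactly the two claimed inequalities.

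There is no real obstacle here. The only point to note is that \Cref{condition:phi-bound} is asserted for all of $[0,\lambda)$, so both bounds hold uniformly on the segment between $x^0$ and $x^1$.
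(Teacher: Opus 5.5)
Your proof is correct and matches the paper's argument: both apply the mean value theorem to $\Phi$ and bound $\phi$ at the intermediate point using \Cref{condition:phi-bound}. You also check that $\phi$ is continuous, so that $\Phi'=\phi$ on $(0,\lambda)$; the paper leaves this step implicit.
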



\begin{proof}
For the potential $\Phi$ with derivative $\phi = \Phi'$ from~\eqref{eq:phi-definition}, the mean value theorem gives
\[
|\Phi(x^0) - \Phi(x^1)| = \phi(\eta)\,|x^0 - x^1|
\]
for some $\eta$ between $x^0$ and $x^1$. By \Cref{condition:phi-bound}, for any $z \in (0,\lambda)$, we have $\phi(z) \geq C_{\min}$ and $\phi(z) \leq C_{\max}$. The lemma can be proved using the following equation:
\begin{align*}
|x^0 - x^1| = \frac{|\Phi(x^0) - \Phi(x^1)|}{\phi(\eta)} &\qedhere
\end{align*}
\end{proof}

Now, our task is reduced to bound the difference of the potential $\Phi(R_v^{\sigma^w \land w \gets \infty}) - \Phi(R_v^{\sigma^w \land w \gets 0})$. In \Cref{subsec:general-results-for-correlation-decay}, we give some general influence decay results. In \Cref{subsec:proof-of-the-influence-bound}, we apply these general results to prove the influence bound.

\subsection{General influence decay results}\label{subsec:general-results-for-correlation-decay}
Next, we present general results for proving the influence bound, which will also be used later to prove aggregate strong spatial mixing. Consider a ferromagnetic two-spin system $\+S$ on a tree $T$, rooted at $v$. For each vertex $w \in L_k(v)$, let $\sigma^w$ be a pinning on $L_k(v) \setminus \{w\}$. Different vertices $w$ may correspond to different pinnings $\sigma^w$. Define the potential-based influence from $w$ to the root $v$ as
\begin{align}\label{eq:K-v-w-potential}
    K_v^w = \abs{\Phi(R_v^{\sigma^w \land w \gets \infty}) - \Phi(R_v^{\sigma^w \land w \gets 0})}.
\end{align}
More generally, for any vertex $u$ on the path between $w$ and $v$, define the influence $K_u^w$ of $w$ on $u$ by
\begin{align}\label{eq:K-u-w-potential}
K_u^w = \abs{\Phi(R_u^{\sigma^w \land w \gets \infty}) - \Phi(R_u^{\sigma^w \land w \gets 0})},
\end{align}
where $R_u^{\sigma^w \land w \gets \infty}$ and $R_u^{\sigma^w \land w \gets 0}$ are ratios computed by tree recursion in $\+S$, with $\sigma^w$ restricted to the subtree rooted at $u$.

The following two general influence decay results hold. 
\begin{lemma}\label{lemma:general-influence-decay-1}
    Suppose $\+S$ in $T$ is a ferromagnetic two-spin system satisfying \Cref{condition}.
    Let $u \in L_\ell(v)$ be a vertex at level $\ell$, where $0 \le \ell \leq k-2$. Let $u_1,u_2,\ldots,u_d$ be the children of $u$. Then
\begin{align*}
\sum_{w \in L_{k-\ell}(u)} K_u^w 
 \leq C_\text{trl} \lambda_u d \exp(-C_{\text{decay}}d)\max_{1 \leq i \leq d} \sum_{w \in L_{k-\ell-1}(u_i)} K_{u_i}^w,
\end{align*}
where $L_j(u)$ denotes the set of vertices at level $j$ in the subtree rooted at $u$.
\end{lemma}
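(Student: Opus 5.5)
The plan is a one-step mean value argument at $u$, combined with the large-degree decay bound of \Cref{condition:trivial-bound-for-C}.

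\textbf{Step 1: only one child sees the perturbation.} Fix $w \in L_{k-\ell}(u)$. Since $k-\ell \geq 2$, $w$ lies in the subtree of exactly one child $u_{i(w)}$. For every other child $u_j$, the values $R_{u_j}$ under the two pinnings $\sigma^w \land w\gets\infty$ and $\sigma^w \land w\gets 0$ coincide, because both are computed from the same restriction of $\sigma^w$. All of these values lie in $[0,\lambda)$: the pinned vertices sit at level $k-\ell-1\geq 1$ below $u_j$, so each $R_{u_j}$ is itself an output of $F_{u_j}$.

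\textbf{Step 2: mean value theorem in potential coordinates.} Write $y^c_{i} = \Phi(R_{u_i}^{\sigma^w\land w\gets c})$ for $c\in\{\infty,0\}$. Then
\[
K_u^w = \abs{F^\Phi_u(\ldots,y^\infty_{i(w)},\ldots) - F^\Phi_u(\ldots,y^0_{i(w)},\ldots)},
\]
where the two arguments agree in every coordinate except $i=i(w)$. Applying the one-variable mean value theorem in $y_i$, with the other coordinates fixed, gives
\[
K_u^w \leq \sup_{\boldsymbol{x}} \phi(F_u(\boldsymbol{x}))\abs{\frac{\partial F_u}{\partial x_i}(\boldsymbol{x})}\frac{1}{\phi(x_i)} \cdot K_{u_i}^w,
\]
since $\partial F_u^\Phi/\partial y_i = \phi(F_u(\boldsymbol{x}))\,\partial_i F_u(\boldsymbol{x})/\phi(x_i)$. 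By \Cref{condition:trivial-bound-for-C}, the supremum is at most $C_{\text{trl}}\lambda_u \exp(-C_{\text{decay}} d)$. Here $K_{u_i}^w$ is exactly the influence at $u_i$ defined in \eqref{eq:K-u-w-potential}, with $\sigma^w$ restricted to the subtree of $u_i$.

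\textbf{Step 3: summation.} Summing over $w$ and grouping by the child $u_i$ containing $w$, using $L_{k-\ell}(u)=\biguplus_i L_{k-\ell-1}(u_i)$, gives
\[
\sum_{w \in L_{k-\ell}(u)} K_u^w \leq C_{\text{trl}}\lambda_u e^{-C_{\text{decay}}d}\sum_{i=1}^d \sum_{w\in L_{k-\ell-1}(u_i)} K_{u_i}^w \leq C_{\text{trl}}\lambda_u d\, e^{-C_{\text{decay}}d}\max_i \sum_{w\in L_{k-\ell-1}(u_i)} K_{u_i}^w.
\]

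\textbf{Main obstacle: validity of the mean value step.} The condition is stated for $\boldsymbol{x}\in(0,\lambda)^d$, but the segment of intermediate values runs between $R^{\sigma^w\land w\gets\infty}_{u_i}$ and $R^{\sigma^w\land w\gets 0}_{u_i}$.
\begin{itemize}
\item Because $k-\ell-1\geq 1$, both endpoints are outputs of $F_{u_i}$, so they lie in $[0,\lambda)$, and the segment stays inside this domain.
\item A value of $0$ can occur only when an input is pinned to $0$ with $\beta_e=0$. The bound at this boundary follows by continuity: the bound in \Cref{lemma:trivial-bound-for-C-ver} holds uniformly, so it extends to the closed domain by taking limits.
\item The other coordinates are fixed values in $[0,\lambda)$, so the same limiting argument covers them.
\end{itemize}
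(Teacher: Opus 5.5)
Your proposal is correct and follows essentially the same route as the paper: a one-variable mean value theorem in potential coordinates at $u$ for the single child containing $w$, the coefficient bounded by \Cref{condition:trivial-bound-for-C}, then grouping the sum by children and bounding it by $d$ times the maximum. The boundary discussion is unnecessary and slightly misstated. A ratio of $0$ would need an input pinned to $\infty$ (factor $\beta_e$), not to $0$ (factor $1/\gamma_e>0$). Moreover, $\beta_e\gamma_e>1$ forces $\beta_e>0$ and $\lambda_u>0$, so every value already lies in $(0,\lambda)$, which is exactly what the paper uses.
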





\begin{lemma}\label{lemma:general-influence-decay-2}
Suppose $\+S$ in $T$ is a ferromagnetic two-spin system satisfying \Cref{condition}.
There exist constants $\ell_0$ and $0< \delta < 1$ such that if $k > \ell_0$, then for any $0 \leq \ell \leq k - \ell_0$, for any vertex $u \in L_\ell(v)$ with children $u_1,\cdots,u_d$, it holds that 
\begin{align*}
  \sum_{w \in L_{k-\ell}(u)} K_u^w \leq (1 - \delta) \max_{1 \leq i \leq d} \sum_{w \in L_{k-\ell-1}(u_i)} K_{u_i}^w.
\end{align*}
\end{lemma}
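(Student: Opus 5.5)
The plan is to compare, child by child, the influence at $u$ with the influence at its children through a mean value argument in the potential $\Phi$. We then use strong spatial mixing to show that all the evaluation points that appear, one for each $w$, lie close to a single reference point. At that reference point, \Cref{condition:potential-function-bound} gives the contraction $1-\alpha$. Large degrees are handled separately via \Cref{condition:trivial-bound-for-C}, exactly as in \Cref{lemma:general-influence-decay-1}.

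\emph{Step 1: reduction to one child.} Fix $w\in L_{k-\ell}(u)$ and let $u_i$ be the child of $u$ whose subtree contains $w$. Switching $w\gets\infty$ versus $w\gets 0$ under $\sigma^w$ changes only the $i$-th argument of $F_u$; the other arguments $x_j^w=R_{u_j}^{\sigma^w}$, $j\neq i$, stay fixed. Apply the mean value theorem to $s\mapsto \Phi\bigl(F_u(x^w_1,\dots,\Phi^{-1}(s),\dots,x^w_d)\bigr)$. This gives
\[
K_u^w = c_i(\xi^w)\,K_{u_i}^w, \qquad c_i(\boldsymbol{x}) \defeq \phi(F_u(\boldsymbol{x}))\left|\frac{\partial F_u}{\partial x_i}(\boldsymbol{x})\right|\frac{1}{\phi(x_i)},
\]
where $\xi^w$ agrees with $x^w$ off coordinate $i$, and its $i$-th coordinate lies between the two values of $R_{u_i}$. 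Summing over $w$ gives $\sum_w K_u^w\le \bigl(\sum_i \sup_w c_i(\xi^w)\bigr)\max_i\sum_{w\in L_{k-\ell-1}(u_i)}K_{u_i}^w$. The whole difficulty is that the points $\xi^w$ depend on $w$, whereas \Cref{condition:potential-function-bound} only bounds $\sum_i c_i$ at a single common point.

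\emph{Step 2: large degree.} Choose $D_0$ so that $C_{\text{trl}}\lambda d e^{-C_{\text{decay}}d}\le 1-\delta$ for all $d> D_0$. If $d>D_0$, the claim follows from \Cref{condition:trivial-bound-for-C}, i.e.\ from \Cref{lemma:general-influence-decay-1}, with no uniformity needed.

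\emph{Step 3: bounded degree via SSM and continuity.} This is the main step. Let $d\le D_0$ and fix a reference point $\boldsymbol{x}^*$, for instance $x^*_j=R_{u_j}$ with no pinning at level $k$. Every coordinate of every $\xi^w$ is a ratio at $u_j$ computed from some pinning that lives only at depth $k-\ell-1\ge \ell_0-1$ below $u_j$. This holds both for the $x_j^w$ with $j\ne i$ and for the intermediate value in coordinate $i$, which lies between two such ratios. Since $\ell\le k-\ell_0$, the children $u_j$ are unpinned, so these values lie in $(0,\lambda)$. By the SSM part of \Cref{condition}, $|\xi^w_j-x^*_j|\le \varepsilon\defeq A(1-B)^{\ell_0-1}$ for all $j$ and $w$.

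Next we need $c_i$ to be Lipschitz on $[0,\lambda]^{d}$ with a constant $L=L(D_0)$ depending only on the parameters. This holds for three reasons. First, $\phi$ is bounded between $C_{\min}$ and $C_{\max}$ by \Cref{condition:phi-bound}. Second, $\phi$ is Lipschitz, because the cap $1/t$ removes the singular behaviour near $0$ and near $\lambda$. Third, $F_u$ and $\partial_iF_u$ have bounded derivatives uniformly in $\beta_e\le1$, $\gamma_e\ge\gamma$, since each factor $(\beta_e x+1)/(x+\gamma_e)$ and its $x$-derivative have bounded derivatives there. Checking this uniformity in possibly huge $\gamma_e$ is the point I expect to need the most care. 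If direct Lipschitz control in $\gamma_e$ fails, I would instead work with the relative perturbation $c_i(\xi)/c_i(\boldsymbol{x}^*)\le 1+O(\varepsilon)$, since each factor depends smoothly on $x_j$ in a multiplicative way.

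Granting this, $\sum_i\sup_w c_i(\xi^w)\le C_{\phi,d}(\boldsymbol{x}^*)+D_0L\varepsilon\le 1-\alpha+D_0L\varepsilon$. Finally, choose $\ell_0$ large enough that $D_0L\varepsilon\le\alpha/2$, and set $\delta=\min\{\alpha/2,\ \text{the }\delta\text{ of Step 2}\}$.
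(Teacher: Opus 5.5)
Your proposal is correct and follows the paper's proof: the same mean-value decomposition, the same use of SSM to place every evaluation point $\xi^w$ within $A(1-B)^{\ell_0-1}$ of a single reference vector, and \Cref{condition:potential-function-bound} applied at that reference point. The paper's reference vector pins all of level $k$, and your unpinned reference works equally well. You should still state, as the paper does, that SSM extends from $0/1$ pinnings to arbitrary ratio pinnings by sandwiching between the all-$0$ and all-$\infty$ pinnings.

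The remaining differences are technical and harmless. You handle large $d$ directly with the pointwise bound of \Cref{condition:trivial-bound-for-C}, whereas the paper perturbs around the reference point in both regimes via the multiplicative estimate $(1+O(\eta))^{d+3}$. For bounded $d$ you use an additive Lipschitz bound instead of that multiplicative one. Your Lipschitz bound does hold uniformly in large $\gamma_e$, because each factor $(\beta_e x+1)/(x+\gamma_e)$ and its first two derivatives are bounded by absolute constants.
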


These two lemmas can be proved by combining the techniques developed in~\cite{GuoL18,ALO24}. Compared to the proof in \cite{ALO24} for the hardcore model, our proof needs to carefully analyze the potential function $\Phi$ and use the decay results in \Cref{condition:potential-function-bound} and \Cref{condition:trivial-bound-for-C} to control the influence decay.

\begin{proof}[Proof of \Cref{lemma:general-influence-decay-1}]
    We have $L_{k-\ell}(u) = \bigcup_{i=1}^d L_{k-\ell-1}(u_i)$ (disjoint). Fix a $w \in L_{k-\ell-1}(u_i)$, where $w$ lies in the subtree of $u_i$. For each $j \neq i$, define the marginal ratio $z_j^w$ at $u_j$ as $z_j^w = R_{u_j}^{\sigma^w}$. For the subtree rooted at $u_i$, define two ratios $z_i^{w,0}$ and $z_i^{w,\infty}$ as $z_i^{w,0} = R_{u_i}^{\sigma^w \land w \gets 0}$ and $z_i^{w,\infty} = R_{u_i}^{\sigma^w \land w \gets \infty}$. Then, two ratios $R_u^{\sigma^w \land w \gets 0}$ and $R_u^{\sigma^w \land w \gets \infty}$ can be written as
    \begin{align*}
    R_u^{\sigma^w \land w \gets 0} &= F_u(z_1^w, \ldots, z_{i-1}^w, z_i^{w,0}, z_{i+1}^w, \ldots, z_d^w)\\
    R_u^{\sigma^w \land w \gets \infty} &= F_u(z_1^w, \ldots, z_{i-1}^w, z_i^{w,\infty}, z_{i+1}^w, \ldots, z_d^w).
    \end{align*}
    Let $y_j^w = \Phi(z_j^w)$ for $j \neq i$, $y_i^{w,0} = \Phi(z_i^{w,0})$, $y_i^{w,\infty} = \Phi(z_i^{w,\infty})$. The potential recursion is
    \begin{align*}
    \Phi(R_u^{\sigma^w \land w \gets 0}) &= (\Phi\circ F_u \circ \Phi^{-1})(y_1^w, \ldots, y_{i-1}^w, y_i^{w,0}, y_{i+1}^w, \ldots, y_d^w)\\
    \Phi(R_u^{\sigma^w \land w \gets \infty}) &= (\Phi\circ F_u \circ \Phi^{-1})(y_1^w, \ldots, y_{i-1}^w, y_i^{w,\infty}, y_{i+1}^w, \ldots, y_d^w).
    \end{align*}
    By definition, $K_u^w = \bigl| \Phi(R_u^{\sigma^w \land w \gets 0}) - \Phi(R_u^{\sigma^w \land w \gets \infty}) \bigr|$. Applying the mean value theorem to the map $y_i \mapsto (\Phi\circ F_u\circ \Phi^{-1})(y_1^w,\ldots,y_i^w,\ldots,y_d^w)$ (with $y_j^w$ for $j \neq i$ fixed), there exists $\tilde{y}_i^w$ between $y_i^{w,0}$ and $y_i^{w,\infty}$ such that
\begin{align*}
K_u^w = \left | \frac{\partial (\Phi\circ F_u\circ \Phi^{-1})}{\partial y_i}(y_1^w,\ldots,\tilde{y}_i^w,\ldots,y_d^w) \right | \cdot \bigl| y_i^{w,0} - y_i^{w,\infty} \bigr|.
\end{align*}
Let $\tilde{z}_i^w = \Phi^{-1}(\tilde{y}_i^w)$; then $\tilde{z}_i^w$ lies between $z_i^{w,0}$ and $z_i^{w,\infty}$. Compute the partial derivative $\frac{\partial (\Phi\circ F_u\circ \Phi^{-1})}{\partial y_i}$ by the chain rule. With $\boldsymbol{z}^w = (z_1^w,\ldots,z_{i-1}^w,\tilde{z}_i^w,z_{i+1}^w,\ldots,z_d^w)$ we have
\begin{align}\label{eq:K-u-w-potential-pf}
K_u^w &= \frac{ \phi(F_u(\boldsymbol{z}^w))}{\phi(\tilde{z}_i^w)} \left | \frac{\partial F_u}{\partial z_i}(\boldsymbol{z}^w) \right |  \bigl| y_i^{w,0} - y_i^{w,\infty} \bigr|\leq \frac{\phi(F_u(\boldsymbol{z}^w))}{\phi(\tilde{z}_i^w)} \left | \frac{\partial F_u}{\partial z_i}(\boldsymbol{z}^w) \right | \cdot K_{u_i}^w,
\end{align}
where the last equation holds because $\bigl| y_i^{w,0} - y_i^{w,\infty} \bigr| = \bigl| \Phi(z_i^{w,0}) - \Phi(z_i^{w,\infty}) \bigr| = K_{u_i}^w$.
Summing over $w \in L_{k-\ell}(u)$, we have
\begin{align}\label{eq:sum-K-u}
\sum_{w \in L_{k-\ell}(u)} K_u^w \leq \sum_{i=1}^d \sum_{w \in L_{k-\ell-1}(u_i)} \frac{\phi(F_u(\boldsymbol{z}^w))}{\phi(\tilde{z}_i^w)} \left | \frac{\partial F_u}{\partial z_i}(\boldsymbol{z}^w) \right | \cdot K_{u_i}^w.
\end{align}

By the assumption of the lemma, $\ell \leq k - 2$. Hence, all $z_j^w$ for $j \neq i$ and $\tilde{z}_i^w$ are in the range $(0,\lambda)$. Using \Cref{condition:trivial-bound-for-C}, we have the following bound 
\begin{align*}
    \sum_{w \in L_{k-\ell}(u)} K_u^w  &\leq \sum_{i=1}^d C_\text{trl} \lambda_u \exp(-C_{\text{decay}} d)  \sum_{w \in L_{k-\ell-1}(u_i)} K_{u_i}^w\\
    &\leq C_\text{trl} \lambda_u d \exp(-C_{\text{decay}} d) \max_{1 \leq i \leq d} \sum_{w \in L_{k-\ell-1}(u_i)} K_{u_i}^w. \qedhere
\end{align*}
\end{proof}

\begin{proof}[Proof of \Cref{lemma:general-influence-decay-2}]
    We start from~\eqref{eq:sum-K-u}.
    For each $i$, the coefficient of $K_{u_i}^w$ depends on $w$ through $\boldsymbol{z}^w$ (every $z_j^w$ for depends on $w$). If we could use a single $\boldsymbol{z} = (z_1,\ldots,z_d)$ for all $w$, then \Cref{condition:potential-function-bound} would give $\phi(F_u(\boldsymbol{z})) \sum_{i=1}^d \bigl| \frac{\partial F_u}{\partial z_i}(\boldsymbol{z}) \bigr| \frac{1}{\phi(z_i)} < 1 - \alpha$, so we can use the lemma to bound $K_u^w$. But here $\boldsymbol{z}^w$ depends on $w$. 
Using the technique in \cite{ALO24},
we resolve this when $\ell \leq k - \ell_0$ using the SSM property in \Cref{condition}.

We will use the SSM property for ratio pinnings, although it is stated in \Cref{def:ssm-property} for $0/1$ pinnings. This follows from monotonicity of the recursion. Since $\beta_e\gamma_e>1$, each $F_u$ is coordinate-wise increasing. Therefore, for any ratio pinning $\rho$ on a level and any ancestor $x$, the ratio $R_x^\rho$ lies between the two ratios obtained by replacing every pinned ratio in $\rho$ by $0$ and by $\infty$. Hence the SSM bound in \Cref{condition} also bounds the effect of arbitrary ratio pinnings at distance $r$ by $A(1-B)^r \leq A\exp(-Br)$.

Define the pinning $\tau$ on $L_k(v)$ such that $\tau$ fixes all vertices in $L_k(v)$ to be 0. 
Define 
\begin{align*}
z_i = R_{u_i}^{\tau} \text{ and } \boldsymbol{z} = (z_1,\ldots,z_d).
\end{align*}
For $w \in L_{k-\ell-1}(u_i)$, the distance from $w$ to $u_i$ is $k - \ell-1 \geq \ell_0 - 1$. By the preceding extension of SSM, $\Vert \boldsymbol{z}^w - \boldsymbol{z} \Vert_\infty \leq \eta$ with $\eta = A \exp(-B (\ell_0 -1))$.
Furthermore, using the same bound at vertex $u$, $|F_u(\boldsymbol{z}^w) - F_u(\boldsymbol{z})| \leq \eta$. 
Define 
\begin{align}\label{eq:C-definition}
C(\boldsymbol{a}) \defeq \frac{\phi(F_u(\boldsymbol{a}))}{\phi(a_i)} \left | \frac{\partial F_u}{\partial z_i}(\boldsymbol{a}) \right |, \qquad \text{so that} \qquad \frac{C(\boldsymbol{z}^w)}{C(\boldsymbol{z})} = \frac{\phi(F_u(\boldsymbol{z}^w))}{\phi(F_u(\boldsymbol{z}))} \cdot \frac{\phi(z_i)}{\phi(\tilde{z}_i^w)} \cdot \frac{\bigl| \frac{\partial F_u}{\partial z_i}(\boldsymbol{z}^w) \bigr|}{\bigl| \frac{\partial F_u}{\partial z_i}(\boldsymbol{z}) \bigr|}.
\end{align}

To analyze the above ratio, we need to use the following lemma.
\begin{lemma}\label{lemma:phi-ratio-bound}
Recall $\phi(x) = \min \{ \frac{1}{t}, \frac{1}{x \log \frac{\lambda}{x}} \}$ 
for some constant $t$.
For any two numbers $a,b \in (0,\lambda)$ with $|a - b| \leq \eta$, it holds that $\frac{\phi(a)}{\phi(b)} \leq 1 + O_{t,\lambda} (\eta)$.
\end{lemma}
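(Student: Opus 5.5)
The plan is to bound the ratio through the Lipschitz continuity of $1/\phi$, together with the uniform lower bound on $\phi$ from \Cref{condition:phi-bound}. Write $g(x) \defeq 1/\phi(x) = \max\{t, x\log\frac{\lambda}{x}\}$. Then
\begin{align*}
\frac{\phi(a)}{\phi(b)} = \frac{g(b)}{g(a)} = 1 + \frac{g(b)-g(a)}{g(a)} \leq 1 + \frac{|g(b)-g(a)|}{t},
\end{align*}
because $g(a) \geq t$ by definition. (Equivalently, $g(a) \geq 1/C_{\max}$.) It therefore suffices to show that $g$ is Lipschitz on the relevant range with a constant $L = L(t,\lambda)$. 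This gives $\phi(a)/\phi(b) \leq 1 + (L/t)\eta$.

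The key step is the Lipschitz bound for $g$. The function $h(x) = x\log\frac{\lambda}{x}$ has derivative $h'(x) = \log\frac{\lambda}{x} - 1$. This derivative is bounded on any interval $[x_0,\lambda)$ with $x_0 > 0$, but it blows up as $x \to 0$. This is the only obstacle, and the maximum with $t$ removes it. Since $h(x) \to 0$ as $x \to 0^+$, there is a threshold $x_0 = x_0(t,\lambda) > 0$ such that $h(x) < t$ for all $x \in (0,x_0)$, and hence $g \equiv t$ there. For example, $x_0$ can be taken as the smallest positive solution of $x\log\frac{\lambda}{x} = t$ if one exists below $\lambda$. If no solution exists, then $g \equiv t$ everywhere and the claim is trivial. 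On $[x_0,\lambda)$ we have $|h'(x)| \leq \max\{1, \log\frac{\lambda}{x_0} - 1\} =: L$.

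To conclude, I would argue as follows. The function $g$ is the maximum of the constant $t$ and $h$, where $h$ is $L$-Lipschitz on $[x_0,\lambda)$. On $(0,x_0)$, $g$ equals the constant $t$, and $g$ is continuous at $x_0$ because $h(x_0) \leq t$. Hence $g$ is piecewise $C^1$ and continuous with $|g'| \leq L$ wherever the derivative exists, so $g$ is $L$-Lipschitz on all of $(0,\lambda)$. Note that a maximum of Lipschitz functions is Lipschitz, but here we only need to integrate $|g'|$ along the segment between $a$ and $b$. This gives $|g(a)-g(b)| \leq L\eta$, and hence $\frac{\phi(a)}{\phi(b)} \leq 1 + \frac{L}{t}\eta = 1 + O_{t,\lambda}(\eta)$.
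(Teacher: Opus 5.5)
Your proof is correct and takes essentially the same route as the paper: both rely on a Lipschitz bound on the only non-constant piece, namely the interval where $x\log\frac{\lambda}{x} > t$, which stays away from $0$. Both then divide by a uniform lower bound on the denominator. The only cosmetic difference is that you work with $1/\phi=\max\{t,x\log\frac{\lambda}{x}\}$ and divide by $t$, whereas the paper bounds $|\phi'|$ directly on $(x_1,x_2)$ and divides by $C_{\min}$.
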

\begin{proof}
Note that $x \log \frac{\lambda}{x} \leq \frac{\lambda}{e}$ for all $x \in (0,\lambda)$.
Also note that if $t \geq \frac{\lambda}{e}$, then $\frac{1}{x \log \frac{\lambda}{x}} \geq \frac{1}{t}$ for all $x \in (0,\lambda)$. In this case, $\phi(x) = 1/t$ is a constant and the lemma holds trivially.

Let us assume $t < \frac{\lambda}{e}$. Then, there are two roots to $x \log \frac{\lambda}{x} = t$ in $(0,\lambda)$, denoted by $x_1<x_2$. We have
\begin{align*}
\phi(x) = \begin{cases}
\frac{1}{t} & \text{if } x \in (0,x_1], \\
\frac{1}{x \log \frac{\lambda}{x}} & \text{if } x \in (x_1,x_2), \\
\frac{1}{t} & \text{if } x \in [x_2,\lambda).
\end{cases}
\end{align*}
Since $t$ is a constant, 
we have $x_1$ and $x_2$ are also constants depending on $t$ and $\lambda$. For $x \in (x_1,x_2)$, the derivative $|\phi'(x)|$ is bounded by a constant $c$ depending only on   $t,\lambda$. Hence, the ratio can be bounded by 
\begin{align*}
\frac{\phi(a)}{\phi(b)} \leq 1 + \frac{|\phi(a)-\phi(b)|}{\phi(b)} \leq 1 + \frac{c |a - b|}{C_{\min}} = 1 + O_{t,\lambda} (\eta),
\end{align*}
where $C_{\min} = C_{\min}(t,\lambda)$ is the constant in \Cref{condition:phi-bound}.
\end{proof}

Using \Cref{lemma:phi-ratio-bound}, we can bound the first two terms in \eqref{eq:C-definition} as
\begin{align*}
    \frac{\phi(F_u(\boldsymbol{z}^w))}{\phi(F_u(\boldsymbol{z}))} \cdot \frac{\phi(z_i)}{\phi(\tilde{z}_i^w)}  = \tp{1 + O_{t,\lambda} (\eta)}^2.
\end{align*}
Now, for the last term, recall that $\beta_i = \beta_{u,u_i}$ and $\gamma_i = \gamma_{u,u_i}$, we can write the ratio as 
\begin{align*}
    \frac{\bigl| \frac{\partial F_u}{\partial z_i}(\boldsymbol{z}^w) \bigr|}{\bigl| \frac{\partial F_u}{\partial z_i}(\boldsymbol{z}) \bigr|} = \frac{F_u(\boldsymbol{z}^w)}{F_u(\boldsymbol{z})} \cdot \frac{(\beta_i z_i + 1)(z_i + \gamma_i)}{(\beta_i \tilde{z}_i^w + 1)(\tilde{z}_i^w + \gamma_i)}.
\end{align*}
Let $\beta_j = \beta_{u,u_j}$ and $\gamma_j = \gamma_{u,u_j}$ for all $j \in [d]$. For two numbers $a,b \in (0,\lambda)$ and $|a-b| \leq \eta$, we have
\begin{align*}
\tp{\frac{\beta_j a + 1}{a + \gamma_j}} / \tp{\frac{\beta_j b + 1}{b + \gamma_j}} &\leq 1 + \frac{(\beta_j \gamma_j - 1) |a - b|}{(a + \gamma_j)(\beta_j b + 1)} \leq  1 + |a - b| = 1 + \eta,
\end{align*}
where the last inequality holds because $\beta_j \leq 1$ and $\frac{\beta_j \gamma_j - 1}{(a + \gamma_j)(\beta_j b + 1)} \leq \frac{\gamma_j-1}{\gamma_j} < 1$.  
Similarly, we have the following bound:
\begin{align*}
\frac{(\beta_i a + 1)(a+\gamma_i)}{(\beta_i b + 1)(b+\gamma_i)} &\leq 1 + \frac{(\beta_i (a+b) + \beta_i \gamma_i - 1)|a-b|}{(\beta_i b + 1)(b+\gamma_i)}\\
& \leq 1+\frac{(2\lambda + \gamma_i - 1)|a-b|}{\gamma_i}\leq 1 +O_{\lambda} (\eta),
\end{align*}
where the last inequality holds because $\frac{2\lambda + \gamma_i - 1}{\gamma_i} < 2\lambda + 1$ for all $\gamma_i > 1$.
Using the above two bounds, the last term in \eqref{eq:C-definition} can be bounded as
\begin{align*}
    \frac{\bigl| \frac{\partial F_u}{\partial z_i}(\boldsymbol{z}^w) \bigr|}{\bigl| \frac{\partial F_u}{\partial z_i}(\boldsymbol{z}) \bigr|} \leq \tp{1 + O_{\lambda} (\eta)}^{d+1}.
\end{align*}
Finally, by putting all the bounds together, we have
\begin{align*}
    \frac{C(\boldsymbol{z}^w)}{C(\boldsymbol{z})} = \frac{\phi(F_u(\boldsymbol{z}^w))}{\phi(F_u(\boldsymbol{z}))} \cdot \frac{\phi(z_i)}{\phi(\tilde{z}_i^w)} \cdot \frac{\bigl| \frac{\partial F_u}{\partial z_i}(\boldsymbol{z}^w) \bigr|}{\bigl| \frac{\partial F_u}{\partial z_i}(\boldsymbol{z}) \bigr|} \leq \tp{1 + O_{t,\lambda} (\eta)}^{d+3}.
\end{align*}
The sum of the influence in~\eqref{eq:sum-K-u} now can be bounded by
\begin{align}\label{eq:sum-K-u-bound}
    \sum_{w \in L_{k-\ell}(u)} K_u^w &\leq \sum_{i=1}^d \sum_{w \in L_{k-\ell-1}(u_i)} \frac{\phi(F_u(\boldsymbol{z}^w))}{\phi(\tilde{z}_i^w)} \left | \frac{\partial F_u}{\partial z_i}(\boldsymbol{z}^w) \right | \cdot K_{u_i}^w \\
    &\leq \tp{1 + O_{t,\lambda} (\eta)}^{d+3} \sum_{i=1}^d \frac{\phi(F_u(\boldsymbol{z}))}{\phi(z_i)} \left | \frac{\partial F_u}{\partial z_i}(\boldsymbol{z}) \right | \sum_{w \in L_{k-\ell-1}(u_i)} K_{u_i}^w \notag \\
    &\leq \tp{1 + O_{t,\lambda} (\eta)}^{d+3}\tp{\sum_{i=1}^d \frac{\phi(F_u(\boldsymbol{z}))}{\phi(z_i)} \left | \frac{\partial F_u}{\partial z_i}(\boldsymbol{z}) \right |} \cdot \tp{\max_{i \in [d]} \sum_{w \in L_{k-\ell-1}(u_i)} K_{u_i}^w}.\notag
\end{align}
For the middle term in the above formula, using \Cref{condition:trivial-bound-for-C} and \Cref{condition:potential-function-bound}, we have 
\begin{align*}
    \sum_{i=1}^d \frac{\phi(F_u(\boldsymbol{z}))}{\phi(z_i)} \left | \frac{\partial F_u}{\partial z_i}(\boldsymbol{z}) \right | \leq \min \left\{ 1 - \alpha,  C_{\text{trl}} \cdot  d\lambda_u  \exp(-C_{\text{decay}} d) \right\} = \min \left\{1 - \alpha, C_1 \exp(-C_2 d)\right\},
\end{align*}
where $\alpha < 1$ is the constant in \Cref{condition:potential-function-bound} and $C_{\text{trl}}, C_{\text{decay}}$ are the constants in \Cref{condition:trivial-bound-for-C}.
Note that $\lambda_u \leq \lambda$ for the constant $\lambda$ in \Cref{condition}, so the second bound is upper bounded by $dC_1 \exp(-C_2 d)$ for some constants $C_1, C_2 > 0$. We can choose sufficiently large constants $d_0$ and $\ell_0$ such that the following holds. 
If $d > d_0$, we use
\begin{align*}
     (1+O_{t,\lambda} (\eta))^{d+3} \cdot d C_1 \exp(-C_2 d)\leq d C_1 (1+O_{t,\lambda} (\eta))^{3} \cdot \exp((-C_2+ O_{t,\lambda} (\eta)) d).
\end{align*}
By choosing $\ell_0$ large enough, we can make sure that $\eta = A \exp(-B(\ell_0 - 1))$ is sufficiently small so that $-C_2 + O_{t,\lambda}(\eta) < -C_2/2$. 
Since $d \ge d_0$, by taking the constant $d_0$ sufficiently large, the whole term is bounded by $1 - \alpha^2$. 
If $d \leq d_0$, then
\begin{align*}
     (1+O_{t,\lambda} (\eta))^{d+3} \cdot (1-\alpha) \leq   (1+O_{t,\lambda} (\eta))^{d_0+3} \cdot (1-\alpha) \leq 1 - \alpha^2,
\end{align*}
where the last inequality holds by choosing $\ell_0$ large enough so that $\eta$ is small enough and the $(1+O_{t,\lambda} (\eta))^{d_0+3}$ term is at most $1+\alpha$. Combining the two cases, the lemma holds with $\delta = \alpha^2$.
\end{proof}

\subsection{Proof of the influence bound} \label{subsec:proof-of-the-influence-bound}

We are now ready to prove the influence bound. Using~\eqref{eq:bound-on-dtv}, we bound the sum of the influence level by level. Fix an integer $k \geq 1$, to bound the sum $\sum_{w \in L_k(v)} K_v^w$, we apply \Cref{lemma:general-influence-decay-1} and \Cref{lemma:general-influence-decay-2}. Formally, we first truncate the tree $T$ and only keep levels up to $k$ to form a new tree $T_k$. By definition of $I_v^w$, for every $w$, it fixes the pinning on the $k$-th level of $T_k$. Hence, we can only consider the tree $T_k$ when analysing the influence. Using \Cref{lemma:general-influence-decay-1} and \Cref{lemma:general-influence-decay-2} recursively along a maximizing branch, the applications of \Cref{lemma:general-influence-decay-2} contribute a factor $(1-\delta)^{\Omega(k)}$, and the remaining $O(\ell_0)$ applications of \Cref{lemma:general-influence-decay-1} contribute only a constant factor, since $C_\text{trl}\lambda_x d_x\exp(-C_{\text{decay}}d_x)=O(1)$ uniformly over all vertices $x$. Thus we reach a vertex $u$ at level $k-1$ with children $u_1,u_2,\ldots,u_d$. Formally,
\begin{align*}
    \sum_{w \in L_k(v)} K_v^w &\leq (1-\delta)^{\max\{0,k-\ell_0+1\}} \cdot \tp{C_\text{trl} \lambda_u d \exp(-C_{\text{decay}}d)}^{\max\{0,\min\{\ell_0-2,k-2\}\}} \cdot \sum_{i=1}^d K_{u}^{u_i} \\
    &\leq O(1) \cdot (1 - \delta)^k \cdot \sum_{i=1}^d K_{u}^{u_i}.
\end{align*}
Note that in these recursive applications, the underlying systems are induced subtrees of the original computation tree, with possible pinnings absorbed by self-reducibility. Hence the edge parameters are inherited and the external fields can only decrease, so the systems remain $(\beta,\gamma,\lambda)$-ferromagnetic two-spin systems, respectively $(\gamma,\lambda)$-RBMs. By the verification preceding \Cref{condition}, they also satisfy \Cref{condition}.
Finally, we bound each $K_{u}^{u_i}$. By definition of the influence in~\Cref{def:influence-from-one-vertex}, we can write the influence as 
\begin{align*}
    K_{u}^{u_i} = \left\vert \Phi(R_u^{\sigma^i \land u_i \gets \infty}) - \Phi(R_u^{\sigma^i \land u_i \gets 0}) \right\vert,
\end{align*}
where $\sigma^i$ is a pinning on all $u_j$ with $j \neq i$ and $\sigma^i(u_j) \in (0,\lambda)$ for all $j \neq i$. A simple calculation shows 
\begin{align*}
\Vert R^{\sigma^i \land u_i \gets \infty} - R^{\sigma^i \land u_i \gets 0} \Vert &\leq \lambda \prod_{j: j\neq i}{\frac{\beta_{u,u_j} \lambda + 1}{\lambda + \gamma_{u,u_j}}} \cdot \tp{ \frac{\beta_{u,u_i}\gamma_{u,u_i} - 1}{\gamma_{u,u_i}}}\\
\text{($\beta_{e} \leq 1,\gamma_{e} \geq \gamma > 1$ for all edges $e$)}\quad&\leq \lambda \prod_{j: j\neq i}{\frac{ \lambda + 1}{\lambda + \gamma}} \\
\text{($\lambda > 0$ and $\gamma > 1$ are constants)}\quad &\leq \lambda \exp(-\Omega(d)).
\end{align*}
Using \Cref{lemma:dtv-bound-potential}, we have
\begin{align*}
\sum_{i=1}^d K_{u}^{u_i} \leq \sum_{i=1}^d O(1) \cdot \Vert R^{\sigma^i \land u_i \gets \infty} - R^{\sigma^i \land u_i \gets 0} \Vert \leq O(1) \cdot d \cdot \exp(-\Omega(d))= O(1).
\end{align*}
Finally, combining~\eqref{eq:bound-on-dtv}, \Cref{lemma:dtv-bound-potential}, and the above bounds, the total influence is bounded by
\begin{align*}
    \sum_{w \in V \setminus \{v\}} \DTV{\mu_v^{w \gets 0}}{\mu_v^{w \gets 1}} &\leq 2\sum_{k \geq 1} \sum_{w \in L_k(v)} I_v^w \leq O(1) \sum_{k \geq 1} \sum_{w \in L_k(v)} K_v^w\\
    &\leq O(1) \sum_{k \geq 1} (1 - \delta)^k = O(1).
\end{align*}



\section{Mixing from typical-case aggregate strong spatial mixing}

The ferromagnetic two-spin systems are monotone systems.
To make this notion precise, recall that $\mu^\sigma$ denotes the distribution of $X \sim \mu$ conditional on $X(\Lambda) = \sigma$,
where $\Lambda \subseteq V$ is a subset of vertices and $\sigma \in \{0,1\}^{\Lambda}$ is a configuration on $\Lambda$. 
Define a partial ordering $\preceq$ as follows. 
For any $\Lambda \subseteq V$, any two configurations $\sigma,\tau \in \{0,1\}^{\Lambda}$, 
\begin{align}
  \sigma \preceq \tau \quad \Leftrightarrow \quad \sigma_v \leq \tau_v \quad \forall v \in \Lambda.\label{eq:partial-order-on-configurations}
\end{align}

\begin{definition}[Monotone spin systems]
 A two-spin system is said to be monotone if for any $\Lambda \subseteq V$, any two configurations $\sigma,\tau \in \{0,1\}^{\Lambda}$, if $\sigma \preceq \tau$, then $\mu^\sigma$ is stochastically dominated by $\mu^\tau$, which means that there exists a coupling $(X,Y)$ such that $X \sim \mu^\sigma$ and $Y \sim \mu^\tau$ and $\Pr[X \preceq Y] = 1$.
\end{definition}

As a well-known fact, any Gibbs distribution of ferromagnetic two-spin system is a monotone spin system. We provide a proof for the sake of completeness in Appendix \ref{app:censoring}.
\begin{proposition}\label{prop:monotone-of-ferro-ising}
    Any Gibbs distribution of ferromagnetic two-spin system is a monotone spin system.
\end{proposition}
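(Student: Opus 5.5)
The plan is to reduce stochastic domination to a monotone coupling that is built one vertex at a time, using the FKG/Holley lattice condition. Fix $\Lambda \subseteq V$ and $\sigma \preceq \tau$ in $\{0,1\}^{\Lambda}$. By self-reducibility (\Cref{obs:self-reducibility}), $\mu^\sigma$ and $\mu^\tau$ restricted to $U = V\setminus\Lambda$ are ferromagnetic two-spin systems on $G[U]$ with the same edge activities. Only the external fields change: $\lambda'_v(\sigma) = \lambda_v\prod_{e\in E_0}\beta_e\prod_{e\in E_1}\gamma_e^{-1}$. Changing a boundary neighbour from $0$ to $1$ multiplies $\lambda'_v$ by $1/(\beta_e\gamma_e) < 1$, so $\sigma\preceq\tau$ gives $\lambda'_v(\tau)\le\lambda'_v(\sigma)$ for every $v\in U$. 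Hence it suffices to show one thing. If $\nu_1,\nu_2$ are ferromagnetic two-spin systems on the same graph with the same edge weights and fields $\lambda^{(2)}_v \le \lambda^{(1)}_v$ (a smaller field favours spin $1$), then $\nu_1$ is stochastically dominated by $\nu_2$.

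For this comparison I would verify Holley's condition $\nu_2(x\vee y)\,\nu_1(x\wedge y)\ge \nu_1(x)\,\nu_2(y)$ for all $x,y\in\{0,1\}^U$. Write each weight as a product over vertices and edges and check the factors one at a time.

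\begin{itemize}
\item \emph{Vertex factors.} These contribute $\lambda_v$ when the spin is $0$. At a vertex with $x_v=1$ and $y_v=0$, the left side gets $1\cdot 1$ and the right side gets $1\cdot\lambda^{(2)}_v$, which matches because the left side does not exceed the right only in the allowed direction. At a vertex with $x_v=0$ and $y_v=1$, the left side gets $1\cdot\lambda^{(1)}_v$ and the right side gets $\lambda^{(1)}_v\cdot 1$, so they are equal. When $x_v=y_v$, the two sides are equal up to the replacement of $\lambda^{(1)}_v$ by $\lambda^{(2)}_v$ on the $0$ spin, which favours the inequality. To be careful I would organise this as Holley's inequality with the field ratio absorbed. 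Equivalently, I would check the single-site version: for $x\preceq y$, the conditional probability of spin $1$ at $v$ under $\nu_1$ given $x$ is at most that under $\nu_2$ given $y$.

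\item \emph{Edge factors.} Each edge $\{u,w\}$ contributes $f_e(a,b)$, with $f_e(0,0)=\beta_e$, $f_e(1,1)=\gamma_e$ and $f_e(0,1)=f_e(1,0)=1$. The only nontrivial case is when the pair $(x_u,x_w),(y_u,y_w)$ is $(1,0),(0,1)$. Then the join is $(1,1)$ and the meet is $(0,0)$, and the inequality becomes $\gamma_e\beta_e\ge 1\cdot 1$. This is exactly ferromagnetism.
\end{itemize}

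The cleanest route, which I will actually write, avoids bookkeeping of global Holley factors. I would use the standard heat-bath (Glauber) monotone coupling. Run the two Glauber chains for $\nu_1$ and $\nu_2$ from the all-$0$ and all-$1$ configurations respectively. Choose the same vertex and the same uniform $U\in[0,1]$, and set the spin to $1$ iff $U\le p_i(v\mid\cdot)$. The single-site conditional of spin $1$ at $v$ is
\begin{align*}
p(v\mid x)=\Bigl(1+\lambda_v\prod_{w\sim v}g_{vw}(x_w)\Bigr)^{-1},
\end{align*}
with $g(0)=\beta_e$ and $g(1)=1/\gamma_e$. Since $\beta_e\ge 1/\gamma_e$, $g$ is nonincreasing, so $p$ is increasing in $x$ and decreasing in $\lambda_v$. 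Therefore $x\preceq y$ implies $p_1(v\mid x)\le p_2(v\mid y)$, and the order $X_t\preceq Y_t$ is preserved at every step. Both chains are ergodic, since every configuration has positive weight when all parameters are positive, so the coupled pair converges in law to $(\nu_1,\nu_2)$. The set $\{X\preceq Y\}$ is closed in a finite space and hence has limiting probability $1$, which gives the required coupling.

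The main obstacle is only the limiting and ergodicity step. Some parameters may be $0$ or $\infty$ in degenerate pinnings, but in the paper's setting all activities are strictly positive and finite, so the chains are irreducible and aperiodic (they have holding probability) and the limit argument goes through directly.
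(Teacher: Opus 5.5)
Your proof is correct, but it takes a different route from the paper. The paper does not reduce to a comparison of external fields, and it never runs a chain to equilibrium.

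\begin{itemize}
\item \textbf{The paper's route.} It first proves \Cref{lemma:monotone-glauber-dynamics}: for arbitrary partial pinnings $\sigma\preceq\tau$ on $\Lambda$, every free vertex satisfies $\mu_v^{\sigma}(1)\le\mu_v^{\tau}(1)$. This is shown by passing to Weitz's SAW tree (\Cref{prop:marginal-distributions-are-identical}) and using that the tree recursion is coordinatewise monotone. It then couples $\mu^\sigma$ and $\mu^\tau$ exactly. The free vertices are revealed one at a time from their conditional marginals given everything revealed so far, with a shared uniform at each step, so the order is preserved by induction.
\item \textbf{Your route.} You need only the explicit single-site conditional given a full neighbourhood. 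Its monotonicity is immediate from $\beta_e\gamma_e\ge 1$, so you bypass the SAW-tree machinery entirely. The cost is an ergodicity-and-limit step in place of a finite explicit coupling.
\item \textbf{Why the paper chose its route.} The same partial-pinning lemma drives the monotone grand coupling of block dynamics in \Cref{prop:monotone-coupling}. There the vertices of a block are resampled sequentially while the rest of the block is still unpinned. Your result would also supply that lemma after the fact, since stochastic domination implies ordered marginals.
\end{itemize}

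Two small repairs are needed.

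\begin{itemize}
\item \textbf{The limit step.} The joint law of $(X_t,Y_t)$ need not converge; only its marginals do. Take a subsequential limit, or a Ces\`aro average, of the joint laws. This gives a coupling of $\nu_1,\nu_2$ supported on the closed set $\{X\preceq Y\}$.
\item \textbf{The Holley sketch.} In the sketch you set aside, the case $x_v=1$, $y_v=0$ is misstated. The meet is $0$ there, so the left side is $1\cdot\lambda^{(1)}_v$, not $1\cdot 1$. The inequality holds exactly because $\lambda^{(1)}_v\ge\lambda^{(2)}_v$. With that corrected, the factor-by-factor Holley check is itself a complete proof, given Holley's theorem.
\end{itemize}
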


We study the block dynamics on two-spin systems with Gibbs distribution $\mu$. Let $\+B = \{B_1,B_2,\ldots,B_r\}$ be a set of blocks, where each block $B_i \subseteq V$ and $\cup_{i=1}^r B_i = V$. 
We consider two kinds of block dynamics: heat-bath block dynamics and systematic scan block dynamics. 

Starting from an initial configuration $X\in \Omega = \{0,1\}^V$, in each step, the heat-bath block dynamics updates the current configuration $X$ as follows:
\begin{itemize}
    \item pick a block $B$ uniformly at random from $\+B$;
    \item resample $X(B)\sim \mu_{B}^{X(V \setminus B)}$, where $\mu_{B}^{X(V \setminus B)}$ is the marginal distribution on $B$ induced by $\mu$ conditioned on the configuration $X(V \setminus B)$ on other variables $V \setminus B$ outside of $B$.
\end{itemize}
The systematic scan block dynamics updates the current configuration $X$ as follows: for each update step,
\begin{itemize}
  \item scan all the blocks $B_i$ for $i$ from 1 to $r$ in order, and resample the configuration on $B_i$ conditional on the current configuration of other variables: $X(B_i) \sim \mu_{B_i}^{X(V \setminus B_i)}$.
\end{itemize}

For each block $B_i$, let $P_{B_i}$ denote the transition matrix of updating the configuration on $B_i$ conditional on the current configuration of other variables. The transition matrix of heat-bath block dynamics is then 
\begin{align*}
  P_{\text{HB}} = \frac{1}{r}\sum_{i=1}^r P_{B_i},
\end{align*}
and the transition matrix of systematic scan block dynamics is 
\begin{align*}
  P_{\text{Scan}} = P_{B_r} \cdot P_{B_{r-1}} \cdots P_{B_1}.
\end{align*}
The result in this section works for both the heat-bath block dynamics and the systematic scan block dynamics.
In the rest of the proof in this section, we use the phrase ``block dynamics'' to refer to both the heat-bath block dynamics and the systematic scan block dynamics.

As before, the mixing time of block dynamics is defined as the number of steps until the configuration $X$ is close to the stationary distribution $\mu$ in total variation distance. Formally, let $P:\Omega \times \Omega \to [0,1]$ be the transition matrix of the block dynamics. Then, the mixing time is defined as
\begin{align*}
\forall \epsilon > 0, \quad t_{\textnormal{mix}}^{P}(\epsilon) = \max_{\sigma \in \Omega}\min\left\{t \geq 0: \DTV{P^t(\sigma,\cdot)}{\mu} < \epsilon\right\}.
\end{align*}

Monotone systems admit monotone grand couplings.
The following standard result applies to $P_{\text{HB}}$ and $P_{\text{Scan}}$. For the sake of completeness, we provide a proof in Appendix \ref{app:censoring}.
\begin{proposition}[Monotone grand coupling of block dynamics]\label{prop:monotone-coupling}
Let $\mu$ be a Gibbs distribution of a ferromagnetic two-spin system on graph $G=(V,E)$. Let $P$ be a block dynamics on $\mu$. Then, there exists a monotone coupling function $f: \Omega \times [0,1] \to \Omega$ such that for any $\sigma \in \Omega$, real vector $r \in [0,1]^{n+1}$ uniformly at random, $\sigma \to \tau$ where $\tau = f(\sigma,r)$ follows the law of $P$. Furthermore, for any $\sigma \preceq \sigma'$, it holds that 
\begin{align*}
\Pr_r[f(\sigma,r) \preceq f(\sigma',r)] = 1.
\end{align*}
\end{proposition}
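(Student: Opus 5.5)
The plan is to build the coupling one block update at a time and then compose. Start with a single block $B \subseteq V$. By \Cref{prop:monotone-of-ferro-ising}, if $\sigma \preceq \sigma'$ then $\mu_B^{\sigma(V\setminus B)}$ is stochastically dominated by $\mu_B^{\sigma'(V\setminus B)}$. The domination has to be realised by one deterministic function of shared randomness, not just by some coupling that exists. For this I would use the sequential inverse-CDF (chain-rule) sampler. Fix an ordering $w_1,\ldots,w_m$ of $B$. Given the boundary $\eta = \sigma(V\setminus B)$ and uniform variables $r_1,\ldots,r_m \in [0,1]$, set $X(w_j)=1$ if and only if
\[
r_j \le \mu_{w_j}^{\eta \wedge X(w_1,\ldots,w_{j-1})}(1).
\]
By the chain rule this produces an exact sample from $\mu_B^\eta$.

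Monotonicity goes by induction on $j$. Assume the boundaries satisfy $\eta \preceq \eta'$ and the outputs agree in order so far, $X(w_{<j}) \preceq X'(w_{<j})$. Then the full conditioning pinning for $w_j$ is ordered, so \Cref{prop:monotone-of-ferro-ising} applied to the single vertex $w_j$ gives $\mu_{w_j}^{\cdots}(1) \le \mu_{w_j}^{\cdots'}(1)$. Hence $X(w_j) \le X'(w_j)$ for the same $r_j$. Outside $B$ the configuration is unchanged, so $\sigma \preceq \sigma'$ carries over there. This gives a monotone update map $f_B(\sigma, r)$.

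The randomness needs one bookkeeping adjustment. The statement uses $r \in [0,1]^{n+1}$, but an update of $B$ only needs $|B| \le n$ uniform coordinates; I would use coordinates $r_1,\ldots,r_n$, indexed by vertices, for this. For the heat-bath dynamics, the extra coordinate $r_0$ chooses the block index $i = \lceil r r_0 \rceil$ (where $r$ here is the number of blocks). This choice is identical for $\sigma$ and $\sigma'$, so we set $f(\sigma,r) = f_{B_i}(\sigma,(r_1,\ldots,r_n))$. It follows the law of $P_{\text{HB}}$ and preserves $\preceq$.

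For systematic scan, a single step consists of $r$ block updates. Here I would read the statement's ``$[0,1]^{n+1}$'' loosely as a fixed finite vector of uniforms: we need independent fresh coordinates for each block, for instance $r \cdot n$ of them, with $r$ the number of blocks. The map is the composition $f_{B_r} \circ \cdots \circ f_{B_1}$, and a composition of monotone maps that use shared randomness is monotone. Independence of the coordinates across blocks gives the law $P_{B_r}\cdots P_{B_1}$.

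The only real obstacle is the single-vertex monotonicity $\mu_w^{\tau}(1) \le \mu_w^{\tau'}(1)$ for $\tau \preceq \tau'$ on an arbitrary conditioning set. This is handled if I go through \Cref{prop:monotone-of-ferro-ising}. If I wanted a self-contained argument, I would note that the marginal ratio $R$ at $w$ is decreasing in each neighbour's spin because $\beta_e\gamma_e > 1$, and reach general conditioning through the SAW-tree recursion, which is coordinatewise increasing (as observed before). The remaining parts are routine bookkeeping.
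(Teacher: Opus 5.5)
Your proposal is correct and matches the paper's proof. Both use sequential inverse-CDF resampling inside a block driven by shared uniforms, an induction on single-vertex conditional monotonicity, $r_0$ to choose the block for the heat-bath chain, and composition of the block maps for systematic scan. Two small remarks: the paper proves the single-vertex monotonicity directly via the SAW-tree recursion (your fallback) and only afterwards derives \Cref{prop:monotone-of-ferro-ising} from this very coupling, so citing that proposition would be circular within the paper; and your fresh uniforms for each block in systematic scan make explicit a randomness-bookkeeping point the paper leaves implicit (vertex-indexed coordinates already suffice when the blocks are disjoint, as for $\{V_0,V_1\}$).
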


To analyse this grand coupling, due to the monotonicity, it suffices to consider two chains starting from all-one configuration $\*1$ and all-zero configuration $\*0$.

\begin{definition}\label{def:monotone-coupling}
Let $(r_t)_{t \geq 1}$ be a sequence of independent uniformly random real vectors in $[0,1]^{n+1}$. Let $X^+_0$ be the all-ones configuration and $X^-_0$ be the all-zeros configuration. Define the monotone coupling $(X^+_t,X^-_t)_{t \geq 0}$ as for any $t \geq 1$, $X^+_t = f(X^+_{t-1},r_t)$ and $X^-_t = f(X^-_{t-1},r_t)$, where $f(\cdot,\cdot)$ is the monotone coupling function in \Cref{prop:monotone-coupling}.
\end{definition}

In addition, to facilitate the analysis later, define the following censored block dynamics.

\begin{definition}[Censored block dynamics]
    Let $\mu$ be the Gibbs distribution of a ferromagnetic two-spin system on graph $G=(V,E)$.
    Let $P:\Omega \times \Omega \to [0,1]$ be the transition matrix of a block dynamics on $\mu$ with a set of blocks $\+B = \{B_1,B_2,\ldots,B_r\}$. For any subset $S \subseteq V$, any pinning $\sigma \in \{0,1\}^{V \setminus S}$,  the censored block dynamics $P_S$ on $\mu_S^\sigma$ is defined as follows. 
    \begin{itemize}
        \item The Markov chain starts from an arbitrary $X \in \{0,1\}^V$ with $X(V \setminus S) = \sigma$.
    \end{itemize}
    For the heat-bath block dynamics, in each step, 
    \begin{itemize}
        \item sample $B \in \+B$ uniformly at random, and resample the configuration on $B \cap S$ conditional on the current configuration of other variables: $X(B \cap S) \sim \mu_{B \cap S}^{X(V \setminus (B \cap S))}$.
    \end{itemize}
    For the systematic scan block dynamics, in each step,
    \begin{itemize}
        \item scan all the blocks $B_i$ for $i$ from 1 to $r$ in order, and resample the configuration on $B_i \cap S$ conditional on the current configuration of other variables: $X(B_i \cap S) \sim \mu_{B_i \cap S}^{X(V \setminus (B_i \cap S))}$.
    \end{itemize}
    \end{definition}
    
    The censored block dynamics $P_S^{\textnormal{censored}}$ only updates the configuration on $S$ while keeping the configuration on $V \setminus S$ fixed.
    Intuitively, updates outside of $S$ are ``censored''.
    During the whole process, the configuration on $V \setminus S$ is fixed as $\sigma$.
    Let $(X_t)_{t \geq 0}$ be the Markov chain generated by $P_S^{\textnormal{censored}}$ on $\mu_S^\sigma$. 
    As before, the mixing time of censored block dynamics $P_S^{\textnormal{censored}}$ on $\mu_S^\sigma$ is
    \begin{align*}
      \forall \epsilon>0, \quad t^{P_S^{\textnormal{censored}},\mu_S^\sigma}_{\textnormal{mix}}(\epsilon) =\max_{X_0:X_0(V \setminus S) = \sigma}\min\left\{t \geq 0: \DTV{(P_S^{\textnormal{censored}})^t(X_0,\cdot)}{\mu_S^\sigma} < \epsilon\right\}.
    \end{align*}

    The key to our proof is the notion of good neighbourhood and boundary conditions, which facilitates typical-case ASSM.
Let $S \subseteq V$ be a subset of vertices. 
The outer boundary $\partial S$ of $S$ is the set of vertices $v \in V \setminus S$ such that there exists an edge $\{u,v\} \in E$ with $u \in S$.

\begin{definition}\label{def:correlation-decay}
  For any $v \in V$, we call a neighbourhood $S_v \ni v$ and a set of boundary conditions $\Omega_{\partial S_v} \subseteq \{0,1\}^{\partial S_v}$ \emph{good} with local mixing time $T_{\text{local}}$ if the following three properties hold:
\begin{itemize}
  \item \textbf{Closed under shortest paths.} For any $\sigma,\tau \in \Omega_{\partial S_v}$, 
    there exists a path of good boundary configurations $\eta_0,\eta_1,\ldots,\eta_t \in \Omega_{\partial S_v}$ such that $\eta_0 = \sigma$, $\eta_t = \tau$, 
    and for any $1 \leq i \leq t$, $\eta_i$ and $\eta_{i+1}$ differ only at one vertex, where $t = \abs{\{\sigma(u) \neq \tau(u): u \in \partial S_v\}}$ is the Hamming distance between $\sigma$ and $\tau$.
  \item \textbf{ASSM under good boundary conditions.} For any $u \in \partial S_v$, define the influence of $u$ on $v$ as
    \begin{align}\label{eq:influence}
      a_u \defeq \max_{\sigma \in \Omega_{\partial S_v}} \DTV{\mu^{\sigma^{u \gets 0}}_v}{\mu^{\sigma^{u \gets 1}}_v},
    \end{align}
    where $\sigma^{u \gets c}$ denotes the configuration on $\partial S_v$ obtained from $\sigma$ by changing the value of $u$ to $c$. Then, the following aggregate strong spatial mixing (ASSM) property holds
    \begin{align}\label{eq:assm-with-good-boundary}
      \sum_{u \in \partial S_v} a_u \leq \frac{1}{20}.
    \end{align}
  \item \textbf{Local mixing.} For any outside configuration $\sigma \in \{0,1\}^{V \setminus S_v}$, 
    the censored block dynamics $P_{S_v}^{\textnormal{censored}}$ on $\mu^\sigma_{S_v}$ has mixing time $t^{P_{S_v}^{\textnormal{censored}},\mu_{S_v}^\sigma}_{\textnormal{mix}}(\frac{1}{4e}) \leq T_{\text{local}}$.
\end{itemize}
\end{definition}

Now we are ready to show the main theorem of this section.

\begin{theorem}\label{thm:mixing}
Let $\mu$ be the Gibbs distribution of a ferromagnetic two-spin system on graph $G=(V,E)$. Let $P$ be a block dynamics on $\mu$ with a set $\+B$ of blocks. 
Let $T_{\textnormal{local}} > 0$ and $T_{\textnormal{burn-in}} > 0$ be two integers. 
Suppose for any $v \in V$, there exists $S_v \subseteq V$ and $\Omega_{\partial S_v} \subseteq \{0,1\}^{\partial S_v}$ such that
\begin{itemize}
  \item $(S_v,\Omega_{\partial S_v})$ is good with local mixing time $T_{\text{local}}$ as in \Cref{def:correlation-decay};
    \item the monotone coupling $(X_t^+,X_t^-)_{t \geq 0}$ of $P$ in \Cref{def:monotone-coupling} satisfies that 
    for any $t \geq T_{\textnormal{burn-in}}$,
    \begin{align}\label{eq:burn-in}
         \Pr[X^+_t(\partial S_v) \notin \Omega_{\partial S_v} \lor X^-_t(\partial S_v) \notin \Omega_{\partial S_v}] \leq \frac{1}{n^3},
    \end{align}
    where $n = |V|$ is the number of vertices.
\end{itemize}
Then the mixing time of block dynamics $P$ is bounded by
\begin{align}\label{eq:mixing-time-bound}
t_{\textnormal{mix}}^P\left(\frac{1}{4e}\right) = O\left(T_{\textnormal{burn-in}} + T_{\textnormal{local}} \cdot \max_{v \in V}\log |R_v|\cdot \log n\right), \quad \text{where } R_v = S_v \cup \partial S_v.
\end{align}
\end{theorem}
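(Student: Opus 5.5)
We follow the Mossel--Sly strategy for monotone systems, adapted to typical boundaries. By \Cref{prop:monotone-coupling}, the grand coupling sandwiches every chain between $X^-_t \preceq X^+_t$. Hence
\[
\DTV{P^t(\sigma,\cdot)}{\mu} \le \Pr[X^+_t \neq X^-_t] \le \sum_{v} \Pr[X^+_t(v)\neq X^-_t(v)].
\]
It therefore suffices to show, for every $v$,
\[
d_t(v) \defeq \Pr[X^+_t(v)=1]-\Pr[X^-_t(v)=1] \le \frac{1}{4en}
\]
at $t = O(T_{\textnormal{burn-in}} + T_{\textnormal{local}}\log \max_v|R_v|\cdot\log n)$. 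The plan is to prove a contraction of the type
\[
\max_v d_{t+T}(v) \le \tfrac12 \max_v d_t(v) + O(n^{-2}),
\]
for $t\ge T_{\textnormal{burn-in}}$ and $T = O(T_{\textnormal{local}}\log\max_v|R_v|)$. Iterating this $O(\log n)$ times gives the claim.

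For one round, fix $v$ and consider the window $[t,t+T]$. Use the censoring inequality for monotone chains (Peres--Winkler, proved in Appendix \ref{app:censoring}) to censor all updates outside $S_v$ during this window, for both the top and bottom chains. Censoring can only increase the gap $\Pr[X^+(v)=1]-\Pr[X^-(v)=1]$ in the right direction. After censoring, the boundary $\partial S_v$ is frozen at $X^\pm_t(\partial S_v)$, and the chain on $S_v$ is the censored block dynamics $P^{\textnormal{censored}}_{S_v}$. By the local mixing property, after $T=T_{\textnormal{local}}\cdot O(\log|R_v|)$ steps (boosting $\epsilon$ via \eqref{eq:general-bound-on-mixing-time}), the law at $v$ is within $1/(40|R_v|)$, or a similar small error, of $\mu_v^{X^\pm_t(\partial S_v)}$.

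This reduces the round to
\[
d_{t+T}(v) \le \mathbb{E}\bigl[\mu_v^{X^+_t(\partial S_v)}(1)-\mu_v^{X^-_t(\partial S_v)}(1)\bigr] + \text{small}.
\]
Split on the event that both boundaries lie in $\Omega_{\partial S_v}$, which fails with probability at most $n^{-3}$ by \eqref{eq:burn-in}. On the good event, $X^-_t(\partial S_v)\preceq X^+_t(\partial S_v)$. Since the good set is closed under shortest paths, we can walk from one to the other flipping one disagreeing vertex $u$ at a time, through good configurations. Each flip changes the marginal at $v$ by at most $a_u$, so the expectation is bounded by
\[
\sum_{u\in\partial S_v} a_u \Pr[X^+_t(u)\neq X^-_t(u)] \le \tfrac1{20}\max_u d_t(u),
\]
using \eqref{eq:assm-with-good-boundary}. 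Summing the errors gives the contraction factor $\tfrac1{20}+o(1)<\tfrac12$ plus additive $O(n^{-3})+\text{local error}$.

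The main obstacle is the local-mixing error term. It must be relative, i.e. proportional to $\max d_t$, not absolute, otherwise the additive error stays at $1/\mathrm{poly}(|R_v|)$ rather than $n^{-2}$. To handle this, I would start the censored chains within $S_v$ from the sandwiched states, so that the discrepancy at $v$ is controlled by disagreements in $R_v$. I would then use a union bound over $R_v$, which is where the $\log|R_v|$ comes from, mixing to accuracy $1/(40|R_v|)$ so the error is $\le \tfrac1{40}\max_u d_t(u)$, as in Mossel--Sly.

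A second technical point is that the censoring inequality requires monotone starting laws with increasing density relative to $\mu$. I would apply it with initial states $\mathbf 1$ and $\mathbf 0$ and compare the censored and uncensored chains over the whole time horizon, censoring only the final window, which is legitimate for both heat-bath and systematic scan. Finally, the bad-event probability $n^{-3}$ summed over $n$ vertices and $O(\log n)$ rounds stays $o(1)$.
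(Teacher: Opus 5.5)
Your proposal is correct and follows essentially the same route as the paper. The shared ingredients are: the monotone grand coupling; the censoring comparison $Y^-_j \preceq_D X^-_j \preceq_D X^+_j \preceq_D Y^+_j$ obtained from the starts $\mathbf{1}$ and $\mathbf{0}$; and a per-phase contraction in which the local-mixing error is made relative, because the censored chains have identical laws whenever $X^\pm_s$ agree on $R_v$, combined with a union bound over $R_v$. The boundary term is handled as in the paper, via good shortest paths and ASSM, with the burn-in bad event contributing only $O(n^{-3})$. The only differences are cosmetic: your accuracy $1/(40|R_v|)$ versus the paper's $1/(20|R_v|)$, and your slightly sharper bound of the bad-event contribution directly by its probability.
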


In \eqref{eq:mixing-time-bound} we set $\epsilon=1/(4e)$ for convenience later.
It is standard to extend it to general $\epsilon>0$.
The proof of \Cref{thm:mixing} follows similar lines as in \cite{MS13}.

\begin{proof}[Proof of \Cref{thm:mixing}]
Let $(X_t^+,X_t^-)_{t \geq 0}$ be the monotone coupling of $P$ in \Cref{def:monotone-coupling}. 
Define $T_{\textnormal{phase}} \defeq T_{\textnormal{local}} \cdot \max_{v \in V} \log \left(20 |R_v|\right)$. We show that for any integer $k \geq 1$, it holds that
\begin{align}\label{eq:recursion}
    &\max_{v \in V} \Pr\left[X^+_{T_{\textnormal{burn-in}} + (k+1) \cdot T_{\textnormal{phase}}}(v) \neq X_{T_{\textnormal{burn-in}} + (k+1) \cdot T_{\textnormal{phase}}}^-(v)\right]\notag\\
    \leq\,& \frac{1}{2} \max_{v \in V} \Pr\left[X^+_{T_{\textnormal{burn-in}} + k \cdot T_{\textnormal{phase}}}(v) \neq X_{T_{\textnormal{burn-in}} + k \cdot T_{\textnormal{phase}}}^-(v)\right] + \frac{1}{n^2}.
\end{align}
Solving the recursion in \eqref{eq:recursion}, after $T \defeq T_{\textnormal{burn-in}} + O(T_{\textnormal{phase}}\log n)$ steps, 
\begin{align*}
\max_{v \in V} \Pr\left[X^+_T(v) \neq X^-_T(v)\right] \leq \tp{\frac{1}{2}}^{O(\log n)} + \frac{2}{n^2} \leq \frac{3}{n^2}. 
\end{align*}
By a union bound over all $v \in V$, it holds that $\Pr[X^+_T \neq X^-_T] \leq \frac{3}{n} \leq \frac{1}{4e}$.
This holds for two chains starting from the all-one configuration $\*1$ and all-zero configuration $\*0$.
By monotonicity, namely~\Cref{prop:monotone-coupling}, starting from an arbitrary pair of initial configurations, the two chains can be coupled successfully with probability at least $1-\frac{1}{4e}$. Therefore, by the standard coupling argument, the mixing time bound in~\eqref{eq:mixing-time-bound} is proved. Our task is reduced to verify the recursion in~\eqref{eq:recursion}.

Fix an integer $k \geq 0$. Let $s = k \cdot T_{\textnormal{phase}} + T_{\textnormal{burn-in}}$.
Fix a vertex $v \in V$ and the corresponding region $S_v\subseteq V$.
We construct another two instances of Markov chains $(Y_j^+,Y_j^-)_{j \geq 0}$ by the following process:
\begin{itemize}
    \item for $0 \leq j \leq s$, let $(Y_j^+,Y_j^-) = (X_j^+,X_j^-)$;
    \item for $j > s$, the two processes $Y_{j-1}^+ \to Y_j^+$  and $Y_{j-1}^- \to Y_j^-$ both follow the transition rule of the censored block dynamics $P_{S_v}^{\textnormal{censored}}$.
\end{itemize}

For two random variables $X$ and $Y$ over $\{0,1\}^V$, we say that the distribution of $X$ is stochastically dominated by the distribution of $Y$, denoted by $X \preceq_D Y$, if there exists a coupling $(X,Y)$ such that $X \preceq Y$ with probability 1, where the partial order $\preceq$ is defined in \eqref{eq:partial-order-on-configurations}.
The following result holds for the censored block dynamics $P_{S_v}^{\textnormal{censored}}$. 
A similar result appeared in~\cite[Theorem 7]{BlancaCV20}. For the sake of completeness, we provide a brief proof in Appendix \ref{app:censoring}.
\begin{claim}\label{claim:censoring}
  The following stochastic dominance relations hold:
  \[\forall j \geq 0, \quad Y_j^-\preceq_D X_j^- \preceq_D X_j^+\preceq_D Y_j^+.\] 
\end{claim}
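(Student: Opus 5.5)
The middle inequality $X_j^-\preceq_D X_j^+$ is immediate: by \Cref{prop:monotone-coupling}, the grand coupling preserves $\preceq$, and $X_0^-=\*0\preceq\*1=X_0^+$. By symmetry (flipping the roles of $0$ and $1$), it suffices to prove the outer inequality $X_j^+\preceq_D Y_j^+$.

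For $j\le s$ we have $Y_j^+=X_j^+$, so nothing is to be proved. For $j>s$, I would reduce everything to a censoring statement for monotone systems started from the maximal state, in the style of Peres--Winkler. The key facts I would establish are the following.

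\begin{enumerate}
\item[(i)] Each block update $P_B$ and each censored update $P_{B\cap S_v}$ is a monotone kernel. That is, if $\nu\preceq_D\nu'$, then $\nu P_B\preceq_D \nu' P_B$. This follows from \Cref{prop:monotone-of-ferro-ising} by coupling the conditional distributions given ordered boundaries.

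\item[(ii)] If $\nu/\mu$ is increasing, then so is $(\nu P_{B'})/\mu$ for any block $B'$, where $B'$ may be $B$ or $B\cap S_v$. The reason is that $\nu P_{B'}(x)/\mu(x)=\mathbb{E}_{\mu}[\nu/\mu \mid X(V\setminus B')=x(V\setminus B')]$. This is a conditional expectation of an increasing function under a monotone conditional law, and hence it is increasing in the outside configuration.

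\item[(iii)] If $\nu/\mu$ is increasing, then $\nu P_B\preceq_D \nu P_{B\cap S_v}$. To see this, decompose $P_B$: updating $B$ is the same as first updating $B\cap S_v$ and then updating $B$. This holds because $P_{B\cap S_v}P_B=P_B$, since $P_B$ resamples from the exact conditional of $\mu$. Moreover, $P_B$ applied to a measure with increasing density relative to $\mu$ yields a measure that is stochastically smaller. This is the standard fact that averaging an increasing density lowers it in the stochastic order, i.e.\ $\nu K\preceq_D\nu$ for reversible monotone $K$ when $\nu/\mu$ is increasing, proved via Harris/FKG for $\mu$.
\end{enumerate}

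With these facts, the induction is as follows. The chain $X^+$ starts from $\*1$, whose density relative to $\mu$ is increasing, and by (ii) the densities of $X_j^+$ and $Y_j^+$ stay increasing. At each step after $s$, I compare $\nu_j P$ with $\nu'_j P^{\mathrm{cens}}$, where $\nu_j\preceq_D\nu'_j$ are the laws of $X_j^+$ and $Y_j^+$ respectively. First, $\nu_jP\preceq_D\nu'_jP$ by (i). Then $\nu'_jP\preceq_D \nu'_jP^{\mathrm{cens}}$ by (iii), applied blockwise. For the heat-bath chain this uses averaging over $B$; for the systematic scan it uses composition, interleaving monotonicity (i) and the increasing-density property (ii) between successive blocks.

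The main obstacle I expect is step (iii), the single-block censoring comparison. The cleanest route is to show that for $f=\nu/\mu$ increasing and any increasing test function $g$, $\mathbb{E}_\mu[f\,P_Bg]\le \mathbb{E}_\mu[f\,P_{B\cap S}g]$. Using $P_B=P_{B\cap S}P_B$ and reversibility, this reduces to $\mathbb{E}_\mu[(P_{B\cap S}f)(P_B g)]\le\mathbb{E}_\mu[(P_{B\cap S}f)g]$. That inequality would follow from positive correlation (FKG) of increasing functions under the conditional measures of $\mu$ on $B$, which holds for ferromagnetic two-spin systems since they satisfy the lattice condition. The lower chain $Y^-$ is handled identically with decreasing densities starting from $\*0$.
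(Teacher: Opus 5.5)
Your proposal is correct and is essentially the paper's argument with the cited ingredients proved inline. Your (iii) is the single-block comparison $P_B\preceq_{mc}P_{B\cap S}$ that the paper takes from Blanca--Chen--Vigoda. Your blockwise averaging and interleaving replace the Fill--Kahn closure properties. Your time induction via (i) and (ii) is exactly the censoring-to-dominance theorem the paper invokes, here derived from $P_{B\cap S}P_B=P_B$ and conditional FKG.

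One caveat concerns your aside that $\nu K\preceq_D\nu$ holds for \emph{every} reversible monotone $K$ with $\nu/\mu$ increasing. That is false in that generality. Take the uniform measure on $\{0,1\}^2$ and let $K$ be the order-preserving swap of $(1,0)$ and $(0,1)$. With $\nu$ uniform on $\{x: x_1=1\}$, you get $\nu K$ uniform on $\{x:x_2=1\}$, which is not dominated by $\nu$. However, you only need the fact for heat-bath kernels $K=P_B$, and there your conditional-covariance argument proves it.
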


\Cref{claim:censoring} states a stochastic dominance relation among four random variables $X_j^-, X_j^+, Y_j^-, Y_j^+$. The statement itself only involves the marginal distribution of four random variables. For instance, the distribution of $Y^-_j$ is stochastic dominated by the distribution of $X_j^-$. The claim itself states nothing about the joint distribution of four random variables.

Recall that $R_v=S_v\cup \partial S_v$. Let $t = s + T_{\textnormal{phase}}$.
Since $(X_t^+,X_t^-)_{t \geq 0}$ forms a monotone coupling, we have $X_t^+(v) \geq X_t^-(v)$ with probability 1.
To upper bound the probability of $X_t^+(v) \neq X_t^-(v)$, we only need to upper bound $Pr[X_t^+(v) = 1] - Pr[X_t^-(v) = 1]$.
The stochastic dominance relations in \Cref{claim:censoring} shows
\begin{align*}
\Pr[X_t^-(v) = 1] \geq \Pr[Y_t^-(v) = 1] \text{ and } \Pr[X_t^+(v) = 1] \leq \Pr[Y_t^+(v) = 1].
\end{align*}
Therefore, we have the following upper bound:
\begin{align}\label{eq:upper-bound-on-X-t-plus-v-neq-X-t-minus-v}
\Pr[X_t^+(v) \neq X_t^-(v)] = Pr[X_t^+(v) = 1] - Pr[X_t^-(v) = 1] \leq \Pr[Y_t^+(v) = 1] - \Pr[Y_t^-(v) = 1].
\end{align}
For any two configurations $\sigma^+,\sigma^-\in \{0,1\}^{R_v}$, let $\mathcal{C}(\sigma^+,\sigma^-)$ be the event
$X_s^+(R_v)=\sigma^+$ and $X_s^-(R_v)=\sigma^-$. 
We only consider $\sigma^+,\sigma^-$ such that $\+C(\sigma^+,\sigma^-)$ happens with a positive probability.
For $t > s$, we will 
upper bound the difference between the probabilities of $Y_t^+(v) = 1$ and $Y_t^-(v) = 1$
conditioned on $\mathcal{C}(\sigma^+,\sigma^-)$. 
Let $\tau^+ = \sigma^+( \partial S_v)$ and $\tau^- = \sigma^-(\partial S_v)$ be the configurations on the boundary $\partial S_v$ induced by $\sigma^+$ and $\sigma^-$ respectively.
We also define $\+C(\tau^+,\tau^-)$ be the event $X_s^+(\partial S_v) = \tau^+$ and $X_s^-(\partial S_v) = \tau^-$.
By the triangle inequality, we have 
\begin{align}
\begin{split}\label{eq:triangle}
    &\abs{\Pr[Y_t^+(v) = 1 \mid \mathcal{C}(\sigma^+,\sigma^-)] - \Pr[Y_t^-(v) = 1 \mid \mathcal{C}(\sigma^+,\sigma^-)]}\\
    \leq\,& \left | \P[Y_t^+(v)=1 \mid \mathcal{C}(\sigma^+,\sigma^-)] -\mu_v^{\tau^+}(1) \right | + \left | \mu_v^{\tau^+}(1)-\mu_v^{\tau^-}(1)\right |\\
    &\quad + \left | \P[Y_t^-(v)=1 \mid \mathcal{C}(\sigma^+,\sigma^-)] -\mu_v^{\tau^-}(1) \right |,
\end{split}
\end{align}
By the law of total probability and the triangle inequality, the probability of $Y^+_t (v) \neq Y_t^-(v)$ is at most
\begin{align}\label{eq:chain-rule}
    & \Pr[Y_t^+(v) = 1] - \Pr[Y_t^-(v) = 1] \notag\\
    =\,& \sum_{(\sigma^+,\sigma^-):\sigma^+ \neq \sigma^-} \Pr[\mathcal{C}(\sigma^+,\sigma^-)] \cdot (\P[Y_t^+(v) = 1 \mid \mathcal{C}(\sigma^+,\sigma^-)] - \P[Y_t^-(v) = 1 \mid \mathcal{C}(\sigma^+,\sigma^-)])\notag\\
    \leq\,& \sum_{(\sigma^+,\sigma^-):\sigma^+ \neq \sigma^-} \Pr[\mathcal{C}(\sigma^+,\sigma^-)] \cdot |\P[Y_t^+(v) = 1 \mid \mathcal{C}(\sigma^+,\sigma^-)] - \P[Y_t^-(v) = 1 \mid \mathcal{C}(\sigma^+,\sigma^-)]|
\end{align}
Note that the sum above enumerates only pairs of distinct feasible boundary configurations $\sigma^+,\sigma^- \in \{0,1\}^{R_v}$, namely $\sigma^+ \neq \sigma^-$.
This is because, when $\sigma^+ = \sigma^-$, using the conditional independence property of spin systems, two Markov chains $Y^+_t(v)$ and $Y^-_t(v)$ are exactly the same stochastic processes (the same starting configuration and same transition matrix) inside $S_v$, and therefore $\Pr[Y_t^+(v) = 1 \mid \mathcal{C}(\sigma^+,\sigma^-)] = \Pr[Y_t^-(v) = 1 \mid \mathcal{C}(\sigma^+,\sigma^-)]$. 
Combining~\eqref{eq:triangle} and~\eqref{eq:chain-rule}, we have
\begin{align}
\begin{split}\label{eq:sum-of-three}
    &\Pr[Y_t^+(v) = 1] - \Pr[Y_t^-(v) = 1]\\
     \leq\,& \sum_{(\sigma^+,\sigma^-):\sigma^+\neq \sigma^-} \Pr[\mathcal{C}(\sigma^+,\sigma^-)] \left | \P[Y_t^+(v)=1 \mid \mathcal{C}(\sigma^+,\sigma^-)] -\mu_v^{\tau^+}(1) \right |\\
    &+\sum_{(\sigma^+,\sigma^-):\sigma^+\neq \sigma^-} \Pr[\mathcal{C}(\sigma^+,\sigma^-)] \left | \mu_v^{\tau^+}(1)-\mu_v^{\tau^-}(1)\right |\\
    &+\sum_{(\sigma^+,\sigma^-):\sigma^+\neq \sigma^-} \Pr[\mathcal{C}(\sigma^+,\sigma^-)] \left | \P[Y_t^-(v)=1 \mid \mathcal{C}(\sigma^+,\sigma^-)] -\mu_v^{\tau^-}(1) \right |.    
\end{split}
\end{align}

Consider the first and the third terms in~\eqref{eq:sum-of-three}.
Note that $Y_t^+(v)$ and  $Y_t^-(v)$ both follow the censored transition matrix $P_{S_v}^{\textnormal{censored}}$. The configuration outside $S_v$ is fixed in the censored process, and the configuration inside $S_v$ converges to the conditional marginal distribution $\mu_v^{\tau^+}$ and $\mu_v^{\tau^-}$ respectively. 
Therefore, by the local mixing property of \Cref{def:correlation-decay} and since $ t- s = T_{\textnormal{local}} \cdot \max_{v \in V}\log (20 | R_v|)$, by \eqref{eq:general-bound-on-mixing-time},
\begin{align*}
    \forall \sigma^+,\sigma^- \in \{0,1\}^{R_v},\quad
\begin{split}
    \left | \P[Y_t^+(v)=1\mid\mathcal{C}(\sigma^+,\sigma^-)] -\mu_v^{\tau^+}(1) \right | &\leq \frac{1}{20|R_v|};\\
    \left | \P[Y_t^-(v)=1\mid\mathcal{C}(\sigma^+,\sigma^-)] -\mu_v^{\tau^-}(1) \right | &\leq \frac{1}{20|R_v|}.
\end{split}
\end{align*}
Therefore the first and the third terms in~\eqref{eq:sum-of-three} can be bounded by
\begin{align}
\begin{split}\label{eq:term-1-and-3}
    &\sum_{(\sigma^+,\sigma^-):\sigma^+\neq \sigma^-} \Pr[\mathcal{C}(\sigma^+,\sigma^-)] \left | \P[Y_t^+(v)=1 \mid \mathcal{C}(\sigma^+,\sigma^-)] -\mu_v^{\tau^+}(1) \right |\\
    &\qquad +\sum_{(\sigma^+,\sigma^-):\sigma^+\neq \sigma^-} \Pr[\mathcal{C}(\sigma^+,\sigma^-)] \left | \P[Y_t^-(v)=1 \mid \mathcal{C}(\sigma^+,\sigma^-)] -\mu_v^{\tau^-}(1) \right |\\
    \leq& \sum_{(\sigma^+,\sigma^-):\sigma^+\neq \sigma^-}  \frac{\Pr[\mathcal{C}(\sigma^+,\sigma^-)]}{20|R_v|} + \sum_{(\sigma^+,\sigma^-):\sigma^+\neq \sigma^-}  \frac{\Pr[\mathcal{C}(\sigma^+,\sigma^-)]}{20|R_v|}\\
    =&\frac{\Pr[X_s^+(R_v)\neq X_s^-(R_v)]}{10|R_v|}\leq \frac{\max_{u\in V} \Pr[X_s^+(u)\neq X_s^-(u)]}{10}.
\end{split}
\end{align}

To bound the second term in~\eqref{eq:sum-of-three}, we first have that
\begin{align*}
&\sum_{(\sigma^+,\sigma^-):\sigma^+\neq \sigma^-} \Pr[\mathcal{C}(\sigma^+,\sigma^-)] \left | \mu_v^{\tau^+}(1)-\mu_v^{\tau^-}(1)\right |\\
=\,& \sum_{(\tau^+,\tau^-) \in \{0,1\}^{\partial S_v} \times \{0,1\}^{\partial S_v}:\tau^+\neq \tau^-}\P[\mathcal{C}(\tau^+,\tau^-)]\left | \mu_v^{\tau^+}(1)-\mu_v^{\tau^-}(1)\right |,
\end{align*}
because whenever $\tau^+ = \sigma^+(\partial S_v)=\sigma^-(\partial S_v) = \tau^-$, it holds that $\mu_v^{\tau^+}(1) = \mu_v^{\tau^-}(1)$.
We then construct a path $\eta_0,\eta_1,\ldots,\eta_t \in \{0,1\}^{\partial S_v}$ such that $\eta_0 = \tau^+$, $\eta_t = \tau^-$, and for any $1 \leq i \leq t$, $\eta_i$ and $\eta_{i+1}$ differ only at one vertex, where $t = \{\tau^+(u) \neq \tau^-(u): u \in \partial S_v\}$ is the Hamming distance between $\tau^+$ and $\tau^-$.
There are two cases depending on whether both $\tau^+$ and $\tau^-$ are in $\Omega_{\partial S_v}$. 
If so, by the first property of \Cref{def:correlation-decay}, we can further assume that $\eta_i \in \Omega_{\partial S_v}$ for all $0 \leq i \leq t$.
Then,
\begin{align*}
    &\sum_{\tau^+\neq \tau^-}\P[\mathcal{C}(\tau^+,\tau^-)]\left | \mu_v^{\tau^+}(1)-\mu_v^{\tau^-}(1)\right |\leq \sum_{\tau^+\neq \tau^-}\P[\mathcal{C}(\tau^+,\tau^-)]\sum_{i=1}^{t}\left|\mu_v^{\eta_{i-1}}(1)-\mu_v^{\eta_i}(1) \right|\\
    \leq& \sum_{\tau^-\neq \tau^+}\P[\mathcal{C}(\tau^+,\tau^-)]\sum_{u\in \partial S_v}\mathbb{1}\{\tau^+(u)\neq \tau^-(u) \}(\mathbb{1}\{\tau^+,\tau^-\in \Omega_{\partial S_v}\}a_u+\mathbb{1}\{\tau^+\text{ or }\tau^-\notin \Omega_{\partial S_v}\}\cdot 1),
\end{align*}
where in the last inequality, we split the two cases. 
If both $\tau^+$ and $\tau^-$ are in $\Omega_{\partial S_v}$, then $\eta_i \in \Omega_{\partial S_v}$ for all $0 \leq i \leq t$.
It implies that the difference between $\mu_v^{\eta_{i-1}}(1)$ and $\mu_v^{\eta_i}(1)$ is at most $a_u$, where $u$ is the vertex that $\eta_{i-1}$ and $\eta_{i}$ differ on and $a_u$ is defined in \eqref{eq:influence}. 
Otherwise $\tau^+$ or $\tau^-$ is not in $\Omega_{\partial S_v}$, then the difference between $\mu_v^{\eta_{i-1}}(1)$ and $\mu_v^{\eta_i}(1)$ is at most 1. Rearranging the terms, we have
\begin{align*}
  &\sum_{u\in \partial S_v}a_u\cdot \sum_{\tau^+ \neq \tau^-}\P[\mathcal{C}(\tau^+,\tau^-)]\cdot \mathbb{1}\{\tau^+(u)\neq \tau^-(u) \}\cdot\mathbb{1}\{\tau^+, \tau^-\in \Omega_{\partial S_v}\}\\
    &\quad +\sum_{u\in \partial S_v}\sum_{\tau^+ \neq \tau^-}\P[\mathcal{C}(\tau^+,\tau^-)]\cdot\mathbb{1}\{\tau^+(u)\neq \tau^-(u) \}\cdot \mathbb{1}\{\tau^+\text{ or }\tau^-\notin \Omega_{\partial S_v}\}\\
    \leq\,&\sum_{u\in \partial S_v}a_u \P\left[X_s^+(u)\neq X_s^-(u)\right] + \sum_{u\in \partial S_v} \Pr\left[X^-_s(\partial S_v) \notin \Omega_{\partial S_v}\text{ or }X^+_s(\partial S_v) \notin \Omega_{\partial S_v}\right]\\
    \leq\,& \max_{u \in V} \Pr[X_s^+(u)\neq X_s^-(u)] \sum_{u\in \partial S_v} a_u + \abs{\partial S_v} \cdot \frac{1}{n^3},
\end{align*}
where we used $\mathbb{1}\{\tau^+, \tau^-\in \Omega_{\partial S_v}\}\le 1$ in the first inequality, and the condition~in \eqref{eq:burn-in} in the second.
Finally, the sum of $a_u$ can be bounded by the ASSM property in \Cref{def:correlation-decay}.
As $\abs{\partial S_v}\le n$, the second term in~\eqref{eq:sum-of-three} can be bounded by
\begin{align}\label{eq:term-2}
    \sum_{(\sigma^+,\sigma^-):\sigma^+\neq \sigma^-} \Pr[\mathcal{C}(\sigma^+,\sigma^-)] \left | \mu_v^{\tau^+}(1)-\mu_v^{\tau^-}(1)\right | \leq \frac{\max_{u \in V} \Pr[X_s^+(u)\neq X_s^-(u)]}{20} + \frac{1}{n^2}.
\end{align}

Combining~\eqref{eq:upper-bound-on-X-t-plus-v-neq-X-t-minus-v}, ~\eqref{eq:sum-of-three},~\eqref{eq:term-1-and-3}, and~\eqref{eq:term-2}, we have
for all $v \in V$,
\begin{align*}
\P[X_t^+(v)\neq X_t^-(v)] \leq \frac{3\max_{u\in V} \P[X_s^+(u)\neq X_s^-(u)]}{20} + \frac{1}{n^2}.
\end{align*}
Taking the maximum over $v\in V$ proves~\eqref{eq:recursion}.
\end{proof}

\begin{remark}[Relaxing the local mixing condition]
  In \Cref{def:correlation-decay}, the local mixing condition is assumed for an arbitrary outside configuration $\sigma \in \{0,1\}^{V \setminus S_v}$. For applications considered in this paper, we can verify this strong assumption of local mixing. However, the proof technique above works fine with a relaxed condition of local mixing, where we consider only $\sigma \in \{0,1\}^{V \setminus S_v}$ such that $\sigma(\partial S_v) \in \Omega_{\partial S_v}$ instead of an arbitrary outside configuration. The mixing result in \Cref{thm:mixing} still holds under this relaxed local mixing condition.
\end{remark}

\section{Construct the good neighbourhood}\label{sec:construct-region}
In this section, we show how to construct the good neighbourhood.
Let $G=(V,E)$ be a graph.
For any $v\in V$ we construct the good neighbourhood $S_v\subseteq V$ such that $v\in S_v$.
We first need some definitions.
%
Recall \Cref{def:saw-tree}, the SAW tree.
Let $\text{cld}_T(u)$ be the set of children of $u$ in a tree $T$. For a SAW tree $T = T_{\textnormal{SAW}}(G,v,\partial S_v)$ rooted at $v$, and for any vertex $u\in T$ that is a copy of some vertex in $S_v$, define 
\begin{align}\label{eq:F_T}
F_T(u):=|\{w\in \text{cld}_T(u): w \text{ \small is a copy of some vertex in } S_v \text{ and } w \text{ \small is not a cycle-closing vertex in } T\}|.
\end{align}

\begin{lemma}\label{lemma:saw-tree-construction-nonsimple}
Let $G=(V,E)$ be a graph.  
Let $1 \leq D_1 \leq D_2$ be two integer parameters.
For any vertex $v\in V$, there exists $S_v\subseteq V$ with $v\in S_v$ such that $|S_v| \leq \exp(D_1)\cdot D_2$ and the following property holds for the SAW tree $T = T_{\textnormal{SAW}}(G,v,\partial S_v)$.  
For any leaf vertex $w$ in $T$ such that $w$ is a copy of some vertex in $\partial S_v$, at least one of the following two conditions holds:
\begin{itemize}
\item Let $v = u_1,u_2,\cdots,u_k,w$ be the path from the root $v$ to $w$ in $T$, where $k \geq 1$ is the distance between $v$ and $w$ in $T$. It holds that $\sum_{i=1}^{k-1} F_T(u_i) \geq D_1$;
\item there exists an ancestor $u$ of $w$ such that the number of non-cycle-closing children of $u$ is at least $D_2$. 
\end{itemize}
\end{lemma}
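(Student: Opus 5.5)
We construct $S_v$ by the DFS exploration on the unbounded SAW tree $T_0 = T_{\textnormal{SAW}}(G,v)$ sketched in \Cref{sec:T-ASSM-overview}, and then read off both properties from the construction. Concretely, we build a set $A$ of explored tree vertices together with a set $A'$ of terminal tree vertices, and finally let $S_v$ be the set of images in $G$ of $A\cup A'$. A vertex $x$ of $T_0$ on root path $v=x_1,\dots,x_j=x$ is declared \emph{expanded}, meaning put in $A$ and recursed into, when $\sum_{i<j} d(x_i) < D_1$. Here $d(y)$ denotes the number of non-cycle-closing children of $y$ in $T_0$. When we reach a non-cycle-closing child $x$ of an expanded vertex but the path sum (now including $x$'s parent) is $\ge D_1$, we place $x$ into $A'$. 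If additionally $d(x) < D_2$, we also add all non-cycle-closing children of $x$ to $A'$. In either case we do not explore further. Cycle-closing vertices are never added, because they are copies of vertices already on the path.

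The size bound comes from counting root-to-node paths. The expanded vertices form a subtree in which every path has $\sum d(x_i) < D_1$ before its last vertex. A standard count shows that the number of such vertex sequences is at most $2^{D_1}\le e^{D_1}$: encode each path by a composition of the partial sums, noting that a path with degree-sum budget $D_1$ is determined by a sequence of choices $c_i\le d(x_i)$, and $\prod d(x_i) \le$ (number of compositions) $\le 2^{D_1}$. Each frontier vertex then contributes at most $D_2$ extra children. This gives $|S_v| \le |A\cup A'| \le e^{D_1}\cdot D_2$ up to adjusting constants. If needed, I would run the exploration with budget $D_1-1$ to absorb the $+1$ losses, since both $D_1$ and $D_2$ are free parameters.

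For the structural property, take $w$ a leaf of $T=T_{\textnormal{SAW}}(G,v,\partial S_v)$ that is a copy of some vertex of $\partial S_v$, with path $v=u_1,\dots,u_k,w$. All of $u_1,\dots,u_k$ are copies of $S_v$-vertices. The key step, which I expect to be the main obstacle, is comparing $T$ with $T_0$. The subtree of $T$ along this path coincides with the corresponding path in $T_0$, but $T$ is truncated at $\partial S_v$, and different copies of the same $G$-vertex may have been classified differently in the exploration, since $S_v$ is a projection. I would argue as follows.

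Suppose first that every $u_i$ with $i<k$ is an expanded tree vertex of $T_0$. Children of $u_i$ that are non-cycle-closing copies of $S_v$-vertices in $T$ include every non-cycle-closing child in $T_0$, because those children were all added to $A\cup A'$. Hence $F_T(u_i)=d(u_i)$. Now $w$ is a child of $u_k$ whose image lies outside $S_v$, so $u_k$ is not expanded with all children kept. Thus either $\sum_{i<k} d(u_i) \ge D_1$, which gives the first condition, or $u_k$ is a frontier vertex whose children were dropped because $d(u_k)\ge D_2$, which gives the second condition with $u=u_k$. The same dichotomy applies at the parent when $u_k$ is in $A'$ via its parent's children.

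If instead the path leaves the expanded region earlier, at the first $u_j\notin A$, then $u_j$ is in $A'$. Hence either the prefix sum already reaches $D_1$, and it remains $\ge D_1$ because $F_T\ge 0$ is added for later terms, or some ancestor has at least $D_2$ non-cycle-closing children. The care needed is to show $F_T(u_i)\ge d(u_i)$ along expanded vertices even when images coincide, which holds because the counted children are exactly the images placed in $S_v$.
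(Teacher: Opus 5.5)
There is a genuine gap in the size bound. It comes from applying the $D_2$ test one level too low.

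\textbf{Why the size bound fails.} In your construction a vertex $x_j$ is expanded as soon as its \emph{strict}-ancestor sum $\sum_{i<j}d(x_i)$ is below $D_1$. Nothing constrains $d(x_j)$ itself, yet every non-cycle-closing child of $x_j$ is then reached and placed in $A\cup A'$. So $|A'|$ is at least the number of non-expanded children of expanded vertices, and this is not bounded in terms of $D_1,D_2$. Your count only controls two things: $|A|$, and the at most $D_2$ children added per frontier vertex. For $|A|$ your child-index encoding works: the map $x_j\mapsto(c_1,\dots,c_{j-1})$ injects into compositions of integers below $D_1$, giving $|A|\le 2^{D_1-1}$, a valid alternative to the paper's recursion $g(m)\le e^m$.

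A concrete counterexample is the star centred at $v$ with $n-1$ leaves. Then $v$ is expanded (empty sum), every leaf has path sum $d(v)=n-1\ge D_1$ and goes into $A'$, and so $S_v=V$. This is far above $e^{D_1}D_2=(\log n)^{C_D+3}$ for $D_1=C_D\log\log n$ and $D_2=(\log n)^3$, and shrinking the budget to $D_1-1$ changes nothing. Tellingly, in your structural argument the first condition always holds and the second alternative is never really used. Your construction buys condition 1 at the price of unbounded size, whereas the purpose of the $D_2$ alternative is to allow stopping at a high-degree vertex \emph{without} including its children.

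\textbf{The repair (the paper's construction).} Test $D_2$ at the vertex whose own degree makes the cumulative sum cross $D_1$. Let $T'$ be the SAW tree with cycle-closing vertices removed, and let $\mathrm{degsum}(u)=\sum_{x\in\mathrm{path}(v,u)}|\mathrm{cld}_{T'}(x)|$, which includes $u$'s own children. The DFS at $u$ recurses into all children if $\mathrm{degsum}(u)<D_1$. Otherwise it adds the children of $u$ only if $|\mathrm{cld}_{T'}(u)|<D_2$, and then stops. Every vertex of $Q_v$ outside the DFS tree is then a child of a DFS-tree vertex with fewer than $D_2$ children, which gives $|Q_v|\le g(D_1)\,D_2$. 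In the star example this stops at $v$ immediately, and condition 2 covers all the leaves.

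The structural argument must now genuinely use the dichotomy. If some ancestor $u_i$ of $w$ has at least $D_2$ non-cycle-closing children in $T$, condition 2 holds; these are the same as its children in $T'$, since $u_i$ is internal in $T$. Otherwise the DFS can neither recurse through $u_k$ nor stop at $u_k$, since either would put $w$ into $Q_v$. So it must stop by adding all children at some $u_j$ with $j\le k-1$, and then $\sum_{i\le j}F_T(u_i)=\mathrm{degsum}(u_j)\ge D_1$.
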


\begin{proof}[Proof of \Cref{lemma:saw-tree-construction-nonsimple}]
Fix $v \in V$ and we construct the region $S_v$ as follows. Consider the SAW tree $T_\emptyset=T_{\textnormal{SAW}}(G,v,\emptyset)$. By removing all cycle-closing vertices in $T_\emptyset$, we obtain a tree $T'$.
We use a DFS starting from the root $v$ to first construct a region $Q_v$ as in \Cref{alg:Q_v-construction}. (\Cref{alg:Q_v-construction} is the same as the procedure for trees described in \Cref{sec:T-ASSM-overview}.) 
In the algorithm, for each vertex $u\in T'$, 
\begin{align*}
  \text{degsum}(u)\defeq\sum_{w\in \text{path}(v,u)}|\text{cld}_{T'}(w)|,
\end{align*}
where $\text{path}(v,u)$ is the set of vertices on the path from $v$ to $u$ in $T'$, including $v$ and $u$.
\begin{algorithm}
    \caption{Construction of the region $Q_v$}\label{alg:Q_v-construction}
    \SetKwProg{Proc}{Procedure}{}{}
    Initialize $Q_v =\emptyset$\;
    $\textnormal{DFS}(v)$\;
    \Return $Q_v$\;
    \Proc{$\textnormal{DFS}(u)$}{
        $Q_v \gets Q_v\cup \{u\}$\;
        \If{$u$ is a leaf in $T'$}{
            \Return\;
        }
        \ElseIf{$\text{degsum}(u) \geq D_1$\label{step-last}}{
            \If{$ |\text{cld}_{T'}(u)| < D_2$}{
                $Q_v \gets Q_v\cup \text{cld}_{T'}(u)$\label{step-add}\;
            }
            \Return\;
        }
        \Else{
            \For{each child $w$ of $u$\label{step-DFS}}{
                $\textnormal{DFS}(w)$\;
            }
        }
    }
\end{algorithm}  

After constructing $Q_v$ by \Cref{alg:Q_v-construction}, define
\begin{align*}
  S_v\defeq\{u\in G:\exists u'\in Q_v \text{ such that } u' \text{ is a copy of } u\}.
\end{align*}

Let $T = T_{\textnormal{SAW}}(G,v,\partial S_v)$ be the SAW tree rooted at $v$ with boundary $\partial S_v$.
We first show that for each $w\in T$ that is a copy of some vertex in $\partial S_v$, at least one of the two conditions in the lemma holds.

Let $v = u_1,u_2,\cdots,u_k, w$ be the path from the root $v$ to $w$ in $T$, where $k$ is the distance between $v$ and $w$. If there exists an ancestor $u$ of $w$ such that the number of non-cycle-closing children of $u$ in $T$ is at least $D_2$, then the second condition holds. Otherwise, for any ancestor $u_i$ of $w$, it must hold that the number of non-cycle-closing children of any $u_i$ is less than $D_2$ in $T$. Recall the tree $T'$ obtained from $T_\emptyset = T_{\textnormal{SAW}}(G,v,\emptyset)$ by removing all cycle-closing vertices. 
Since none of the $\{u_i\}_{i\in[k]}$ is cycle-closing, the path $v= u_1,u_2,\cdots,u_k$ must be present in $T'$ as well.
Since $u_i$ is not a leaf vertex in $T$, it has the same set of children in $T$ as in $T_\emptyset$. Hence, the non-cycle-closing children of $u_i$ in $T$ are exactly the children of $u_i$ in $T'$. Therefore, $|\text{cld}_{T'}(u_i)| < D_2$ for all $1 \leq i \leq k$.
Consider the DFS procedure in $T'$. When we do the DFS along the path $u_1,u_2,\cdots,u_k$ in $T'$, the $\text{DFS}$ procedure must stop at some $u_j$ for $1 \leq j \leq k-1$ because:
\begin{itemize}
    \item the DFS procedure must have stopped at some $u_j$ for $1 \leq j \leq k$. Otherwise, $w$ is added to $Q_v$ and then $w$ cannot be a copy of some vertex in $\partial S_v$;
    \item furthermore, the DFS procedure cannot stop at $u_k$. Otherwise, since $u_k$ is not a leaf vertex in $T'$, $u_k$ must satisfy the condition in Line \ref{step-last}. Note that $\text{cld}_{T'}(u_k) < D_2$. Then, all children of $u_k$, including $w$, are added to the set $Q_v$. This contradicts the assumption that $w$ is a copy of some vertex in $\partial S_v$.
\end{itemize} 
Note that the $\text{DFS}$ procedure can stop only when it reaches the condition in Line~\ref{step-last}, because $u_1,u_2,\cdots,u_{k-1}$ are not leaves in $T'$. 
Furthermore, since $|\text{cld}_{T'}(u_i)|< D_2$ for all $1 \leq i \leq k-1$, the $\text{DFS}$ procedure can stop only after executing Line~\ref{step-add} at some $u_j$ for $1\leq j \leq k-1$. Therefore, 
$$\sum_{i=1}^{k-1} F_T(u_i) \geq \sum_{i=1}^{j} F_T(u_i) = \text{degsum}(u_j) \geq D_1,$$ 
where the equality follows from the fact that all children of $u_i$ in $T'$ are added to $Q_v$ for $1\leq i\leq j$ (for $i < j$, we run DFS on all children of $u_i$, and for $i = j$, since $\abs{\text{cld}_{T'}(u_j)} < D_2$, we add all children of $u_j$ to $Q_v$ directly). Hence, all children of $u_i$ in $T'$ are copies of some vertices in $S_v$, and none of them is cycle-closing by the definition of $T'$. Therefore, for each $1 \le i \le j$, we have $F_T(u_i)=|\text{cld}_{T'}(u_i)|$, which gives the equality above. This implies that the first condition holds. 

Finally, we bound the size of $S_v$. Since there is a surjection from $Q_v$ to $S_v$, we have $|S_v| \leq |Q_v|$. 
Consider the following optimisation problem.
Let $g(m)$ be the maximum number of vertices in a tree $T_0$ such that: 
for any leaf $u$ in $T_0$, 
\begin{align*}
  \sum_{w\in \text{path}(v,u),w\neq u}|\text{cld}_{T_0}(w)| < m.
\end{align*} 
In other words, $g(m)$ denotes the size of the largest tree $T_0$ satisfying the condition above with parameter $m$. By definition, $g(1) = 1$.
We claim that the following recursive relation holds:
\begin{align*}
    g(m) =\max_{d\in [1,m-1]} \{1 + d\cdot g(m-d)\}.
\end{align*}
Indeed, for any tree $T_0$ satisfying the requirement with parameter $m$, let $d$ be the number of children of the root $v$ in $T_0$. Then each subtree rooted at a child of $v$ satisfies the same condition with parameter $m-d$, and hence each such subtree contains at most $g(m-d)$ vertices.

We prove $g(m)\leq \exp(m)$ by induction. The base case $g(1) = 1 \leq \exp(1)$ holds. Assume $g(m') \leq \exp(m')$ for all $m' < m$. Then
\begin{align*}
 \forall d \in [1,m-1], \quad  1+d \cdot g(m-d) &\leq 1 + d\cdot \exp(m-d)\\
&\leq 1+(\exp(d)-1)\exp(m-d)\\
&= \exp(m) + 1 - \exp(m-d) \leq \exp(m),
\end{align*}
where we use $\exp(d) \geq 1 + d$ for all $d\in \mathbb{R}$ in the second inequality.

Back to the size of $|Q_v|$.
If we omit the children added to $Q_v$ in Line~\ref{step-add}, then the remaining DFS tree has at most $g(D_1) \leq \exp(D_1)$ vertices by the optimisation problem analysed above. 
Each time Line~\ref{step-add} is executed, at most $D_2$ children are added to $Q_v$, and these added vertices do not trigger further DFS calls. Since Line~\ref{step-add} can be executed at most once for each vertex in the remaining DFS tree, we obtain
\[
|Q_v| \leq g(D_1)\cdot D_2 \leq \exp(D_1)\cdot D_2.
\]
Therefore, $|S_v| \leq |Q_v| \leq \exp(D_1)\cdot D_2$.
\end{proof}



\section{Reducing the ASSM property from graphs to SAW trees}
In this section, we verify the conditions in \Cref{def:correlation-decay} for the neighbourhood constructed in \Cref{lemma:saw-tree-construction-nonsimple}. 
Fix a vertex $v \in V$ in the graph $G=(V,E)$.
Let $S_v \subseteq V$ be the region constructed by \Cref{lemma:saw-tree-construction-nonsimple} with the following parameters
\begin{align}\label{eq:parameters-for-S_v-construction}
    D_1 \defeq C_D \cdot \log \log n, \quad D_2 \defeq (\log n)^3,
\end{align}
where $C_D$ is a constant. The value of $C_D$ will be determined in~\eqref{eq:C_D-definition}.
Recall $\partial S_v$, the outer boundary of $S_v$. For any $u \in S_v$, define the boundary-neighbors of $u$ as 
\begin{align}\label{eq:boundary-neighbors}
    \nbd{\partial S_v}{G}(u) \defeq \{w \in \partial S_v: (u,w) \in E\}.
\end{align}

\begin{definition}[Good boundary condition]\label{def:good-boundary-configuration}
  We say a configuration $\sigma \in \{0,1\}^{\partial S_v}$ is \emph{good} if for any $u \in S_v$ with $|\nbd{\partial S_v}{G}(u)|>D_2/3$, it satisfies
\begin{align*}
    |\{w\in \nbd{\partial S_v}{G}(u): \sigma(w) = 1\}|\geq |\nbd{\partial S_v}{G}(u)|/(\log n) + 2.
\end{align*}  
Let $\Omega_{\partial S_v}$ denote the set of all good boundary conditions.
\end{definition}

Good boundary conditions admit typical-case ASSM.



\begin{lemma}\label{lemma:assm-with-good-boundary}
    Let $\beta \leq 1 < \gamma$, $\beta\gamma > 1$ and $\lambda < \lambda_0(\beta,\gamma) \defeq \sqrt{\gamma/\beta}$ be three constants. Let $\mu$ be the Gibbs distribution of a $(\beta,\gamma,\lambda)$-ferromagnetic two-spin system with parameters $(\beta_e,\gamma_e)_{e \in E},(\lambda_v)_{v \in V}$ on $G = (V,E)$ as in \Cref{def:ferromagnetic-two-spin-system}.
For any $u \in \partial S_v$, let $a_u$ be the influence of $u$ on $v$ in the distribution $\mu$, defined as in \eqref{eq:influence}, where the boundary condition set $\Omega_{\partial S_v}$ is given by \Cref{def:good-boundary-configuration}. 
Then, 
$\sum_{u\in \partial S_v}a_u \leq \frac{1}{20}$.

Let $\gamma > 1$ and $\lambda < \lambda_0(1,\gamma) \defeq \sqrt{\gamma}$ be two constants. Let $\mu$ be the Gibbs distribution of a $(\gamma,\lambda)$-RBM with parameters $(\gamma_e)_{e \in E},(\lambda_v)_{v \in V}$ on $G = (V,E)$ as in \Cref{def:gamma-lambda-rbm}. The same result $\sum_{u\in \partial S_v}a_u \leq \frac{1}{20}$ holds.
\end{lemma}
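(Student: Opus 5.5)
We would transfer the problem to the SAW tree $T = T_{\textnormal{SAW}}(G,v,\sigma)$ with boundary $\partial S_v$ and redo the level-by-level potential analysis of the all-to-one influence proof. Only now the pinned vertices are the copies of boundary vertices, and the siblings of a copy may themselves be pinned boundary copies.

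\textbf{Step 1: reduction to the tree.} Fix $u\in\partial S_v$ and a maximiser $\sigma\in\Omega_{\partial S_v}$ for $a_u$. By \Cref{prop:marginal-distributions-are-identical}, $\mu^{\sigma^{u\gets c}}_v$ equals the root marginal on $T$ with all copies of $u$ pinned to $c$, every other boundary copy pinned by $\sigma$, and cycle-closing leaves pinned as in \Cref{def:saw-tree-with-pinning}. As in \Cref{lemma:dtv<R} and \Cref{lemma:influence-bound-one-vertex}, we telescope over the copies of $u$ in order of depth. This gives $a_u\le 2\sum_{x\in\text{copy}(u)} I_v^x$, where each $I_v^x$ is a supremum over pinnings of the other vertices at the level of $x$, constrained to be consistent with some good $\sigma$. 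Summing over $u$ and regrouping by level gives $\sum_u a_u\le 2\sum_k \text{Inf}(k)$, where $\text{Inf}(k)$ is the sum of $I_v^x$ over boundary copies $x$ at level $k$.

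\textbf{Step 2: consistency of pinnings above level $k$.} Along a branch, the maximising pinnings $\sigma^x$ for different $x$ at level $k$ may disagree at boundary copies strictly above level $k$. This breaks the SSM-based comparison used in \Cref{lemma:general-influence-decay-2}, because the vector $\boldsymbol z^x$ then depends on $x$ through disagreements close to the root. We would show that, when $\lambda<\lambda_0=\sqrt{\gamma/\beta}$, the per-vertex influence $C(\boldsymbol z)$ is monotone in each pinned ratio in a fixed direction. The calculation is $\partial_{z_j}\log|\partial_{z_i}F_u|$, whose sign is controlled by $\frac{\beta_j z_j+1}{z_j+\gamma_j}$ together with the $\phi$ factors; the threshold $\sqrt{\gamma/\beta}$ is where this sign is uniform on $(0,\lambda)$. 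Given this monotonicity, we may replace every shallower boundary pinning by one common extremal pinning that only increases all the $K$'s simultaneously. After that, \Cref{lemma:general-influence-decay-1} and \Cref{lemma:general-influence-decay-2} apply verbatim to the truncated tree $T_k$. This is the step I expect to be the main obstacle. I would first try to prove that $C(\boldsymbol z)$ is maximised when all boundary ratios sit at one endpoint, using the log-derivative computation above.

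\textbf{Step 3: gaining $D_1$ decay or blocking.} By \Cref{lemma:saw-tree-construction-nonsimple}, every boundary copy $x$ satisfies one of two alternatives. In the first, its path has $\sum F_T(u_i)\ge D_1$. Then the factors $d\exp(-C_{\text{decay}} d)$ from \Cref{lemma:general-influence-decay-1}, combined with $(1-\delta)$ per level from \Cref{lemma:general-influence-decay-2}, give a total bound of $\exp(-\Omega(D_1))\cdot(1-\delta)^{\Omega(k)}$ for that level's contribution. In the second alternative, some ancestor $y$ has at least $D_2$ non-cycle-closing children. Then $y$ has either at least $D_2/3$ boundary neighbours or at least $D_2/3$ children in $S_v$; the latter case again yields degree decay $\exp(-\Omega(D_2))$. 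In the boundary case, the goodness of $\sigma$ guarantees at least $(\log n)^2$ neighbours of $y$ pinned to $1$. Each of these contributes a factor $\le 1/\gamma$ to $R_y$, so $R_y\le\lambda\gamma^{-(\log n)^2}$, and the derivative at $y$ is $n^{-\omega(1)}$. The ``$+2$'' slack in \Cref{def:good-boundary-configuration} ensures this still holds after flipping $u$ and other copies.

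\textbf{Step 4: summing up.} Summing over levels, the total is at most $O(1)\cdot\bigl(\exp(-\Omega(D_1))+n^{-\omega(1)}\cdot|T\text{-boundary}|\bigr)$. The size of the SAW boundary is controlled by $|S_v|\le \exp(D_1)D_2$ and degree sums, so the blocked branches remain negligible. Choosing $C_D$ large makes $\exp(-\Omega(C_D\log\log n))\le 1/40$, which gives $\sum_u a_u\le 1/20$. The RBM case is identical, using the RBM half of \Cref{condition}.
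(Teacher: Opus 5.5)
Your architecture matches the paper's: pass to the SAW tree, make the pinning strictly above level $k$ common and absorb it into the fields, then run the level-by-level recursion, with decay coming either from $\sum F_T\ge D_1$ or from a blocked high-degree vertex. The gap is in Step~2, on which everything hinges, because the computation you propose does not give the monotonicity you need.

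The threshold $\sqrt{\gamma/\beta}$ does not appear in $\partial_{z_j}\log|\partial_{z_i}F_u|$ at all. For every $\lambda$:
\begin{itemize}
\item for $j\neq i$ this derivative equals $\frac{\beta_j\gamma_j-1}{(\beta_jz_j+1)(z_j+\gamma_j)}>0$;
\item for the on-path coordinate it equals $-\frac{2}{z_i+\gamma_i}<0$.
\end{itemize}
Changing a shallow boundary pinning moves both the off-path ratio below some path vertex and the on-path ratios above it, so these opposite signs compete. The $\phi$-factors do not repair this: wherever $\phi\equiv 1/t$ (everywhere if $t\ge\lambda/e$), $C(\boldsymbol z)=|\partial_{z_i}F_u(\boldsymbol z)|$ exactly. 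Hence $C$ is not monotone in a fixed direction. The claim ``replace the shallow pinnings so that all $K$'s increase simultaneously'' is therefore not established, and it is unclear it even holds for the potential-based $K$'s.

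The extremal pinning also cannot put all boundary ratios at one endpoint. Step~3 needs blocking at shallow vertices with many boundary children. When $\beta_e=1$ (e.g.\ RBMs), boundary children pinned to value $0$ give no decay at all. They are also not counted in $F_T$, so the $D_1$ alternative need not help.

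The missing idea is to do the replacement on the raw difference $\bigl|R_v^{\cdot\land w\gets\infty}-R_v^{\cdot\land w\gets 0}\bigr|$, before any potential is introduced, using the multiplicative structure along the path. Let $x_j,y_j$ be the two ratios at the $j$-th path vertex. The off-path factors cancel in $x_j/y_j=\frac{\beta_jx_{j-1}+1}{x_{j-1}+\gamma_j}\cdot\frac{y_{j-1}+\gamma_j}{\beta_jy_{j-1}+1}$. This is exactly where $\lambda_0$ enters: $s\mapsto\log\frac{\beta e^s+1}{e^s+\gamma}$ is convex precisely for $e^s<\sqrt{\gamma/\beta}$. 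That is what \Cref{lemma:one-step-relation-preserve} encodes: one step preserves ``$x\ge x'$, $y\ge y'$, $x/y\ge x'/y'$'' when $\beta xy<\gamma$.

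The comparison pinning is $\sigma^*$ of \Cref{def:universal-pinning}. At each vertex with $d>D_2/3$ boundary children it sets exactly $\lfloor d/\log n\rfloor$ of them (those with smallest $\beta_e\gamma_e$) to value $1$ and the rest to value $0$, so it retains the blocking. By the product form of $F_u$, it also yields ratios at every internal vertex at least those of any pinning obtained from a good one (\Cref{lem:sigma-star-property}). Induction along the path then gives $|x_k-y_k|=y_k(x_k/y_k-1)\ge|x'_k-y'_k|$ (\Cref{lemma:monotone-potential}). Only afterwards are the shallow pinnings absorbed and the potential lemmas run on $\mathcal{S}_k$.

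Two smaller points in Step~4:
\begin{itemize}
\item Do not union-bound over blocked branches. $|S_v|$ bounds distinct vertices, not self-avoiding walks, and the SAW tree over $S_v$ can have super-polynomially many boundary copies. The path-max form of \Cref{lemma:general-influence-decay-1,lemma:general-influence-decay-2} needs no counting.
\item The last-level factor $\sum_i K_u^{u_i}$ can be $\Theta((\log n)^3)$, e.g.\ for a parent with at most $D_2/3$ boundary children, all pinned to $0$. So you need $\exp(-\Omega(D_1))\ll(\log n)^{-3}$ over the levels $k\le(\log\log n)^3$, plus a $(1-\delta)^k$ tail, not merely $\exp(-\Omega(D_1))\le 1/40$.
\end{itemize}
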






In this section, we carry out the first step in the proof of \Cref{lemma:assm-with-good-boundary}. We reduce the problem to verifying a similar ASSM statement on the SAW tree $T$ instead of on the original graph $G$.
Next, in \Cref{sec:assm-in-saw-tree}, we prove the ASSM property on $T$. 
\Cref{lemma:assm-with-good-boundary} follows from combining the two steps. 

Let $\sigma \in \{0,1\}^{\partial S_v}$ be a good boundary condition.
Let $T = T_{\textnormal{SAW}}(G,v,\sigma) = (V_T,E_T)$ be the SAW tree with boundary $\partial S_v$ defined in \Cref{def:saw-tree-with-pinning}.
We first recall some notation and background on the SAW tree $T$.
Let $\Gamma \subseteq V_T$ be the set of cycle-closing leaf vertices of $T$, and let $\rho_\Gamma$ be the pinning on $\Gamma$. Let $\Lambda$ be the set of all leaf vertices in $T$ that are copies of vertices in $\partial S_v$. Let $\sigma_{\Lambda}$ be the pinning on $\Lambda$ inherited from $\sigma$. We use $\bar{\sigma} \defeq \rho_\Gamma \cup \sigma_{\Lambda}$ to denote the total pinning on $\Gamma \cup \Lambda$. Note that all vertices in $\Gamma \cup \Lambda$ are leaves of $T$. Let $\pi$ be the Gibbs distribution on $T$ obtained by inheriting the parameters of $\mu$ on $G$. By \Cref{prop:marginal-distributions-are-identical}, the marginals $\mu_v^\sigma$ and $\pi_v^{\bar{\sigma}}$ are identical.

We next prune the SAW tree $T$ by removing all cycle-closing leaf vertices. Using the self-reducibility property in \Cref{obs:self-reducibility}, we can remove all cycle-closing leaf vertices from $T$ and modify the external fields at their neighbors accordingly. 
From now on, we use $T = (V_T,E_T)$ to denote the pruned SAW tree and $\pi$ to denote the Gibbs distribution on this pruned tree.
Note that for the two-spin system case, $\pi$ is still a Gibbs distribution of a $(\beta,\gamma,\lambda)$-ferromagnetic two-spin system on $T = (V_T,E_T)$. For the RBM case, $\pi$ is still a Gibbs distribution of a $(\gamma,\lambda)$-RBM on $T = (V_T,E_T)$.

As in~\eqref{eq:assm-with-good-boundary}, we want to prove $\sum_{u\in \partial S_v}a_u\leq \frac{1}{20}$, where $a_u = \max_{\sigma \in \Omega_{\partial S_v}} \DTV{\mu^{\sigma^{u \gets 0}}_v}{\mu^{\sigma^{u \gets 1}}_v}$.
Our goal is to reduce this to verifying a similar ASSM statement on $T$.
For this purpose, we extend the definitions of boundary-neighbors and good boundary conditions from the graph $G$ to the SAW tree $T$.
For every vertex $w \in V_T \setminus \Lambda$, similar to \eqref{eq:boundary-neighbors}, define 
\begin{align*}
    \nbd{\Lambda}{T}(w) \defeq \{u \in \Lambda: \{w,u\} \in E_T\}.
\end{align*}
Intuitively, one can view $\Lambda$ as the boundary of $T$. Then $\nbd{\Lambda}{T}(w)$ is the set of boundary-neighbors of $w$ in $T$.
We next define a good boundary condition on $T$.
Note that in the pruned SAW tree $T$, the pinning is defined only on  $\Lambda$, because $\Gamma$ has been removed.
We introduce the following notion of a good boundary condition for the SAW tree $T$, analogous to \Cref{def:good-boundary-configuration}.  

\begin{definition}[Good boundary for the SAW tree]\label{def:good-boundary-configuration-for-saw-tree}
We say a configuration $\tau \in \{0,1\}^{\Lambda}$ is a good boundary condition if for any $w \notin \Lambda$ with $|\nbd{\Lambda}{T}(w)|>D_2/3$, it satisfies
\begin{align}\label{eq:good-boundary-configuration-for-saw-tree}
    |\{u\in \nbd{\Lambda}{T}(w): \tau(u) = 1\}|\geq |\nbd{\Lambda}{T}(w)|/(\log n) + 1.
\end{align}
We use $\Omega_{\Lambda}$ to denote the set of all good boundary conditions on $T$.
\end{definition}



Finally, for any vertex $w \in \Lambda$, define the influence of $w$ on $v$ in the distribution $\pi$ by
\begin{align}\label{eq:b-w}
    b_w = \max_{\tau \in \Omega_{\Lambda}} \DTV{\pi^{\tau^{w \gets 0}}_v}{\pi^{\tau^{w \gets 1}}_v}.
\end{align}

We show the following relationship between the influence bounds in $G$ and $T$.

\begin{lemma}\label{lemma:mapping-influence-bounds}
The influence bounds in $G$ and $T$ satisfy
\begin{align*}
\sum_{u \in \partial S_v} a_u \leq \sum_{w \in \Lambda} b_w.
\end{align*}
\end{lemma}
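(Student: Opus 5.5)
Fix $u \in \partial S_v$ and let $\sigma \in \Omega_{\partial S_v}$ attain the maximum in the definition of $a_u$. The plan is to move the single-vertex flip at $u$ in $G$ to a sequence of single-vertex flips on $\mathrm{copy}(u)\cap\Lambda$ in $T$, each taken at a good boundary condition on $T$, and then sum over $u$.

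First, we transfer the two pinnings to the SAW tree. Write $\sigma^{u\gets c}$ for the two pinnings and $\bar\sigma^{c}$ for the induced pinnings on $\Lambda$ in the pruned tree $T$. By \Cref{prop:marginal-distributions-are-identical} together with self-reducibility (\Cref{obs:self-reducibility}), we have $\mu_v^{\sigma^{u\gets c}} = \pi_v^{\bar\sigma^{c}}$. Note that $\bar\sigma^0$ and $\bar\sigma^1$ differ exactly on the copies of $u$ lying in $\Lambda$; call these $w_1,\dots,w_m$. The boundary set $\partial S_v$ and the tree $T$ do not depend on $\sigma$, so these copies are the same for both pinnings.

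Next, we interpolate. Define the hybrid pinnings $\tau_j$ on $\Lambda$ by setting $w_1,\dots,w_j$ to $1$, setting $w_{j+1},\dots,w_m$ to $0$, and copying $\sigma$ everywhere else. Then $\tau_0=\bar\sigma^0$ and $\tau_m=\bar\sigma^1$. The triangle inequality for total variation gives
\[
\DTV{\mu_v^{\sigma^{u\gets 0}}}{\mu_v^{\sigma^{u\gets 1}}} \le \sum_{j=1}^m \DTV{\pi_v^{\tau_{j-1}}}{\pi_v^{\tau_j}} .
\]
The consecutive pinnings $\tau_{j-1}$ and $\tau_j$ are $\tau^{w_j\gets 0}$ and $\tau^{w_j\gets 1}$ for a common $\tau$. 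So once we show that every such $\tau$ (with either value at $w_j$) lies in $\Omega_\Lambda$, each term is at most $b_{w_j}$. Summing over $u$, and using that the sets $\mathrm{copy}(u)\cap\Lambda$ are disjoint for distinct $u$, we get $\sum_u a_u \le \sum_{w\in\Lambda} b_w$.

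The main step is checking goodness, and the slack between the ``$+2$'' in \Cref{def:good-boundary-configuration} and the ``$+1$'' in \Cref{def:good-boundary-configuration-for-saw-tree} is what pays for it. Take a non-leaf tree vertex $x\notin\Lambda$ with $|\nbd{\Lambda}{T}(x)|>D_2/3$. It is a copy of some $y\in S_v$. Its $T$-children in $\Lambda$ are copies of distinct $G$-neighbors of $y$ in $\partial S_v$; the parent edge cannot lead to $\Lambda$, since $\Lambda$ consists of leaves. Here \Cref{obs:non-leaf-vertices-in-saw-tree} and the self-avoidance of walks are used. Hence $\nbd{\Lambda}{T}(x)$ corresponds to $\nbd{\partial S_v}{G}(y)$ minus at most one vertex, namely the parent's preimage if it lies in $\partial S_v$; that cannot actually occur, because walks stop upon reaching $\partial S_v$. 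The cycle-closing leaves were pruned and do not belong to $\Lambda$, so they are not counted.

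The comparison then runs as follows. A $G$-neighbor of $y$ in $\partial S_v$ that has no copy among the children of $x$ could only be missing because of self-avoidance, but boundary vertices never appear internally on a walk, so every such neighbor does appear. Thus $|\nbd{\Lambda}{T}(x)| = |\nbd{\partial S_v}{G}(y)|$, which exceeds $D_2/3$, so goodness of $\sigma$ applies at $y$. It gives at least $|\nbd{\partial S_v}{G}(y)|/\log n + 2$ ones among these neighbors. The neighbor $u$ may become $0$, but $u$ has at most one copy among the children of $x$. Therefore the hybrid $\tau$ has at least $|\nbd{\Lambda}{T}(x)|/\log n + 1$ ones in $\nbd{\Lambda}{T}(x)$. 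This shows $\tau\in\Omega_\Lambda$ for both values at $w_j$.

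I expect the careful bookkeeping of copies and parent edges to be the only delicate point; the rest is routine.
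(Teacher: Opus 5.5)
Your proposal is correct and follows essentially the same route as the paper. You transfer the two pinnings to the pruned SAW tree, interpolate over the copies of $u$ one at a time using the triangle inequality, and certify that each hybrid pinning lies in $\Omega_\Lambda$ via the slack between the $+2$ and $+1$ thresholds together with the fact that no vertex of $T$ has two children that are copies of $u$. Your justification of the bijection between $\nbd{\Lambda}{T}(x)$ and $\nbd{\partial S_v}{G}(y)$ (boundary vertices never occur internally on a walk) is more explicit than the paper's, which simply asserts it.
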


\begin{proof}
Since $\sum_{w \in \Lambda} b_w=\sum_{u \in \partial S_v} \sum_{w\in \text{copy}(u)}b_w$, it suffices to show that for any $u \in \partial S_v$,
\begin{align}\label{eq:need-to-show-mapping}
    a_u \leq \sum_{w\in \text{copy}(u)}b_w. 
\end{align}
For a pinning $\sigma \in \Omega_{\partial S_v}$, the corresponding pinning on $T$ is $\sigma_\Lambda$, and
\begin{align*}
   \DTV{\mu^{\sigma^{u \gets 0}}_v}{\mu^{\sigma^{u \gets 1}}_v} = \DTV{\pi^{\sigma_{\Lambda}^{\text{copy}(u) \gets 0}}_v}{\pi^{\sigma_{\Lambda}^{\text{copy}(u) \gets 1}}_v},
\end{align*}
where $\sigma_{\Lambda}^{\text{copy}(u) \gets c}$ is the pinning on $T$ obtained from $\sigma_\Lambda$ by changing the value of all copies of $u$ to $c$.

List all copies of $u$ in $T$ as $\text{copy}(u) = \{u_1,\cdots,u_k\}$. By the triangle inequality, we can write 
\begin{align}\label{eq:triangle-split}
  \DTV{\mu^{\sigma^{u \gets 0}}_v}{\mu^{\sigma^{u \gets 1}}_v} \leq \sum_{i=1}^k \DTV{\pi^{\sigma_{\Lambda,i-1}}_v}{\pi^{\sigma_{\Lambda,i}}_v},
\end{align}
where, for any $i\ge 1$, $\sigma_{\Lambda,i}$ is obtained from $\sigma_\Lambda$ by changing the values of $u_1,\cdots,u_{i}$ to $0$ and the values of $u_{i+1},\cdots,u_k$ to $1$.
Note that $\sigma_{\Lambda,i-1}$ and $\sigma_{\Lambda,i}$ differ only at the single vertex $u_i$.

Next, we show that $\sigma_{\Lambda,i} \in \Omega_{\Lambda}$ for all $i = 1,\cdots,k$. 
Consider any vertex $w\notin \Lambda$. The vertex $w$ is a copy of some vertex $w'\in S_v$. 
By the construction of $T$ in \Cref{def:saw-tree-with-pinning}, each vertex $x$ in $\nbd{\Lambda}{T}(w)$ corresponds bijectively to a vertex $y$ in $\nbd{\partial S_v}{G}(w')$, and $x$ is a copy of $y$.
Thus, for any $w\notin \Lambda$ with $|\nbd{\Lambda}{T}(w)|>D_2/3$, we can find $w'\in S_v$ such that $w$ is a copy of $w'$ and $|\nbd{\partial S_v}{G}(w')|=|\nbd{\Lambda}{T}(w)|>D_2/3$. Moreover,
\begin{align}\label{eq:bar-sigma}
    |\{x\in \nbd{\Lambda}{T}(w): \sigma_{\Lambda}(x) = 1\}| &= |\{y\in \nbd{\partial S_v}{G}(w'): \sigma(y) = 1\}| \notag\\
    &\geq |\nbd{\partial S_v}{G}(w')|/(\log n) + 2 \notag\\
    &= |\nbd{\Lambda}{T}(w)|/(\log n) + 2,
\end{align}
where the inequality holds because $\sigma \in \Omega_{\partial S_v}$ in \Cref{def:good-boundary-configuration}.
For $\sigma_{\Lambda,i}$, the only difference from $\sigma_\Lambda$ is that the values of some copies of $u$ are changed. 
In the SAW tree, no two copies of $u$ can be children of the same vertex, so
$|\{x\in \nbd{\Lambda}{T}(w): \sigma_{\Lambda,i}(x) = 1\}|\geq |\{x\in \nbd{\Lambda}{T}(w): \sigma_{\Lambda}(x) = 1\}|-1$.
Hence, for any $\sigma \in \Omega_{\partial S_v}$, combining~\eqref{eq:good-boundary-configuration-for-saw-tree} and~\eqref{eq:bar-sigma}, we obtain $\sigma_{\Lambda,i} \in \Omega_{\Lambda}$. Since the definition of $b_w$ in~\eqref{eq:b-w} ranges over all pinnings in $\Omega_{\Lambda}$, we have
\begin{align*}
    a_u=\max_{\sigma \in \Omega_{\partial S_v}} \DTV{\mu^{\sigma^{u \gets 0}}_v}{\mu^{\sigma^{u \gets 1}}_v} &\leq \sum_{i=1}^k \DTV{\pi^{\sigma_{\Lambda,i}}_v}{\pi^{\sigma_{\Lambda,i-1}}_v} \leq \sum_{w\in \text{copy}(u)}b_w.
\end{align*}
Summing over all $u \in \partial S_v$ proves the lemma.
\end{proof}

\section{ASSM on the SAW tree}\label{sec:assm-in-saw-tree}

We now prove the ASSM property on the SAW tree. Fix a vertex $v \in V$ and a region $S_v$.
Throughout this section, we treat the $(\beta,\gamma,\lambda)$-ferromagnetic two-spin system case and the $(\gamma,\lambda)$-RBM case in parallel. Constants hidden in $O(\cdot)$ notation depend only on $\beta,\gamma,\lambda$ in the two-spin system case and only on $\gamma,\lambda$ in the RBM case. In estimates involving the parameter $\beta$, we take $\beta=1$ for RBMs.
Given a good boundary condition $\sigma \in \Omega_{\partial S_v}$, we construct the SAW tree $T = T_{\textnormal{SAW}}(G,v,\sigma)$ and prune all cycle-closing vertices in $T$. Recall that $\Lambda$ consists of all copies of vertices in $\partial S_v$.
To prove \Cref{lemma:assm-with-good-boundary}, by~\Cref{lemma:mapping-influence-bounds}, we need to show that
\begin{align*}
    \sum_{w\in \Lambda} b_w = \sum_{w \in \Lambda} \max_{\tau \in \Omega_{\Lambda}} \DTV{\pi^{\tau^{w \gets 0}}_v}{\pi^{\tau^{w \gets 1}}_v}  \leq \frac{1}{20},
\end{align*}
where $\pi$ is the Gibbs distribution on the SAW tree.

For each vertex $w \in \Lambda$, let $\tau^w$ be the pinning of $\Lambda$ in $\Omega_{\Lambda}$ that maximizes the total variation distance $\DTV{\pi^{\tau^{w \gets 0}}_v}{\pi^{\tau^{w \gets 1}}_v}$. We write a superscript $w$ to emphasize that the pinning $\tau^w$ depends on $w$. 
In the analysis, we view the SAW tree $T$ as a computation tree and use the tree recursion to compute the marginal ratio at the root $v$. For each vertex $w$, define the corresponding ratio pinning $\rho^w: \Lambda \setminus \{w\} \to [0,\infty]$ such that 
\begin{align}\label{eq:rho-w}
\forall u \in \Lambda \setminus \{w\}, \quad \rho^w(u)=\begin{cases}
\infty & \text{if } \tau^w(u) = 0;\\
0 & \text{if } \tau^w(u) = 1.
\end{cases}
\end{align}
Consider two ratios $R_v^{\rho^w \land w \gets \infty}$ and $R_v^{\rho^w \land w \gets 0}$ at $v$ under the two pinnings $\rho^w \land w \gets \infty$ and $\rho^w \land w \gets 0$, respectively, where the ratio is computed via the tree recursion (see \Cref{def:pinning-on-computation-tree}). Using the same proof as in \Cref{lemma:dtv<R}, it is straightforward to show that
\begin{align*}
b_w &= \left | \frac{1}{1+R_v^{\rho^w \land w \gets \infty}} - \frac{1}{1+R_v^{\rho^w \land w \gets 0}} \right | = \frac{\abs{R_v^{\rho^w \land w \gets \infty}-R_v^{\rho^w \land w \gets 0}}}{(1+R_v^{\rho^w \land w \gets \infty})(1+R_v^{\rho^w \land w \gets 0})} \\
&\leq \abs{R_v^{\rho^w \land w \gets \infty}-R_v^{\rho^w \land w \gets 0}}.
\end{align*}

Hence, it suffices to bound the difference $\abs{R_v^{\rho^w \land w \gets \infty}-R_v^{\rho^w \land w \gets 0}}$.
However, for different vertices $w$, the pinnings $\rho^w: \Lambda \setminus \{w\} \to [0,\infty]$ can be different. We show that we can modify each pinning $\rho^w$ to a pinning $\sigma^w$ such that $\sigma^w$ is similar to $\sigma^{w'}$ whenever two vertices $w$ and $w'$ lie on the same level of the SAW tree $T$. Recall that $L_k(v)$ is the set of all descendants of $v$ at distance $k$ from $v$ in the SAW tree $T$. 

\begin{definition}\label{def:universal-pinning}
A pinning $\sigma^*: \Lambda \to \{0,\infty\}$ is defined as follows. For each non-leaf vertex $u$,
\begin{itemize}
  \item if $|N_\Lambda^T(u)| \leq D_2 / 3$, we set $\sigma^*(w) = \infty$ for all $w \in \Lambda$ that are children of $u$;
  \item if $|N_\Lambda^T(u)| > D_2 / 3$, let $w_1,w_2,\ldots,w_d \in \Lambda$ be the children of $u$ in the SAW tree $T$, where $d = |N_\Lambda^T(u)|$. Let $\gamma_i = \gamma_{u,w_i}$ and $\beta_i = \beta_{u,w_i}$. Suppose all $w_i$ are sorted in increasing order of $\beta_i\gamma_i$ (breaking ties arbitrarily). For the first $\lfloor |N^T_\Lambda(u)|/(\log n)  \rfloor$ children, we set $\sigma^*(w_i) = 0$. For the remaining children, we set $\sigma^*(w_i) = \infty$.
\end{itemize}
\end{definition}

Intuitively, the pinning $\sigma^*$ is a pinning in $\Omega_\Lambda$ that maximizes the ratio $R^{\sigma^*}_v$. To see this, since we consider a ferromagnetic two-spin system, setting $\sigma^*(w) = \infty$ for all $w$ would maximize the ratio $R^{\sigma^*}_v$. However, by \Cref{def:good-boundary-configuration-for-saw-tree}, if $|\nbd{\Lambda}{T}(u)| > D_2 / 3$, then there is a restriction on the pinning at children of $u$. Hence, in the above definition, we need to pay special attention when $|\nbd{\Lambda}{T}(u)| > D_2 / 3$.

The following lemma plays a key role in the analysis.
For any $k$, let $L_{<k}(v) = \cup_{0 \leq j < k}L_j(v)$.
\begin{lemma}\label{lemma:monotone-potential}
Let $\beta \leq 1 < \gamma$, $\beta\gamma > 1$ and $\lambda < \lambda_0(\beta,\gamma) \defeq \sqrt{\gamma/\beta}$.
Consider a ferromagnetic two-spin system $\+S$ on a tree $T = (V_T,E_T)$ rooted at $v$ with Gibbs distribution $\pi$ and parameters $(\beta_e,\gamma_e)_{e \in E_T}$ and $(\lambda_v)_{v \in V_T}$.
Suppose $\beta_e \gamma_e > 1$, $\beta_e \leq \beta \leq 1 \leq \gamma \leq \gamma_e$, and $\lambda_v < \lambda $ for all $e \in E_T$ and $v \in V_T$.
Let $w\in\Lambda$ be a vertex. Suppose $w \in L_k(v)$ for some $k \in \mathbb{N}$. 
For any pinning $\rho^w$ obtained from $\tau_w\in\Omega_{\Lambda}$ as in \eqref{eq:rho-w}, there exists a pinning $\sigma^w: (L_k(v) \setminus \{w\}) \cup (\Lambda \cap L_{<k}(v)) \to [0,\infty]$ such that
\begin{itemize}
\item for all vertices $u \in \Lambda$ with $u \in L_{k'}(v)$ for $k' < k$, $\sigma^w(u) = \sigma^*(u)$;
\item for all siblings $u$ of the vertex $w$, $\sigma^w(u) = \rho^w(u)$ if $u \in \Lambda$ and $\sigma^w(u) \in (0,\lambda)$ if $u \notin \Lambda$.
\end{itemize}
Then the following inequality holds:
\begin{align}\label{eq:lemma:monotone-inequality}
 \abs{R_v^{\rho^w \land w \gets \infty}-R_v^{\rho^w \land w \gets 0}} \leq \abs{R_v^{\sigma^w \land w \gets \infty}-R_v^{\sigma^w \land w \gets 0}}.
\end{align}
\end{lemma}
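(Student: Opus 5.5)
The plan is to write the influence of $w$ on the root as an integral of a product of local \emph{elasticities} along the path from $v$ to $w$. Every factor in that product is non-decreasing in the ratios at the path vertices, as long as these ratios stay below $\lambda_0=\sqrt{\gamma/\beta}$. The pinning $\sigma^w$ is then chosen to make all off-path ratios as large as the constraints allow, and monotonicity gives \eqref{eq:lemma:monotone-inequality}.

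\textbf{Step 1 (the elasticity representation).} Let $v=x_0,x_1,\ldots,x_k=w$ be the path, and regard the ratio $x$ at $w$ as a free variable in $[0,\infty]$. For $j<k$ write
\[
R_{x_j}=c_j\,\frac{\beta_j R_{x_{j+1}}+1}{R_{x_{j+1}}+\gamma_j},
\]
where $\beta_j,\gamma_j$ are the parameters of the edge $\{x_j,x_{j+1}\}$. Here $c_j=\lambda_{x_j}\prod_{y}\frac{\beta_{x_j y}R_y+1}{R_y+\gamma_{x_j y}}$, with the product over the off-path children $y$ of $x_j$. Then
\[
\frac{dR_v}{dx}=\frac{R_v}{x}\prod_{j=0}^{k-1}e_j(R_{x_{j+1}}),\qquad e_j(r)=\frac{r(\beta_j\gamma_j-1)}{(\beta_j r+1)(r+\gamma_j)} .
\]
A direct computation shows that $e_j$ is increasing on $[0,\sqrt{\gamma_j/\beta_j}]$. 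The edge constraints give $\gamma_j/\beta_j\ge\gamma/\beta$, and every non-leaf ratio satisfies $R_{x_{j+1}}<\lambda<\lambda_0$. Hence each $e_j$ with $j\le k-2$ is increasing in $R_{x_{j+1}}$; this is the only place where $\lambda<\lambda_0$ is used. The last factor $e_{k-1}(x)$ depends only on $x$, so it is the same under both pinnings. Integrating over $x\in[0,\infty]$ gives $|R_v^{\cdot\land w\gets\infty}-R_v^{\cdot\land w\gets0}|=\int_0^\infty \frac{R_v}{x}\prod_j e_j\,dx$.

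Each $R_{x_j}$, and also $R_v$, is coordinatewise increasing in every off-path subtree ratio, since $\beta_e\gamma_e>1$ makes each $F_u$ increasing. Therefore the integrand is pointwise larger whenever every off-path child ratio along the path is larger.

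\textbf{Step 2 (constructing $\sigma^w$).} Define $\sigma^w$ as follows:
\begin{itemize}
\item $\sigma^w=\sigma^*$ on $\Lambda\cap L_{<k}(v)$;
\item on the siblings of $w$, $\sigma^w$ copies $\rho^w$ for siblings in $\Lambda$, and for siblings not in $\Lambda$ it takes their actual recursion values under $\rho^w$, which lie in $(0,\lambda)$;
\item on every other vertex of $L_k(v)\setminus\{w\}$, $\sigma^w=\infty$.
\end{itemize}
Everything below level $k$ is cut off.

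For any vertex above level $k$, the value $\infty$ at level $k$ maximizes the ratio. Siblings of $w$ are unchanged. By monotonicity, it then remains to check that at every non-path vertex $u$ at level $<k$ the product $\prod_{y\in N^T_\Lambda(u)}\frac{\beta y+1}{y+\gamma}$ under $\sigma^*$ is at least the product under $\rho^w$.

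\textbf{Step 3 (the main obstacle): comparing $\sigma^*$ with $\rho^w$ at a single vertex.} If $|N^T_\Lambda(u)|\le D_2/3$, then $\sigma^*$ is all $\infty$ on these children, which is pointwise maximal. Otherwise, changing a child's value from $0$ to $\infty$ multiplies the factor by $\beta_i\gamma_i>1$. The set of children set to $0$ by $\sigma^*$ consists of the $\lfloor d/\log n\rfloor$ children with the smallest $\beta_i\gamma_i$. By \Cref{def:good-boundary-configuration-for-saw-tree}, $\rho^w$ sets at least $d/\log n+1$ children to $0$.

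One of these may be $w$ itself, which is pinned separately rather than by $\rho^w$; this is exactly what the $+1$ slack absorbs. So $\rho^w$ still has at least $\lfloor d/\log n\rfloor$ zeros among the other children. The product under $\rho^w$ equals the all-$\infty$ product divided by $\prod_{i\in Z}\beta_i\gamma_i$, where $Z$ is the set of zeroed children. This quotient is maximized, subject to $|Z|\ge\lfloor d/\log n\rfloor$, exactly by the choice made in $\sigma^*$. Hence $\sigma^*$ dominates at $u$.

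I expect the subtle points to be two bookkeeping issues. The first is that $w$'s parent has Λ-children at level $k$, and these must be handled by the sibling clause rather than by $\sigma^*$. The second is making sure that ratio values at non-leaf vertices stay in $(0,\lambda)$, so that the monotonicity of $e_j$ applies at every factor except the last.
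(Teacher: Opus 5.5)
Your argument is correct, but its core comparison takes a different route from the paper.

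\textbf{The paper's route.} The paper works discretely along the path $w=u_0,\dots,u_k=v$. It first takes $\sigma^w=\sigma^*$ on $\Lambda\cap L_{<k}(v)$ and $\sigma^w=\rho^w$ on $\Lambda\cap L_{\ge k}(v)\setminus\{w\}$. It then uses \Cref{lem:sigma-star-property} to get $x_j\ge x'_j$ and $y_j\ge y'_j$. Next it propagates $x_j/y_j\ge x'_j/y'_j$ upward with the finite-difference inequality of \Cref{lemma:one-step-relation-preserve}; there, $\gamma-c\beta xy>0$ is where $\lambda<\lambda_0$ enters. It concludes via $|x_k-y_k|=y_k(x_k/y_k-1)$. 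Only afterwards does it collapse everything below level $k$ into recursion values.

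\textbf{Your route.} You write the influence as $\int_0^\infty \frac{R_v}{x}\prod_j e_j\,dx$ and note that $e_j$ is increasing on $[0,\sqrt{\gamma_j/\beta_j}]$, since the numerator of $e_j'$ is $\gamma_j-\beta_j r^2$. This is exactly the infinitesimal form of \Cref{lemma:one-step-relation-preserve}.

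\textbf{Comparison.} Your version buys a cleaner and more general statement: the influence of $w$ is monotone in the effective fields $c_0,\dots,c_{k-2}$ along the path. This single monotonicity replaces the paper's three separate comparisons ($x_j$, $y_j$, and their ratio). The cost is a routine integrability check at $x\to 0$ and $x\to\infty$, which the paper's ratio argument avoids. Setting $\sigma^w=\infty$ on the non-sibling vertices of $L_k(v)$ is also simpler than the paper's recursion values. It is equally admissible, since the lemma constrains only siblings and nothing downstream uses those values.

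\textbf{Corrections to the reduction.}
\begin{itemize}
\item $\sigma^*$ and $\rho^w$ are not pointwise comparable on $\Lambda$. So the monotonicity claim in Step 1 should be stated in terms of the products $c_j$, not individual off-path child ratios.
\item The product comparison of Step 3 must also be applied at the path vertices $x_0,\dots,x_{k-2}$. Their $\Lambda$-children lie at levels below $k$, are pinned by $\sigma^*$, and enter $c_j$, yet Step 2 mentions only non-path vertices. The Step 3 argument applies there verbatim.
\item The case ``one of the zeros is $w$ itself'' never arises in your construction. $\sigma^*$ is used only on $L_{<k}(v)$, $w\in L_k(v)$, and $w$'s parent is handled by the sibling clause. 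So the $+1$ slack is not needed for this lemma.
\end{itemize}
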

We remark that \Cref{lemma:monotone-potential} does not require $\beta_e \gamma_e < \beta\gamma$ for all edges $e \in E_T$. Hence, the lemma works for both $(\beta,\gamma,\lambda)$-ferromagnetic two-spin systems and $(\gamma,\lambda)$-RBMs (with $\beta = 1$).
The proof of \Cref{lemma:monotone-potential} is given in \Cref{subsec:proof-of-lemma:monotone-potential}. Using this lemma, we can bound the sum of the influences at each level. For each integer $k \geq 1$, the sum of influences can be bounded as 
\begin{align*}
\sum_{w \in L_k(v) \cap \Lambda} \abs{R_v^{\rho^w \land w \gets \infty}-R_v^{\rho^w \land w \gets 0}} \leq \sum_{w \in L_k(v) \cap \Lambda} \abs{R_v^{\sigma^w \land w \gets \infty}-R_v^{\sigma^w \land w \gets 0}} \defeq \text{Inf}(k).
\end{align*}
By the definition of the pinning $\sigma^w$, for any $w \in L_k(v) \cap \Lambda$, the restriction of $\sigma^w$ to $L_{<k}(v)$ is the same. The only difference lies in the pinning on vertices at level $k$. Hence, we reduce the task of proving aggregate strong spatial mixing to the problem analyzed in \Cref{subsec:general-results-for-correlation-decay}.
We also remark that \Cref{lemma:monotone-potential} is the only place where we use the stronger condition $\lambda < \lambda_0$.

Define the following spin system for each level $k \geq 1$.
\begin{definition}\label{def:ferromagnetic-two-spin-system-for-level-k}
Let $k \geq 1$ be an integer. Define a spin system $\+S_k$ as follows. 
\begin{itemize}
    \item Truncate the SAW tree $T$ to keep the first $k$ levels. Let $T_k$ be the truncated SAW tree. For each vertex $u$ in $T_k$, let $d_u^c$ be the number of children of $u$. Note that we have removed cycle-closing vertices, but the pinned vertices remain.  The vertices and edges in $T_k$ inherit the parameters of the original spin system on $T$.
    \item For each vertex $w \in L_{<k}(v) \cap \Lambda$, the value of $w$ is fixed by the pinning $\sigma^*$. Using self-reducibility in \Cref{obs:self-reducibility}, we remove the leaf vertex $w$ and modify the external field of its parent.
\end{itemize}
\end{definition}
By \Cref{obs:self-reducibility}, if the original system is a $(\beta,\gamma,\lambda)$-ferromagnetic two-spin system, then $\+S_k$ is also a $(\beta,\gamma,\lambda)$-ferromagnetic two-spin system. The same holds for $(\gamma,\lambda)$-RBM, 

For each vertex $w \in L_k(v)$, we use $\sigma^w_k$ to denote the pinning $\sigma^w$ restricted on $L_k(v)$. Hence, $\sigma^w_k$ is a pinning on all leaf vertices of the tree $T_k$ except the vertex $w$. Let $R_{v,k}^{\sigma^w_k \land w \gets \infty}$ and $R_{v,k}^{\sigma^w_k \land w \gets 0}$ be the ratio computed via the tree recursion in the spin system $\+S_k$. By definition, we have
\begin{align}\label{eq:influence-bound-level-k}
    \text{Inf}(k) = \sum_{w \in L_k(v) \cap \Lambda} \abs{R_{v,k}^{\sigma^w_k \land w \gets \infty}-R_{v,k}^{\sigma^w_k \land w \gets 0}}.
\end{align}

Recall that $D_1$ and $D_2$ are defined in \eqref{eq:parameters-for-S_v-construction}. Let $n$ denote the number of vertices in the original graph $G$.
We have the following two lemmas. 

\begin{lemma}\label{lemma:influence-bound-level-1}
If $k > (\log \log n)^3$, then $\text{Inf}(k)  \leq C' \cdot (1-\delta)^{k} \cdot (\log n)^3$, where $\delta < 1$ and $C'  > 0$ are constants.
\end{lemma}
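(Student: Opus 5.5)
Since all pinnings $\sigma^w_k$ agree on $L_{<k}(v)$ (they have been absorbed into $\+S_k$ by self-reducibility), the quantity $\text{Inf}(k)$ in \eqref{eq:influence-bound-level-k} is exactly of the form treated in \Cref{subsec:general-results-for-correlation-decay}. I would first pass to potentials: by \Cref{lemma:dtv-bound-potential}, each term $|R_{v,k}^{\sigma^w_k\land w\gets\infty}-R_{v,k}^{\sigma^w_k\land w\gets 0}|$ is at most $C_{\min}^{-1}K_v^w$, where $K_v^w$ is as in \eqref{eq:K-v-w-potential}. (At the root both ratios lie in $(0,\lambda)$ since $v$ is not pinned.) The system $\+S_k$ is a $(\beta,\gamma,\lambda)$-ferromagnetic two-spin system, respectively a $(\gamma,\lambda)$-RBM, on a tree. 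Hence \Cref{condition} holds with the same constants, and both \Cref{lemma:general-influence-decay-1} and \Cref{lemma:general-influence-decay-2} apply. There is one point to check: in \Cref{lemma:general-influence-decay-1,lemma:general-influence-decay-2}, the siblings of $w$ that are not in $\Lambda$ must receive ratios in $(0,\lambda)$. This is guaranteed by the second property of $\sigma^w$ in \Cref{lemma:monotone-potential}. Siblings in $\Lambda$ are pinned to $0$ or $\infty$, which only affects the last step below.

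Next I would iterate along a maximizing branch exactly as in \Cref{subsec:proof-of-the-influence-bound}. For levels $0\le \ell\le k-\ell_0$, I would apply \Cref{lemma:general-influence-decay-2}, gaining a factor $(1-\delta)$ each time. For the remaining $O(\ell_0)$ levels, I would apply \Cref{lemma:general-influence-decay-1}, each costing at most $\sup_d C_{\text{trl}}\lambda d e^{-C_{\text{decay}}d}=O(1)$. Since $k>(\log\log n)^3\ge \ell_0$ for large $n$, this gives
\[
\sum_{w\in L_k(v)\cap\Lambda}K_v^w\le O(1)\,(1-\delta)^{k}\,\max_{u\in L_{k-1}(v)}\sum_{i}K_u^{u_i},
\]
where $u_1,\dots,u_d$ are the children of $u$ in $\Lambda$. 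The factor $(1-\delta)^{k-\ell_0}$ becomes $(1-\delta)^k$ after adjusting the constant.

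The main obstacle is the last level. In \Cref{subsec:proof-of-the-influence-bound} the base term was bounded by $d e^{-\Omega(d)}=O(1)$ using that all siblings have ratios in $(0,\lambda)$. Here, however, siblings of $w$ in $\Lambda$ may be pinned to $0$ or $\infty$. A sibling pinned to $\infty$ contributes the factor $\beta_e\le 1$, while a sibling pinned to $0$ contributes $1/\gamma_e<1$, so every factor of $F_u$ is still at most $1$. The derivative in $x_i$ is therefore bounded by $\lambda_u(\beta_i\gamma_i-1)/\gamma_i^2\le\lambda$. For the single vertex term I would use the crude bound from \Cref{lemma:dtv-bound-potential}: each $K_u^{u_i}\le C_{\max}\,|\Delta R_u|\le C_{\max}\lambda$, and in the ratio domain the difference is at most $\lambda$. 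The number of $\Lambda$-children of $u$ is the delicate count. If $d\le D_2=(\log n)^3$, summing gives $O((\log n)^3)$, which is the claimed bound. If $d>D_2$, I expect to use the second case of \Cref{lemma:saw-tree-construction-nonsimple}, which says a vertex with at least $D_2$ non-cycle-closing children was not expanded. Then $u$'s $\Lambda$-children number more than $D_2/3$, so by \Cref{def:good-boundary-configuration-for-saw-tree} at least $d/\log n$ of them are pinned to ratio $0$. Each such child contributes a factor at most $1/\gamma$, giving a product at most $\gamma^{-d/\log n}\le \gamma^{-(\log n)^2}$. This kills the factor $d\le n$ and leaves $O(1)$. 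Combining the cases yields $\text{Inf}(k)\le C'(1-\delta)^k(\log n)^3$.
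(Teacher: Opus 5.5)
Your proposal is correct and is essentially the paper's proof: you iterate the $\Lambda$-restricted decay lemmas (\Cref{lemma:general-influence-decay-1-modi,lemma:general-influence-decay-2-modi}) along a maximizing branch to get $O(1)\,(1-\delta)^k\sum_i K_{u,k}^{u_i}$ for some $u\in L_{k-1}(v)$, and then bound the last level by $O((\log n)^3)$, using the good-boundary pinnings of the siblings when $u$ has many $\Lambda$-children. The only deviation is that you split the last-level cases on the number $d$ of $\Lambda$-children, not on the total number $\bar d$ of children as in \Cref{lemma:bound-on-last-level-influence}; this is simpler and suffices here, and it makes the appeal to \Cref{lemma:saw-tree-construction-nonsimple} unnecessary, since $d>D_2$ already gives $|\nbd{\Lambda}{T}(u)|>D_2/3$ (the paper uses the $\bar d$ split because that version is reused in proving \Cref{lemma:influence-bound-level-2}).
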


\begin{lemma}\label{lemma:influence-bound-level-2}
If $1 \leq k \leq (\log \log n)^3$, then $\text{Inf}(k) < \frac{1}{\log n}$.
\end{lemma}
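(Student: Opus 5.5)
The plan is to bound $\text{Inf}(k)$ by multiplying per-vertex contraction factors along root-to-$w$ paths, and to let the construction of $S_v$ guarantee that every such path either accumulates degree at least $D_1$ or passes through a vertex that blocks the influence.

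\textbf{Reduction to potentials.} By \Cref{lemma:dtv-bound-potential} it suffices, up to a factor $1/C_{\min}$, to bound $\sum_{w\in L_k(v)\cap\Lambda}K_v^w$ in the system $\+S_k$ of \Cref{def:ferromagnetic-two-spin-system-for-level-k}. Here $K_v^w$ is defined as in \eqref{eq:K-v-w-potential} with pinnings $\sigma^w_k$. By \Cref{lemma:monotone-potential}, all these pinnings agree with $\sigma^*$ above level $k$, and at level $k$ they coincide with $\rho^w$ on the siblings of $w$ that lie in $\Lambda$. Unrolling \eqref{eq:sum-K-u} from the root down gives
\[
\sum_{w}K_v^w\le\sum_{w}\prod_{j}c(u_j,u_{j+1})\cdot K^{w}_{u_{k-1}},
\]
where $v=u_0,\dots,u_{k-1},w$ is the root-to-$w$ path and $c(u_j,u_{j+1})$ is the coefficient in \eqref{eq:K-u-w-potential-pf}.

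\textbf{Per-vertex factors.} I will not use the crude bound $C^k$: with $k$ up to $(\log\log n)^3$ it would overwhelm any polylogarithmic gain. Instead I use the proof of \Cref{lemma:general-influence-decay-2}, which bounds the factor at a vertex of degree $d$ by $\min\{1-\delta,\,C_1e^{-C_2d}\}$ whenever the vertex is at least $\ell_0$ above level $k$. Only the last $\ell_0$ levels fall back on \Cref{lemma:general-influence-decay-1}, which costs a constant factor in total. I keep the degree decay rather than taking a maximum over branches. Splitting $e^{-C_2d}=e^{-C_2d/2}\cdot e^{-C_2d/2}$, the first half absorbs the sum over the $d$ children, since $de^{-C_2d/2}=O(1)$, and the second half survives. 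The outcome is
\[
\text{Inf}(k)\le O(1)\cdot\max_{\text{paths}}\prod_{j}\min\{1-\delta,\,Ce^{-c\,d_{u_j}}\}\cdot e^{-c\,d_{u_{k-1}}},
\]
where $d_{u_j}$ is the number of children of $u_j$ in $\+S_k$. This count includes all children that are copies of vertices of $S_v$ and are not cycle-closing, so $d_{u_j}\ge F_T(u_j)$.

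\textbf{Case analysis from \Cref{lemma:saw-tree-construction-nonsimple}.} Every $w\in\Lambda$ falls under one of its two alternatives.
\begin{itemize}
\item[(a)] If $\sum_{i}F_T(u_i)\ge D_1$, the surviving product is at most $C\exp(-c\sum_i d_{u_i})\le C(\log n)^{-cC_D}$. I choose $C_D$ in \eqref{eq:parameters-for-S_v-construction} so that this is below $1/(2\log n)$.
\item[(b)] Some ancestor $u$ has at least $D_2=(\log n)^3$ non-cycle-closing children.
\begin{itemize}
\item[(b1)] If at most $D_2/3$ of these children lie in $\Lambda$, then $F_T(u)\ge 2D_2/3$. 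The factor at $u$ is then $e^{-\Omega((\log n)^3)}$, which is negligible.
\item[(b2)] If more than $D_2/3$ of these children lie in $\Lambda$, the pinning forces many of them to spin $1$ (ratio $0$). When $u$ is at level below $k-1$, this comes from \Cref{def:universal-pinning}: $\sigma^*$ sets at least $\lfloor|N^T_\Lambda(u)|/\log n\rfloor\ge(\log n)^2/3-1$ children to ratio $0$. When $u=u_{k-1}$, it comes from $\rho^w$: $\tau^w\in\Omega_\Lambda$ has at least $|N|/\log n+1$ ones, so at least $|N|/\log n$ survive after ignoring $w$. Absorbing these pins by self-reducibility multiplies $\lambda_u$ by at most $\gamma^{-(\log n)^2/3+1}$. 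By \Cref{condition:trivial-bound-for-C}, or by the explicit computation used for $K_u^{u_i}$ at the bottom level, the factor at $u$ is then $O(d\,\lambda\gamma^{-\Omega((\log n)^2)})$. Even with $d\le n$ this is $n^{-\omega(1)}$.
\end{itemize}
\end{itemize}
In all cases the path contribution is at most $1/(2\log n)$. Since the decomposition above already absorbs the sum over $w$, we get $\text{Inf}(k)<1/\log n$.

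\textbf{Main obstacle.} The delicate step is making the contraction bounds and the blocking vertex compatible. The SSM-based comparison inside \Cref{lemma:general-influence-decay-2} uses a reference pinning $\tau$ that may differ from $\sigma^w$, and the blocking effect in case (b2) must survive this comparison. I plan to handle this by treating the pinned children of $u$ as part of the effective external field, which is legitimate because they are fixed by $\sigma^*$ for every $w$. I will then apply the bound at $u$ directly through \Cref{condition:trivial-bound-for-C}, and not through the averaged-coefficient argument. When $u$ lies within $\ell_0$ of level $k$, I will instead use \Cref{lemma:general-influence-decay-1} at $u$.
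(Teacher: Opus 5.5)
Your route is the paper's. You reduce to potentials and apply \Cref{lemma:general-influence-decay-1-modi} and \Cref{lemma:general-influence-decay-2-modi} along a maximizing branch. The $\max$ over paths has to come from applying them level by level; in your unrolled per-$w$ product the $1-\delta$ factor could not absorb the branching. You then split by the two alternatives of \Cref{lemma:saw-tree-construction-nonsimple}. Blocking comes either from $\sigma^*$ lowering $\lambda_u$ above level $k-1$, where you rightly switch to the \Cref{condition:trivial-bound-for-C} factor, or from $\rho^w$ at level $k-1$.

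Your treatment of alternative (a) by a single exponential is a tidier substitute for the paper's two subcases. It rests on $\min\{1-\delta,Ce^{-cd}\}\le e^{-c'd}$ for all integers $d\ge 1$, with $c'>0$ depending only on $\delta,C,c$. This holds because every vertex on the path has at least one child, and you should say so.

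The step that fails is the bottom level. You treat $u_{k-1}$ like the other vertices and claim $\sum_w K^w_{u_{k-1}}=O(e^{-c\,d_{u_{k-1}}})$. But the children of $u_{k-1}$ in $\Lambda$ carry pinned ratios $0$ or $\infty$, outside $(0,\lambda)$, so \Cref{condition:trivial-bound-for-C} does not apply to them. A sibling pinned to $\infty$ contributes the factor $\beta_e$, which is $1$ for RBMs, so it gives no decay.

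Concretely, suppose $u_{k-1}$ has at most $D_2/3$ children, all in $\Lambda$. Then \Cref{def:good-boundary-configuration-for-saw-tree} constrains nothing, and $\rho^w$ may pin every sibling of $w$ to spin $0$. Each $K^w_{u_{k-1}}$ is then of order $\lambda_{u_{k-1}}(1-1/\gamma_e)$, which need not be small. The sum therefore grows linearly in the number of children, up to order $(\log n)^3$. The correct input is \Cref{lemma:bound-on-last-level-influence}, which gives only $\lambda C_{\max}(\log n)^3$ when $u_{k-1}$ has fewer than $D_2$ children.

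The repair is cheap. Your case (b) bounds are $n^{-\omega(1)}$ or $e^{-\Omega((\log n)^2)}$ and absorb this factor. In case (a) you must choose $C_D$ with $c'C_D>4$, so that $(\log n)^{3-c'C_D}=o(1/\log n)$, rather than merely $(\log n)^{-c'C_D}<1/(2\log n)$. This is why every estimate in the paper's proof carries a factor $(\log n)^3$, and why \eqref{eq:C_D-definition} is tuned to produce $\exp(-5\log\log n)$.
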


Assuming \Cref{lemma:influence-bound-level-1} and \Cref{lemma:influence-bound-level-2} hold, we can bound the sum of the influence as follows.
\begin{align*}
\sum_{k \geq 1} \text{Inf}(k) &\leq  \frac{(\log \log n)^3}{\log n} + \sum_{k > (\log \log n)^3} C' \cdot (1-\delta)^{k} \cdot (\log n)^3\\
&\leq o(1) + \frac{C' (1-\delta)^{(\log \log n)^3}}{\delta} (\log n)^3 = o(1) < \frac{1}{20}.
\end{align*}
The last equality holds because when $n$ is sufficiently large, we have $(1-\delta)^{(\log \log n)^3} \ll \frac{1}{(\log n)^3}$. 
The above analysis shows that $\sum_{w\in \Lambda} b_w \leq \frac{1}{20}$.
Combining it with \Cref{lemma:mapping-influence-bounds} proves \Cref{lemma:assm-with-good-boundary}.

\subsection{Analysis of the sum of the influence}\label{subsec:analysis-of-the-sum-of-the-influence}
We prove \Cref{lemma:influence-bound-level-1} and \Cref{lemma:influence-bound-level-2} in this subsection.
We consider the following setting. Fix an integer $k \geq 1$. Let $\+S_k$ be the ferromagnetic two-spin system defined in \Cref{def:ferromagnetic-two-spin-system-for-level-k} in the tree $T_k$, where $T_k$ is a tree with $k$ levels rooted at $v$. 
Recall that $T_k$ is constructed by the following procedure. 
First, let $T_{\partial S_v} = T_{\textnormal{SAW}}(G,v,\partial S_v)$.
After pruning all cycle-closing vertices in $T_{\partial S_v}$, we obtain a tree $T$.
Finally, we truncate the tree $T$ and keep levels $0,1,\ldots,k$, and then prune all vertices in $\Lambda \cap L_{<k}(v)$. When pruning a vertex, we modify the external field of its parent using self-reduction.
\begin{lemma}\label{lemma:property-of-vertex-in-tree-T_k}
Let $u \in L_{k'}(v)$ be a vertex at level $k'$ of the tree $T_k$, where $k'\leq k-2$. 
\begin{itemize}
    \item The number of children $|\text{cld}_{T_k}(u)|$ of $u$ in $T_k$ satisfies $|\text{cld}_{T_k}(u)| = F_{T_{\partial S_v}}(u)$, where $F_{T_{\partial S_v}}$ is defined in \eqref{eq:F_T} and $T_{\partial S_v} = T_{\textnormal{SAW}}(G,v,\partial S_v)$.
    \item If the number of non-cycle-closing children of $u$ in $T_{\partial S_v}$ is at least $D_2$, then either $u$ has at least $D_2/2$ children in $T_k$ or $\lambda_u \leq \lambda (1/\gamma)^{D_2/(5 \log n)}$, where $\lambda_u$ is the external field of $u$ in $\+S_k$.
\end{itemize}
\end{lemma}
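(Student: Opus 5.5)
The plan is to trace how each child of $u$ in $T_{\partial S_v}$ fares under the three operations that produce $T_k$: pruning cycle-closing leaves, truncating at level $k$, and removing pinned vertices of $\Lambda\cap L_{<k}(v)$ via self-reducibility (\Cref{obs:self-reducibility}). Since $u\in L_{k'}(v)$ with $k'\le k-2$, every child of $u$ lies at level $k'+1\le k-1<k$. Truncation therefore removes none of them.

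\emph{First item.} Partition the children of $u$ in $T_{\partial S_v}$ into three classes. The cycle-closing ones are pruned in the first step. The non-cycle-closing children that are copies of vertices in $\partial S_v$ are leaves in $\Lambda\cap L_{<k}(v)$ (a SAW stops on reaching the boundary), so they are removed in the last step, with their effect absorbed into $\lambda_u$. The remaining children are non-cycle-closing copies of vertices in $S_v$, and they survive. This last class is exactly the set counted by $F_{T_{\partial S_v}}(u)$ in \eqref{eq:F_T}. We also need that $u$ itself is a copy of a vertex in $S_v$; this holds because $u$ is a non-leaf of $T_k$ at level $\le k-2$, so it is not a copy of a boundary vertex. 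Hence $|\text{cld}_{T_k}(u)|=F_{T_{\partial S_v}}(u)$.

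\emph{Second item.} Suppose $u$ has at least $D_2$ non-cycle-closing children and fewer than $D_2/2$ children in $T_k$. By the first item, more than $D_2/2$ of those children are copies of boundary vertices, i.e.\ elements of $\nbd{\Lambda}{T}(u)$. So $d\defeq|\nbd{\Lambda}{T}(u)|>D_2/2>D_2/3$, and by \Cref{def:universal-pinning} the pinning $\sigma^*$ assigns ratio $0$ (spin $1$) to $\lfloor d/\log n\rfloor$ of them. By the self-reducibility formula, each such child multiplies $\lambda_u$ by $1/\gamma_e\le 1/\gamma$. Each child pinned to spin $0$ multiplies $\lambda_u$ by $\beta_e\le 1$, and removing cycle-closing leaves also only multiplies by factors $\beta_e\le1$ or $1/\gamma_e<1$. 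Therefore
\[
\lambda_u\le \lambda\,\gamma^{-\lfloor d/\log n\rfloor}.
\]
Since $d>D_2/2=(\log n)^3/2$, we have $\lfloor d/\log n\rfloor\ge d/\log n-1\ge D_2/(5\log n)$ for large $n$, which gives the claim.

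The main point needing care is to confirm that the field modification at $u$ really uses $\sigma^*$ on the children in $\Lambda\cap L_{<k}(v)$, as stated in \Cref{def:ferromagnetic-two-spin-system-for-level-k}, rather than some $\rho^w$. We also have to check that the count $\lfloor d/\log n\rfloor$ is taken over $\nbd{\Lambda}{T}(u)$ in the pruned tree $T$. Since cycle-closing vertices are never in $\Lambda$ (\Cref{obs:leaf-vertices-in-saw-tree-with-boundary}), these sets coincide.
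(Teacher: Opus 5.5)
Your proposal is correct and follows the same route as the paper. Like the paper, you show that the surviving children of $u$ are exactly its non-cycle-closing copies of $S_v$-vertices, which gives the first item. When fewer than $D_2/2$ survive, more than $D_2/2$ children lie in $\Lambda$, so $\sigma^*$ gives ratio $0$ to $\lfloor |N^{T}_{\Lambda}(u)|/\log n\rfloor$ of them; each contributes a factor at most $1/\gamma$ to $\lambda_u$, and all other absorbed pinnings contribute factors at most $1$.

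One small slip: $u$ need not be a non-leaf of $T_k$, since all its children may be pruned. It is nevertheless a copy of a vertex in $S_v$, because every copy of a boundary vertex at level $<k$ lies in $\Lambda\cap L_{<k}(v)$ and was removed. Your use of the first item to split the non-cycle-closing children is in fact slightly more careful than the paper's split, which does not explicitly exclude cycle-closing copies of $S_v$-vertices.
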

\begin{proof}
By the construction of $T_k$, for vertex $u$, we have pruned all its cycle-closing children and children in $\Lambda$ from the tree $T_{\partial S_v}$. The first property holds from the definition of $F_{T_{\partial S_v}}(u)$.

For the second property, if the number of non-cycle-closing children of $u$ in $T_{\partial S_v}$ is at least $D_2$, then one of the following two conditions must hold:
\begin{itemize}
    \item $u$ has at least $D_2/2$ children in $T_{\partial S_v}$ that are copies of vertices in $S_v$. All of them remain in $T_k$. Hence, $u$ has at least $D_2/2$ children in $T_k$.
    \item $u$ has at least $D_2/2$ children in $T_{\partial S_v}$ that are copies of vertices in $\partial S_v$. Hence, $u$ has at least $D_2/2$ children in $T$ that belong to $\Lambda$. By the definition of $\sigma^*$ in \Cref{def:universal-pinning}, at least $\lfloor |N^T_\Lambda(u)|/(\log n)  \rfloor$ children in $\nbd{\Lambda}{T}(u)$ satisfy $\sigma^*(w_i) = 0$. Note that when we prune a vertex and modify the external field of its parent using self-reduction, we can only decrease the external field of the parent because $\beta_e \leq 1$ and $\gamma_e \geq \gamma > 1$ for all edges $e$. Hence, the external field of $u$ in $T_k$ can be bounded by
    \begin{align*}
    \lambda_u \leq \lambda \cdot \tp{\frac{\beta 0 + 1}{0 + \gamma}}^{\lfloor |N^T_\Lambda(u)|/\log n \rfloor} &\leq \lambda \cdot \tp{\frac{1}{\gamma}}^{\lfloor |N^T_\Lambda(u)|/ \log n \rfloor}\\
     &\leq \lambda \cdot \tp{\frac{1}{\gamma}}^{\lfloor D_2/(2 \log n) \rfloor} \leq \lambda \cdot \tp{\frac{1}{\gamma}}^{D_2/(5 \log n)}.
    \end{align*}
\end{itemize}
Hence, the second property holds.
\end{proof}

For any vertex $w \in L_k(v) \cap \Lambda$, there is an associated pinning $\sigma^w_k$ on $L_k(v) \setminus \{w\}$. By \Cref{lemma:monotone-potential}, the pinning $\sigma^w_k$ satisfies the following condition.
\begin{lemma}\label{lemma:property-of-pinning-sigma^w_k}
Let $w \in L_k(v) \cap \Lambda$ be a vertex at level $k$ of the tree $T_k$. Let $u$ be the parent of $w$ in $T_k$, where $u$ is at level $k-1$.
The following two properties hold for the pinning $\sigma^w_k$.
\begin{itemize}
    \item For any sibling $w' \notin \Lambda$ of $w$, $\sigma^w_k(w') \in [0,\lambda)$.
    \item If $u$ has more than $D_2/3$ children in $\Lambda$ (i.e., $|\nbd{\Lambda}{T_k}(u)| > D_2 / 3$), then at least $\lfloor |N^{T_k}_\Lambda(u)|/\log n \rfloor$ siblings $w'$ of $w$ satisfy $\sigma^w_k(w') = 0$.
\end{itemize}
\end{lemma}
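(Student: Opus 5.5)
The plan is to read off both properties directly from the two conclusions of \Cref{lemma:monotone-potential}, after checking what restricting $\sigma^w$ to level $k$ of $T_k$ retains. By \Cref{def:ferromagnetic-two-spin-system-for-level-k}, the leaves of $T_k$ at level $k$ are exactly $L_k(v)$. Hence the siblings of $w$ in $T_k$ coincide with its siblings in the pruned SAW tree $T$: pruning only removes pinned vertices in $\Lambda \cap L_{<k}(v)$ and cycle-closing leaves, neither of which lies at level $k$ in $T$. So $\sigma^w_k$ on the siblings of $w$ is the same as $\sigma^w$ on those siblings.

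For the first property, take a sibling $w' \notin \Lambda$ of $w$. The second bullet of \Cref{lemma:monotone-potential} gives $\sigma^w(w') \in (0,\lambda) \subseteq [0,\lambda)$, and this value is carried over to $\sigma^w_k$.

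For the second property, the sibling bullet of \Cref{lemma:monotone-potential} gives $\sigma^w(w') = \rho^w(w')$ for every sibling $w' \in \Lambda$. By \eqref{eq:rho-w}, $\rho^w(w') = 0$ exactly when $\tau^w(w') = 1$, where $\tau^w \in \Omega_\Lambda$. The children of $u$ in $\Lambda$ are the same in $T$ and in $T_k$, since $u$ is at level $k-1$ and its $\Lambda$-children are at level $k$ and were not pruned. Therefore $N^{T_k}_\Lambda(u) = N^T_\Lambda(u)$.

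If $|N^{T_k}_\Lambda(u)| > D_2/3$, then \eqref{eq:good-boundary-configuration-for-saw-tree} applies to $u$ under $\tau^w$. It says that at least $|N^T_\Lambda(u)|/\log n + 1$ vertices $u' \in N^T_\Lambda(u)$ have $\tau^w(u') = 1$. At most one of these is $w$ itself, so at least $|N^T_\Lambda(u)|/\log n \ge \lfloor |N^{T_k}_\Lambda(u)|/\log n \rfloor$ siblings $w'$ have $\tau^w(w') = 1$, and hence $\sigma^w_k(w') = \rho^w(w') = 0$.

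The only delicate point is bookkeeping. One must confirm that the level-$k$ structure around $u$, namely its siblings and its $\Lambda$-neighbours, is unchanged by the truncation and pruning used to build $T_k$. One must also confirm that the slack of $+1$ in \Cref{def:good-boundary-configuration-for-saw-tree} exactly absorbs the possible exclusion of $w$. Beyond this, no analytic estimate is needed.
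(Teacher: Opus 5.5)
Your proposal is correct and follows essentially the same argument as the paper. Both properties are read off from the sibling clause of \Cref{lemma:monotone-potential}, using that $u$'s $\Lambda$-children are unchanged in $T_k$. The $+1$ slack in \Cref{def:good-boundary-configuration-for-saw-tree} then absorbs the possible loss of $w$ itself.
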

\begin{proof}
The first property follows directly from \Cref{lemma:monotone-potential}. For the second property, if $|\nbd{\Lambda}{T_k}(u)| > D_2 / 3$, then in the pinning $\rho^w$ from \Cref{lemma:monotone-potential}, at least $\lfloor |N^{T_k}_\Lambda(u)|/ \log n \rfloor+1$ children $w'$ of $u$ satisfy $\rho^w(w') = 0$. This is because $\rho^w$ is obtained from a good pinning $\tau^w \in \Omega_\Lambda$; see \eqref{eq:rho-w}. 
Note that in $T_k$ and $T_{\partial S_v}$, the children of $u$ in $\Lambda$ are the same.
By the definition of a good boundary pinning in \Cref{def:good-boundary-configuration-for-saw-tree}, at least $\lfloor |N^{T_k}_\Lambda(u)|/\log n \rfloor+1$ children of $u$ satisfy $\tau^w(w') = 1$, and thus $\rho^w(w') = 0$. Using \Cref{lemma:monotone-potential}, all siblings $w' \in \Lambda$ of $w$ satisfy $\sigma^w_k(w') = \rho^w(w')$. Hence, at least $\lfloor |N^{T_k}_\Lambda(u)|/\log n \rfloor$ siblings $w' \in \Lambda$ of $w$ satisfy $\sigma^w_k(w') = 0$.
\end{proof}

Recall that the influence we need to bound is 
\begin{align*}
    \text{Inf}(k) = \sum_{w \in L_k(v) \cap \Lambda} \abs{R_{v,k}^{\sigma^w_k \land w \gets \infty}-R_{v,k}^{\sigma^w_k \land w \gets 0}},
\end{align*}
where $R_{v,k}^{\cdot}$ is the ratio computed by tree recursion in $T_k$ rooted at $v$. 
Moreover, the spin system $\+S_k$ on the tree $T_k$ still satisfies \Cref{condition}. Indeed, $T_k$ is obtained from the pruned SAW tree by taking an induced subtree and absorbing some leaf pinnings into external fields via self-reducibility. Thus the edge parameters are inherited, and the external fields can only decrease. Hence the parameter assumptions in \Cref{condition} are preserved; the potential-function bounds and the SSM property hold with the same constants.
Similarly to~\eqref{eq:K-v-w-potential} and~\eqref{eq:K-u-w-potential}, we can define the potential-based influence $K_{v,k}^w$ and $K_{u,k}^w$, where we add a subscript $k$ to emphasise that the quantity is defined on the tree $T_k$.
We will use the general results \Cref{lemma:general-influence-decay-1,lemma:general-influence-decay-2} to bound the influence. However, those two lemmas bound the sum over all $w \in L_k(v)$ instead of the sum over $w \in L_k(v) \cap \Lambda$. We will use the following modified version of the two lemmas.

\begin{lemma}\label{lemma:general-influence-decay-1-modi}
    Let $u \in L_\ell(v)$ be a vertex at level $\ell$, where $0 \le \ell \leq k-2$. Let $u_1,u_2,\ldots,u_d$ be the children of $u$. Then
\begin{align*}
\sum_{w \in L_{k-\ell}(u) \cap \Lambda} K_{u,k}^w 
 \leq C_\text{trl} \lambda_u d \exp(-C_{\text{decay}}d)\max_{1 \leq i \leq d} \sum_{w \in L_{k-\ell-1}(u_i) \cap \Lambda} K_{u_i,k}^w,
\end{align*}
where $L_j(u)$ denotes the set of vertices at level $j$ in the subtree rooted at $u$.
\end{lemma}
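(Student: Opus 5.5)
The plan is to repeat the proof of \Cref{lemma:general-influence-decay-1} essentially verbatim, with the sums restricted to $\Lambda$. We work in the spin system $\+S_k$ on $T_k$, which satisfies \Cref{condition}.

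The first step is to partition the level. The children $u_1,\ldots,u_d$ of $u$ in $T_k$ induce the disjoint decomposition $L_{k-\ell}(u)\cap\Lambda=\bigcup_{i=1}^d \bigl(L_{k-\ell-1}(u_i)\cap\Lambda\bigr)$. Vertices of $\Lambda$ at level $k$ are still present in $T_k$, since only $\Lambda\cap L_{<k}(v)$ was pruned. So every $w$ in the sum lies in exactly one child subtree.

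The second step is a one-vertex estimate. Fix $w\in L_{k-\ell-1}(u_i)\cap\Lambda$ and use the pinning $\sigma^w_k$ on the leaves of $T_k$ other than $w$. Set
\[
z_j^w=R_{u_j,k}^{\sigma^w_k}\quad (j\neq i),\qquad z_i^{w,0}=R_{u_i,k}^{\sigma^w_k\land w\gets 0},\qquad z_i^{w,\infty}=R_{u_i,k}^{\sigma^w_k\land w\gets\infty}.
\]
Applying the mean value theorem to $\Phi\circ F_u\circ\Phi^{-1}$ in the $i$-th coordinate gives, exactly as in \eqref{eq:K-u-w-potential-pf},
\[
K_{u,k}^w\le \frac{\phi(F_u(\boldsymbol z^w))}{\phi(\tilde z_i^w)}\Bigl|\frac{\partial F_u}{\partial z_i}(\boldsymbol z^w)\Bigr|\,K_{u_i,k}^w ,
\]
where $\tilde z_i^w$ lies between $z_i^{w,0}$ and $z_i^{w,\infty}$.

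The third step is to apply \Cref{condition:trivial-bound-for-C}, and this needs every coordinate to lie in $(0,\lambda)$. Because $\ell\le k-2$, each child $u_j$ is at level at most $k-1$, so it is not a pinned leaf of $T_k$. Its ratio is therefore computed by the recursion: either it equals its field $\lambda_{u_j}<\lambda$, or it is $F_{u_j}(\cdot)\in[0,\lambda)$.

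This is the one point needing care. Pinnings to $0$ could in principle give a ratio of exactly $0$. However, the factors $\frac{\beta_e x+1}{x+\gamma_e}$ are strictly positive for $x\in[0,\infty]$, and $\lambda_{u_j}>0$ after self-reduction, so every recursively computed ratio is in $(0,\lambda)$. The same holds for $z_i^{w,0}$ and $z_i^{w,\infty}$, hence for $\tilde z_i^w$.

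\Cref{condition:trivial-bound-for-C} then bounds each coefficient by $C_{\text{trl}}\lambda_u\exp(-C_{\text{decay}}d)$. Summing over $w$ and over $i$ gives
\[
\sum_{w\in L_{k-\ell}(u)\cap\Lambda}K_{u,k}^w\le C_{\text{trl}}\lambda_u e^{-C_{\text{decay}}d}\sum_{i=1}^d\ \sum_{w\in L_{k-\ell-1}(u_i)\cap\Lambda}K_{u_i,k}^w .
\]
Bounding the sum over $i$ by $d$ times the maximum finishes the proof.

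I expect no real obstacle. The only subtlety is the range check in the third step, that the arguments lie in $(0,\lambda)$ even in the presence of the absorbed pinnings.
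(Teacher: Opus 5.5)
Your proposal is correct and follows the paper's approach. The paper proves this lemma by rerunning the proof of \Cref{lemma:general-influence-decay-1}, restricting the sums to $L_{k-\ell}(u)\cap\Lambda$: the mean value theorem applied to $\Phi\circ F_u\circ\Phi^{-1}$ in the $i$-th coordinate, followed by \Cref{condition:trivial-bound-for-C}, which is exactly your argument. Your check that all arguments lie in $(0,\lambda)$, because the children of $u$ are at level at most $k-1$ in $T_k$ where only level-$k$ leaves remain pinned, spells out what the paper asserts in one line from $\ell\le k-2$.
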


\begin{lemma}\label{lemma:general-influence-decay-2-modi}
There exist constants $\ell_0$ and $0< \delta < 1$ such that if $k > \ell_0$, then for any $0 \leq \ell \leq k - \ell_0$, for any vertex $u \in L_\ell(v)$ with children $u_1,\cdots,u_d$, it holds that 
\begin{align*}
  \sum_{w \in L_{k-\ell}(u) \cap \Lambda} K_{u,k}^{w} \leq (1 - \delta) \max_{1 \leq i \leq d} \sum_{w \in L_{k-\ell-1}(u_i) \cap \Lambda} K_{u_i,k}^w.
\end{align*}
\end{lemma}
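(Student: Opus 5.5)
The plan is to rerun the proof of \Cref{lemma:general-influence-decay-2} on $T_k$, with the outer sum restricted to $w \in L_k(v)\cap\Lambda$. Restricting the set of $w$ changes none of the per-$w$ estimates, because the mean-value step and the SSM-based comparison are carried out separately for each fixed $w$.

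\textbf{Step 1 (per-$w$ identity).} Fix $u\in L_\ell(v)$ with $\ell\le k-\ell_0$, a child $u_i$, and $w\in L_{k-\ell-1}(u_i)\cap\Lambda$. As in the proof of \Cref{lemma:general-influence-decay-1}, define $z_j^w=R_{u_j,k}^{\sigma_k^w}$ for $j\ne i$. Let $\tilde z_i^w$ be the intermediate point given by the mean value theorem between $R_{u_i,k}^{\sigma_k^w\land w\gets 0}$ and $R_{u_i,k}^{\sigma_k^w\land w\gets\infty}$. This gives \eqref{eq:K-u-w-potential-pf} verbatim:
\[
K_{u,k}^w\le C(\boldsymbol z^w)\,K_{u_i,k}^w .
\]
Since $\ell\le k-2$, every $z_j^w$ and $\tilde z_i^w$ is a ratio computed by recursion at a non-leaf vertex, so it lies in $(0,\lambda)$. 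Summing over $w\in L_{k-\ell}(u)\cap\Lambda=\biguplus_i \bigl(L_{k-\ell-1}(u_i)\cap\Lambda\bigr)$ yields the restricted analogue of \eqref{eq:sum-K-u}.

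\textbf{Step 2 (replace $\boldsymbol z^w$ by a common $\boldsymbol z$).} Take the reference pinning $\tau\equiv 0$ on all of $L_k(v)$ in $T_k$, and set $z_i=R^{\tau}_{u_i,k}$. The system $\mathcal S_k$ satisfies \Cref{condition}, as noted just before the statement. SSM, extended to ratio pinnings by monotonicity, then gives $\|\boldsymbol z^w-\boldsymbol z\|_\infty\le\eta=A\exp(-B(\ell_0-1))$ and $|F_u(\boldsymbol z^w)-F_u(\boldsymbol z)|\le\eta$. The disagreements all sit at level $k$, at distance at least $\ell_0-1$ from each $u_j$. Then \Cref{lemma:phi-ratio-bound} and the elementary ratio bounds give
\[
C(\boldsymbol z^w)/C(\boldsymbol z)\le(1+O(\eta))^{d+3}.
\]
Neither estimate depends on which $w$ is used, nor on whether $w\in\Lambda$.

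\textbf{Step 3 (conclude).} Factor out
\[
\sum_i C(\boldsymbol z)\le\min\{1-\alpha,\;C_{\text{trl}}\lambda_u d\,e^{-C_{\text{decay}}d}\},
\]
using \Cref{condition:potential-function-bound} and \Cref{condition:trivial-bound-for-C}. Bound each inner sum by the maximum over $i$ of $\sum_{w\in L_{k-\ell-1}(u_i)\cap\Lambda}K^w_{u_i,k}$. Finally, split into the cases $d>d_0$ and $d\le d_0$ exactly as before, choosing $\ell_0$ large enough that the total factor is at most $1-\alpha^2$; then take $\delta=\alpha^2$.

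The only point needing care, which I expect to be the main obstacle, is justifying that $\mathcal S_k$ meets \Cref{condition} with the same constants. The issue is that the pruning of $\Lambda\cap L_{<k}(v)$ by $\sigma^*$ may have drastically reduced some external fields $\lambda_u$. This is covered because \Cref{obs:self-reducibility} only decreases the fields and keeps the edge parameters fixed, so all three conditions and SSM hold uniformly. As a secondary check, the pinnings $\sigma^w_k$ may put values $0$ or $\infty$ on level-$k$ siblings. This is harmless, because only ratios at levels $\le k-1$ enter $C(\cdot)$.
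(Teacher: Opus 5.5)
Your proposal is correct and matches the paper's argument. The paper proves this lemma by rerunning the proof of \Cref{lemma:general-influence-decay-2} with the sums in \eqref{eq:sum-K-u} and \eqref{eq:sum-K-u-bound} restricted to $w \in \Lambda$. This works because the mean-value step and the SSM comparison with the common reference point $\boldsymbol{z}$ are done per $w$, and the factor $\sum_i C(\boldsymbol{z})\le\min\{1-\alpha,\,C_{\text{trl}}\lambda_u d\exp(-C_{\text{decay}}d)\}$ does not depend on $w$. Your justification that $\mathcal{S}_k$ still satisfies \Cref{condition} (edge parameters inherited, external fields only decreased by self-reduction) is the same one the paper gives.
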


\Cref{lemma:general-influence-decay-1-modi} and \Cref{lemma:general-influence-decay-2-modi} follows from almost the same proof as \Cref{lemma:general-influence-decay-1} and \Cref{lemma:general-influence-decay-2}, respectively. The only difference is that we sum over all $w \in L_{k-\ell}(u) \cap \Lambda$ instead of the sum over $w \in L_{k-\ell}(u)$ in~\eqref{eq:sum-K-u} and~\eqref{eq:sum-K-u-bound}, respectively.



\subsubsection{Proof of \Cref{lemma:influence-bound-level-1}}

Suppose $k \geq (\log \log n)^3$. We use \Cref{lemma:general-influence-decay-2-modi} $(k - \ell_0 + 1)$ times and then use \Cref{lemma:general-influence-decay-1-modi} $(\ell_0 - 2)$ times,
where $\ell_0$ is from \Cref{lemma:general-influence-decay-2-modi}. We arrive at a vertex $u$ at level $k-1$ with children $u_1,u_2,\ldots,u_d$ in $\Lambda$. Let the path be $v=v_0,v_1,\ldots,v_{k-1}=u$, and let $d_j$ be the number of children of $v_j$ in $T_k$. Then
\begin{align*}
    \sum_{w \in L_k(v) \cap \Lambda } K_{v,k}^w \leq (1-\delta)^{k-\ell_0+1} \cdot \prod_{j=k-\ell_0+1}^{k-2} \tp{C_{\text{trl}} \cdot \lambda_{v_j} d_j \exp(-C_{\text{decay}}d_j)} \cdot \sum_{i=1}^d K_{u,k}^{u_i}.
\end{align*} 
Note that $\lambda_{v_j}\leq \lambda$ and $d_j \exp(-C_{\text{decay}}d_j)=O(1)$ uniformly over all $j$, and $\delta = \delta(\beta,\gamma,\lambda)$ is a constant. Hence, we have
\begin{align}\label{eq:bound-on-influence-level-k-pf}
    \sum_{w \in L_k(v) \cap \Lambda} K_{v,k}^w \leq O_{\beta,\gamma,\lambda}(1) \cdot (1-\delta)^{k} \cdot \sum_{i=1}^d K_{u,k}^{u_i}.
\end{align}

We need the following lemma to bound the influence coming from the last level.
\begin{lemma}\label{lemma:bound-on-last-level-influence}
Let $u$ be a vertex at level $k-1$ with $\bar d$ children such that $u_1,u_2,\ldots,u_{d}$ are in $\Lambda$, where $\bar{d} \geq d$ because some children may not in $\Lambda$. Then
\begin{align*}
    \sum_{i=1}^{d} K_{u,k}^{u_i} \leq \begin{cases}
    \lambda C_{\max} \cdot (\log n)^3 & \text{if } \bar d < D_2 = (\log n)^3;\\
    \exp(-\bar d/(C_0 \log n)) \leq \exp(-d/(C_0 \log n)) & \text{if } \bar d \geq D_2 = (\log n)^3.
    \end{cases}
\end{align*}
where $C_{\max}$ is the constant in \Cref{lemma:dtv-bound-potential}, 
and $C_0 > 1$ is a sufficiently large constant depending on $\beta,\gamma,\lambda$.
\end{lemma}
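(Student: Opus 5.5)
Both cases reduce to bounding each $K_{u,k}^{u_i}$ directly from the explicit form of $F_u$, and then summing over the $d$ children in $\Lambda$. Throughout, $u$ is at level $k-1$ and its children are leaves of $T_k$. Each child $u_j\notin\Lambda$ receives a ratio $\sigma^{u_i}_k(u_j)\in[0,\lambda)$, and each child $u_j\in\Lambda$ receives a ratio in $\{0,\infty\}$. The two ratios $R_u^{\sigma^{u_i}_k\land u_i\gets\infty}$ and $R_u^{\sigma^{u_i}_k\land u_i\gets 0}$ both lie in $[0,\lambda)$, since $F_u$ maps into $[0,\lambda_u)$. Hence \Cref{lemma:dtv-bound-potential}, or directly the mean value theorem together with $\phi\le C_{\max}$, gives
\[
K_{u,k}^{u_i}\le C_{\max}\,\bigl|R_u^{\sigma^{u_i}_k\land u_i\gets\infty}-R_u^{\sigma^{u_i}_k\land u_i\gets 0}\bigr|.
\]
Only the factor of $u_i$ differs between the two ratios. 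It equals $\beta_{u,u_i}$ under $u_i\gets\infty$ and $1/\gamma_{u,u_i}$ under $u_i\gets 0$. Therefore the difference equals
\[
\lambda_u\Bigl(\beta_{u,u_i}-\tfrac{1}{\gamma_{u,u_i}}\Bigr)\prod_{j\ne i}\frac{\beta_{u,u_j}x_j+1}{x_j+\gamma_{u,u_j}}\le \lambda_u\prod_{j\ne i}\frac{\beta_{u,u_j}x_j+1}{x_j+\gamma_{u,u_j}},
\]
where $x_j$ is the pinned ratio of $u_j$.

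In the first case, $\bar d<D_2$, I bound each product factor by $1$ (this uses $\beta_e\le1<\gamma_e$) and use $\lambda_u<\lambda$. This gives $K_{u,k}^{u_i}\le\lambda C_{\max}$. Summing over the $d\le\bar d<(\log n)^3$ children in $\Lambda$ gives the bound $\lambda C_{\max}(\log n)^3$.

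The second case, $\bar d\ge D_2$, is the main obstacle: the $d$ terms must decay fast enough to beat the factor $d$. I split according to how the $\bar d$ children divide between $\Lambda$ and its complement.

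If at least $\bar d/2$ children lie outside $\Lambda$, each contributes a factor $\frac{\beta x_j+1}{x_j+\gamma}\le\frac{\lambda+1}{\lambda+\gamma}<1$, since $x_j\in[0,\lambda)$. This is exactly the estimate in \Cref{lemma:trivial-bound-for-C-ver}. It makes each term $\exp(-\Omega(\bar d))$, so the sum is at most $\bar d\exp(-\Omega(\bar d))$, which is far below $\exp(-\bar d/(C_0\log n))$ when $\bar d\ge(\log n)^3$.

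Otherwise at least $\bar d/2>D_2/3$ children lie in $\Lambda$. Then the second item of \Cref{lemma:property-of-pinning-sigma^w_k} applies: for each $u_i$, at least $\lfloor d/\log n\rfloor\ge \lfloor \bar d/(2\log n)\rfloor$ siblings of $u_i$ are pinned to ratio $0$. Each such sibling contributes a factor $1/\gamma_{u,u_j}\le 1/\gamma$. All other factors are at most $1$; in particular, siblings pinned to $\infty$ contribute $\beta_e\le1$. So each term is at most $C_{\max}\lambda\,\gamma^{-\lfloor\bar d/(2\log n)\rfloor}$. Summing over the $d\le\bar d$ children gives
\[
\bar d\, C_{\max}\lambda\,\gamma^{-\bar d/(3\log n)}\le\exp\bigl(-\bar d/(C_0\log n)\bigr).
\]
This last step needs $\log\bar d+O(1)\le \bar d\,(\tfrac{\log\gamma}{3}-\tfrac1{C_0})/\log n$. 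That holds when $C_0$ is large, because $\bar d/\log n\ge(\log n)^2$ dominates $\log\bar d$ for $\bar d\ge(\log n)^3$, using $x/\log x$ increasing and $n$ large.

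Finally, the stated inequality $\exp(-\bar d/(C_0\log n))\le\exp(-d/(C_0\log n))$ is immediate from $d\le\bar d$.
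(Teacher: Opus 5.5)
Your proposal is correct and follows essentially the same route as the paper. You bound each $K_{u,k}^{u_i}$ by $C_{\max}$ times the ratio difference and use the trivial bound $\lambda$ per child when $\bar d < D_2$. When $\bar d \ge D_2$, you split according to whether at least half of the children lie in $\Lambda$ (using the $\lfloor |N^{T_k}_\Lambda(u)|/\log n\rfloor$ siblings pinned to ratio $0$ from \Cref{lemma:property-of-pinning-sigma^w_k}) or outside $\Lambda$ (using the factor $\frac{\lambda+1}{\lambda+\gamma}$ per child). The remaining differences, such as your $\bar d/2$ versus the paper's $\bar d/3$ and bounding $\beta_{u,u_i}-1/\gamma_{u,u_i}$ by $1$ instead of $\beta$, are cosmetic.
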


Assuming \Cref{lemma:bound-on-last-level-influence} holds, the last-level influence is at most $O_{\beta,\gamma,\lambda}(1) \cdot (\log n)^3$. Combining with~\eqref{eq:bound-on-influence-level-k-pf}, 
\begin{align*}
  \sum_{w \in L_k(v) \cap \Lambda} K_{v,k}^w \leq O_{\beta,\gamma,\lambda}(1) \cdot (1-\delta)^{k} \cdot (\log n)^3.
\end{align*}
Combining the above bound with~\Cref{lemma:dtv-bound-potential} proves \Cref{lemma:influence-bound-level-1}.
We now prove \Cref{lemma:bound-on-last-level-influence}.

\begin{proof}[Proof of \Cref{lemma:bound-on-last-level-influence}]
Consider the two possible cases of the parameter $\bar d$. If $\bar d < D_2 = (\log n)^3$, then by the definition of the tree recursion, the influence 
\begin{align*}
\abs{R_{u,k}^{\sigma^{u_i}_k \land u_i \gets \infty} - R_{u,k}^{\sigma^{u_i}_k \land u_i \gets 0}} \leq \lambda,
\end{align*}
because the recursion function has the image space in $[0,\lambda)$. Note that 
\begin{align*}
K_{u,k}^{u_i} = \abs{\Phi(R_{u,k}^{\sigma^{u_i}_k \land u_i \gets \infty}) - \Phi(R_{u,k}^{\sigma^{u_i}_k \land u_i \gets 0})}.
\end{align*}
Using \Cref{lemma:dtv-bound-potential}, we have $K_{u,k}^{u_i} \leq C_{\max} \lambda$. Summing up all $u_i$ (at most $d \leq bar d < D_2$) gives the first bound.

Suppose $\bar d \geq D_2 = (\log n)^3$. 
Then either $u$ has at least $\bar d / 2 \geq D_2/2$ children in $\Lambda$ or at least $\bar d / 2 \geq D_2/2$ children not in $\Lambda$. Suppose we are in the first case.
By \Cref{lemma:property-of-pinning-sigma^w_k}, at least $\lfloor |N^{T_k}_\Lambda(u)|/\log n \rfloor \geq \bar d/(5 \log n)$ siblings $w$ of $u_i$ satisfy $\sigma^{u_i}_k(w) = 0$.
Note that $\frac{\beta x + 1}{x + \gamma} \leq 1$ for all $x \geq 0$. Hence,
\begin{align*}
    \abs{R_{u,k}^{\sigma^{u_i}_k \land u_i \gets \infty} - R_{u,k}^{\sigma^{u_i}_k \land u_i \gets 0}} \leq \lambda \tp{\frac{1}{\gamma}}^{\bar d/(5 \log n)} \cdot \beta \leq \exp \tp{-\frac{\bar d}{C_1 \log n}}, 
\end{align*}
for some constant $C_1 > 1$ large enough. Here, the first factor $\lambda$ comes from the external field $\lambda_u \leq \lambda$ of $u$; the second factor $(\frac{1}{\gamma})^{\bar d/(5 \log n)}$ comes from the siblings $w$ of $u_i$ with $\sigma^{u_i}_k(w) = 0$; and the last factor $\beta$ bounds $\left|\beta_{u,u_i} - \frac{1}{\gamma_{u,u_i}}\right|$, using $\beta_{u,u_i}\leq \beta$. 
For the second case, $u$ has at least $\bar d / 2 \geq D_2/2$ children not in $\Lambda$. By \Cref{lemma:property-of-pinning-sigma^w_k}, the pinning values at these children are at most $\lambda$. Then
\begin{align*}
    \abs{R_{u,k}^{\sigma^{u_i}_k \land u_i \gets \infty} - R_{u,k}^{\sigma^{u_i}_k \land u_i \gets 0}} \leq \lambda \tp{\frac{\beta \lambda + 1}{\lambda + \gamma}}^{\bar d/3} \cdot \beta \leq \exp \tp{-\frac{\bar d}{C_1}}, 
\end{align*}
where we use $\bar d/3$ to exclude the possibly distinguished child $u_i$, and the last inequality holds for some constant $C_1 > 1$ large enough. 
Finally, summing over all $u_i$ and using the bound in \Cref{lemma:dtv-bound-potential} gives 
\begin{align*}
    \sum_{i=1}^{d} K_{u,k}^{u_i} \leq C_{\max} \bar d \exp \tp{-\frac{\bar d}{C_1 \log n}} \leq \exp\tp{-\frac{\bar d}{C_0 \log n}},
\end{align*}
for some constant $C_0 > 1$ large enough.
\end{proof}

\subsubsection{Proof of \Cref{lemma:influence-bound-level-2}}
Let $\ell_1\defeq\max\{-1,k-\ell_0-1\}$,
where $\ell_0$ is from \Cref{lemma:general-influence-decay-2-modi}. 
By applying \Cref{lemma:general-influence-decay-1-modi} and \Cref{lemma:general-influence-decay-2-modi}, we go through a path from the root $v$ to a vertex $u$ at level $k-1$ with children $u_1,u_2,\ldots,u_d$ in $\Lambda$. Let 
the path be $v = v_0,v_1,\ldots,v_{k-1} = u$, and the number of children of $v_i$ is $d_i$, where $d_{k-1}\geq d$ because some children may not be in $\Lambda$. We have
\begin{align}\label{eq:influence-bound-level-2-pf}
    \begin{split}
    \sum_{w\in L_k(v) \cap \Lambda }K_{v,k}^{w}\leq &\prod_{i=0}^{\ell_1}\min\left \{\tp{C_{\text{trl}} \cdot \lambda_{v_i} d_i \exp(-C_{\text{decay}}d_i)},1-\delta\right \}\\
    \cdot&\prod_{i=\ell_1+1}^{k-2}\tp{C_{\text{trl}} \cdot \lambda_{v_i} d_i \exp(-C_{\text{decay}}d_i)} \cdot \sum_{i=1}^d K_{u,k}^{u_i}.
    \end{split}
\end{align} 
For any $0 \leq i \leq k-2$, we have $d_i=F_{T_{\partial S_v}}(v_i)$, where $F_{T_{\partial S_v}}$ is defined in \eqref{eq:F_T} and $T_{\partial S_v} = T_{\textnormal{SAW}}(G,v,\partial S_v)$. 
Recall the definition of $d_{v_j}^c$ in \Cref{def:ferromagnetic-two-spin-system-for-level-k}. We have $d_i \leq d_{v_i}^c$ for all $i$. 
If there exists $j\in [0,k-2]$ such that $d_{v_j}^c \geq D_2 = (\log n)^3$, then by \Cref{lemma:property-of-vertex-in-tree-T_k}, either $d_j \geq D_2/2$ or $\lambda_{v_j} \leq \lambda (1/\gamma)^{D_2/(5 \log n)}$. If $d_j \geq D_2/2$, then
similar to the proof of \Cref{lemma:bound-on-last-level-influence}, we have
\begin{align*}
    C_{\text{trl}} \cdot \lambda_{v_j} d_j \exp(-C_{\text{decay}}d_j)\leq \exp \left(-\frac{d_j}{C_2}+C_3\right),
\end{align*}
for sufficiently large constants $C_2,C_3>0$. 
Note that here we choose $C_2$ and $C_3$ large so that the estimate above holds for any integer $d_j\ge 1$,
although with sufficiently large $n$ we could absorb $C_3$ into $C_2$.
This is because a similar estimate will be used again later when we do not have the assumption that $d_j\ge D_2$.

Next we have
\begin{align}\label{eq:large-degree-decay-case1}
\begin{split}
    \sum_{w\in L_k(v) \cap \Lambda}K_{v,k}^{w}\leq &\exp\left(-\frac{d_j}{C_2}+C_3\right)\prod_{i=0,i\neq j}^{\ell_1}\min\left \{\tp{C_{\text{trl}} \cdot \lambda_{v_i} d_i \exp(-C_{\text{decay}}d_i)},1-\delta\right \}\\
    \cdot&\prod_{i=\ell_1+1,i\neq j}^{k-2}\tp{C_{\text{trl}} \cdot \lambda_{v_i} d_i \exp(-C_{\text{decay}}d_i)} \cdot \sum_{i=1}^d K_{u,k}^{u_i}\\
    \leq &\exp\left(-\frac{d_j}{C_2}+C_3\right) \cdot O_{\beta,\gamma,\lambda}(1) \cdot (\log n)^3 < \frac{1}{(\log n)^2},
\end{split}
\end{align} 
where the second inequality holds because every factor in the first product is at most $ 1 - \delta < 1$, the second product is no larger than $C^{\ell_0+1}$ for some constant $C>0$ and $\ell_0=O_{\beta,\gamma,\lambda}(1)$, and the term $\sum_{i=1}^d K_{u,k}^{u_i}$ is bounded by \Cref{lemma:bound-on-last-level-influence}.
The last inequality holds for large enough $n$ as $\exp\left(-\frac{d_j}{C_2}+C_3\right)\leq \exp\left(-\frac{(\log n)^3}{2C_2}+C_3\right)\leq \frac{1}{n}$. Since $\sum_{w\in L_k(v) \cap \Lambda}K_{v,k}^{w}$ is at most $\frac{1}{(\log n)^2}$, using \Cref{lemma:dtv-bound-potential}, the sum of the influence without potential function is at most $O(\frac{1}{(\log n)^2}) < \frac{1}{\log n}$.

If $\lambda_{v_j} \leq \lambda (1/\gamma)^{D_2/(5 \log n)}$, then
\begin{align*}
    C_{\text{trl}} \cdot \lambda_{v_j} d_j \exp(-C_{\text{decay}}d_j)&\leq C_{\text{trl}} \cdot \lambda (1/\gamma)^{D_2/(5 \log n)} \cdot d_j \exp(-C_{\text{decay}}d_j)\\
    &\leq \exp\left(-\frac{\log^2 n}{C_2}+C_3\right), 
\end{align*}
where we may choose $C_2$ and $C_3$ to be larger than previous values so that the second inequality holds. Similar to~\eqref{eq:large-degree-decay-case1}, we have
\begin{align*}
    \sum_{w\in L_k(v) \cap \Lambda}K_{v,k}^{w}\leq &\exp\left(-\frac{\log^2 n}{C_2}+C_3\right) \cdot O_{\beta,\gamma,\lambda}(1) \cdot (\log n)^3 < \frac{1}{(\log n)^2}.
\end{align*}

We finish the analysis for the case when there exists $j\in [0,k-2]$ such that $d_{v_j}^c \geq D_2$. 
If $d_{k-1}=d_{v_{k-1}}^c\geq D_2 = (\log n)^3$ (we do not prune pinnings at level $k$), then by \Cref{lemma:bound-on-last-level-influence}, we have $\sum_{i=1}^{d} K_{u,k}^{u_i} \leq \exp(-\frac{d_{k-1}}{C_0 \log n})$. Therefore, 
\begin{align*}
    \sum_{w\in L_k(v) \cap \Lambda}K_{v,k}^{w}\leq O_{\beta,\gamma,\lambda}(1)\sum_{i=1}^{d} K_{u,k}^{u_i} \leq O_{\beta,\gamma,\lambda}(1) \cdot \exp\left(-\frac{d_{k-1}}{C_0 \log n}\right) < \frac{1}{(\log n)^2},
\end{align*}
where the first inequality holds because every factor in the first product in~\eqref{eq:influence-bound-level-2-pf} is at most $ 1 - \delta < 1$, and the second product is no larger than $C^{\ell_0+1}$ for some constant $C>0$ and $\ell_0=O_{\beta,\gamma,\lambda}(1)$. The last inequality holds for sufficiently large $n$ because $\exp\left (-\frac{d_{k-1}}{C_0 \log n}\right )\leq \exp\left(-\frac{(\log n)^3}{C_0 \log n}\right) = \exp\left(-\frac{(\log n)^2}{C_0}\right) \leq \frac{1}{n}$. Again, using \Cref{lemma:dtv-bound-potential}, the sum of the influence without potential function is at most $O(\frac{1}{(\log n)^2}) < \frac{1}{\log n}$.

For the remaining case, we have $d_{v_i}^c < D_2$ for all $i \in [0,k-1]$. Hence, by \Cref{lemma:saw-tree-construction-nonsimple}, we have $\sum_{i=0}^{k-2}d_i\geq D_1$.
Let $C_4$ be a large enough constant such that $(1-\delta)^{C_4/2}\leq \exp(-5)$ and $C_5$ be a large enough constant such that $\exp(-C_5/2C_2)\leq \exp(-5)$. We set 
\begin{align}\label{eq:C_D-definition}
  C_D\defeq2C_2C_3C_4+C_5,
\end{align}
and recall that $D_1=C_D\log\log n$.
There are two subcases: 
\begin{enumerate}
    \item $|\{i\in [0,k-2]:d_i< 2C_2 C_3 \}|\geq C_4\cdot \log \log n$, we have $k\geq C_4\cdot \log \log n$ and
    \begin{align*}
        \sum_{w\in L_k(v) \cap \Lambda}K_{v,k}^{w}\leq &(1-\delta)^{k-\ell_0+1} \cdot O_{\beta,\gamma,\lambda}(1) \cdot (\log n)^3\\
        \leq &(1-\delta)^{C_4\cdot \log \log n/2} \cdot O_{\beta,\gamma,\lambda}(1) \cdot (\log n)^3\\
        \leq &\exp(-5 \cdot \log\log n) \cdot O_{\beta,\gamma,\lambda}(1) \cdot (\log n)^3 < \frac{1}{(\log n)^{1.5}};
    \end{align*}
    \item $|\{i\in [0,k-2]:d_i< 2C_2 C_3\}|< C_4\cdot \log \log n$, then 
    $\sum_{0\leq i\leq k-2:d_i\geq 2C_2C_3}d_i\geq D_1 - 2C_2 C_3C_4\cdot \log \log n=C_5\log \log n$. We have $\exp\left (-\frac{x}{C_2}+C_3\right )\leq \exp\left (-\frac{x}{2C_2}\right )$ for $x\geq 2C_2 C_3$. 
    Hence,
    \begin{align*}
        \sum_{w\in L_k(v) \cap \Lambda}K_{v,k}^{w}\leq &\prod_{i=0}^{\ell_1}\min\left \{\tp{C_{\text{trl}} \cdot\lambda_{v_i}d_i \exp(-C_{\text{decay}}d_i)},1-\delta\right \}\\
        \cdot&\prod_{i=\ell_1+1}^{k-2}\tp{C_{\text{trl}} \cdot \lambda_{v_i} d_i \exp(-C_{\text{decay}}d_i)} \cdot \sum_{i=1}^d K_{u,k}^{u_i}\\
        \leq &\prod_{0\leq i\leq k-2:d_i\geq 2C_2C_3}\exp\left (-\frac{d_i}{2C_2}\right ) \cdot O_{\beta,\gamma,\lambda}(1) \cdot (\log n)^3\\
        =&\exp\left(-\frac{\sum_{0\leq i\leq k-2:d_i\geq 2C_2C_3}d_i}{2C_2}\right)\cdot O_{\beta,\gamma,\lambda}(1) \cdot (\log n)^3\\
        \leq& \exp\left (-\frac{C_5\log \log n}{2C_2}\right )\cdot O_{\beta,\gamma,\lambda}(1) \cdot (\log n)^3 < \frac{1}{(\log n)^{1.5}}.
    \end{align*}
    where the last inequality holds because $\exp(-\frac{C_5\log \log n}{2C_2}) \leq \exp(-5\log \log n)$.
\end{enumerate}
We have shown that $\sum_{w\in L_k(v) \cap \Lambda}K_{v,k}^{w} < \frac{1}{(\log n)^{1.5}}$ for all cases. Using \Cref{lemma:dtv-bound-potential}, the sum of the influence without potential function is at most $O\left(\frac{1}{(\log n)^{1.5}}\right) < \frac{1}{\log n}$.

\subsection{Find the worst pinning}\label{subsec:proof-of-lemma:monotone-potential}

We now give the proof of \Cref{lemma:monotone-potential}.
First, we show the following property of the pinning $\sigma^*$ constructed in \Cref{def:universal-pinning}.
Recall that $L_{<k}(v) = \cup_{j < k} L_j(v)$ and $L_{\geq k}(v) = \cup_{j \geq k} L_j(v)$.

\begin{lemma}\label{lem:sigma-star-property}
Let $\sigma: \Lambda \to \{0,\infty\}$, where $\sigma \in \Omega_\Lambda$\footnote{By definition, $\Omega_\Lambda$ contains all pinnings $\sigma$ such that $\sigma$ fixes the value of each $w \in \Lambda$ to either $0$ or $1$, which is equivalent to fixing the ratio at each $w$ to either $\infty$ or $0$.}.
Let $w \in \Lambda$ and $c \in \{0,\infty\}$.
Let $k \geq 1$ be an integer. 
Define a pinning $\tau: \Lambda \to \{0,\infty\}$ such that
\begin{align*}
\forall u \in \Lambda \setminus \{w\}, \quad \tau(u) = \begin{cases}
\sigma^*(u) & \text{if } u \in \Lambda \cap L_{<k}(v),\\
\sigma(u) & \text{if } u \in \Lambda \cap L_{\geq k}(v).
\end{cases}
\end{align*}
For any non-leaf vertex $u$,
\begin{align*}
R^{\sigma \land w \gets c}_u \leq R^{\tau \land w \gets c}_u,
\end{align*}
where $\sigma \land w \gets c$ is the pinning obtained from $\sigma$ by overwriting the value of $w$ to $c$.
\end{lemma}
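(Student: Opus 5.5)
We prove the inequality by bottom-up induction on the tree, using that each recursion function $F_u$ is coordinatewise nondecreasing (since $\beta_e\gamma_e>1$). We first fix the pinnings themselves. The pinning $\tau$ differs from $\sigma$ only on $\Lambda\cap L_{<k}(v)\setminus\{w\}$, where it equals $\sigma^*$; the entry at $w$ is overwritten by $c$ in both. So for every non-leaf $u$ at level $\ge k-1$, the subtree below $u$ carries the same pinning under $\sigma\land w\gets c$ and $\tau\land w\gets c$ (its pinned leaves are all at level $\ge k$). Hence $R^{\sigma\land w\gets c}_u=R^{\tau\land w\gets c}_u$ for such $u$, which gives the base case.

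For the inductive step, take a non-leaf $u$ and assume the claim for all non-leaf children of $u$. By monotonicity of $F_u$, it suffices to show that the vector of children's ratios under $\tau\land w\gets c$ dominates, coordinatewise or in an equivalent aggregate sense, the vector under $\sigma\land w\gets c$. Non-leaf children are handled by the induction hypothesis, and leaf children not in $\Lambda$ have the fixed ratio $\lambda_x$ in both pinnings. This leaves the leaf children $x\in\Lambda$ of $u$, which matter only when $u\in L_{<k-1}(v)$, i.e.\ the children lie in $L_{<k}(v)$. We split into cases according to \Cref{def:universal-pinning}.

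If $|N^T_\Lambda(u)|\le D_2/3$, then $\tau$ sets every such child to $\infty$ (or to $c$ at $w$, identically in both pinnings). Since $\infty$ is the largest ratio, these children dominate pointwise.

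If $|N^T_\Lambda(u)|>D_2/3$, pointwise domination fails, because $\sigma^*$ places its $\lfloor|N^T_\Lambda(u)|/\log n\rfloor$ zeros on the children with the smallest $\beta_i\gamma_i$, while $\sigma$ may place zeros elsewhere. Here we use that the goodness of $\sigma$ (\Cref{def:good-boundary-configuration-for-saw-tree}) guarantees at least $|N^T_\Lambda(u)|/\log n+1$ zeros among these children. Even after $w$ is overwritten by $c$, at least $\lfloor|N^T_\Lambda(u)|/\log n\rfloor$ zeros remain among the children other than $w$. We use monotonicity to first reduce the zero set of $\sigma$ to exactly that many children, which only increases $R_u$. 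Then we compare products. A child pinned to $\infty$ contributes the factor $\beta_i$, and a child pinned to $0$ contributes $1/\gamma_i$, so $R_u$ carries the product $\prod_{\text{zeros}}\frac{1}{\gamma_i}\prod_{\text{infinities}}\beta_i=\prod_i\beta_i\cdot\prod_{\text{zeros}}\frac{1}{\beta_i\gamma_i}$. The first product does not depend on the pinning, so the total is maximised by putting the zeros on the children with the smallest $\beta_i\gamma_i$. This is exactly what $\sigma^*$ does, after excluding $w$ if $\sigma^*$ would put a zero there; this exclusion is a minor bookkeeping point that needs care. All other factors of $F_u$ agree or are dominated, so $R^{\sigma\land w\gets c}_u\le R^{\tau\land w\gets c}_u$.

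The main obstacle is this last case: the comparison is an exchange argument on a product rather than pointwise domination, and the count of zeros must survive both the floor and the overwrite at $w$. The ``$+1$'' slack in the definition of $\Omega_\Lambda$ should cover the overwrite.
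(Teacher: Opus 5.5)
Your proposal is correct and follows essentially the same route as the paper. Both use bottom-up induction with coordinatewise monotonicity of $F_u$, equality for $u\in L_{\geq k-1}(v)$, and pointwise domination when $|N^T_\Lambda(u)|\le D_2/3$. When $|N^T_\Lambda(u)|>D_2/3$, both combine the goodness count of zeros with the fact that $\sigma^*$ places its $\lfloor |N^T_\Lambda(u)|/\log n\rfloor$ zeros on the children with the smallest $\beta_i\gamma_i$; the paper phrases this as $\tau$ keeping the largest factors $\beta_i\gamma_i\ge 1$.

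The bookkeeping you flag resolves immediately. If $\sigma^*(w)=0$, then $\tau$ has only $\lfloor |N^T_\Lambda(u)|/\log n\rfloor-1$ zeros among the children other than $w$, placed on the smallest $\beta_i\gamma_i$. One more use of monotonicity shows this dominates your $\lfloor |N^T_\Lambda(u)|/\log n\rfloor$-zero optimum.
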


\begin{remark}
By \Cref{def:pinning-on-computation-tree}, the ratio
$R^{\sigma \land w \gets c}_u$ is computed via tree recursion given the initial value $\sigma \land w \gets c$ at leaves $\Lambda$. Note that $R^{\sigma \land w \gets c}_u = R^{\bar{\sigma}}_u$, where $\bar{\sigma}$ is the pinning obtained from $\sigma \land w \gets c$ by removing the pinning outside the subtree of $u$.
This is because the value computed at $u$ is independent of the pinning outside the subtree of $u$.
\end{remark}

\begin{proof}[Proof of \Cref{lem:sigma-star-property}]
We prove it by induction on $u$ from bottom to top. For the base case, all children of $u$ are leaf vertices. If $u \in L_{\geq k-1}(v)$, then for $\sigma$ and $\tau$, the pinning on the subtree of $u$ is the same, and hence $R^{\sigma \land w \gets c}_u = R^{\tau \land w \gets c}_u$. Suppose $u \in L_{<k-1}(v)$.
If $|\nbd{\Lambda}{T}(u)| \leq D_2 / 3$, then for all children $x \in \Lambda$ of $u$, we have $\sigma(x) \leq \tau(x) = \sigma^*(x) = \infty$, and $w$ has the same value in the two pinnings (if $w$ is a child of $u$). Since the tree recursion is monotone, we have $R^{\sigma \land w \gets c}_u \leq R^{\tau \land w \gets c}_u$. If $|\nbd{\Lambda}{T}(u)| > D_2 / 3$, let $w_1,w_2,\ldots,w_d \in \Lambda$ be the children of $u$ in the SAW tree $T$, where $d = |N_\Lambda^T(u)|$. Let $\gamma_i = \gamma_{u,w_i}$ and $\beta_i = \beta_{u,w_i}$. Suppose all $w_i$ are sorted in decreasing order of $\beta_i\gamma_i$ (breaking ties arbitrarily). Let $w'_1,w'_2,\ldots,w'_{d'} \notin \Lambda$ be the other children of $u$ in the SAW tree $T$. Let $\gamma'_i = \gamma_{u,w'_i}$ and $\beta'_i = \beta_{u,w'_i}$. Note that $w'_1,\ldots,w'_{d'}$ must be unpinned leaves. Using the tree recursion, we have 
\begin{align}
R^{\sigma \land w \gets c}_u &= \lambda_u \prod_{\substack{1\leq i \leq d:\\ w_i \neq w,\ \sigma(w_i) = 0}} \frac{1}{\gamma_i} \prod_{\substack{1 \leq i \leq d:\\ w_i \neq w,\ \sigma(w_i) = \infty}} \beta_i \prod_{1 \leq j \leq d'} \frac{\beta'_j \lambda_{w'_j} + 1}{\lambda_{w'_j} + \gamma'_j} \cdot W\notag\\
&= \lambda_u \tp{\prod_{1 \leq i \leq d:\ w_i \neq w} \frac{1}{\gamma_i}} \prod_{\substack{1 \leq i \leq d:\\ w_i \neq w,\ \sigma(w_i) = \infty}} \beta_i \gamma_i \prod_{1 \leq j \leq d'} \frac{\beta'_j \lambda_{w'_j} + 1}{\lambda_{w'_j} + \gamma'_j}\cdot W,
\label{eqn:R_u^sigma-wedge-w}
\end{align}
where $W = 1$ if $w$ is not a child of $u$ and $W = \frac{\beta_{u,w} c + 1}{\gamma_{u,w}  + c}$ if $w$ is a child of $u$.
Similarly,
\begin{align}
    R^{\tau \land w \gets c}_u = \lambda_u \tp{\prod_{1 \leq i \leq d:\ w_i \neq w} \frac{1}{\gamma_i}} \prod_{\substack{1 \leq i \leq d:\\ w_i \neq w,\ \sigma^*(w_i) = \infty}} \beta_i \gamma_i \prod_{1 \leq j \leq d'} \frac{\beta'_j \lambda_{w'_j} + 1}{\lambda_{w'_j} + \gamma'_j}\cdot W.
    \label{eqn:R_u^tau-wedge-w}
\end{align}

Let $\nbd{\Lambda}{T}(u)=\{w_1,\ldots,w_d\}$ be the set of children of $u$ that are in $\Lambda$.
If $w$ is a child of $u$, then $w$ appears in this list, but it is excluded from the products in~\eqref{eqn:R_u^sigma-wedge-w} and~\eqref{eqn:R_u^tau-wedge-w} and is represented separately by $W$.
By the definition of $\Omega_\Lambda$, at least $\lfloor |N^T_\Lambda(u)|/(\log n) \rfloor + 1$ children in $\nbd{\Lambda}{T}(u)$ have $\sigma(w_i) = 0$ (that is, the value of $w_i$ is pinned to $1$). Hence, at most $|N_\Lambda^T(u)| - \lfloor |N^T_\Lambda(u)|/(\log n) \rfloor - 1$ children in $\nbd{\Lambda}{T}(u)$ have $\sigma(w_i) = \infty$.
Let us consider two cases.
\begin{itemize}
  \item Case I: $w$ is not a child of $u$. Note that $\beta_i \gamma_i \geq 1$ for all $i$. By definition, $\sigma^*$ picks exactly $|N_\Lambda^T(u)| - \lfloor |N^T_\Lambda(u)|/(\log n) \rfloor$ children $w_i$ with the largest $\beta_i \gamma_i$ and sets $\sigma^*(w_i) = \infty$. By \eqref{eqn:R_u^sigma-wedge-w} and \eqref{eqn:R_u^tau-wedge-w}, $R^{\sigma \land w \gets c}_u \leq R^{\tau \land w \gets c}_u$.
    \item Case II: $w$ is a child of $u$. Note that $W$ is the same factor in both $R^{\sigma \land w \gets c}_u$ and $R^{\tau\land w \gets c}_u$ by \eqref{eqn:R_u^sigma-wedge-w} and \eqref{eqn:R_u^tau-wedge-w}. In $R^{\sigma \land w \gets c}_u$, at most $|N_\Lambda^T(u)| - \lfloor |N^T_\Lambda(u)|/(\log n) \rfloor - 1$ children among $\{w_1,w_2,\ldots,w_d\} \setminus \{w\}$ contribute a factor $\beta_i\gamma_i$ because the pinning on $w$ has been overwritten. In $R^{\tau \land w \gets c}_u$, we may set $\sigma^*(w) = \infty$, but we set $\sigma^*(w') = \infty$ for at least $|\nbd{\Lambda}{T}(u)| - \lfloor |N^T_\Lambda(u)|/(\log n) \rfloor$ children $w' \in \{w_1,w_2,\ldots,w_d\}$. At least $|\nbd{\Lambda}{T}(u)| - \lfloor |N^T_\Lambda(u)|/(\log n) \rfloor-1$ children among $\{w_1,w_2,\ldots,w_d\} \setminus \{w\}$ satisfy $\sigma^*(w_i) = \infty$. These children contribute the $|\nbd{\Lambda}{T}(u)| - \lfloor |N^T_\Lambda(u)|/(\log n) \rfloor-1$ largest factors $\beta_i \gamma_i$ among all children in $\{w_1,w_2,\ldots,w_d\} \setminus \{w\}$. Hence, $R^{\sigma \land w \gets c}_u \leq R^{\tau \land w \gets c}_u$.
\end{itemize}


For a general non-leaf vertex $u$, where $u$ may have non-leaf children $w'$ and children $w_i$ in the set $\Lambda$, the induction hypothesis gives $R^{\sigma \land w \gets c}_{w'} \leq R^{\tau \land w \gets c}_{w'}$ for every non-leaf child $w'$. For all children $w_i \in \Lambda$ of $u$, we can use the same analysis as in the base case. Since the recursion is monotone, it follows that $R^{\sigma \land w \gets c}_u \leq R^{\tau \land w \gets c}_u$.
\end{proof}

We next prove the following technical lemma.
\begin{lemma}\label{lemma:one-step-relation-preserve}
Let $\beta \leq 1 < \gamma$, $\beta\gamma > 1$ and $\lambda < \lambda_0(\beta,\gamma) \defeq \sqrt{\gamma/\beta}$.
Let $\lambda\geq x>y>0$ and $\lambda\geq x'>y'>0$, satisfying $x\geq x'$, $y\geq y'$, and $x/y\geq x'/y'$. Then 
\begin{align}\label{eq:one-step-relation-1}
\frac{\beta x + 1}{x + \gamma} \cdot \frac{y + \gamma}{\beta y + 1} \geq \frac{\beta x' + 1}{x' + \gamma} \cdot \frac{y' + \gamma}{\beta y' + 1}.
\end{align}
\end{lemma}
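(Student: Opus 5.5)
The plan is to pass to logarithmic coordinates, where \eqref{eq:one-step-relation-1} becomes a statement about increments of a single monotone convex function.

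\textbf{Setup.} Define
\begin{align*}
g(s) \defeq \log \frac{\beta e^s + 1}{e^s + \gamma}, \qquad s \in (-\infty, \log \lambda].
\end{align*}
Put $a = \log x$, $b = \log y$, $a' = \log x'$ and $b' = \log y'$. Taking logarithms, \eqref{eq:one-step-relation-1} is equivalent to
\begin{align*}
g(a) - g(b) \geq g(a') - g(b').
\end{align*}
The hypotheses translate to $a > b$, $a' > b'$, $b \geq b'$, $a - b \geq a' - b'$, and all four points lie in $(-\infty, \log\lambda]$.

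\textbf{Monotonicity and convexity of $g$.} Writing $x = e^s$, a direct differentiation gives
\begin{align*}
g'(s) = x \cdot \frac{\beta x + 1 - \beta(x+\gamma)}{(\beta x+1)(x+\gamma)} \cdot \frac{-1}{1}\cdot(-1) = \frac{(\beta\gamma - 1)\,x}{(\beta x + 1)(x + \gamma)} = \frac{\beta\gamma - 1}{\beta x + (1 + \beta\gamma) + \gamma/x}.
\end{align*}
Since $\beta\gamma > 1$, we have $g' > 0$, so $g$ is increasing.

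The denominator $\beta x + (1+\beta\gamma) + \gamma/x$ is strictly decreasing in $x$ on $(0, \sqrt{\gamma/\beta})$, since by AM--GM it is minimized exactly at $x = \sqrt{\gamma/\beta}$. Hence $g'$ is increasing in $s$ on $(-\infty, \log\lambda]$, because $\lambda < \lambda_0 = \sqrt{\gamma/\beta}$. Therefore $g$ is convex on this range. This is precisely where the hypothesis $\lambda < \lambda_0$ enters.

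\textbf{Comparing the two increments.} Let $L' = a' - b'$. I would chain two inequalities.

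First, $b + L' \leq b + (a - b) = a$, and $g$ is increasing, so
\begin{align*}
g(a) - g(b) \geq g(b + L') - g(b).
\end{align*}

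Second, for a convex function, increments over intervals of fixed length $L'$ are nondecreasing in the left endpoint. Since $b \geq b'$ and both $[b', b'+L']$ and $[b, b+L']$ lie in $(-\infty, \log\lambda]$ (because $b + L' \leq a \leq \log \lambda$), this gives
\begin{align*}
g(b + L') - g(b) \geq g(b' + L') - g(b') = g(a') - g(b').
\end{align*}
Combining the two inequalities yields the claim. Note that the hypothesis $x \geq x'$ is not needed in this argument.

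\textbf{Main obstacle.} The only nontrivial point is the convexity of $g$, that is, showing that $x/\bigl((\beta x+1)(x+\gamma)\bigr)$ is increasing for $x < \sqrt{\gamma/\beta}$. The AM--GM observation on the reciprocal above settles it. One must also check that all evaluation points, in particular $b + L'$, stay inside the convexity range, which holds since $b + L' \leq a \leq \log\lambda$.
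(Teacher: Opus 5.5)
Your proof is correct, but it is organized differently from the paper's.

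The paper stays in the original coordinates. It subtracts $1$ from both sides to compare $\frac{(\beta\gamma-1)(x-y)}{(x+\gamma)(\beta y+1)}$ with its primed analogue. It notes that the right-hand side is decreasing in $y'$, so one may lower $y'$ to $x'y/x$, making the two ratios equal. Writing $x'=cx$, $y'=cy$ with $c=x'/x\le 1$, everything reduces to the factorization $(1-c)(\gamma-c\beta xy)\ge 0$. There $\lambda<\lambda_0$ enters only through $\beta x'y\le\beta\lambda^2<\gamma$.

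In your logarithmic coordinates the two arguments share a skeleton: equalize the interval lengths, then compare translates. The paper equalizes by stretching the primed interval $[\log y',\log x']$ leftward with its right endpoint fixed (using $x\ge x'$), and checks the translate comparison by that one-line algebra. You instead truncate the unprimed interval with its left endpoint fixed (using $y\ge y'$), and get the translate comparison from convexity of $g$.

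Your route explains the threshold conceptually. $\lambda_0=\sqrt{\gamma/\beta}$ is exactly where $g'(s)=(\beta\gamma-1)/(\beta x+1+\beta\gamma+\gamma/x)$ peaks, i.e.\ the inflection point of $g$.

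The paper's route is shorter and needs only the product bound $\beta x'y<\gamma$ rather than convexity on the whole range. It also uses a slightly weaker pair of hypotheses: only $x\ge x'$ and $x/y\ge x'/y'$. Your pair $y\ge y'$, $x/y\ge x'/y'$ already forces $x=y\cdot(x/y)\ge y'\cdot(x'/y')=x'$. So your remark that $x\ge x'$ is unused gains no generality; in \Cref{lemma:monotone-potential} all three hypotheses hold anyway.

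Two small points:
\begin{enumerate}
\item The intermediate expression in your computation of $g'$ has a sign slip: as written it evaluates to $(1-\beta\gamma)x/((\beta x+1)(x+\gamma))$. Your final formula is nonetheless correct.
\item The monotonicity of $\beta x+\gamma/x$ on $(0,\sqrt{\gamma/\beta})$ is better justified by its derivative $\beta-\gamma/x^2<0$ than by AM--GM, which only locates the minimizer.
\end{enumerate}
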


\begin{proof}
Subtracting 1 from both the left and right sides of~\eqref{eq:one-step-relation-1}, we only need to show that
\begin{align}\label{eq:one-step-relation-2}
\frac{(\beta\gamma-1)(x-y)}{(x + \gamma)(\beta y + 1)} \geq \frac{(\beta\gamma-1)(x'-y')}{(x' + \gamma)(\beta y' + 1)}.
\end{align}
It is easy to see that the right-hand side is monotone decreasing in $y'$. We only need to consider the case $y'=x'y/x$. In this case, we can set $1\ge c=x'/x=y'/y$.
Then~\eqref{eq:one-step-relation-2} is equivalent to
\begin{align*}
\frac{1}{(x + \gamma)(\beta y + 1)} \geq \frac{c}{(cx + \gamma)(\beta cy + 1)},
\end{align*}
which, in turn, is equivalent to $ (1-c)(\gamma-c\beta xy)\geq 0$.
The last inequality holds because $\gamma-c\beta xy \geq \gamma - \beta \lambda^2 > 0$.
\end{proof}

Now, we are ready to prove \Cref{lemma:monotone-potential}.
\begin{proof}[Proof of \Cref{lemma:monotone-potential}]

We first consider the following definition of the pinning $\sigma^w$ on $\Lambda \setminus \{w\}$:
\begin{align}\label{eq:sigma-w-definition}
\forall u \in \Lambda \setminus \{w\}, \quad \sigma^w(u) = \begin{cases}
\sigma^*(u) & \text{if } u \in \Lambda \cap L_{<k}(v),\\
\rho^w(u) & \text{if } u \in \Lambda \cap (L_{\geq k}(v) \setminus \{w\}).
\end{cases}
\end{align}
We first show that~\eqref{eq:lemma:monotone-inequality} holds for this pinning $\sigma^w$.
Note that the pinning $\sigma^w$ in the lemma is a pinning on the subset $(L_k(v) \setminus \{w\}) \cup (\Lambda \cap L_{<k}(v))$. After proving~\eqref{eq:lemma:monotone-inequality}, we explain how to modify $\sigma^w$ so that it satisfies the condition in the lemma.

Let the path from $w$ to $v$ in the SAW tree $T$ be $w = u_0,u_1,\ldots,u_{k-1},u_k = v$. 
By the monotonicity of the recursion function, for all $1\leq j \leq k$, we have 
\begin{align}\label{eq:pf-mono}
&x_j\defeq R_{u_j}^{\sigma^w \land w \gets \infty} > y_j \defeq R_{u_j}^{\sigma^w \land w \gets 0},\quad x'_j \defeq R_{u_j}^{\rho^w \land w \gets \infty} > y'_j \defeq R_{u_j}^{\rho^w \land w \gets 0}.
\end{align}
By definition, $x_j = R_{u_j}^{\sigma^w_j \land w \gets \infty}$, where $\sigma^w_j$ is the pinning $\sigma^w$ projected on vertices in $\cup_{\ell \geq k -j + 1}L_\ell(v)$. This is because when computing the tree recursion for $u_j$, we only need to use all pinnings at the subtree rooted at $u_j$. Note that the vertex $u_j$ is in $L_{k-j}(v)$. Hence, the value of $x_j$ depends only on $\sigma^w_j$. 
Similar results apply to $y_j,x'_j,y'_j$.
By applying \Cref{lem:sigma-star-property} to $u_1,\cdots,u_k$, we have 
\begin{align*}
\forall 1 \leq j \leq k, \quad x_j \geq x'_j \text{ and } y_j \geq y'_j.
\end{align*}
We claim that
\begin{align}\label{eq:pf-ratio-inequality}
\forall 1 \leq j \leq k, \quad \frac{x_j}{y_j}\geq \frac{x'_j}{y'_j}.
\end{align}
We prove inequality~\eqref{eq:pf-ratio-inequality} by induction on $j$.
For $j=1$, note that $x_1,y_1$ depend only on $\sigma^w$ projected on vertices in $L_{\geq k}(v)$ (denoted by $\sigma^w_1$), and $y_1,y_1'$ depend only on $\rho^w$ projected on vertices in $L_{\geq k}(v)$ (denoted by $\rho^w_1$). By \eqref{eq:sigma-w-definition}, $\sigma^w_1 = \rho^w_1$. Hence, $x_1 = x'_1$ and $y_1 = y'_1$, so the claim holds. 
Now fix $1< j\leq k$ and assume the claim holds for $j-1$.
Note that $x_j,y_j,x'_j,y'_j$ can all be computed by tree recursion. Let $\beta_j$ and $\gamma_j$ be the parameters on the edge $\{u_j,u_{j-1}\}$. By comparing the tree recursion for $x_j$ and $y_j$, we have 
\begin{align*}
    \frac{x_j}{y_j} &= \frac{\beta_j x_{j-1} + 1}{x_{j-1} + \gamma_j} \cdot \frac{y_{j-1} + \gamma_j}{\beta_j y_{j-1} + 1}.
\end{align*}
Similarly, we can write
\begin{align*}
    \frac{x'_j}{y'_j} &= \frac{\beta_j x'_{j-1} + 1}{x'_{j-1} + \gamma_j} \cdot \frac{y'_{j-1} + \gamma_j}{\beta_j y'_{j-1} + 1}.
\end{align*}
By the definition of the recursion function, all $x_{j-1},y_{j-1},x'_{j-1},y'_{j-1}\leq \lambda$.
Note $x_{j-1} > y_{j-1}$, $x'_{j-1} > y'_{j-1}$, $x_{j-1} \geq x'_{j-1}$, and $y_{j-1} \geq y'_{j-1}$. By induction hypothesis, $\frac{x_{j-1}}{y_{j-1}} \geq \frac{x'_{j-1}}{y'_{j-1}}$.
Note that the hypothesis of \Cref{lemma:one-step-relation-preserve} is verified
with $(\beta_j,\gamma_j,\lambda)$ in place of $(\beta,\gamma,\lambda)$:
indeed $\beta_j\leq\beta\leq 1<\gamma\leq\gamma_j$, $\beta_j\gamma_j>1$, and
$\lambda<\sqrt{\gamma/\beta}\leq\sqrt{\gamma_j/\beta_j}$.
Using \Cref{lemma:one-step-relation-preserve},
we have
\begin{align*}
\frac{x_j}{y_j} &= \frac{\beta_j x_{j-1} + 1}{x_{j-1} + \gamma_j} \cdot \frac{y_{j-1} + \gamma_j}{\beta_j y_{j-1} + 1}\geq \frac{\beta_j x'_{j-1} + 1}{x'_{j-1} + \gamma_j} \cdot \frac{y'_{j-1} + \gamma_j}{\beta_j y'_{j-1} + 1} = \frac{x'_j}{y'_j}.
\end{align*}

Finally, we have $\frac{R_v^{\sigma^w \land w \gets \infty}}{R_v^{\sigma^w \land w \gets 0}} = \frac{x_k}{y_k} \geq \frac{x'_k}{y'_k} = \frac{R_v^{\rho^w \land w \gets \infty}}{R_v^{\rho^w \land w \gets 0}}$. 
We can compute that $$|x_k-y_k|=y_k\left | \frac{x_k}{y_k}-1 \right |\geq y'_k\left | \frac{x'_k}{y'_k}-1 \right |=|x'_k-y'_k|,$$
where the inequality holds because $y_k \geq y'_k$, $\frac{x_k}{y_k},\frac{x'_k}{y'_k} \geq 1$, and $\frac{x_k}{y_k} \geq \frac{x'_k}{y'_k}$.

To obtain the pinning $\sigma^w$ in the lemma, we compute the tree recursion from the bottom up to the level $k$ conditional on $\sigma^w$, except for vertex $w$ (note that $w \in \Lambda$ is a leaf at level $k$). After the computation, every vertex $u \in L_k(v) \setminus \{w\}$ gets a ratio. We set this value as the pinning value of $\sigma^w(u)$ and remove all the pinnings below the level $k$. 
Therefore, we get a pinning $\sigma^w$ defined on the subset $(L_k(v) \setminus \{w\}) \cup (\Lambda \cap L_{<k}(v))$. By definition, for all $u \in \Lambda \cap L_{<k}(v)$, we have $\sigma^w(u) = \sigma^*(u)$. For all siblings $u \in \Lambda$ of the vertex $w$, note that $u$ is in the level $k$ and $u$ must be a leaf node because $u \in \Lambda$. When computing the tree recursion for $u$, we simply let $u$ take the pinning value $\rho^w(u)$. 
For all siblings $u \not\in \Lambda$ of the vertex $w$, their values are not fixed by $\rho^w$, 
\begin{itemize}
  \item if $u$ is a leaf, then the ratio value at $u$ is $\lambda_u < \lambda$ (note that $u$ cannot be a cycle-closing vertex because we have pruned all cycle-closing vertices when constructing the tree $T$);
    \item if $u$ is not a leaf, then the ratio value at $u$ is computed by tree recursion. The range of the tree recursion function implies that $\sigma^w(u) \in (0,\lambda)$.
\end{itemize}
In both cases, we have $\sigma^w(u) \in (0,\lambda)$.
This verifies the two properties of $\sigma^w$ in the lemma.
\end{proof}

\section{Proof of main results}\label{sec:proof-of-main-results}

In this section we show the main theorems, namely \Cref{thm:alternating-scan-mixing-ferromagnetic-two-spin-system}, \Cref{thm:glauber-mixing-2}, and \Cref{thm:glauber-mixing-1}.
Note that \Cref{thm:alternating-scan-mixing-BM} is implied by \Cref{thm:alternating-scan-mixing-ferromagnetic-two-spin-system}.
We first show the slightly easier \Cref{thm:glauber-mixing-2} in \Cref{sec:glauber-mixing-2}.
Then, in \Cref{sec:alternating-scan-proof}, we show \Cref{thm:alternating-scan-mixing-ferromagnetic-two-spin-system} via a similar approach.
We conclude by proving \Cref{thm:glauber-mixing-1} in \Cref{sec:glauber-mixing-1}.

\subsection{Mixing of Glauber dynamics when \texorpdfstring{$\lambda < \lambda_0$}{lambda < lambda0}}\label{sec:glauber-mixing-2}
\Cref{thm:glauber-mixing-2} is proved by applying \Cref{thm:mixing}.
Recall that Glauber dynamics is a special case of the heat-bath block dynamics in  \Cref{thm:mixing}, where each block is a single vertex.
We verify the conditions in \Cref{def:correlation-decay} and~\eqref{eq:burn-in} in \Cref{thm:mixing} for a $(\beta,\gamma,\lambda)$-ferromagnetic two-spin system on a graph $G$ with $\beta \leq 1 < \gamma$, $\beta\gamma > 1$, and $\lambda < \lambda_0 \defeq \sqrt{\gamma/\beta}$.
The definition of good boundary conditions is given in \Cref{def:good-boundary-configuration}. We first show the following lemma.

\begin{lemma}\label{lemma:close-under-shortest-path}
  For any $v \in V$, any $S_v\ni v$, and any $\sigma,\tau \in \Omega_{\partial S_v}$, where $\Omega_{\partial S_v}$ is defined in \Cref{def:good-boundary-configuration}, there exists a path $\eta_0,\eta_1,\ldots,\eta_t \in \Omega_{\partial S_v}$ such that $\eta_0 = \sigma$, $\eta_t = \tau$, and for any $0 \leq i < t$, $\eta_i$ and $\eta_{i+1}$ differ at exactly one vertex, where
$t = |\{u \in \partial S_v : \sigma(u) \neq \tau(u)\}|$
is the Hamming distance between $\sigma$ and $\tau$.
\end{lemma}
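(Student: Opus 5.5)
The plan is to build the path greedily in two phases, exploiting that $\Omega_{\partial S_v}$ is an up-set in the partial order $\preceq$ of \eqref{eq:partial-order-on-configurations}. Look at \Cref{def:good-boundary-configuration}: every constraint has the form ``at least a certain number of neighbours in $\nbd{\partial S_v}{G}(u)$ are assigned $1$''. Each such constraint is monotone increasing in $\sigma$. So if $\sigma \in \Omega_{\partial S_v}$ and $\sigma \preceq \sigma'$, then $\sigma' \in \Omega_{\partial S_v}$.

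Let $A = \{u \in \partial S_v : \sigma(u) = 0, \tau(u) = 1\}$ and $B = \{u \in \partial S_v : \sigma(u) = 1, \tau(u) = 0\}$. Then $t = |A| + |B|$. Define the intermediate configuration $\sigma \vee \tau$, given by $(\sigma\vee\tau)(u) = \max\{\sigma(u),\tau(u)\}$.

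\textbf{Phase 1.} Starting from $\eta_0 = \sigma$, flip the vertices of $A$ from $0$ to $1$ one at a time, in any order. Each intermediate configuration dominates $\sigma$, so it lies in $\Omega_{\partial S_v}$ by the up-set property. After $|A|$ steps we reach $\sigma \vee \tau$.

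\textbf{Phase 2.} Flip the vertices of $B$ from $1$ to $0$ one at a time, in any order, ending at $\tau$. Every configuration in this phase agrees with $\tau$ off $B$ and has value $1$ on the not-yet-flipped part of $B$, where $\tau$ is $0$. Hence each one dominates $\tau$, and since $\tau \in \Omega_{\partial S_v}$, it is good by the up-set property.

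Consecutive configurations differ at exactly one vertex. Each vertex of the symmetric difference is flipped exactly once, so the path has length exactly $t$.

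The only step needing care is the monotonicity claim. It holds because the thresholds $|\nbd{\partial S_v}{G}(u)|/(\log n)+2$, and the set of $u$ to which a constraint applies, depend only on the graph and not on the configuration. I do not expect a real obstacle here.
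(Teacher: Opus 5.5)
Your proof is correct and is essentially the paper's argument: first flip the vertices with $\sigma=0,\tau=1$ up to $1$, then flip those with $\sigma=1,\tau=0$ down to $0$. This makes every intermediate configuration dominate $\sigma$ or $\tau$, and the paper phrases this as each count of $1$-neighbours being at least the minimum of the corresponding counts under $\sigma$ and $\tau$, which is your up-set observation applied constraint by constraint.
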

\begin{proof}
To move from $\sigma$ to $\tau$, define the following two sets of vertices:
\begin{align*}
    S_1 &= \{u \in \partial S_v: \sigma(u) = 0, \tau(u) = 1\},\\
    S_2 &= \{u \in \partial S_v: \sigma(u) = 1, \tau(u) = 0\}.
\end{align*}
Starting from $\sigma$, we first change all $v \in S_1$ from the value 0 to the value 1, and then change all $v \in S_2$ from the value 1 to the value 0. For any $\eta_i$ in the path, it is straightforward to see that for any $u \in S_v$ with $|\nbd{\partial S_v}{G}(u)|>D_2/3$, it satisfies
\begin{align*}
    |\{w\in \nbd{\partial S_v}{G}(u): \eta_i(w) = 1\}| &\geq  \min\{|\{w\in \nbd{\partial S_v}{G}(u): \sigma(w) = 1\}| , |\{w\in \nbd{\partial S_v}{G}(u): \tau(w) = 1\}|  \} \\
    &\geq |\nbd{\partial S_v}{G}(u)|/(\log n) + 2.
\end{align*}  
Hence, $\eta_i$ is a good boundary configuration. The length of the path is  $|S_1| + |S_2| = t$.
\end{proof}

\Cref{lemma:close-under-shortest-path} proves the first property of \Cref{def:correlation-decay}. The second property of \Cref{def:correlation-decay} is proved by \Cref{lemma:assm-with-good-boundary}. 
We next verify the condition~\eqref{eq:burn-in} in \Cref{thm:mixing}.
Consider the monotone coupling $(X_t^+,X_t^-)_{t \geq 0}$ of the Glauber dynamics in \Cref{def:monotone-coupling}. 
We show that there exists
\[
T_{\textnormal{burn-in}} = O(n \log n)
\]
such that for any $t \geq T_{\textnormal{burn-in}}$ and any $v \in V$, it holds that
\begin{align}\label{eq:burn-in-pf}
     \Pr[X^+_t(\partial S_v) \notin \Omega_{\partial S_v} \lor X^-_t(\partial S_v) \notin \Omega_{\partial S_v}] \leq \frac{1}{n^3}.
\end{align}
Fix any time $t \geq T_{\textnormal{burn-in}}$. If $T_{\textnormal{burn-in}}$ is a sufficiently large multiple of $n \log n$, then with probability at least $1 - \frac{1}{n^{10}}$, each vertex $u \in V$ has been updated at least once during the time interval $[t-T_{\textnormal{burn-in}},t]$. 
For each vertex $u \in \partial S_v$, consider the last time in the interval $[t-T_{\textnormal{burn-in}},t]$ at which $u$ is updated, and denote this time by $t_u$. 
For every edge $e \in E$, we have $\beta_e \leq 1$ and $\gamma_e \geq 1$. Hence, whenever $u$ is updated, the conditional probability that it is set to $1$ is at least $\frac{1}{1+\lambda_u} \geq \frac{1}{1+\lambda} = \Omega(1)$. Consider a vertex $w \in S_v$ with $d > D_2 / 3 = (\log n)^3 / 3$ neighbors in $\partial S_v$. Since a good boundary configuration requires at least $d/\log n+2$ neighbors of $w$ in state $1$, a Chernoff bound shows that, with probability at least $1 - \frac{1}{n^{10}}$, at least $d/\log n + 2$ neighbors $u$ of $w$ are set to $1$ at their respective times $t_u$. Taking a union bound over the two chains $X^+_t$ and $X^-_t$, and over all relevant vertices $w \in S_v$, yields \eqref{eq:burn-in-pf}.

Finally, we claim the local mixing time for censored Glauber dynamics on $\mu^\sigma_{S_v}$ is 
\begin{align}\label{eq:local-mixing-time-bound-pf}
    T_{\textnormal{local}} = n \cdot (\log n)^{C''},
\end{align}
where $C'' = C''(\beta,\gamma,\lambda) > 0$ is a constant depending on $\beta,\gamma,\lambda$.
Assume the above local mixing time bound holds.
Let $t_{\textnormal{mix}}^{\textnormal{Glauber}}$ denote the mixing time of Glauber dynamics.
By \Cref{thm:mixing}, we have
\begin{align*}
t_{\textnormal{mix}}^{\textnormal{Glauber}}\left(\frac{1}{4e}\right) &= O\left(T_{\textnormal{burn-in}} + T_{\textnormal{local}} \cdot \max_{v \in V}\log |R_v|\cdot \log n\right), \quad \text{where } |R_v| = |S_v \cup \partial S_v| \leq n\\
&\leq n \cdot (\log n)^{C(\beta,\gamma,\lambda)}.
\end{align*}
Then, \Cref{thm:glauber-mixing-2} follows from the standard decay in $\epsilon$ for mixing times, namely $t_{\textnormal{mix}}^{\textnormal{Glauber}}(\epsilon) \leq t_{\textnormal{mix}}^{\textnormal{Glauber}}(\frac{1}{4e})\log \frac{1}{\epsilon}$.

We use the following result to show the local mixing bound in \eqref{eq:local-mixing-time-bound-pf}.

\begin{theorem}\label{thm:local-mixing-bound}
    Let $\beta,\gamma,\lambda > 0$ be three constants such that $\beta \leq 1 < \gamma$, $\beta\gamma > 1$, and $\lambda < \lambda_c \defeq (\gamma/\beta)^{\frac{\sqrt{\beta \gamma}}{\sqrt{\beta \gamma}-1}} $. 
    For any $(\beta,\gamma,\lambda)$-ferromagnetic two-spin system with vertex set $V$, the spectral gap of the Glauber dynamics on the Gibbs distribution $\mu$ is at least $\frac{1}{|V|^{C}}$, where $C = C(\beta,\gamma,\lambda) > 0$ is a constant depending on $\beta,\gamma,\lambda$.
\end{theorem}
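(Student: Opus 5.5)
This theorem follows from the spectral independence framework of Anari, Liu, and Oveis Gharan \cite{ALO24}, combined with the all-to-one influence bound already established in \Cref{thm:all-to-one-influence}.

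\textbf{Step 1: spectral independence under every pinning.} The ALO local-to-global theorem asks for a uniform bound on the spectral radius of the influence matrix of every conditional distribution $\mu^\sigma_\Lambda$, for all pinnings $\sigma$ on subsets. For any such pinning, \Cref{obs:self-reducibility} shows that $\mu^\sigma_\Lambda$ is again a $(\beta,\gamma,\lambda)$-ferromagnetic two-spin system on $G[\Lambda]$. Hence \Cref{thm:all-to-one-influence} applies to it with the same constant $C_{\textnormal{inf}}=C_{\textnormal{inf}}(\beta,\gamma,\lambda)$.

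The spectral radius of the influence matrix $\Psi$ is at most the maximum column sum $\max_v\sum_u|\Psi(u,v)|$. The entry $\Psi(u,v)$ differs from the quantity bounded in \Cref{def:all-to-one-influence} only by a sign. So every pinned system is $C_{\textnormal{inf}}$-spectrally independent (with $\eta = C_{\textnormal{inf}}$ in ALO's normalization, possibly up to a factor $2$ from the $0/1$ convention).

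\textbf{Step 2: apply the ALO theorem.} Their theorem states that $\eta$-spectral independence under all pinnings, for a distribution on $\{0,1\}^N$, gives a Glauber spectral gap of at least
\[
  \frac{1}{N}\prod_{i=0}^{N-2}\Bigl(1-\frac{\eta}{N-i-1}\Bigr).
\]
For indices with $N-i-1\le 2\eta$ the factor can become nonpositive, so those terms have to be handled separately. The standard approach treats $\eta$ as a constant, so the product is $\Omega(N^{-2\eta})$ up to a factor depending only on $\eta$.

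To handle the last $O(\eta)$ levels, I would use the version in \cite{ALO24} that applies the trivial local bound there. At those levels the links are distributions on a bounded number of coordinates. Their local spectral gap is bounded below by a constant depending on the marginal bounds. These are fine here because $\lambda_v<\lambda$ and $\beta_e\le1<\gamma_e$ keep every pinned single-site marginal away from $0$ and $1$ by a constant. This yields a gap of at least $|V|^{-C}$ with $C=O(C_{\textnormal{inf}})$.

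\textbf{Main obstacle.} The risky part is the bottom levels, where the local-to-global product degenerates. The concern is that some pinned link has a nearly frozen marginal, since $\lambda_v$ and $\beta_e$ may be arbitrarily small and $\gamma_e$ arbitrarily large.

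I would first try the monotone bound $\Pr[X_v=1\mid\cdot]\ge 1/(1+\lambda)$, which holds regardless of the other parameters. For the other direction, I would avoid needing marginals bounded away from $1$ at all. Instead I would invoke the refined ALO statement, which requires only $\eta$-spectral independence and bounds the gap by $\Omega(N^{-(1+\eta)})$ via the trickle-down/local-to-global argument with the exact final link gap. If that fails, I would fall back on the link-gap formulation: a link on two coordinates with spectral independence bounded by $C_{\textnormal{inf}}$ already has second eigenvalue at most $C_{\textnormal{inf}}/1$ relative, and this is controlled by the same influence bound.
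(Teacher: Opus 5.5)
Your overall route is the paper's: \Cref{obs:self-reducibility} makes every pinned conditional a $(\beta,\gamma,\lambda)$-ferromagnetic system, \Cref{thm:all-to-one-influence} bounds its all-to-one influence by $C_{\textnormal{inf}}$, and the ALO local-to-global theorem converts this into a spectral gap. The paper stops there and cites \Cref{prop:spectral-independence} as a black box.

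The gap is in your own treatment of the degenerate last $O(C_{\textnormal{inf}})$ levels. None of your three justifications works as written.
(i) Pinned single-site marginals are \emph{not} bounded away from both $0$ and $1$. Here $\lambda_v$ may be arbitrarily small and $\gamma_e$ arbitrarily large, as the paper stresses, so $\Pr[X_v=0\mid\cdot]$ can be arbitrarily close to $0$. Only $\Pr[X_v=1\mid\cdot]\ge 1/(1+\lambda)$ holds.
(ii) No refinement of ALO can work from $\eta$-spectral independence alone. Two bits with $\mu(00)=\mu(11)=\frac12-\epsilon$ and $\mu(01)=\mu(10)=\epsilon$ have every pinned influence at most $1$, yet their Glauber gap is $O(\epsilon)$.
(iii) The two-coordinate bound ``second eigenvalue at most $C_{\textnormal{inf}}$'' is vacuous, since $C_{\textnormal{inf}}\ge 1$ in general.

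The missing observation is that the one-sided bound you mention already suffices. Take any pinned system and any $u\neq v$. Both $\Pr[X_v=1\mid X_u=1]$ and $\Pr[X_v=1\mid X_u=0]$ are averages of single-site conditionals
$1/\bigl(1+\lambda_v\prod\beta_e\prod\gamma_e^{-1}\bigr)\in[1/(1+\lambda),1]$.
Hence $|\Psi(u,v)|\le \lambda/(1+\lambda)$.

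In a link with $m$ free vertices this gives $\lambda_{\max}(\Psi)\le (m-1)\lambda/(1+\lambda)$. So the local walk's second eigenvalue is at most $\min\{C_{\textnormal{inf}}/(m-1),\,\lambda/(1+\lambda)\}$.

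In the ALO product, the factors with $m-1\ge 2C_{\textnormal{inf}}$ together contribute $N^{-O(C_{\textnormal{inf}})}$. The remaining at most $2C_{\textnormal{inf}}$ factors are each at least $1/(1+\lambda)$. The gap is therefore at least $N^{-O(C_{\textnormal{inf}})}$, with constants depending only on $\beta,\gamma,\lambda$.

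This is also the justification implicitly needed behind the paper's citation of \Cref{prop:spectral-independence} here. As the two-bit example shows, that proposition is not true for arbitrary distributions without some marginal control of this kind.
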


\begin{remark}
The above theorem only requires a weaker condition $\lambda < \lambda_c$. Note that 
\begin{align*}
\lambda_c = (\gamma/\beta)^{\frac{\sqrt{\beta \gamma}}{\sqrt{\beta \gamma}-1}} > \sqrt{\gamma/\beta} = \lambda_0.
\end{align*}
Hence, we can use the above theorem to prove the local mixing bound when $\lambda < \lambda_0$. \Cref{thm:local-mixing-bound} can also be viewed as a weaker version of \Cref{thm:glauber-mixing-1} when $\lambda < \lambda_c$ as it only provides a $\mathrm{poly}(n) \cdot \log \frac{1}{\mu(\sigma)}$ mixing time bound instead of the $n^3 \cdot \mathrm{polylog}(n)$ mixing time bound in \Cref{thm:glauber-mixing-1}.
\end{remark}

To prove \Cref{thm:local-mixing-bound}, we need the following mixing result obtained from the spectral independence.

\begin{proposition}[\cite{ALO24}]\label{prop:spectral-independence}
Let $\mu$ be a distribution over $\{0,1\}^V$. If there exists a constant $\eta > 0$ such that for any pinning $\sigma \in \{0,1\}^\Lambda$, the conditional distribution $\mu^\sigma_{V \setminus \Lambda}$ has $\eta$-bounded all-to-one influence, then, the spectral gap of the Glauber dynamics on $\mu$ is at least $\frac{1}{n^{O(\eta)}}$. 
\end{proposition}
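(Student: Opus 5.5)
Since this is a cited result from \cite{ALO24}, I would reconstruct the standard local-to-global argument rather than prove anything new.

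\textbf{Complex and local walks.} I would first homogenise $\mu$ into a weighted simplicial complex. The ground set is $V\times\{0,1\}$, the facets are the configurations $\sigma\in\{0,1\}^V$ with weight $\mu(\sigma)$, and a face is a partial pinning $\sigma\in\{0,1\}^\Lambda$. With this encoding:
\begin{itemize}
\item the Glauber dynamics is exactly the down-up walk from dimension $n$ to $n-1$ on this complex;
\item the link of a face $\sigma$ corresponds to the conditional distribution $\mu^\sigma_{V\setminus\Lambda}$.
\end{itemize}
For each link with $m=|V\setminus\Lambda|$ unpinned variables, I would show that the second eigenvalue of the local (1-skeleton) random walk is at most $\lambda_{\max}(\Psi^\sigma)/(m-1)$. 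Here $\Psi^\sigma(u,w)=\Pr[X_w=1\mid X_u=1]-\Pr[X_w=1\mid X_u=0]$ is the influence matrix of $\mu^\sigma$. This is a direct computation: write the local walk in the $2m\times 2m$ block form, project out the trivial eigenvectors coming from the $m$-partite structure, and what remains is similar to $(\Psi^\sigma-I)/(m-1)$.

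\textbf{Bounding $\lambda_{\max}(\Psi^\sigma)$.} $\Psi^\sigma$ is self-adjoint with respect to the inner product weighted by the variances $\Pr[X_u=1]\Pr[X_u=0]$, so its eigenvalues are real. Its spectral radius is at most any induced operator norm, in particular the maximum column sum of absolute values. That is exactly the all-to-one influence, so the hypothesis gives $\lambda_{\max}(\Psi^\sigma)\le\eta$. Hence every link with $m$ free variables is a $\frac{\eta}{m-1}$-local spectral expander.

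\textbf{Local-to-global.} I would then apply the local-to-global theorem of Alev--Lau (a refinement of Kaufman--Oppenheim), which gives
\[
\mathrm{gap}(P_{\mathrm{GD}})\;\ge\;\frac1n\prod_{m=2}^{n}\Bigl(1-\gamma_m\Bigr),
\]
where $\gamma_m$ bounds the local-walk second eigenvalue over all links with $m$ free variables. For $m-1>2\eta$ we have $1-\frac{\eta}{m-1}\ge\exp(-\frac{2\eta}{m-1})$, so this part of the product is at least $\exp(-2\eta H_n)=n^{-O(\eta)}$.

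\textbf{Main obstacle: the last $O(\eta)$ levels.} For $m\le 2\eta+1$ the bound $\eta/(m-1)$ is useless, possibly exceeding $1$. A bound depending on $\mu_{\min}$ is not acceptable, since in our application the weights are unbounded. I would first try to show directly that the local walk on any link satisfies $\gamma_m\le 1-c(\eta)$ for a constant $c(\eta)>0$ independent of the weights.
\begin{itemize}
\item Use the pairwise influence bound $|\Psi^\sigma(u,w)|\le 1$.
\item Observe that the local walk on a two-variable link has second eigenvalue equal to the correlation $|\Psi|$.
\item The degenerate case $|\Psi|=1$ (deterministic coupling between two variables) must be excluded using irreducibility of the pinned distributions, which is where I expect the real difficulty.
\end{itemize}
If that fails, the fallback is the version of the ALO argument that only uses the product over $m>2\eta+1$. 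That version stops the walk hierarchy at the level with $2\eta+1$ free variables and compares Glauber dynamics with a block dynamics of block size $O(\eta)$. This loses only a factor of $n^{O(\eta)}$, provided the small-block conditionals are handled again by spectral independence at constant size.
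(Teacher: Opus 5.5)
Your handling of the levels with $m-1>2\eta$ is correct and is the ALO argument: homogenisation, the local-walk eigenvalue bounded by $\lambda_{\max}$ of the influence matrix over $m-1$, the column-sum bound, and the Alev--Lau product. The gap is where you located it, in the last $O(\eta)$ levels. Neither of your proposed fixes can close it, because for an arbitrary $\mu$ the proposition is false as stated.

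Take $\mu$ on $\{0,1\}^2$ with $\mu(00)=\mu(11)=\frac{1-\epsilon}{2}$ and $\mu(01)=\mu(10)=\frac{\epsilon}{2}$, or a product of $n/2$ such pairs. Every pinned conditional has all-to-one influence at most $1-2\epsilon\le 1$, so $\eta=1$. Every configuration has positive mass, so the chain is irreducible. Yet the two-variable local walk has second eigenvalue $1-2\epsilon$. The test function $f=x_1+x_2-1$ has Dirichlet form $\epsilon(1-\epsilon)$ against variance $1-\epsilon$, so the Glauber gap is at most $\epsilon$, which is arbitrarily small. Consequently:
\begin{itemize}
\item no bound $\gamma_m\le 1-c(\eta)$ exists;
\item irreducibility gives nothing uniform;
\item your block-dynamics fallback fails for the same reason. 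It needs a lower bound on the Glauber gap of conditionals on $O(\eta)$-size blocks, which is this same degenerate quantity. Invoking spectral independence again at constant size only reproduces the factors $1-\eta/(m-1)$, which can vanish.
\end{itemize}

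The missing ingredient is an extra non-degeneracy hypothesis for the bottom levels. In ALO and later work this is marginal boundedness; for your argument it suffices to have a uniform bound $|\Psi^\sigma(u,w)|\le 1-b$ over all pinnings. Such a bound does hold wherever the paper uses the proposition. Consider any two-spin system with $\lambda_v<\lambda$ and $\beta_e\le 1\le\gamma_e$; this covers $(\beta,\gamma,\lambda)$-systems, $(\gamma,\lambda)$-RBMs, and their pinned or tilted versions.
\begin{itemize}
\item The odds of $X_w=0$, given any configuration of the remaining vertices, are $\lambda_w\prod(\cdot)\le\lambda_w<\lambda$.
\item Hence $\Pr[X_w=1\mid\cdot]\ge\frac{1}{1+\lambda}$ under every pinning, and so $|\Psi^\sigma(u,w)|\le\frac{\lambda}{1+\lambda}$.
\item Bound the top eigenvalue of the zero-diagonal influence matrix by its maximum row sum. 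This gives second eigenvalue at most $\frac{\lambda}{1+\lambda}$ on every link.
\item The last $O(\eta)$ factors of the product therefore contribute at least $(1+\lambda)^{-O(\eta)}$, and the gap is $n^{-O(\eta)}$, with constants now also depending on $\lambda$.
\end{itemize}
The paper only cites ALO without proof, so this hypothesis, or its verification in the paper's setting, has to be made explicit. Once it is added, your argument goes through.
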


\begin{proof}[Proof of \Cref{thm:local-mixing-bound}]
Using \Cref{obs:self-reducibility}, any conditional distribution $\mu^\sigma$ also induces a Gibbs distribution of a $(\beta,\gamma,\lambda)$-ferromagnetic two-spin system on a subgraph. By \Cref{thm:all-to-one-influence}, all conditional distributions have $C_{\text{inf}}$-bounded all-to-one influence for some constant $C_{\text{inf}}  = C_{\text{inf}}(\beta,\gamma,\lambda) > 0$ depending on $\beta,\gamma,\lambda$.
The theorem then follows from \Cref{prop:spectral-independence}.
\end{proof}

We use \Cref{thm:local-mixing-bound} to prove the local mixing bound. Fix any vertex $v \in V$ and any outside configuration $\sigma \in \{0,1\}^{V \setminus S_v}$. The censored Glauber dynamics on $\mu^\sigma_{S_v}$ updates as follows: in each step, it picks a vertex $u \in V$ uniformly at random; if $u \notin S_v$, then the dynamics does nothing; otherwise, it resamples the value at $u$ conditional on the current configuration of the other variables. It is straightforward to see that the censored Glauber dynamics on $\mu^\sigma_{S_v}$ is at most a factor of $n$ slower than the Glauber dynamics on $\pi = \mu^\sigma_{S_v}$, where in each step, the Glauber dynamics picks a vertex $u \in S_v$ uniformly at random and resamples the value.
Using \Cref{lemma:saw-tree-construction-nonsimple} and~\eqref{eq:parameters-for-S_v-construction}, we know that $|S_v| \leq (\log n)^{C'}$, where $C' = C'(\beta,\gamma,\lambda) > 0$ is a constant. We prove the following mixing result. Note that~\eqref{eq:local-mixing-time-bound-pf} is a simple corollary of this lemma.

\begin{lemma}\label{lemma:warm-start-configuration}
Let $\pi = \mu^\sigma_{S_v}$. Let $P_{\pi}^{\textnormal{Glauber}}$ be the Glauber dynamics on $\pi$. Starting from an arbitrary configuration in $\{0,1\}^{S_v}$, after running $P_{\pi}^{\textnormal{Glauber}}$ for $(\log n)^{C''}$ steps, the total variation distance between the resulting distribution and the stationary distribution $\pi$ is at most $\frac{1}{4e}$, where $C'' = C''(\beta,\gamma,\lambda) > 0$ is a constant.
\end{lemma}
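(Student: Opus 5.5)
The plan is to combine the polynomial spectral gap from \Cref{thm:local-mixing-bound} with a warm-start argument. The conditional distribution $\pi=\mu^\sigma_{S_v}$ is, by \Cref{obs:self-reducibility}, the Gibbs distribution of a $(\beta,\gamma,\lambda)$-ferromagnetic two-spin system (respectively a $(\gamma,\lambda)$-RBM) on $G[S_v]$. Set $N\defeq|S_v|\le(\log n)^{C'}$. Then \Cref{thm:local-mixing-bound} gives $\gamma_{\text{GD}}(\pi)\ge N^{-C}$, and \eqref{eq:mixing-time-glauber-gap} shows that from any starting state $\xi$ the chain is $\frac{1}{8e}$-close to $\pi$ after $N^{C}\log\frac{O(1)}{\pi(\xi)}$ steps. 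So it suffices to show two things: some reachable state has $\log\frac1{\pi(\xi)}\le\mathrm{poly}(N,\log n)$, and the chain gets there in $\mathrm{poly}(N)$ steps with probability at least $1-\frac1{8e}$. Each of these costs is $(\log n)^{O(1)}$.

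\textbf{The all-ones state is warm.} Write $w(\xi)$ for the unnormalised weight. After absorbing the outside pinning, every external field is still at most $\lambda$, every $\beta_e\le1$ and every $\gamma_e\ge1$. Hence
\[
w(\xi)/w(\mathbf 1)=\prod_{\xi_u=0}\lambda_u\prod_{00\text{ edges}}\beta_e\prod_{\text{non-}11\text{ edges}}\gamma_e^{-1}\le\max(1,\lambda)^N ,
\]
so $\pi(\mathbf 1)\ge(2\max(1,\lambda))^{-N}$. Consequently, from $\mathbf 1$ the chain mixes within $N^{C}\cdot O(N)=(\log n)^{O(1)}$ steps.

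\textbf{Reducing an arbitrary start to a warm one.} I would use the monotone grand coupling of \Cref{prop:monotone-coupling}, applied to single-site Glauber on $\pi$. The chain from an arbitrary $\xi$ is sandwiched between the chains $X^+$ from $\mathbf 1$ and $X^-$ from $\mathbf 0$, so
\[
\DTV{X^\xi_t}{\pi}\le\Pr[X^+_t\ne X^-_t]\le\sum_{u}\bigl(\Pr[X^+_t(u)=1]-\Pr[X^-_t(u)=1]\bigr)\le 2N\max_{s\in\{+,-\}}\DTV{X^s_t}{\pi}.
\]
It therefore suffices to bring both extreme chains to TV distance $\frac{1}{8eN}$, which costs only an extra $\log N$ factor. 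The top chain is already handled by the warmness of $\mathbf 1$. For the bottom chain, I would run a burn-in of $T_0=O(N\log N)$ steps, after which every vertex has been updated, and then restart the spectral bound from the state $X^-_{T_0}$. For this I need, with probability at least $1-\frac1{16eN}$, that $\log\frac{1}{\pi(X^-_{T_0})}\le\mathrm{poly}(N)\cdot\log n$.

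\textbf{Main obstacle: controlling $\pi(X^-_{T_0})$.} This is the only nontrivial step, because the fields $\lambda_u$ and $\beta_e$ may be arbitrarily small and the $\gamma_e$ arbitrarily large, so $\pi(\mathbf 0)$ itself can be arbitrarily small. My first attempt is the following. For a state $X$, flip its zeros to ones one at a time in some fixed order. This writes $\log\frac{w(\mathbf 1)}{w(X)}$ as a sum of at most $N$ terms of the form $\log(1/R)$, where $R$ is the conditional odds of $0$ at that vertex given the partially flipped configuration. The conditional probability of $0$ at a vertex is monotone decreasing in the configuration around it. Hence each term is at most $\log(1/r_u)$, where $r_u$ is the odds that was actually used at the last update of $u$ in the chain, provided that update chose $0$ while the neighbourhood had no more ones than in $X$. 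Choosing $0$ when the odds are at most $e^{-M}$ happens with probability at most $e^{-M}$. Taking a union bound over the $O(N\log N)$ updates and setting $M=\Theta(\log n)$ then gives $\log\frac{w(\mathbf 1)}{w(X)}\le N\cdot O(\log n)$ with probability $1-n^{-\Omega(1)}$.

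The delicate point is that the neighbourhood of $u$ at its last update may differ from its neighbourhood in the final $X$. The comparison between them needs the right monotone order, for example processing vertices in decreasing order of their last update time. If this ordering argument fails, my fallback is to bound $\DTV{X^-_t}{\pi}$ directly via censoring (\Cref{claim:censoring}) against a chain started from a product of $\mathrm{Bernoulli}(1/(1+\lambda))$ spins. That product is stochastically dominated by the post-burn-in law, and its relative density with respect to $\pi$ can be controlled in the same edge-by-edge way.
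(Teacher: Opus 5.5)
There is a genuine gap, and it sits exactly in the step you call the main obstacle. The earlier ingredients are correct: the spectral-gap input, the bound $\pi(\mathbf 1)\ge(2\max(1,\lambda))^{-N}$, and the monotone sandwich $\DTV{X^\xi_t}{\pi}\le 2N\max_s\DTV{X^s_t}{\pi}$. Neither of your two routes to a warm start works, however.

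\textbf{The flip decomposition.} Let $X'\succeq X$ be the partially flipped configuration at the moment $u$ is flipped. You need $R_u(X')\ge r_u$. By monotonicity this requires the neighbourhood of $u$ at its last update to have \emph{at least} as many ones as $X'$. Your proviso (no more ones than in $X$) gives the reverse inequality $R_u(X')\le r_u$. The required domination fails whenever some neighbour $w$ was $0$ at $u$'s last update and is $1$ in $X$. In that case $X'(w)=1$ for every flip order, so no ordering repairs it, including decreasing last-update time. This is also a typical event: every neighbour updated after $u$ is set to $1$ with probability at least $\frac1{1+\lambda}$. Your union bound over updates with odds below $e^{-M}$ says nothing about it.

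\textbf{The fallback.} A product of $\mathrm{Bernoulli}(\frac1{1+\lambda})$ spins gives constant mass to $\{X(u)=0\}$ when $\lambda_u$ is tiny, and to $\{e\text{ not }11\}$ when $\gamma_e$ is huge. By contrast, $\pi$ gives these events mass at most $\lambda_u$, respectively $O(\max(1,\lambda)^2/\gamma_e)$. So $D_{\chi^2}(\nu\Vert\pi)$ is not bounded in terms of $n$. The stochastic domination of $\nu$ by the post-burn-in law is true, but it cannot be turned into a warm start.

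\textbf{What is missing} is a way to pay for these discrepancies.

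In the setting of this lemma, where $\pi$ is a $(\beta,\gamma,\lambda)$-system, the fix is cheap. Switching one neighbour of $u$ from $0$ to $1$ divides the odds of $u$ by $\beta_{uw}\gamma_{uw}\le\beta\gamma$. Hence, for any flip order, $\log(1/R_u(X'))\le\log(1/r_u)+\deg(u)\log(\beta\gamma)$. Summing gives $\log\frac{\pi(\mathbf 1)}{\pi(X)}\le NM+N^2\log(\beta\gamma)$, which together with your union bound closes the argument. It does so from any starting state, so the sandwich becomes unnecessary.

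This fix uses $\beta_e\gamma_e\le\beta\gamma$ essentially. For the $(\gamma,\lambda)$-RBM case you also mention, take an edge with $\beta_e=1$, $\gamma_e=\Gamma$ huge and $\lambda_u=\lambda_w=\lambda/2$. Starting from $\mathbf 0$, if $u$ and $w$ are each updated once with $u$ first, the edge ends in state $01$ with probability bounded below by a constant. That costs a factor $\Gamma$ in $\pi$, although no update used small odds.

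The paper handles both cases at once via Lemma~\ref{lemma:warm-start-configuration-general}, which assumes no bound on $\beta_e\gamma_e$:
\begin{itemize}
\item Call $u$ dangerous if $\lambda_u\le 1/(100N^5)$, and $e$ dangerous if $\gamma_e\ge 100N^5$.
\item Run $O_\lambda(N\log^2N)$ steps in $\Theta(\log N)$ rounds, each of which updates every vertex.
\item A dangerous vertex ends at $0$ with probability at most $\lambda_u$, by looking at its last update.
\item A dangerous edge becomes $11$ at some time, since each of the $\Theta(\log N)$ clean pairs of consecutive endpoint updates achieves this with probability at least $(1+\lambda)^{-2}$.
\item Once in state $11$, the edge stays there, because leaving requires setting an endpoint to $0$ while the other is $1$, which has probability at most $\lambda/\gamma_e$ per update.
\end{itemize}
All remaining parameters are then polynomially bounded, since $\gamma_e<100N^5$ forces $\beta_e>1/(100N^5)$. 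Hence the resulting $\tau$ has $\pi(\tau)\ge\exp(-O_\lambda(N^2\log N))$, and the spectral gap finishes in $O(N^2\log N/g)$ further steps.
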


By \Cref{obs:self-reducibility}, the conditional distribution $\pi$ is a Gibbs distribution of a $(\beta,\gamma,\lambda)$-ferromagnetic two-spin system on $G[S_v]$.
If we directly apply \Cref{thm:local-mixing-bound} and~\eqref{eq:mixing-time-glauber-gap}, then we need to bound $\log \frac{1}{\pi_{\min}}$, where $\pi_{\min} = \min_{x \in \{0,1\}^{S_v}} \pi(x)$. 
However, for some edge $e$ and vertex $u$, the parameters $\beta_e$ and $\lambda_u$ can be arbitrarily small and the parameter $\gamma_e$ can be arbitrarily large. Hence, $\log \frac{1}{\pi_{\min}}$ can be larger than $\mathrm{polylog}(n)$. To resolve this issue, we use \Cref{thm:local-mixing-bound} after reaching a warm-start configuration.
We give the following general result.
\begin{lemma}\label{lemma:warm-start-configuration-general}
Let $\lambda > 0$ be a constant. 
Let $\mu$ be a Gibbs distribution of a ferromagnetic two-spin system on a graph $G = (V,E)$ with $(\beta_e,\gamma_e)_{e\in E}$ and $(\lambda_v)_{v \in V}$. 
Suppose $\beta_e \gamma_e > 1$ and $\beta_e \leq 1 \leq \gamma_e$ for all $e \in E$, and $\lambda_v < \lambda$ for all $v \in V$.
Let $P_{\mu}^{\textnormal{Glauber}}$ be the Glauber dynamics on $\mu$. Suppose the spectral gap of the Glauber dynamics is at least $0 < g < 1$.  Then, the mixing time of the Glauber dynamics on $\mu$ satisfies
\begin{align*}
t_{\textnormal{mix}}^{\textnormal{Glauber}}\tp{\frac{1}{4e}} \leq O_{\lambda}\left(|V| \log |V| + \frac{|V|^2}{g} \log |V|\right).
\end{align*}
\end{lemma}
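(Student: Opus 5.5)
Let $N=|V|$. The approach is a burn-in phase of $O(N\log N)$ steps that reaches a warm configuration with high probability, followed by the spectral-gap bound \eqref{eq:mixing-time-glauber-gap} started from that warm configuration. Since the parameters may be arbitrarily extreme, $\mu_{\min}$ is useless. Instead we look for a simple event $\mathcal{W}$ that the chain enters quickly and on which $\mu(\sigma)\ge e^{-O(N)}$ holds pointwise.

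The natural candidate is $\mathcal{W}=\{\boldsymbol{1}\}$, or configurations close to it. First we lower bound $\mu(\boldsymbol{1})$. Consider the map that flips a set $U$ of vertices from $0$ to $1$. Each vertex contributes a factor $1/\lambda_v\ge 1/\lambda$. Each edge with both endpoints flipped from $0$ changes its weight from $\beta_e$ to $\gamma_e$, and each mixed edge changes from $1$ to $\gamma_e$ or from $\beta_e$ to $1$. All of these factors are $\ge 1$, because $\beta_e\le1\le\gamma_e$ and $\beta_e\gamma_e>1$. Hence $w(\boldsymbol{1})\ge \lambda^{-N}w(\sigma)$ for every $\sigma$ when $\lambda\ge1$, and $w(\boldsymbol{1})\ge w(\sigma)$ when $\lambda<1$. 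Summing over the $2^N$ configurations gives $\mu(\boldsymbol{1})\ge (2\max(1,\lambda))^{-N}=e^{-O_\lambda(N)}$. The same comparison shows that every $\sigma$ satisfies $\mu(\sigma)\le \max(1,\lambda)^{N}\mu(\boldsymbol 1)$.

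If the chain starts from $\boldsymbol{1}$, then \eqref{eq:mixing-time-glauber-gap} directly gives mixing in $O(\frac{1}{g}(N+\log\frac1\epsilon))$ steps. The $N^2/g\cdot\log N$ term in the statement suggests handling an arbitrary start differently. The paper's heat-bath block framework also suggests a monotone coupling route. We run the grand coupling of \Cref{prop:monotone-coupling} from the arbitrary start $X_0$, from $\boldsymbol{1}$, and from $\boldsymbol{0}$ simultaneously. By monotonicity, $X_t$ is sandwiched between $X_t^-$ and $X_t^+$. So it suffices to bound the coupling time of the extreme chains, i.e.\ to bound $\Pr[X^+_t\ne X^-_t]\le\sum_v(\Pr[X^+_t(v)=1]-\Pr[X^-_t(v)=1])$. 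Each summand is at most $\DTV{X^+_t}{\mu}+\DTV{X^-_t}{\mu}$ evaluated on the marginal at $v$.

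The $X^+$ chain started at $\boldsymbol 1$ is handled by the warm start above. For $X^-$ started at $\boldsymbol 0$, we use burn-in instead. After $O(N\log N)$ steps every vertex has been updated with high probability, and at every update $\Pr[\text{set to }1]\ge \frac1{1+\lambda}$. This alone does not make $\mu(X)$ large. The plan is to use a distribution-level warm start instead. The distribution $\nu$ of $X^-_{T}$ is a mixture over the update schedule. Conditioned on the schedule, the last-update values dominate a product measure with bias $\frac1{1+\lambda}$. I would aim to bound $D_{\chi^2}(\nu\Vert\mu)$ or, more robustly, to bound the pointwise ratio $\nu(x)/\mu(x)\le e^{O(N\log N)}$ on a high-probability set. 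This can be done by comparing $\nu(x)$ with a product of conditional probabilities along the last updates, each of which is a Gibbs conditional; a chain-rule argument gives $\nu(x)\lesssim \prod_v \mu_v^{\cdot}(x_v)$-type bounds. Then \eqref{eq:mixing-time-glauber-gap} gives a further $\frac1g\cdot O(N\log N)$ steps. I expect the $N^2/g$ factor to come from a union bound over the $N$ vertices in the coupling inequality, which requires per-vertex accuracy $1/N$, or from a cruder ratio $e^{O(N^2)}$-free bound of $N\log N$ per vertex.

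The main obstacle is this last step: controlling $\nu/\mu$ after burn-in from $\boldsymbol 0$ without paying $\log(1/\mu_{\min})$. If the chain-rule ratio bound fails, the fallback is to avoid $X^-$ entirely. Bound the chain from $X_0$ by dominating it by $X^+$ and using $\Pr[X_t(v)\ne X^+_t(v)]$. Control $\Pr[X_t(v)=1]$ from below by the fact that right after its last update each vertex is $1$ with probability close to its conditional marginal. Then combine these with the warm-started $X^+$ bound via the monotone sandwich, with an $N$-fold union bound that produces the $N^2/g\cdot\log N$ term.
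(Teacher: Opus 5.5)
There is a genuine gap exactly where the lemma has content: the chain started from $\boldsymbol{0}$, or equivalently from an arbitrary state. Your bound $\mu(\boldsymbol{1})\ge(2\max\{1,\lambda\})^{-N}$, the resulting mixing bound from the all-$1$ start, and the monotone sandwich reduction are all fine.

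The bound $\nu(x)/\mu(x)\le e^{O(N\log N)}$ on a high-probability set is only hoped for, and the suggested chain-rule argument does not yield it. Along the last updates, $\nu(x)$ is controlled by full conditionals $\mu_v^{\eta}(x_v)$ at whatever non-equilibrium neighbourhood configurations $\eta$ happen to be present. But $\mu(x)$ is not a product of such full conditionals, so there is nothing to compare term by term.

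More to the point, $\gamma_e$ may be arbitrarily large and $\lambda_v,\beta_e$ arbitrarily small. So $\mu(x)$ can be smaller than any function of $N$ on configurations the chain still occupies with probability $1/\mathrm{poly}(N)$, e.g.\ when an endpoint of a huge-$\gamma_e$ edge was never updated. Any such statement must say explicitly which configurations are excluded and why the chain avoids them, and your proposal never does this.

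The fallback is circular. Knowing that a vertex's last update is drawn from its conditional marginal given the current neighbours says nothing about closeness to $\mu_v(1)$ unless those neighbours are already near equilibrium. Also, the $N^2\log N/g$ term does not come from a union bound over vertices: per-vertex accuracy $1/N$ only adds $O(\log N)$ inside the logarithm in \eqref{eq:mixing-time-glauber-gap}.

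The missing idea is that you need neither $X^-$ nor any distribution-level bound. By the Markov property it suffices to show two things. First, at some time $T_0=O_\lambda(N\log^2 N)$ the chain lies, with probability at least $1-\frac{1}{10e}$, in a set of configurations $\tau$ with the pointwise bound $\mu(\tau)\ge e^{-O_\lambda(N^2\log N)}$. Second, you then apply \eqref{eq:mixing-time-glauber-gap} from the point mass at $\tau$.

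The paper takes this set to be the configurations in which every vertex with $\lambda_u\le 1/(100N^5)$ has spin $1$, and every edge with $\gamma_e\ge 100N^5$ has both endpoints equal to $1$. Compare such a $\tau$ with an arbitrary $\tau'$ factor by factor:
\begin{itemize}
\item these extreme vertices and edges contribute ratios at least $\min\{1,\lambda^{-1}\}$ and $1$ respectively;
\item every other vertex contributes at least $\min\{1/(100N^5),\lambda^{-1}\}$;
\item every other edge has $\beta_e>1/\gamma_e>1/(100N^5)$, so it contributes at least $\beta_e/\gamma_e\ge 10^{-4}N^{-10}$.
\end{itemize}
With up to $N^2$ edges this gives $\mu(\tau)\ge e^{-O_\lambda(N^2\log N)}$, which is the real source of the $N^2\log N/g$ term.

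To show the chain reaches this set:
\begin{itemize}
\item An extreme vertex is $0$ after its last update with probability at most $\lambda_u$.
\item For an extreme edge, each of $C_0\log N$ coupon-collector blocks contains a clean pair of consecutive updates of its two endpoints. Such a pair makes both endpoints equal to $1$ with probability at least $(1+\lambda)^{-2}$.
\item Once both endpoints are $1$, any update turning one of them to $0$ has probability at most $\lambda/\gamma_e\le\lambda/(100N^5)$. This survives a union bound over the $O(T_0^2)$ relevant pairs of times.
\end{itemize}
Bounded $N\le N_0(\lambda)$ is handled separately by hitting $\boldsymbol{1}$ directly.
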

We will now use \Cref{lemma:warm-start-configuration-general} to prove the mixing result for $(\beta,\gamma,\lambda)$-ferromagnetic two spin systems.
Note that the lemma does not require any upper bound on $\beta_e \gamma_e$ for $e \in E$. In \Cref{sec:alternating-scan-proof}, we can also use it to prove the mixing of $(\gamma,\lambda)$-RBMs.
Assume that \Cref{lemma:warm-start-configuration-general} holds. We apply \Cref{lemma:warm-start-configuration-general} to the distribution $\pi$ defined on the subgraph $G[S_v]$. Note that $|S_v| \leq (\log n)^{C'}$, where $C' = C'(\beta,\gamma,\lambda) > 0$ is a constant.
Using \Cref{thm:local-mixing-bound} on the subgraph $G[S_v]$, the spectral gap of the Glauber dynamics on $\pi$ is at least $\frac{1}{(\log n)^{C}}$, where $C = C(\beta,\gamma,\lambda)$ is a constant.
Hence, the mixing time of the Glauber dynamics on $\pi$ is at most $(\log n)^{C''}$. This proves \Cref{lemma:warm-start-configuration}.
Finally, we prove \Cref{lemma:warm-start-configuration-general}.

\begin{proof}[Proof of \Cref{lemma:warm-start-configuration-general}]
Let $N = |V|$. 
Let $N_0(\lambda)$ be a sufficiently large constant depending only on $\lambda$. First we consider the case when $N \leq N_0(\lambda) =  O_\lambda(1)$.
In each update of the Glauber dynamics, we have a chance at least $\frac{1}{1+\lambda}$ to update the value of a vertex to 1. 
We run the Glauber dynamics for some $O_\lambda(1)$ steps, so that with probability $\Omega_\lambda(1)$, all vertices take the value 1. 
Let $T_0 = O_\lambda(1)$ be a sufficiently large constant. With probability at least $1 - \frac{1}{10e}$, we can find a time $t < T_0$ such that all vertices take the value 1.
For each edge, $\gamma_e > 1 \geq \beta_e$. It holds that $\mu(\boldsymbol{1}) = \Omega_\lambda(1)$ if $N \leq N_0(\lambda)$. Using~\eqref{eq:mixing-time-glauber-gap}, starting from all-1 configuration, we only need to run Glauber dynamics for $O_\lambda(1/g)$ steps to get a configuration with total variation distance at most $\frac{1}{10e}$ to the stationary distribution $\mu$. 
A simple coupling argument shows that the total variation distance between the resulting distribution and $\mu$ is at most $\frac{1}{4e}$ after $T_0 + O_\lambda(1/g) = O_\lambda(1/g)$ steps.

Now, we assume $N \geq N_0(\lambda)$ is large enough.
Fix $\tau \in \{0,1\}^V$.
We say that a vertex $u \in V$ is \emph{bad} in $\tau$ if
\begin{align*}
    \lambda_u \leq \frac{1}{100 N^5}
    \qquad\text{and}\qquad
    \tau(u) = 0.
\end{align*}
For any edge $e = \{u,w\} \in E$, we say that $e$ is \emph{bad} in $\tau$ if
\begin{align*}
\gamma_e \geq 100N^5 \text{ and } (\tau(u) = 0 \text{ or } \tau(w) = 0),
\end{align*}
and we say that $\tau$ is a \emph{warm-start configuration} if no vertex or edge is bad in $\tau$.

We prove the following two claims.
\begin{itemize}
    \item Starting from an arbitrary configuration $X_0 \in \{0,1\}^{V}$, after running $P_{\mu}^{\textnormal{Glauber}}$ for $T_0 = O_\lambda(N (\log N)^2)$ steps, with probability at least $1 - \frac{1}{10e}$, the configuration $X_{T_0}$ is a warm-start configuration.
    \item Starting from any warm-start configuration $X_{T_0}$, after running the Glauber dynamics for $T_1 = O_\lambda\left(\frac{N^2}{g} \log N\right)$ steps, where $g$ is a lower bound of the spectral gap, the total variation distance between the resulting distribution and $\mu$ is at most $\frac{1}{10e}$.
\end{itemize}
If these two claims hold, we can construct a coupling between the law of $X_{T_0 + T_1}$ and the stationary distribution $\mu$ such that the coupling fails with probability at most
\begin{align*}
&\Pr[X_{T_0} \text{ is not a warm-start configuration}] + \Pr[\text{coupling fails} \mid X_{T_0} \text{ is a warm-start configuration}] \\
&= \frac{1}{10e} + \frac{1}{10e} < \frac{1}{4e},
\end{align*}
which finishes the proof.

Now we prove the first claim. Let $M = C_1 N \log N$ and $L = C_0 \log N$, where $C_1 > 0$ is a sufficiently large absolute constant and $C_0 = C_0(\lambda) > 0$ is a sufficiently large constant depending only on $\lambda$. Set
\begin{align*}
T_0 = LM = O_\lambda(N(\log N)^2).
\end{align*}
Partition the time interval $[T_0]$ into $L$ consecutive blocks, each of length $M$.
We list the sequence of updated vertices as
\begin{align*}
  v_1,v_2,\ldots,v_{T_0}.
\end{align*}
An update sequence is \emph{good} if every vertex is updated at least once in every block.
By the coupon collector bound and a union bound over all $L$ blocks, the update sequence is good with probability at least $1 - \frac{1}{20e}$.

Fix a good update sequence.
We first bound the probability that a vertex is bad in $X_{T_0}$. Fix any vertex $u \in V$, and let $t_u$ be the last time at which $u$ is updated. We must have $\lambda_u \leq \frac{1}{100N^5}$, since otherwise $u$ cannot be bad. Fix all the updates before time $t_u$. Let $u_1,u_2,\ldots,u_d$ denote all neighbors of $u$, let $\beta_i,\gamma_i$ denote the parameters of the edge $\{u_i,u\}$, and let $\rho$ denote the configuration of the other variables at time $t_u$. Then
\begin{align}\label{eq:lambda-u-bound}
 \frac{\Pr[ u \text{ is updated to } 0]}{\Pr[u \text{ is updated to }1]} = \lambda_u \prod_{i \in [d]: \rho(u_i) = 1} \frac{1}{\gamma_i}\prod_{i \in [d]: \rho(u_i) = 0} \beta_i.
\end{align}
Since $\frac{1}{\gamma_i} \leq 1$ and $\beta_i \leq 1$, we have
\begin{align*}
 \Pr[X_{t_u}(u) = 0] \leq \lambda_u \leq \frac{1}{100N^5}.
\end{align*}
Hence, the probability that $u$ is bad in $X_{T_0}$ is at most \smash{$\frac{1}{100N^5}$}.

Now fix a bad edge $e = \{u,w\} \in E$ with  $\gamma_e \geq 100N^5$, as otherwise, $e$ cannot be bad.
We call a pair of times $(t,t')$ a \emph{clean pair} for $e$ if $t < t'$, $\{v_t,v_{t'}\} = \{u,w\}$, $v_t \neq v_{t'}$, and for all $t < \ell < t'$ we have $v_\ell \notin \{u,w\}$.
Since the update sequence is good, both $u$ and $w$ are updated at least once in every block. Fix any block. 
Since both vertices appear in the block, there must be a clean pair of times for $e$. 
We list all clean pairs in the update sequence: $(t_j,t'_j)_{j=1}^K$ with $t'_j < t_{j+1}$,
where $K \geq L \geq C_0 \log N$ since there is at least one clean pair in each block.

Fix all randomness used to update vertices in $V \setminus \{u,w\}$.
Let $p_\lambda \defeq \frac{1}{1+\lambda}$. For each clean pair $(t_b,t_b')$, define the event $A_b$ by
\begin{align*}
A_b = \{X_{t_b'}(u) = X_{t_b'}(w) = 1\}.
\end{align*}
By~\eqref{eq:lambda-u-bound}, at every update of either $u$ or $w$, the chosen vertex is updated to $1$ with probability at least $p_\lambda$. Therefore, conditional on all past updates on $\{u,w\}$ before time $t_b$, the probability of the event $A_b$ is at least $p_\lambda^2$.
By iterating this bound over all clean pairs and choosing $C_0$ sufficiently large as a function of $\lambda$, we obtain
\begin{align*}
\Pr\Big[\bigcap_{b = 1}^K \overline{A_b}\Big] \leq (1-p_\lambda^2)^K \leq N^{-6}.
\end{align*}
If any event $A_b$ occurs, then the following event $A$ holds:
\begin{itemize}
\item $A$: there exists the first time $t_e < T_0$ such that $X_{t_e}(u) = X_{t_e}(w) = 1$.
\end{itemize}
Hence, $t_e$ exists with probability at least $1 - N^{-6}$. Furthermore, the random variable $t_e$ 
is independent from the updates after $t_e$.
Suppose $t_e = s'$ and let $s > s'$ be the first time after $t_e$ at which the edge $u$ or $w$ is updated to $0$.
At time $s$, one of the endpoints, say $u$, is updated to $0$ while the other endpoint is still equal to $1$. Hence, by~\eqref{eq:lambda-u-bound},
\begin{align*}
\Pr[X_s(u) = 0 \mid X_{s-1}(w) = 1] \leq \frac{\lambda_u}{\gamma_e} \leq \frac{\lambda}{100N^5}.
\end{align*}
A union bound over all times $s'$ for $t_e = s'$ and all times $s$ for $s > s'$ yields
\begin{align*}
\Pr[e \text{ is bad in } X_{T_0} \mid A] \leq \frac{\lambda T_0^2}{100N^5}.
\end{align*}
Therefore, since $T_0 = O_\lambda(N(\log N)^2)$, we have
\begin{align*}
\Pr[e \text{ is bad in } X_{T_0}] \leq \Pr[\neg A] + \Pr[e \text{ is bad in } X_{T_0} \mid A] \leq N^{-6} + \frac{\lambda T_0^2}{100N^5} \leq \frac{1}{100N^{2.5}}.
\end{align*}

Taking a union bound over all vertices and edges, conditioned on the update sequence fixed above, the probability that $X_{T_0}$ is not a warm-start configuration is at most
\begin{align}\label{eq:warm-start-configuration-probability}
\frac{N}{100N^5} + \frac{N^2}{100N^{2.5}} < \frac{1}{20e}.
\end{align}
Combining this with the probability $\frac{1}{20e}$ that the update sequence is not good proves the first claim.

For the second claim, we show a lower bound on $\mu(\tau)$ for each warm-start configuration $\tau$. For any configuration $\tau' \in \{0,1\}^{V}$, not necessarily a warm-start configuration, we give a lower bound on the ratio $\frac{\mu(\tau)}{\mu(\tau')}$. We analyze the contribution of every vertex and every edge in $G$. Formally, the ratio $\frac{\mu(\tau)}{\mu(\tau')}$ can be written as the following ratio of products:
\begin{align*}
  \frac{\mu(\tau)}{\mu(\tau')} = \frac{ \prod_{u \in V} a_u(\tau(u)) \prod_{e \in E} b_e(\tau(e)) }{ \prod_{u \in V} a_u(\tau'(u)) \prod_{e \in E} b_e(\tau'(e)) },
\end{align*}
where, for each vertex $u \in V$,
\begin{align*}
  a_u(\tau(u))\defeq
  \begin{cases}
    \lambda_u & \text{ if }\tau(u) = 0;\\
    1 & \text{ if }\tau(u) = 1,
  \end{cases}
\end{align*}
and for each edge $e = \{u,w\} \in E$,
\begin{align*}
  b_e(\tau(e))\defeq
  \begin{cases}
    \beta_e & \text{ if }\tau(u) = \tau(w) = 0;\\
    \gamma_e & \text{ if }\tau(u) = \tau(w) = 1;\\    
    1 & \text{ if }\tau(u) \neq \tau(w).
  \end{cases}
\end{align*}
We analyse each ratio as follows.
\begin{itemize}
    \item If $\lambda_u \leq \frac{1}{100N^5}$, then $\tau(u) = 1$ because $\tau$ is warm-start. Hence, $\frac{a_u(\tau(u))}{a_u(\tau'(u))} \geq \min\{1,\lambda^{-1}\}$.
    \item If $\lambda_u > \frac{1}{100N^5}$, then $\frac{a_u(\tau(u))}{a_u(\tau'(u))} \geq \min\{1/(100N^5),\lambda^{-1}\}$.
    \item If $\gamma_e \geq 100N^5$, then $\tau(u) = \tau(w) = 1$ because $\tau$ is warm-start. Therefore, $\frac{b_e(\tau(e))}{b_e(\tau'(e))} \geq 1$.
    \item If $\gamma_e < 100N^5$, then $\beta_e > \frac{1}{\gamma_e} > \frac{1}{100N^5}$ because $\beta_e \gamma_e > 1$. Therefore,
    \begin{align*}
        \frac{b_e(\tau(e))}{b_e(\tau'(e))} \geq \frac{\beta_e}{\gamma_e} > \frac{1}{10^4 N^{10}}.
    \end{align*}
\end{itemize}
The total number of edges in $E$ is at most $N^2$. Hence, the ratio $\frac{\mu(\tau)}{\mu(\tau')}$ can be bounded as follows:
\begin{align*}
\frac{\mu(\tau)}{\mu(\tau')} \geq \tp{\min\{1/(100N^5),\lambda^{-1}\}}^{N} \cdot \tp{\frac{1}{10^4 N^{10}}}^{N^2} \geq \exp(-O_\lambda(N^2 \log N)).
\end{align*}
Since the above lower bound holds for every $\tau' \in \{0,1\}^V$, summing over all $2^N$ choices of $\tau'$ gives
\begin{align}\label{eq:mu-tau-lower-bound}
 \mu(\tau) \geq \exp(-O_\lambda(N^2 \log N)) \cdot 2^{-N} = \exp(-O_\lambda(N^2 \log N)).
\end{align}
Let
\begin{align*}
 T_1 \defeq O\left( \max_{\text{warm-start } \tau} \frac{1}{g} \log \frac{1}{(1/10e)^2 \mu(\tau)}\right) = O_\lambda\tp{\frac{N^2}{g} \log N}.
\end{align*}
The second claim follows from~\eqref{eq:mixing-time-glauber-gap} with $\epsilon = \frac{1}{10e}$ for the warm-start configuration $\tau$.
\end{proof}

\subsection{Mixing of alternating-scan sampler}\label{sec:alternating-scan-proof}
In this section, we prove the alternating-scan mixing bound for $(\gamma,\lambda)$-RBMs (\Cref{thm:alternating-scan-mixing-ferromagnetic-two-spin-system}), which implies \Cref{thm:alternating-scan-mixing-BM}. Let $\gamma > 1$ and $\lambda < \lambda_0(1,\gamma) \defeq \sqrt{\gamma}$ be two constants, and let $\mu$ be the Gibbs distribution of a $(\gamma,\lambda)$-RBM on a bipartite graph $G=(V_0,V_1,E)$ with $V = V_0 \uplus V_1$. We prove that the alternating-scan sampler on $\mu$ has mixing time $(\log n)^{O_{\gamma,\lambda}(1)} \log \frac{1}{\epsilon}$.

The proof strategy here is the same as that in \Cref{sec:glauber-mixing-2}.
The alternating-scan sampler is a special case of the systematic-scan block dynamics in \Cref{thm:mixing} with two blocks, namely $\+B = \{V_0,V_1\}$. 
The definition of good boundary conditions is given in \Cref{def:good-boundary-configuration}. 
\Cref{lemma:close-under-shortest-path} proves the first property of \Cref{def:correlation-decay}. The second property of \Cref{def:correlation-decay} is proved by \Cref{lemma:assm-with-good-boundary}. 
For the burn-in estimate in~\eqref{eq:burn-in-pf}, we can simply set $T_{\textnormal{burn-in}} \defeq 2$. 
In the alternating-scan sampler, after two steps all vertices have been updated exactly once. The bound in~\eqref{eq:burn-in-pf} follows from the same Chernoff-bound argument used in \Cref{sec:glauber-mixing-2}.

Finally, we claim that the local mixing time for the censored alternating-scan sampler on $\mu^\sigma_{S_v}$ is 
\begin{align}\label{eq:local-mixing-time-bound-pf-1}
    T_{\textnormal{local}} = (\log n)^{C''},
\end{align}
where $C'' = C''(\gamma,\lambda) > 0$ is a constant depending on $\gamma,\lambda$. 
Let $t_{\textnormal{mix}}^{\textnormal{AS}}$ denote the mixing time of the alternating-scan sampler.
Assuming this local mixing bound, \Cref{thm:mixing} implies
\begin{align*}
t_{\textnormal{mix}}^{\textnormal{AS}}\left(\frac{1}{4e}\right) &= O\left(T_{\textnormal{burn-in}} + T_{\textnormal{local}} \cdot \max_{v \in V}\log |R_v|\cdot\log n\right), \quad \text{where } |R_v| = |S_v \cup \partial S_v| \leq n\\
&\leq (\log n)^{C(\gamma,\lambda)}.
\end{align*}
\Cref{thm:alternating-scan-mixing-ferromagnetic-two-spin-system} then follows from the standard $\epsilon$ decay in mixing times $t_{\textnormal{mix}}^{\textnormal{AS}}(\epsilon) \leq t_{\textnormal{mix}}^{\textnormal{AS}}\left(\frac{1}{4e}\right)\log \frac{1}{\epsilon}$, and \Cref{thm:alternating-scan-mixing-BM} follows as a special case.

Fix $v \in V$, let $S = S_v$, and set $N = |S| \leq (\log n)^{C'}$, where $C' = C'(\gamma,\lambda) > 0$. Fix a boundary configuration $\sigma \in \{0,1\}^{\partial S}$. Let
\begin{align*}
    \pi \defeq \mu^\sigma_{S}.
\end{align*}
By \Cref{obs:self-reducibility}, $\pi$ is the Gibbs distribution of a $(\gamma,\lambda)$-RBM on $G[S]$. Write $S_i \defeq S \cap V_i$ for $i \in \{0,1\}$, and let $Q_\pi$ be the alternating-scan sampler on $\pi$ with blocks $S_0$ and $S_1$.

Every conditional distribution of $\pi$ is again a $(\gamma,\lambda)$-RBM. Since $\lambda < \lambda_0(1,\gamma) < \lambda_c(\gamma,1)$, \Cref{thm:all-to-one-influence} implies that these conditional distributions have $C_{\textnormal{inf}}$-bounded all-to-one influence for some $C_{\textnormal{inf}} = C_{\textnormal{inf}}(\gamma,\lambda) > 0$. By \Cref{prop:spectral-independence}, the spectral gap of the Glauber dynamics on $\pi$ is at least $N^{-O(C_{\textnormal{inf}})}$. Then \Cref{prop:relaxation-time-alternating-scan,prop:mixing-time-alternating-scan} imply that, starting from any configuration $\tau \in \{0,1\}^{S}$, after running $Q_\pi$ for
\[
2N^{O(C_{\textnormal{inf}})} \log \frac{4e^2}{\epsilon^2 \pi(\tau)}
\]
steps, the total variation distance from $\pi$ is at most $\epsilon$. We prove the local mixing bound in \eqref{eq:local-mixing-time-bound-pf-1} using a warm-start argument similar to that in \Cref{sec:glauber-mixing-2}. The case $N = O_{\lambda}(1)$ can be handled by the same argument. For large $N$, recall the definition of a warm-start configuration from the proof of \Cref{lemma:warm-start-configuration-general}. Let
\begin{align*}
T_{\textnormal{warm}}^{\textnormal{AS}} = O_\lambda(\log N).
\end{align*}
In every two consecutive steps of $Q_\pi$, every vertex in $S$ is updated exactly once, and every edge receives a clean ordered pair of endpoint updates. Therefore, the same argument as in the proof of \Cref{lemma:warm-start-configuration-general} shows that, starting from any configuration $X_0 \in \{0,1\}^{S}$, after running $Q_\pi$ for $T_{\textnormal{warm}}^{\textnormal{AS}}$ steps, the probability that $X_{T_{\textnormal{warm}}^{\textnormal{AS}}}$ is a warm-start configuration is at least $1 - \frac{1}{10e}$.

For any warm-start configuration $\tau \in \{0,1\}^{S}$, by~\eqref{eq:mu-tau-lower-bound}, we have $\pi(\tau) \geq \exp(-O_\lambda(N^2 \log N)) \geq \exp(-(\log n)^{O_{\gamma,\lambda}(1)})$. Let
\begin{align*}
T_{\textnormal{post}}^{\textnormal{AS}} \defeq 2N^{O(C_{\textnormal{inf}})} \max_{\substack{\tau \in \{0,1\}^{S}:\\ \tau \textnormal{ is warm-start}}} \log \frac{4e^2}{(1/10e)^2 \pi(\tau)} \leq (\log n)^{O_{\gamma,\lambda}(1)}.
\end{align*}
Starting from any warm-start configuration, after $T_{\textnormal{post}}^{\textnormal{AS}}$ additional steps, the resulting distribution is within $\frac{1}{10e}$ in total variation distance from $\pi$.

Hence, starting from any configuration $X_0 \in \{0,1\}^{S}$, we can couple $X_{T_{\textnormal{warm}}^{\textnormal{AS}} + T_{\textnormal{post}}^{\textnormal{AS}}}$ with the stationary distribution $\pi$ successfully with probability at least $1 - 1/(10e) - 1/(10e) > 1 - 1/(4e)$. By the coupling inequality,
\begin{align*}
\DTV{X_{T_{\textnormal{warm}}^{\textnormal{AS}} + T_{\textnormal{post}}^{\textnormal{AS}}}}{\pi} \leq \frac{1}{4e}.
\end{align*}
This proves the local mixing time bound in \eqref{eq:local-mixing-time-bound-pf-1}.

\subsection{Mixing of Glauber dynamics when \texorpdfstring{$\lambda < \lambda_c$}{lambda < lambdac}}\label{sec:glauber-mixing-1}
To prove \Cref{thm:glauber-mixing-1}, we use the field dynamics technique introduced in~\cite{CFYZ21}. Let $\mu$ be a distribution over $\{0,1\}^V$, and let $\boldsymbol{\theta} = (\theta_v)_{v \in V}$ be a vector of real numbers. The tilted distribution $\boldsymbol{\theta} * \mu$ is defined by
\begin{align*}
\forall \sigma \in \{0,1\}^V, \quad (\boldsymbol{\theta} * \mu)(\sigma) \propto \mu(\sigma) \cdot \prod_{v \in V: \sigma_v = 0} \theta_v.
\end{align*}
In particular, if $\theta_v = \theta$ for all $v \in V$, then we denote $\boldsymbol{\theta} * \mu = \theta * \mu$.

The field dynamics on $\mu$ is defined as follows. Let $\theta \in (0,1)$. Starting from an arbitrary configuration $X \in \{0,1\}^V$, in each step, it updates the current configuration $X$ as follows:
\begin{itemize}
    \item construct a random subset $S \subseteq V$ by selecting each vertex $v \in V$ independently with probability $p_v$, where $p_v = 1$ if $X(v) = 1$ and $p_v = \theta$ if $X(v) = 0$;
    \item resample $X(S) \sim (\theta * \mu)_S^{X(V \setminus S)}$, where $(\theta * \mu)_S^{X(V \setminus S)}$ is the marginal distribution on $S$ induced by $(\theta * \mu)$ conditioned on the configuration $X(V \setminus S)$ on the variables outside $S$.
\end{itemize}
Compared with the original version of the field dynamics in~\cite{CFYZ21}, the above definition swaps the roles of 0 and 1. The two versions are essentially equivalent.
The spectral gap of the field dynamics can be analyzed using the \emph{complete spectral independence} property. We have the following proposition.

\begin{proposition}[\text{\cite{CFYZ21}}]\label{prop:field-dynamics}
Let $\eta > 0$ be a constant. 
If the distribution $\mu$ over $\{0,1\}^V$ satisfies the following condition: for any $\boldsymbol{\phi} \in (0,1]^V$ and any pinning $\sigma \in \{0,1\}^\Lambda$, the conditional distribution $(\boldsymbol{\phi} * \mu)^\sigma_{V \setminus \Lambda}$ has $\eta$-bounded all-to-one influence, then for any $\theta \in (0,1)$, the spectral gap of the field dynamics on $\mu$ with parameter $\theta$ is at least $\theta^{O(\eta)}$. 
\end{proposition}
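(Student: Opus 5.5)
The plan is to follow the local-to-global strategy of \cite{CFYZ21}. We write the field dynamics as a down–up walk and prove that the down operator keeps at least a $\theta^{O(\eta)}$ fraction of the variance. We obtain this by splitting the passage from field $1$ to field $\theta$ into many infinitesimal steps, and controlling each step with the bounded all-to-one influence of the tilted conditional distributions $(\boldsymbol{\phi}*\mu)^\sigma$.

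\emph{Step 1: down–up factorisation.} Let $D_\theta$ map $\sigma\in\{0,1\}^V$ to a random set $S$. Every vertex with $\sigma_v=1$ is put in $S$, and every vertex with $\sigma_v=0$ is put in $S$ independently with probability $\theta$. Let $U_\theta$ resample $\sigma(S)\sim(\theta*\mu)^{\sigma(V\setminus S)}_S$, which is determined by $S$ alone since all of $V\setminus S$ is pinned to $0$. One checks that the field dynamics equals $D_\theta U_\theta$ and that $U_\theta$ is the $\mu$-adjoint of $D_\theta$, using that $\mu$ weights each $0$ of $\sigma$ against the coin probabilities. Hence the chain is reversible and positive semidefinite, and
\[
\textsf{gap}=\inf_f \frac{\mathrm{Var}_\mu(f)-\mathrm{Var}_{\nu_\theta}(U_\theta f)}{\mathrm{Var}_\mu(f)},
\]
where $\nu_\theta=\mu D_\theta$ is the law of $S$. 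It therefore suffices to show $\mathrm{Var}_{\nu_\theta}(U_\theta f)\le(1-\theta^{O(\eta)})\mathrm{Var}_\mu(f)$. Equivalently, the down step must destroy at least a $\theta^{O(\eta)}$ fraction of the variance; this is a statement about the joint law of $(\sigma,S)$.

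\emph{Step 2: semigroup decomposition.} The thinning maps compose: $D_{\theta_1\theta_2}=D_{\theta_1}\circ D'_{\theta_2}$, where the second map thins the remaining zeros under the appropriately tilted measure. We choose a geometric grid $1=\phi_0>\phi_1>\cdots>\phi_m=\theta$ with $\phi_{i+1}=\phi_i(1-\delta)$ and $\delta\to0$. By the chain rule for variance along this Markov chain of decreasing sets, the variance fraction retained at the end is at most the product of the per-step retained fractions. For this to yield a contraction, each step, viewed on the tilted distribution $\phi_i*\mu$ conditioned on its current pinning, must lose a fraction $\ge c\,\delta/(1+O(\eta))$ of the remaining variance.

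\emph{Step 3: one small step (main obstacle).} A single step thins the zeros of a sample from $(\phi_i*\mu)^\sigma$ with keep-probability $1-\delta$. To first order in $\delta$ this is the down–up walk that removes one uniformly chosen $0$-vertex, so the local variance contraction is governed by the influence matrix of the conditional tilted distribution. Following \cite{ALO24}, the largest eigenvalue of that matrix is bounded by the all-to-one influence, which is at most $\eta$ by hypothesis. This gives a per-step variance loss of at least $\delta/(1+\eta)$ up to $O(\delta^2)$.

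The delicate point is that the hypothesis only controls influences for a fixed field $\phi$, while the step mixes vertices of differing states. We plan to handle this as in \cite{CFYZ21}, by writing the thinning in terms of a continuous-field interpolation $\boldsymbol{\phi}\in(0,1]^V$ and applying the hypothesis to every intermediate tilt. This is exactly why the assumption is stated for all $\boldsymbol{\phi}$ and all pinnings $\sigma$.

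\emph{Step 4: integration.} Taking products over the grid and letting $\delta\to0$, we expect the retained fraction of $\mathrm{Var}_\mu(f)$ to be at most
\[
1-\exp\Bigl(-O(\eta)\int_\theta^1\frac{\mathrm{d}\phi}{\phi}\Bigr)=1-\theta^{O(\eta)}.
\]
By Step 1 this yields $\textsf{gap}\ge\theta^{O(\eta)}$.
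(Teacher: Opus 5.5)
Step 1 of your proposal is correct. The overall idea is also workable: refine the posterior from field $1$ down to field $\theta$ in infinitesimal steps, and control each step by spectral independence of the intermediate measures $(\phi_i*\mu)^{\tau_i}$. This is essentially the negative-fields localization of \cite{ChenE22}. The proof cited from \cite{CFYZ21}, as I recall it, instead works with down--up walks on a homogenized distribution and a local-to-global theorem, which is where non-uniform $\boldsymbol{\phi}$ are needed. However, Steps 2--3 track the wrong quantity and use the wrong inequality.

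\textbf{What goes wrong in Steps 2--3.} Thinning the zeros further \emph{reveals} more of $\sigma$. At $\phi_0=1$ the set $S_0=V$ carries no information, so the explained variance $\mathrm{Var}(\mathbb{E}[f\mid S_i])$ starts at $0$ and increases along your grid. It therefore cannot be bounded by a product of per-step ``retained fractions''.

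Relatedly, $D_{\theta_1\theta_2}$ is not a composition of Markov kernels on sets. The second thinning needs $\sigma$. If you first resample $\sigma$ from the tilted posterior, you get a different kernel: from $\sigma=\mathbf{1}$, $D_{\theta_1\theta_2}$ returns $S=V$ surely, whereas the resampled version in general does not.

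Finally, a per-step loss of \emph{at least} $\delta/(1+\eta)$, combined with your Step 2, would give $\textsf{gap}\ge 1-\theta^{c/(1+\eta)}$. This is already false for a single vertex with $\mu(\sigma_v=0)=p$, where $\textsf{gap}=\theta/(1-p(1-\theta))\to\theta$ as $p\to 0$. The formula in your Step 4 has the right shape, but it does not follow from Steps 2--3.

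\textbf{What is needed instead.} You need the reverse: a per-step \emph{upper} bound on the posterior variance lost. Let $\mu^{S}$ be the law of $\sigma$ given $S$, so that $\mathrm{Var}_\mu(f)-\mathrm{Var}_{\nu_\theta}(U_\theta f)=\mathbb{E}_{S\sim\nu_\theta}[\mathrm{Var}_{\mu^S}(f)]$. Let $\mu_i$ be the posterior after $i$ steps and $\mathcal{F}_i$ the corresponding $\sigma$-algebra. The tower property gives $\mathbb{E}[\mathrm{Var}_{\mu^{S}}(f)]\ge\prod_i(1-\epsilon_i)\,\mathrm{Var}_\mu(f)$, provided $\mathbb{E}[\mathrm{Var}_{\mu_{i+1}}(f)\mid\mathcal{F}_i]\ge(1-\epsilon_i)\mathrm{Var}_{\mu_i}(f)$ for every $f$ and every realization.

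To first order, a step pins a single zero $v$ with probability $\delta\,\mu_i(\sigma_v=0)$. So the loss is $\delta\sum_v \mathrm{Cov}_{\mu_i}(f,\mathbf{1}[\sigma_v=0])^2/\mu_i(\sigma_v=0)+O(\delta^2)$. Its supremum over $f$ with $\mathrm{Var}_{\mu_i}(f)=1$ is the top eigenvalue of $D^{-1/2}\,\mathrm{Cov}_{\mu_i}\,D^{-1/2}$, where $D=\mathrm{diag}(\mu_i(\sigma_u=0))$. This matrix is similar to $\mathrm{diag}(\mu_i(\sigma_u=1))\,\Psi_{\mu_i}$, with $\Psi$ the influence matrix with unit diagonal. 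Its column sums are at most $1+\eta$ by the all-to-one hypothesis applied to $(\phi_i*\mu)^{\tau_i}$.

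Hence $\epsilon_i\le(1+\eta)\delta+O(\delta^2)$, uniformly since $n$ is fixed and $\phi_i\in[\theta,1]$. Letting $\delta\to0$ gives $\mathbb{E}[\mathrm{Var}_{\mu^S}(f)]\ge\theta^{1+\eta}\,\mathrm{Var}_\mu(f)$, that is, $\textsf{gap}\ge\theta^{1+\eta}$.

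So spectral independence must be used as a cap on how much one revealed zero can explain, not as a local gap. In this corrected form only uniform fields $\phi_i$ appear, so the ``delicate point'' in your Step 3 does not arise.
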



Let $\mu$ be a Gibbs distribution of a $(\beta,\gamma,\lambda)$-ferromagnetic two-spin system on a graph $G$, where $\lambda < \lambda_c \defeq (\gamma/\beta)^{\frac{\sqrt{\beta \gamma}}{\sqrt{\beta \gamma}-1}}$.
By \Cref{def:ferromagnetic-two-spin-system}, the tilted distribution $\boldsymbol{\theta} * \mu$ is again a ferromagnetic two-spin system with the same edge parameters and external fields satisfying $\lambda_v \theta_v < \lambda < \lambda_c$. By \Cref{obs:self-reducibility} and \Cref{thm:all-to-one-influence}, the distribution $\mu$ satisfies the condition in \Cref{prop:field-dynamics} with $\eta = C_{\text{inf}}$. Let $\gamma_{\text{field}}^{}(\mu,\theta)$ denote the spectral gap of the field dynamics on $\mu$ with parameter $\theta$. Then
\begin{align}\label{eq:gamma-field-theta}
    \gamma_{\text{field}}^{}(\mu,\theta) \geq \theta^{O(C_{\text{inf}})}.
\end{align}

To relate the field dynamics to the Glauber dynamics, we need the following definition. Let $\sigma \in \{0,1\}^\Lambda$ be a configuration, where $\Lambda \subseteq V$ is a subset of vertices. Consider the distribution $(\theta * \mu)^\sigma$, obtained by pinning all variables in $\Lambda$ according to $\sigma$. The Glauber dynamics on $(\theta * \mu)^\sigma$ is defined as follows. Starting from an arbitrary configuration $X \in \{0,1\}^{V}$ with $X(\Lambda) = \sigma$, in each step, pick a vertex $v \in V$ uniformly at random. If $v \in \Lambda$, then do nothing; if $v \notin \Lambda$, then resample
$X(v) \sim (\theta * \mu)^{X(V \setminus \{v\})}_v$.
In particular, we take the parameter $\theta$ as follows:
\begin{align}\label{eq:theta-definition}
\theta = \frac{1}{2\lambda_c} = \Theta_{\beta,\gamma,\lambda}(1).
\end{align}
Note that $(\theta * \mu)^\sigma$ coincides with the conditional distribution $(\theta * \mu)^\sigma_{V \setminus \Lambda}$ because all variables in $\Lambda$ are pinned. Furthermore, $(\theta * \mu)^\sigma_{V \setminus \Lambda}$ is a Gibbs distribution of a ferromagnetic two-spin system on the induced subgraph $G[V \setminus \Lambda]$ with the same edge parameters and with external fields bounded by
\[
\lambda_v \theta < \lambda_c \cdot \theta = \frac{1}{2} < 1 < \lambda_0.
\]
Using \Cref{thm:glauber-mixing-2}, for any $\Lambda \subseteq V$ and any $\sigma \in \{0,1\}^\Lambda$, the mixing time of the Glauber dynamics on $(\theta * \mu)^\sigma$ started from an arbitrary configuration is at most
\begin{align*}
 \forall \epsilon > 0, \quad t_{\textnormal{mix}}^{\textnormal{Glauber}}((\theta * \mu)^\sigma,\epsilon) = O\left(n (\log n)^{C} \log \frac{1}{\epsilon}\right),
\end{align*}
where $C = C(\beta,\gamma,\lambda) > 0$ is a constant depending on $\beta,\gamma,\lambda$. 
As a consequence, the spectral gap of the Glauber dynamics on $(\theta * \mu)^\sigma$ is at least $\Omega(n^{-1} (\log n)^{-C})$ (see \cite[Theorem 12.5]{LP17}).
Define 
\begin{align}\label{eq:gamma-min-theta}
\gamma_{\text{min}}^{}(\theta) \defeq \min\left\{  \gamma_{\text{Glauber}}^{}((\theta * \mu)^\sigma) \mid \sigma \in \{0,1\}^\Lambda, \Lambda \subseteq V \right\} = \Omega \tp{\frac{1}{n (\log n)^{C}}},
\end{align} 
where $\gamma_{\text{Glauber}}^{}((\theta * \mu)^\sigma)$ is the spectral gap of the Glauber dynamics on $(\theta * \mu)^\sigma$.
Let $\gamma_{\text{field}}^{}(\mu,\theta)$ denote the spectral gap of the field dynamics on $\mu$ with parameter $\theta$. The spectral gap of the Glauber dynamics on $\mu$ can be lower-bounded by the following proposition.

\begin{proposition}[\cite{CFYZ21}]\label{prop:gamma-glauber-field-min}
$\gamma_{\textnormal{Glauber}}^{}(\mu) \geq \gamma_{\textnormal{field}}^{}(\mu,\theta) \cdot \gamma_{\textnormal{min}}^{}(\theta)$.
\end{proposition}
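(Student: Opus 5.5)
The plan is to chain two Poincaré inequalities: the one for the field dynamics and the one for the Glauber dynamics on the pinned tilted measures $(\theta*\mu)^\sigma$, joined by a variance decomposition. It then remains to show that the averaged pinned Glauber Dirichlet forms are dominated by the Glauber Dirichlet form of $\mu$. Fix $f:\{0,1\}^V\to\mathbb{R}$.

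\textbf{Step 1: the field dynamics as a block heat-bath step.} Introduce the joint law of $(X,S)$ with $X\sim\mu$ and $S$ drawn by the field-dynamics rule: $v\in S$ always if $X(v)=1$, and with probability $\theta$ if $X(v)=0$. Put $\Lambda=V\setminus S$ and $\sigma=X(\Lambda)$, so $\sigma\equiv 0$. I will check the following identity:
\[
\Pr[X=x\mid S,\,X(\Lambda)=\sigma]\propto \mu(x)\prod_{v\in S:\,x_v=0}\theta ,
\]
i.e.\ this conditional law is $(\theta*\mu)^\sigma$. This shows that one field step is ``sample $(S,\sigma)$ given $X$, then resample $X$ from $(\theta*\mu)^\sigma$''. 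This is a Markov operator of the form $DD^*$, and its Dirichlet form is
\[
\mathcal{E}_{\text{field}}(f)=\mathbb{E}_{(\Lambda,\sigma)}\bigl[\mathrm{Var}_{(\theta*\mu)^\sigma}(f)\bigr].
\]
Hence the definition of $\gamma_{\text{field}}$ gives $\mathrm{Var}_\mu(f)\le \gamma_{\text{field}}^{-1}\,\mathbb{E}_{(\Lambda,\sigma)}[\mathrm{Var}_{(\theta*\mu)^\sigma}(f)]$.

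\textbf{Step 2: local Poincaré inequalities.} For each pair $(\Lambda,\sigma)$, the definition of $\gamma_{\min}(\theta)$ gives
\[
\mathrm{Var}_{(\theta*\mu)^\sigma}(f)\le \gamma_{\min}^{-1}\,\mathcal{E}^{\text{GD}}_{(\theta*\mu)^\sigma}(f).
\]
Since the pinned Glauber dynamics picks $v\in V$ with probability $1/n$ and does nothing on $\Lambda$, its Dirichlet form is
\[
\mathcal{E}^{\text{GD}}_{(\theta*\mu)^\sigma}(f)=\frac1n\sum_{v\in S}\mathbb{E}_{X\sim(\theta*\mu)^\sigma}\bigl[\mathrm{Var}_v(f\mid X_{-v})\bigr],
\]
where $\mathrm{Var}_v$ is taken under the tilted single-site conditional.

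\textbf{Step 3: averaging back to $\mu$ (the main obstacle).} I need
\[
\mathbb{E}_{(\Lambda,\sigma)}\bigl[\mathcal{E}^{\text{GD}}_{(\theta*\mu)^\sigma}(f)\bigr]\le \mathcal{E}^{\text{GD}}_\mu(f)=\frac1n\sum_v \mathbb{E}_{\mu}\bigl[\mathrm{Var}^\mu_v(f\mid X_{-v})\bigr].
\]
Because the tilted single-site law differs from $\mu$'s, this needs a two-point computation. By Step 1, the pair $\bigl((\Lambda,\sigma), X\sim(\theta*\mu)^\sigma\bigr)$ has the same law as $(X,S)$ under the joint measure. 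So the left side equals
\[
\frac1n\sum_v \mathbb{E}\bigl[\mathbf{1}[v\in S]\,\widetilde{\mathrm{Var}}_v\bigr],
\]
where $\widetilde{\mathrm{Var}}_v$ denotes the single-site variance of $f$ under the tilted conditional law at $v$. The first part of this step is to condition on $X_{-v}$ (and on the membership in $S$ of vertices other than $v$, which is independent of $X_v$). Let $p=\mu(X_v=1\mid X_{-v})$. The joint weight of $\{v\in S,\,X_v=1\}$ is $p$, and that of $\{v\in S,\,X_v=0\}$ is $\theta(1-p)$. Hence $\Pr[v\in S]=Z_v:=p+\theta(1-p)$, and given $v\in S$ the law of $X_v$ is exactly the tilted conditional, with $\tilde p=p/Z_v$. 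Thus the contribution of $v$ is
\[
Z_v\,\tilde p(1-\tilde p)\,(f_1-f_0)^2=\frac{\theta\,p(1-p)}{Z_v}\,(f_1-f_0)^2\le p(1-p)\,(f_1-f_0)^2,
\]
using $Z_v\ge\theta$, which holds since $\theta<1$. The right-hand side is precisely $\mu$'s single-site variance. The remaining part of this step is to sum over $v$ and take expectations, which yields the claimed domination.

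\textbf{Conclusion.} Chaining Steps 1 to 3 gives $\mathrm{Var}_\mu(f)\le(\gamma_{\text{field}}\gamma_{\min})^{-1}\mathcal{E}^{\text{GD}}_\mu(f)$ for every $f$, which is the claimed bound $\gamma_{\text{Glauber}}(\mu)\ge\gamma_{\text{field}}(\mu,\theta)\cdot\gamma_{\min}(\theta)$.
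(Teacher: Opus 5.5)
Your proof is correct and is essentially the standard argument from \cite{CFYZ21}, which the paper cites rather than reproduces. That argument writes the field dynamics as a down-up chain, so that its Dirichlet form is $\mathbb{E}_S\bigl[\mathrm{Var}_{(\theta*\mu)^{\sigma_S}}(f)\bigr]$. It then applies the Poincar\'e inequality of each pinned tilted Glauber dynamics, and dominates the averaged pinned Dirichlet forms by $\mathcal{E}^{\mathrm{GD}}_\mu(f)$ through the single-site bound $\theta p(1-p)/(p+\theta(1-p))\le p(1-p)$, which is exactly your three steps (the degenerate case $S=\emptyset$ is harmless since it contributes zero variance).
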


Combining \eqref{eq:gamma-field-theta}, \eqref{eq:theta-definition}, \eqref{eq:gamma-min-theta}, and \Cref{prop:gamma-glauber-field-min}, we obtain the following lower bound on the spectral gap of the Glauber dynamics:
\begin{align}\label{eq:gamma-glauber-lower-bound}
    \gamma_{\textnormal{Glauber}}^{}(\mu) \geq \gamma_{\textnormal{field}}^{}(\mu,\theta) \cdot \gamma_{\textnormal{min}}^{}(\theta) = \Omega_{\beta,\gamma,\lambda} \tp{\frac{1}{n (\log n)^{C}}}.
\end{align}

Finally, we bound the mixing time of the Glauber dynamics on $\mu$.
Suppose the starting configuration is the all-1 configuration $X_0 = \boldsymbol{1}$. For any configuration $\tau \in \{0,1\}^V$, it holds that 
\begin{align*}
\frac{\mu(\boldsymbol{1})}{\mu(\tau)} \geq \min\{1,\lambda^{-1}\}^n \geq \lambda_c^{-n}.
\end{align*}
The above inequality holds because $\boldsymbol{1}$ maximizes the factors contributed by all edges; for each vertex, the factor contributed by $\boldsymbol{1}$ is $1$, whereas the factor contributed by $\tau$ is at most $\max\{1,\lambda\}$. Since there are $2^n$ configurations in total, we have
\begin{align}\label{eq:mu-1-lower-bound}
   {\mu(\boldsymbol{1})} \geq (2 \lambda_c)^{-n}.
\end{align}
Combining~\eqref{eq:gamma-glauber-lower-bound} and~\eqref{eq:mixing-time-glauber-gap}, the mixing time of the Glauber dynamics starting from the all-1 configuration is 
\begin{align*}
t^{\textnormal{Glauber}}_{\textnormal{mix-}\boldsymbol{1}}(\epsilon) = O\tp{ \frac{1}{ \gamma_{\textnormal{Glauber}}^{}(\mu) } \log \frac{1}{\epsilon^2 \mu(\boldsymbol{1})}} = O_{\beta,\gamma,\lambda}\left(n^2 (\log n)^C \log \frac{1}{\epsilon}\right).
\end{align*}

To bound the mixing time of the Glauber dynamics on $\mu$ starting from an arbitrary configuration, combine~\eqref{eq:gamma-glauber-lower-bound}, \Cref{lemma:warm-start-configuration-general}, and~\eqref{eq:general-bound-on-mixing-time} to obtain
\begin{align*}
    t^{\textnormal{Glauber}}_{\textnormal{mix}}(\epsilon) = O_{\beta,\gamma,\lambda}\left(n^3 (\log n)^{C+1} \log \frac{1}{\epsilon}\right).
\end{align*}
\Cref{thm:glauber-mixing-1} now follows after increasing the constant $C$ in the theorem by $2$. The extra $\log n$ factor absorbs the constants hidden in the notation $O_{\beta,\gamma,\lambda}(\cdot)$.

\ifdoubleblind
\else
  \section*{Acknowledgement}
    We thank Konrad Anand and Graham Freifeld for useful discussions at an early stage of this paper.

    This project has received funding from the European Research Council (ERC) under the European Union's Horizon 2020 research and innovation programme (grant agreement No.~947778).
    Weiming Feng acknowledges the support of ECS grant 27202725 from Hong Kong RGC.
\fi

\bibliographystyle{alpha}
\bibliography{ref}

@article{shao2021contraction,
  title={Contraction: A unified perspective of correlation decay and zero-freeness of 2-spin systems},
  author={Shao, Shuai and Sun, Yuxin},
  journal={J. Stat. Phys.},
  volume={185},
  number={2},
  pages={12},
  year={2021},
  publisher={Springer}
}

@inproceedings{ChenE22,
  author       = {Yuansi Chen and
                  Ronen Eldan},
  title        = {Localization Schemes: {A} Framework for Proving Mixing Bounds for
                  Markov Chains (extended abstract)},
  booktitle    = {FOCS},
  pages        = {110--122},
  publisher    = {{IEEE}},
  year         = {2022},
}

@inproceedings{Chen0YZ22,
  author       = {Xiaoyu Chen and
                  Weiming Feng and
                  Yitong Yin and
                  Xinyuan Zhang},
  title        = {Optimal mixing for two-state anti-ferromagnetic spin systems},
  booktitle    = {FOCS},
  pages        = {588--599},
  publisher    = {{IEEE}},
  year         = {2022},
}

@inproceedings{LiuLZ14,
  author       = {Jingcheng Liu and
                  Pinyan Lu and
                  Chihao Zhang},
  title        = {The Complexity of Ferromagnetic Two-spin Systems with External Fields},
  booktitle    = {{RANDOM} },
  series       = {LIPIcs},
  pages        = {843--856},
  publisher    = {Schloss Dagstuhl - Leibniz-Zentrum f{\"{u}}r Informatik},
  year         = {2014},
}

@inproceedings{GLL20,
  author       = {Heng Guo and
                  Jingcheng Liu and
                  Pinyan Lu},
  title        = {Zeros of ferromagnetic 2-spin systems},
  booktitle    = {{SODA}},
  pages        = {181--192},
  publisher    = {{SIAM}},
  year         = {2020}
}

@article{DGGJ04,
  author       = {Martin E. Dyer and
                  Leslie Ann Goldberg and
                  Catherine S. Greenhill and
                  Mark Jerrum},
  title        = {The Relative Complexity of Approximate Counting Problems},
  journal      = {Algorithmica},
  volume       = {38},
  number       = {3},
  pages        = {471--500},
  year         = {2004}
}

@article {ES88,
    AUTHOR = {Edwards, Robert G. and Sokal, Alan D.},
     TITLE = {Generalization of the {F}ortuin-{K}asteleyn-{S}wendsen-{W}ang
              representation and {M}onte {C}arlo algorithm},
   JOURNAL = {Phys. Rev. D (3)},
  FJOURNAL = {Physical Review. D. Particles and Fields. Third Series},
    VOLUME = {38},
      YEAR = {1988},
    NUMBER = {6},
     PAGES = {2009--2012},
}

@article{FK72,
    AUTHOR = {Fortuin, Cees M. and Kasteleyn, Piet W.},
     TITLE = {On the random-cluster model. {I}. {I}ntroduction and relation
              to other models},
   JOURNAL = {Physica},
    VOLUME = {57},
      YEAR = {1972},
     PAGES = {536--564},
}

@article{FengGW23,
  author       = {Weiming Feng and
                  Heng Guo and
                  Jiaheng Wang},
  title        = {Swendsen-{W}ang dynamics for the ferromagnetic Ising model with external
                  fields},
  journal      = {Inf. Comput.},
  volume       = {294},
  pages        = {105066},
  year         = {2023},
}

@article {GuoJ18,
    AUTHOR = {Guo, Heng and Jerrum, Mark},
     TITLE = {Random cluster dynamics for the {I}sing model is rapidly
              mixing},
   JOURNAL = {Ann. Appl. Probab.},
  FJOURNAL = {The Annals of Applied Probability},
    VOLUME = {28},
      YEAR = {2018},
    NUMBER = {2},
     PAGES = {1292--1313},
}

@article{JerrumS93,
  author       = {Mark Jerrum and
                  Alistair Sinclair},
  title        = {Polynomial-Time Approximation Algorithms for the Ising Model},
  journal      = {{SIAM} J. Comput.},
  volume       = {22},
  number       = {5},
  pages        = {1087--1116},
  year         = {1993},
}

@article{GoldbergJP03,
  author       = {Leslie Ann Goldberg and
                  Mark Jerrum and
                  Mike Paterson},
  title        = {The computational complexity of two-state spin systems},
  journal      = {Random Struct. Algorithms},
  volume       = {23},
  number       = {2},
  pages        = {133--154},
  year         = {2003},
}

@incollection{Hin12,
  author       = {Geoffrey E. Hinton},
  title        = {A Practical Guide to Training Restricted Boltzmann Machines},
  booktitle    = {Neural Networks: Tricks of the Trade (2nd ed.)},
  series       = {Lecture Notes in Computer Science},
  pages        = {599--619},
  publisher    = {Springer},
  year         = {2012}
}

@misc{Nobel24,
  title = {Press release for the {N}obel prize in physics},
  howpublished = {\url{https://www.nobelprize.org/prizes/physics/2024/press-release/}},
  note = {Accessed: 2026-03-30},
  year = {2024}
}

@inproceedings{Sly10,
  author       = {Allan Sly},
  title        = {Computational Transition at the Uniqueness Threshold},
  booktitle    = {{FOCS}},
  pages        = {287--296},
  publisher    = {{IEEE} Computer Society},
  year         = {2010}
}

@article{GSV16,
	Author = {Galanis, Andreas and {\v{S}}tefankovi\v{c}, Daniel and Vigoda, Eric},
	Journal = {Combin. Probab. Comput.},
	Number = {4},
	Pages = {500--559},
	Title = {Inapproximability of the partition function for the antiferromagnetic {I}sing and hard-core models},
	Volume = {25},
	Year = {2016}}

@article{SS14,
	Author = {Sly, Allan and Sun, Nike},
	Journal = {Ann. Probab.},
	Number = {6},
	Pages = {2383--2416},
	Title = {Counting in two-spin models on {$d$}-regular graphs},
	Volume = {42},
	Year = {2014}}

@article{HOT06,
  author       = {Geoffrey E. Hinton and
                  Simon Osindero and
                  {Yee Whye} Teh},
  title        = {A Fast Learning Algorithm for Deep Belief Nets},
  journal      = {Neural Comput.},
  volume       = {18},
  number       = {7},
  pages        = {1527--1554},
  year         = {2006}
}

@article{Hin02,
  author       = {Geoffrey E. Hinton},
  title        = {Training Products of Experts by Minimizing Contrastive Divergence},
  journal      = {Neural Comput.},
  volume       = {14},
  number       = {8},
  pages        = {1771--1800},
  year         = {2002}
}

@inproceedings{MH10,
  author       = {Abdel{-}rahman Mohamed and
                  Geoffrey E. Hinton},
  title        = {Phone recognition using Restricted {B}oltzmann Machines},
  booktitle    = {{ICASSP}},
  pages        = {4354--4357},
  publisher    = {{IEEE}},
  year         = {2010}
}

@inproceedings{SMH07,
  author       = {Ruslan Salakhutdinov and
                  Andriy Mnih and
                  Geoffrey E. Hinton},
  title        = {Restricted {B}oltzmann machines for collaborative filtering},
  booktitle    = {{ICML}},
  pages        = {791--798},
  publisher    = {{ACM}},
  year         = {2007}
}

@article{KQWW26,
  author       = {Youngwoo Kwon and
                  Qian Qin and
                  Guanyang Wang and
                  Yuchen Wei},
  title        = {A phase transition in sampling from Restricted {B}oltzmann Machines},
  journal      = {Ann. Appl. Probab.},
  year         = {2026+},
  note         = {to appear}
}

@inproceedings{Tos16,
  author       = {Christopher Tosh},
  title        = {Mixing Rates for the Alternating {G}ibbs Sampler over Restricted {B}oltzmann
                  Machines and Friends},
  booktitle    = {{ICML}},
  pages        = {840--849},
  publisher    = {JMLR.org},
  year         = {2016}
}

@article{MS13,
    AUTHOR = {Mossel, Elchanan and Sly, Allan},
     TITLE = {Exact thresholds for {I}sing-{G}ibbs samplers on general
              graphs},
   JOURNAL = {Ann. Probab.},
    VOLUME = {41},
      YEAR = {2013},
    NUMBER = {1},
     PAGES = {294--328},
}

@article{fill1991eigenvalue,
  title={Eigenvalue bounds on convergence to stationarity for nonreversible Markov chains, with an application to the exclusion process},
  author={Fill, James Allen},
  journal={The annals of applied probability},
  pages={62--87},
  year={1991},
  publisher={JSTOR}
}

@inproceedings{GuoKZ18,
  author       = {Heng Guo and
                  Kaan Kara and
                  Ce Zhang},
  title        = {Layerwise Systematic Scan: Deep Boltzmann Machines and Beyond},
  booktitle    = {
                  {AISTATS}},
  series       = {Proceedings of Machine Learning Research},
  volume       = {84},
  pages        = {178--187},
  publisher    = {{PMLR}},
  year         = {2018},
}

@inproceedings{FY26,
  title={Rapid Mixing of Glauber Dynamics for Monotone Systems via Entropic Independence},
  author={Feng, Weiming and Yang, Minji},
  booktitle={SODA},
  pages={4894--4929},
  year={2026},
  organization={SIAM}
}

@article{AHS85,
  author       = {David H. Ackley and
                  Geoffrey E. Hinton and
                  Terrence J. Sejnowski},
  title        = {A Learning Algorithm for {B}oltzmann Machines},
  journal      = {Cogn. Sci.},
  volume       = {9},
  number       = {1},
  pages        = {147--169},
  year         = {1985}
}

@article{Smo86,
 author = {Paul Smolensky},
 journal = {Information Processing in Dynamical Systems: Foundations of Harmony Theory},
 title = {Parallel Distributed Processing: Explorations in the Microstructure of Cognition},
 year = {1986},
 pages = {194--281},
}

@article{GuoL18,
  author       = {Heng Guo and
                  Pinyan Lu},
  title        = {Uniqueness, Spatial Mixing, and Approximation for Ferromagnetic 2-Spin
                  Systems},
  journal      = {{ACM} Trans. Comput. Theory},
  volume       = {10},
  number       = {4},
  pages        = {17:1--17:25},
  year         = {2018},
}

@article{BlancaCV20,
  author       = {Antonio Blanca and
                  Zongchen Chen and
                  Eric Vigoda},
  title        = {Swendsen-{W}ang dynamics for general graphs in the tree uniqueness region},
  journal      = {Random Struct. Algorithms},
  volume       = {56},
  number       = {2},
  pages        = {373--400},
  year         = {2020},
}

@inproceedings {weitz2006counting,
  author       = {Dror Weitz},
  title        = {Counting independent sets up to the tree threshold},
  booktitle    = {{STOC}},
  pages        = {140--149},
  publisher    = {{ACM}},
  year         = {2006}
}

@article{ALO24,
  author       = {Nima Anari and
                  Kuikui Liu and
                  Shayan {Oveis Gharan}},
  title        = {Spectral Independence in High-Dimensional Expanders and Applications
                  to the Hardcore Model},
  journal      = {{SIAM} J. Comput.},
  volume       = {53},
  number       = {6},
  pages        = {S20--1},
  year         = {2024}
}

@article{CLV23,
  author       = {Zongchen Chen and
                  Kuikui Liu and
                  Eric Vigoda},
  title        = {Rapid Mixing of {G}lauber Dynamics up to Uniqueness via Contraction},
  journal      = {{SIAM} J. Comput.},
  volume       = {52},
  number       = {1},
  pages        = {196--237},
  year         = {2023}
}

@article{CLV21,
  author = {Chen, Zongchen and Liu, Kuikui and Vigoda, Eric},
title = {Optimal Mixing of Glauber Dynamics: Entropy Factorization via High-Dimensional Expansion},
journal = {SIAM J. Comput.},
volume = {0},
number = {0},
pages = {STOC21-104-STOC21-153},
year = {2023},
}

@inproceedings{CFYZ21,
  author       = {Xiaoyu Chen and
                  Weiming Feng and
                  Yitong Yin and
                  Xinyuan Zhang},
  title        = {Rapid mixing of {G}lauber dynamics via spectral independence for all
                  degrees},
  booktitle    = {{FOCS}},
  pages        = {137--148},
  publisher    = {{IEEE}},
  year         = {2021}
}

@inproceedings{CZ23,
  author       = {Xiaoyu Chen and
                  Xinyuan Zhang},
  title        = {A Near-Linear Time Sampler for the {I}sing Model with External Field},
  booktitle    = {{SODA}},
  pages        = {4478--4503},
  publisher    = {{SIAM}},
  year         = {2023}
}

@inproceedings{AJKPV22,
  author       = {Nima Anari and
                  Vishesh Jain and
                  Frederic Koehler and
                  Huy Tuan Pham and
                  Thuy{-}Duong Vuong},
  title        = {Entropic independence: optimal mixing of down-up random walks},
  booktitle    = {{STOC}},
  pages        = {1418--1430},
  publisher    = {{ACM}},
  year         = {2022}
}

@book{Bar16,
  author       = {Alexander I. Barvinok},
  title        = {Combinatorics and Complexity of Partition Functions},
  series       = {Algorithms and combinatorics},
  volume       = {30},
  publisher    = {Springer},
  year         = {2016}
}

@article{PR17,
  author       = {Viresh Patel and
                  Guus Regts},
  title        = {Deterministic Polynomial-Time Approximation Algorithms for Partition
                  Functions and Graph Polynomials},
  journal      = {{SIAM} J. Comput.},
  volume       = {46},
  number       = {6},
  pages        = {1893--1919},
  year         = {2017}
}

@book{LP17,
    AUTHOR = {Levin, David A. and Peres, Yuval},
     TITLE = {Markov chains and mixing times},
   EDITION = {Second},
 PUBLISHER = {American Mathematical Society},
      YEAR = {2017},
     PAGES = {xvi+447},
}

@article{fill2013comparison,
  title={Comparison inequalities and fastest-mixing Markov chains},
  author={Fill, James Allen and Kahn, Jonas},
  journal={The Annals of Applied Probability},
  pages={1778--1816},
  year={2013},
  publisher={JSTOR}
}

\appendix

\section{One-step decay in general settings}\label{sec:decay-general}
In this section, we prove \Cref{lemma:potential-function-bound-GL} by generalising the proof in \cite{GuoL18}. Our proof primarily focuses on ferromagnetic two-spin systems, and the RBM case is treated separately whenever necessary. 
Recall \eqref{eq:phi-definition} and \eqref{eq:phi-definition-1}, the definitions of the potential functions.
The constant $t$ is specified as follows:
\begin{itemize}
  \item for ferromagnetic two-spin systems, $t\defeq\frac{(1-\alpha)\gamma}{\beta\gamma-1}\log\frac{\lambda+\gamma}{\beta \lambda+1}$, where $\alpha = \alpha(\beta,\gamma,\lambda)$ will be specified later in the proof;
  \item for RBMs, we set $\beta=1$ in the above. Namely, $t\defeq\frac{(1-\alpha)\gamma}{\gamma-1}\log\frac{\lambda+\gamma}{\lambda+1}$, where $\alpha = \alpha(1, \gamma, \lambda)$.
\end{itemize}

Thus, $t$ is a constant depending on $\beta$, $\gamma$, and $\lambda$ in the ferromagnetic two-spin system case, and a constant depending on only $\gamma$ and $\lambda$ in the RBM case. We also have $\frac{1}{x\log \frac{\lambda}{x}}\geq \frac{\mathrm{e}}{\lambda}$.
It implies
\begin{align*}
  \frac{1}{t}\ge \phi(x) \ge \min\left\{\frac{1}{t},\frac{\mathrm{e}}{\lambda}\right\},
\end{align*}
namely, \Cref{condition:phi-bound} holds for both cases. 

We verify \Cref{condition:potential-function-bound} next.
For any edge $e=(u,u_i)$, define the function $g_{\lambda,e}(x)$ for $x \in (0,\lambda)$ by:
\begin{align*}
g_{\lambda,e}(x):=\frac{(\beta_e\gamma_e-1)x\log\frac{\lambda}{x}}{(\beta_ex+1)(x+\gamma_e)\log\frac{x+\gamma_e}{\beta_e x+1}}.
\end{align*}
A useful property about the function $g_{\lambda,e}(x)$ is the following lemma.
Recall that the critical threshold $\lambda_c=\lambda_c(\beta,\gamma)=\left(\frac{\gamma}{\beta}\right)^{\frac{\sqrt{\beta\gamma}}{\sqrt{\beta\gamma}-1}}$.
\begin{lemma}[Lemma 3.3, \text{\cite{GuoL18}}]\label{lemma:guo-lemma-3.3}
  For any $\beta_e,\gamma_e$ such that $\beta_e\gamma_e>1$ and $\beta_e\le \gamma_e$,
  and any $x\in[0,\lambda_c(\beta_e,\gamma_e)]$, $g_{\lambda_c(\beta_e,\gamma_e),e}(x)\leq 1$.
\end{lemma}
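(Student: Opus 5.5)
Write $\beta=\beta_e$, $\gamma=\gamma_e$, $s=\sqrt{\beta\gamma}>1$, $\lambda_c=\lambda_c(\beta,\gamma)$ and $r(x)=\frac{x+\gamma}{\beta x+1}$. At $x=0$ the claim is trivial, since the left-hand side of $g\le 1$ has numerator tending to $0$. Assuming the denominator is positive, $g_{\lambda_c,e}(x)\le 1$ is equivalent for $x\in(0,\lambda_c]$ to
\begin{align*}
Q(x)\defeq \log x+\frac{(\beta x+1)(x+\gamma)}{(\beta\gamma-1)\,x}\log r(x)\;\ge\;\log\lambda_c .
\end{align*}
The plan is to show that $Q$ attains its minimum over $(0,\lambda_c]$ at $x_0=\sqrt{\gamma/\beta}$, and that $Q(x_0)=\log\lambda_c$ exactly.

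\textbf{Step 1: value at the critical point.} At $x_0$ we have $x_0+\gamma=\sqrt\gamma(1/\sqrt\beta+\sqrt\gamma)$ and $\beta x_0+1=\sqrt\beta(\sqrt\gamma+1/\sqrt\beta)$. Hence $r(x_0)=\sqrt{\gamma/\beta}$, and the prefactor equals $(s+1)^2/(s^2-1)=(s+1)/(s-1)$. Therefore
\begin{align*}
Q(x_0)=\tfrac12\log\tfrac{\gamma}{\beta}+\tfrac{s+1}{2(s-1)}\log\tfrac{\gamma}{\beta}=\tfrac{s}{s-1}\log\tfrac{\gamma}{\beta}=\log\lambda_c .
\end{align*}
Since $\frac{s}{s-1}>\frac12$ and $\gamma\ge\beta$, we also get $x_0\le\lambda_c$, so $x_0$ lies in the interval.

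\textbf{Step 2: derivative.} Write $A(x)=(\beta x+1)(x+\gamma)/x=\beta x+\beta\gamma+1+\gamma/x$. Then
\begin{align*}
(\log r)'=\frac{1-\beta\gamma}{(x+\gamma)(\beta x+1)},\qquad\text{so}\qquad \frac{A(x)\,(\log r)'(x)}{\beta\gamma-1}=-\frac1x .
\end{align*}
This term cancels the derivative of $\log x$ exactly, leaving
\begin{align*}
Q'(x)=\frac{(\beta-\gamma/x^2)\,\log r(x)}{\beta\gamma-1}.
\end{align*}
Wherever $r(x)>1$, the sign of $Q'$ is the sign of $x-x_0$. So $Q$ decreases on $(0,x_0]$ and increases on $[x_0,\lambda_c]$, and the minimum over the interval is $Q(x_0)=\log\lambda_c$, which proves the claim.

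\textbf{Step 3 (main obstacle): positivity of $\log r$ on $[0,\lambda_c]$.} This is also needed for the division performed at the start. Since $(\log r)'<0$, $r$ is decreasing, so it suffices to check $r(\lambda_c)\ge1$, i.e.
\begin{align*}
\lambda_c(1-\beta)+\gamma-1\ge 0 .
\end{align*}
For $\beta\le1$, which is the only case used in the paper, this is immediate from $\gamma\ge\sqrt{\beta\gamma}>1$. For $\beta>1$, the same condition is in fact necessary: if $r(x)<1$ at some $x<\lambda_c$, the inequality $g\le 1$ itself would fail, because the left-hand side would be nonnegative while the right-hand side is negative. So this case reduces to the explicit inequality $\lambda_c\le(\gamma-1)/(\beta-1)$. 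I would attack it by writing $\gamma=\beta q$ with $q\ge1$ and comparing logarithms of both sides as functions of $q$. If this turns out to be false in some parameter range, I would restrict the statement to $\beta\le 1$, which suffices for \Cref{lemma:potential-function-bound-GL}.
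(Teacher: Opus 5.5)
Your Steps~1 and~2 are correct, and they do the real work. The cancellation giving $Q'(x)=\frac{(\beta-\gamma/x^2)\log r(x)}{\beta\gamma-1}$, together with $Q(\sqrt{\gamma/\beta})=\log\lambda_c$, proves $g_{\lambda_c,e}(x)\le 1$ at every $x>0$ with $r(x)>1$. For $\beta_e\le 1$ your Step~3 shows $r>1$ everywhere, so that case is complete. This is the only case the paper uses, since there $\beta_e\le\beta\le 1$; the paper gives no proof of its own and imports the lemma from \cite{GuoL18}.

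The gap is Step~3 when $\beta_e>1$. The statement allows any $\beta_e\le\gamma_e$, but you leave $\lambda_c\le(\gamma_e-1)/(\beta_e-1)$ unproven and propose weakening the lemma if it fails. Your ``necessity'' claim is also wrong as stated. If $r(x)<1$ and $x<\lambda_c$, the numerator of $g$ is positive and the denominator negative, so $g(x)<0\le 1$ holds trivially. What would fail is only the cleared-denominator inequality $(\beta\gamma-1)x\log\frac{\lambda_c}{x}\le(\beta x+1)(x+\gamma)\log r(x)$. That is the form the appendix actually uses in \eqref{eq:property-for-decay}, so it is still worth establishing.

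The missing inequality needs no separate estimate; it follows from your Step~2. Let $x^*=(\gamma-1)/(\beta-1)$, so that for $\beta>1$ we have $\log r>0$ exactly on $[0,x^*)$.

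If $\gamma=\beta$, then $\lambda_c=x^*=1$ directly. At the endpoint $x=\lambda_c=1$, $g$ is a removable $0/0$ whose limit is $1$.

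If $\gamma>\beta$, then $r(x_0)=\sqrt{\gamma/\beta}>1$ and $r$ is decreasing, so $x_0<x^*$. Your derivative formula then shows that $Q$ restricted to $(0,x^*)$ is strictly decreasing up to $x_0$ and strictly increasing after it, so $Q>\log\lambda_c$ on $(x_0,x^*)$. As $x\uparrow x^*$, $\log r(x)\to 0$ while $A(x)=\beta x+\beta\gamma+1+\gamma/x$ stays bounded, hence $Q(x)\to\log x^*$. Therefore $\log x^*>\log\lambda_c$, i.e.\ $\lambda_c<x^*$.

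So $\log r>0$ on $[0,\lambda_c)$ for all admissible $\beta_e,\gamma_e$, and your Steps~1--2 then prove the lemma as stated, in both the ratio and the cleared-denominator forms.
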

Moreover, notice that $g_{\lambda,e}(x)$ is monotone increasing in $\lambda$ for any $x>0$.

We first handle the ferromagnetic two-spin system case, namely for any $\beta_e\leq \beta\leq 1<\gamma\leq \gamma_e$ and $\beta\gamma\geq \beta_e\gamma_e>1$. 
There are two main ingredients that will differ in the two cases:
\begin{enumerate}
  \item\label{item:g-bound} there exists a constant $\alpha\defeq\alpha(\beta,\gamma,\lambda)$ such that $\alpha\in(0,1)$ and $g_{\lambda,e}(x) \leq 1-\alpha$ for all $x \in (0,\lambda)$;
  \item\label{item:t-bound} for any $x \in (0,\lambda)$, we have
    \begin{align}\label{eq:t-bound}
      t<\frac{(1-\alpha)(\beta_e x+1)(x+\gamma_e)}{\beta_e\gamma_e-1}\log\frac{x+\gamma_e}{\beta_e x+1},
    \end{align}
    where $t=\frac{(1-\alpha)\gamma}{\beta\gamma-1}\log\frac{\lambda+\gamma}{\beta \lambda+1}$ by our choice.
\end{enumerate}

Property \eqref{item:t-bound} is straightforward for ferromagnetic 2-spin systems,
because $(\beta_e x+1)(x+\gamma_e)>\gamma$, $\beta_e\gamma_e-1\leq \beta\gamma-1$ and $\frac{\lambda+\gamma}{\beta \lambda+1}\leq \frac{\lambda+\gamma_e}{\beta_e \lambda+1}<\frac{x+\gamma_e}{\beta_e x+1}$.

For Property \eqref{item:g-bound},
we have $\log \frac{x+\gamma_e}{\beta_e x+1}\geq \log \frac{\lambda+\gamma_e}{\beta_e \lambda+1}\geq \log \frac{\lambda+\gamma}{\lambda+1}$ for $x\in (0,\lambda)$, where the first inequality holds because $\log \frac{x+\gamma_e}{\beta_e x+1}$ is monotone decreasing in $x$.
Thus,
\begin{align*}
    g_{\lambda,e}(x):=&\frac{(\beta_e\gamma_e-1)x\log\frac{\lambda}{x}}{(\beta_ex+1)(x+\gamma_e)\log\frac{x+\gamma_e}{\beta_e x+1}}\\
    \leq & \frac{(\beta\gamma-1)x\log\frac{\lambda}{x}}{\log\frac{x+\gamma_e}{\beta_e x+1}}\leq \frac{(\beta\gamma-1)x\log\frac{\lambda}{x}}{\log \frac{\lambda+\gamma}{\lambda+1}},
\end{align*}
where the first inequality holds because $\beta_ex+1\geq 1$, $x+\gamma_e\geq 1$ and $0<\beta_e\gamma_e-1\leq \beta\gamma-1$. 
As $(x\log \frac{\lambda}{x})' = \log \frac{\lambda}{x} - 1\geq 0$ for $0\leq x\leq \frac{\lambda}{e}$, we also have $x\log \frac{\lambda}{x}\to 0^+$ as $x\to 0^+$. 
Hence there exists a constant $x_0=x_0(\lambda,\beta,\gamma)\in (0,\lambda)$ such that for any $x\in (0,x_0]$, we have
\begin{align*}
    g_{\lambda,e}(x)\leq \frac{(\beta\gamma-1)x\log\frac{\lambda}{x}}{\log \frac{\lambda+\gamma}{\lambda+1}}\leq \frac{1}{2}.
\end{align*}
We denote $\lambda_c=\lambda_c(\beta,\gamma)$ and $\lambda_{c,e}\defeq\lambda_c(\beta_e,\gamma_e)$ for any edge $e$.
Since $\gamma_e\ge \gamma$, $\beta_e\le\beta$, and $\beta\gamma\ge\beta_e\gamma_e$,
$\lambda_c\le \lambda_{c,e}$.
For $x\in [x_0,\lambda)\subseteq [0,\lambda_{c,e})$, we have
\begin{align*}
  g_{\lambda,e}(x) & = g_{\lambda_c,e}(x)\cdot \frac{\log \lambda - \log x}{\log \lambda_c - \log x}
  \leq g_{\lambda_{c,e},e}(x)\cdot \frac{\log \lambda - \log x_0}{\log \lambda_c - \log x_0}\\
  &\le  \frac{\log \lambda - \log x_0}{\log \lambda_c - \log x_0} < 1,
\end{align*}
where the first inequality follows from the monotonicity of $g_{\lambda,e}(x)$ in $\lambda$ and $\lambda_c\le \lambda_{c,e}$,
and the second is by \Cref{lemma:guo-lemma-3.3}.
We set $\alpha = \alpha(\beta,\gamma,\lambda) \defeq 1 - \max\left\{\frac{1}{2}, \frac{\log \lambda - \log x_0}{\log \lambda_c - \log x_0}\right\}$ so that $g_{\lambda,e}(x) \leq 1-\alpha < 1$ for all $x \in (0,\lambda)$. 
This proves Property \eqref{item:g-bound}.

Next we claim that
\begin{align}\label{eq:property-for-decay}
    \forall x \in (0,\lambda), \quad \frac{(\beta_{e}\gamma_{e}-1)}{(\beta_{e}x+1)(x+\gamma_{e})} \frac{1}{\phi(x)}\leq (1-\alpha)\log \frac{x+\gamma_e}{\beta_e x+1},
\end{align}
We consider two different possible choices of $\phi(x)$ separately.
\begin{enumerate}
  \item If $\phi(x)=\frac{1}{x\log\frac{\lambda}{x}}$, this follows from Property \eqref{item:g-bound}.
  \item If $\phi(x)=\frac{1}{t}$, this follows from Property \eqref{item:t-bound}, namely \eqref{eq:t-bound}, directly.    
\end{enumerate}

Having verified~\eqref{eq:property-for-decay},
we have that
\begin{align*}
C_{\phi,d}(\boldsymbol{x}) =& \phi(F_u(\boldsymbol{x})) \sum_{i=1}^d \left\vert  \frac{\partial F_u}{\partial x_i}(\boldsymbol{x})  \right\vert \frac{1}{\phi(x_i)}\\
=&\phi(F_u(\boldsymbol{x})) \sum_{i=1}^d F_u(\boldsymbol{x}) \frac{(\beta_{e_i}\gamma_{e_i}-1)}{(\beta_{e_i}x_i+1)(x_i+\gamma_{e_i})} \frac{1}{\phi(x_i)}\\
\leq &\phi(F_u(\boldsymbol{x})) \sum_{i=1}^d F_u(\boldsymbol{x})(1-\alpha)\log \frac{x_i+\gamma_{e_i}}{\beta_{e_i} x_i+1}\qquad \text{by~\eqref{eq:property-for-decay}}\\
=&(1-\alpha)\phi(F_u(\boldsymbol{x})) F_u(\boldsymbol{x})\log \frac{\lambda_u}{F_u(\boldsymbol{x})}\\
\leq& (1-\alpha)\phi(F_u(\boldsymbol{x})) F_u(\boldsymbol{x})\log \frac{\lambda}{F_u(\boldsymbol{x})}\\
\leq& 1-\alpha,
\end{align*}
where the last inequality holds because $\phi(F_u(\boldsymbol{x}))=\min\left \{ \frac{1}{t},\frac{1}{F_u(\boldsymbol{x})\log \frac{\lambda}{F_u(\boldsymbol{x})}}\right \}\leq \frac{1}{F_u(\boldsymbol{x})\log \frac{\lambda}{F_u(\boldsymbol{x})}}$.

For the RBM case, all we need to do is to verify Properties \eqref{item:g-bound} and \eqref{item:t-bound} without using $\beta\gamma\ge\beta_e\gamma_e$.
The rest of the proof is identical.

For Property \eqref{item:g-bound},
we first claim that $g_{\lambda,e}(x)$ is monotone decreasing w.r.t.\ $\gamma_e$. 
To verify this, we show that $h(\gamma_e):=\frac{\gamma_e-1}{(x+\gamma_e)\log \frac{x+\gamma_e}{x+1}}$ is monotone decreasing when $\gamma_e\geq \gamma$.

Let $z\defeq\frac{x+\gamma_e}{x+1}>1$. Then $h(\gamma_e)=\frac{z-1}{z\log z}$. We have 
\begin{align*}
    \left(\frac{z-1}{z\log z}\right)'= \frac{\log z -z +1}{z^2 \log^2 z} < 0, 
\end{align*}
where we use $\log z < z-1$ for $z>1$. Combining with $\frac{d z}{d \gamma_e}=\frac{1}{x+1}>0$, we conclude that $h(\gamma_e)$ is monotone decreasing w.r.t.\ $\gamma_e$ when $\gamma_e\geq \gamma$. 
Hence, to show Property \eqref{item:g-bound}, 
we only need to consider the extremal case $\gamma_e=\gamma$. 
This extremal case is a special case of ferromagnetic 2-spin systems, where $\beta=1=\beta_e$ and $\gamma_e=\gamma$ for all edges $e$. 
Therefore, the argument above for ferromagnetic 2-spin systems applies, and Property \eqref{item:g-bound} holds. 

For Property \eqref{item:t-bound},
notice that the right hand side of \eqref{eq:t-bound} is in fact $\frac{(1-\alpha)(x+1)}{h(\gamma_e)}$.
Since $h(\gamma_e)$ is decreasing in $\gamma_e$ and $\gamma_e\ge \gamma$,
$\frac{(1-\alpha)(x+1)}{h(\gamma_e)} \ge \frac{(1-\alpha)(x+1)}{h(\gamma)}$,
which is larger than $t$, the left hand side of \eqref{eq:t-bound}.

\section{Monotone coupling}\label{app:censoring}

\subsection{Proof of \texorpdfstring{\Cref{prop:monotone-of-ferro-ising}}{Proposition 29} and \texorpdfstring{\Cref{prop:monotone-coupling}}{Proposition 30}}
We first prove a lemma to show the monotonicity of the conditional marginal distribution induced by a ferromagnetic two-spin system.
\begin{lemma}\label{lemma:monotone-glauber-dynamics}
Let $\mu$ be a Gibbs distribution of a ferromagnetic two-spin system on graph $G = (V,E)$. Consider any $\Lambda\subseteq V$ and two 
partial configurtions $\sigma\preceq \tau\in \{0,1\}^{\Lambda}$, it holds that
\begin{align}\label{eq:monotone-block-dynamics}
\forall v\in V\setminus \Lambda, \quad \mu_v^{\sigma_{V\setminus \Lambda}}(1) \leq \mu_v^{\tau_{V\setminus \Lambda}}(1).
\end{align}
\end{lemma}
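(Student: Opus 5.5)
The plan is to compute the conditional marginal at $v$ directly and compare it under the two pinnings. The notation $\mu_v^{\sigma_{V\setminus\Lambda}}$ is read as the marginal at $v$ of $\mu$ conditioned on $\sigma$ on $\Lambda$, with the remaining vertices of $V\setminus\Lambda$ other than $v$ left free. If every vertex other than $v$ is pinned, the claim is a one-line computation. The conditional odds are
\[
\frac{\mu_v^{\sigma}(0)}{\mu_v^{\sigma}(1)} = \lambda_v \prod_{u\sim v,\ \sigma_u=0}\beta_{uv}\prod_{u\sim v,\ \sigma_u=1}\frac{1}{\gamma_{uv}}.
\]
Raising a neighbour's spin from $0$ to $1$ replaces a factor $\beta_{uv}$ by $1/\gamma_{uv}$. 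This can only decrease the odds, because ferromagnetism gives $\beta_{uv}\gamma_{uv}\ge 1$. Hence $\mu_v^\sigma(1)\le\mu_v^\tau(1)$.

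For the general case, where some vertices besides $v$ are free, I would reduce to the fully pinned case via the FKG/Holley inequality.

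1. Write $U=V\setminus\Lambda$ and restrict to $\mu^\sigma_U$ and $\mu^\tau_U$. By self-reducibility (\Cref{obs:self-reducibility}), these are Gibbs distributions of ferromagnetic two-spin systems on $G[U]$. They have identical edge weights and external fields $\lambda'_w(\sigma)\ge\lambda'_w(\tau)$ for every $w\in U$, since $\lambda'_w$ multiplies by $\beta$ for $0$-neighbours and by $1/\gamma$ for $1$-neighbours, and $\beta\ge 1/\gamma$.

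2. Verify Holley's condition
\[
\mu^\tau_U(x\vee y)\,\mu^\sigma_U(x\wedge y)\ge \mu^\tau_U(x)\,\mu^\sigma_U(y)
\]
for all $x,y\in\{0,1\}^U$. Since the weights factorize, it suffices to check vertex factors and edge factors separately.
   - Vertex factors: $\lambda'_w(\sigma)\ge\lambda'_w(\tau)$ handles these.
   - Edge factors: for each edge one needs the lattice supermodularity $b_e(x\vee y)\,b_e(x\wedge y)\ge b_e(x)\,b_e(y)$. A four-case check on the edge's two endpoints reduces this to $\beta_e\gamma_e\ge 1$. For example, with $x=(1,0)$ and $y=(0,1)$ the inequality reads $\gamma_e\beta_e\ge 1\cdot 1$.

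3. Holley's theorem then gives $\mu^\sigma_U\preceq_D\mu^\tau_U$, and projecting onto the increasing event $\{X_v=1\}$ yields \eqref{eq:monotone-block-dynamics}.

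Alternatively, one can avoid Holley and use a Glauber coupling. Run the single-site heat-bath chains for $\mu^\sigma_U$ and $\mu^\tau_U$ with the same uniform random numbers, starting from ordered states. The fully pinned computation above shows that each update preserves the order. Both chains converge to their stationary laws, which gives stochastic domination in the limit.

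The main obstacle is only the bookkeeping in step 2, namely matching the factorized weights correctly between $x\vee y$, $x\wedge y$, $x$ and $y$. Everything else is the standard FKG argument, so I would present the coupling version if a self-contained proof is preferred.
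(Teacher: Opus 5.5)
Your proof is correct, but it takes a different route from the paper.

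\textbf{The paper's route.} The paper works on the SAW tree. The trees $T_{\textnormal{SAW}}(G,v,\sigma)$ and $T_{\textnormal{SAW}}(G,v,\tau)$ have the same shape and the same cycle-closing pinnings. They differ only at leaf copies of $\Lambda$, where the $\sigma$-pinning lies below the $\tau$-pinning, so the leaf ratios satisfy $R^{T_\sigma}_u \ge R^{T_\tau}_u$. Each $F_w$ in \eqref{eq:tree-recursion} is coordinatewise nondecreasing when $\beta_e\gamma_e\ge 1$, so bottom-up induction gives $R^{T_\sigma}_v\ge R^{T_\tau}_v$. Weitz's correspondence (\Cref{prop:marginal-distributions-are-identical}) then converts this into the claim.

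\textbf{What the paper's route buys.} Given the machinery already set up, this is very short. However, the free vertices of $V\setminus\Lambda$ are handled entirely by Weitz's nontrivial theorem. It also yields only the one-site inequality. The paper afterwards has to assemble the full domination of \Cref{prop:monotone-of-ferro-ising} from it by sequential sampling.

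\textbf{What your Holley argument buys.} It is self-contained: a finite check on vertex and edge factors, where the edge check is exactly $\beta_e\gamma_e\ge 1$ and the vertex check is $\lambda'_w(\sigma)\ge\lambda'_w(\tau)$. It directly gives the stronger statement $\mu^\sigma_U\preceq_D\mu^\tau_U$, which is \Cref{prop:monotone-of-ferro-ising} itself. It uses strict positivity of the weights, which holds because all parameters are positive.

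\textbf{Your coupling alternative.} This is the most elementary of the three. It needs only the fully pinned one-site odds formula, together with irreducibility and aperiodicity of the heat-bath chain on $\mu^\sigma_U$ (again from positivity and holding probability). There is no circularity with the paper's later use of this lemma to build monotone grand couplings. Order-preservation of a single update only invokes the fully pinned case, and the general case is recovered in the limit.
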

\begin{proof}
We fix a vertex $v\in V\setminus \Lambda$ and prove~\eqref{eq:monotone-block-dynamics}. Consider the SAW tree $T_{\sigma}=T_{\text{SAW}}(G,v,\sigma)$ and $T_{\tau}=T_{\text{SAW}}(G,v,\tau)$, which differ only in the pinning of the leaf nodes.
For any vertex $w$ in the SAW tree, define $p_w^{T_{\sigma}}$ and $p_w^{T_{\tau}}$ as the marginal probabilities of the vertex $w$ in the sub-trees $T_{w,\sigma}$ and $T_{w,\tau}$ rooted at $w$, respectively. Because $\sigma\preceq \tau$, for any leaf node $u$ with pinning in $T_{\sigma},T_{\tau}$, the pinning in $T_{\sigma}$ is at most\footnote{Here, we compare the value $\{0,1\}$ of pinning. The value 0 is smaller than the value 1} the pinning in $T_{\tau}$.
We have $R_u^{T_{\sigma}}=\frac{p_u^{T_{\sigma}}(0)}{p_u^{T_{\sigma}}(1)}\geq R_u^{T_{\tau}}=\frac{p_u^{T_{\tau}}(0)}{p_u^{T_{\tau}}(1)}$. For each parameter of the recursion function $F_w(\cdot)$ in~\eqref{eq:tree-recursion}, where $w$ is any non-leaf node in the SAW tree, the recursion function is monotone increasing with the parameter. We can recursively prove that for any non-leaf node $w$ in the SAW tree, we have $R_w^{T_{\sigma}}\geq R_w^{T_{\tau}}$, from bottom to top. Using a inductive proof, for the root node $v$, we can show that $R_v^{T_{\sigma}}\geq R_v^{T_{\tau}}$, which implies $\mu_v^{\sigma_{V\setminus \Lambda}}(1) \leq \mu_v^{\tau_{V\setminus \Lambda}}(1)$. 
\end{proof}

Now, we first prove \Cref{prop:monotone-coupling} and then \Cref{prop:monotone-of-ferro-ising}.

Let us consider the heat-bath block dynamics at first.
Let $b = |\+B|$. Assume $V=\{v_1,v_2,\ldots,v_n\}$. We construct the monotone coupling $f$ as follows. For any configuration $\sigma \in \Omega$ and $r=(r_0r_1\ldots r_n)\in [0,1]^{n+1}$, we determine the configuration $f(\sigma,r) \in \{0,1\}^V$.
There exists $i\in [1,b]$ such that $r_0\in [(i-1)/b,i/b)$, and we choose the $i$-th block $B_i=\{v_{i_1},v_{i_2},\ldots,v_{i_j}\}$, where $1\leq i_1<i_2<\ldots<i_j\leq n$. 
To simplify the notation, let $\rho = f(\sigma,r)$.
We set $\rho(V\setminus B_i)=\sigma(V\setminus B_i)$. Let $B_i^k={\{v_{i_k},v_{i_2},\ldots,v_{i_j}\}}$ for $1\leq k \leq j+1$. We need to resample vertices in $B_i$ conditioned on $\sigma(V\setminus B_i)$, and we recursively decide the value
of $\rho(v_{i_k})$ in increasing order of $k$, such that 
\begin{align}\label{eq:recursive-v_i}
    \rho(v_{i_k})\sim \mu_{v_{i_k}}^{\rho(V\setminus B_i^k)}.
\end{align}
Assume that we have decided the value of $\rho(V\setminus B_i^k)$.
If $r_k\leq \mu_{v_{i_k}}^{\rho(V\setminus B_i^k)}(1)$, we set $\rho(v_{i_k})=1$. Otherwise, we set $\rho(v_{i_k})=0$.
It is easy to verify that for any $\sigma \in \Omega$, the distribution of $f(\sigma,\boldsymbol{r}) = \rho$ is exactly the distribution of one step of the heat-bath block dynamics on $\mu$ starting from $\sigma$. This proves that $f$ is a valid coupling.
We only need to check that $f(\sigma,r)\preceq f(\tau,r)$ with probability 1 if $\sigma \preceq \tau$. 
To simplify the notation, let $\rho = f(\sigma,r)$ and $\rho' = f(\tau,r)$.
We first have $\sigma(V \setminus B_i) = \rho(V\setminus B_i)\preceq \rho'(V\setminus B_i) = \tau(V \setminus B_i)$. 
Assume that we have $\rho(V\setminus B_i^k)\preceq \rho'(V\setminus B_i^k)$, for some $0\leq k\leq j-1$. By \Cref{lemma:monotone-glauber-dynamics}, we have $\mu_{v_{i_k}}^{\rho(V\setminus B_i^k)}(1)\leq \mu_{v_{i_k}}^{\rho'(V\setminus B_i^k)}(1)$. 
By our construction, $\rho(v_{i_k})\leq \rho'(v_{i_k})$, so we have $\rho(V\setminus B_i^{k+1})\preceq \rho'(V\setminus B_i^{k+1})$. By induction, we have $\rho(V\setminus B_i^{j+1})\preceq \rho'(V\setminus B_i^{j+1})$, which implies $\rho\preceq \rho'$. Hence, $f$ is a monotone coupling of the heat-bath block dynamics on $\mu$. Since our analysis holds for any $B_i\subseteq V$, we can couple two dynamics to pick the same block, then the block dynamics part in \Cref{prop:monotone-coupling} holds.

For systematic scan block dynamics, we can construct the monotone coupling in the same way as above, except that we choose the block $B_i$ according to the systematic scan order instead of the random choice. Using the above argument for each block $B_i$ and doing an induction on all blocks  shows that there exists a monotone coupling of the systematic-scan block dynamics on $\mu$. This proves the systematic scan block dynamics part in \Cref{prop:monotone-coupling}.

Finally, \Cref{prop:monotone-of-ferro-ising} is a simple consequence of the above prove. The above proof works for all blocks $B_i\subseteq V$. Given two configurations $\sigma,\tau \in \{0,1\}^\Lambda$, where $\Lambda \subseteq V$ is any subset, if $\sigma \preceq \tau$, we can use the same process (with $B_i = V \setminus \Lambda$) to couple $ X \sim \mu^\sigma_{V \setminus \Lambda}$ and $Y \sim \mu^\tau_{V \setminus \Lambda}$ such that $X \preceq Y$ with probability 1. This proves \Cref{prop:monotone-of-ferro-ising}.

\subsection{Proof of \texorpdfstring{\Cref{claim:censoring}}{Claim 35}}
We first list some definitions about the comparison between Markov chains.
\begin{definition}[Increasing function]
    We say a function $f:\{0,1\}^V\to \mathbb{R}$ is increasing if for any $\sigma,\tau\in \{0,1\}^V$ with $\sigma\preceq \tau$, it holds that $f(\sigma)\leq f(\tau)$.
\end{definition}

\begin{definition}[Monotone Markov chain]
    We say a Markov chain with transition matrix $P$ on $\{0,1\}^V$ is monotone if for any increasing function $f:\{0,1\}^V\to \mathbb{R}$, $Pf$ is also an increasing function.
\end{definition}

\begin{definition}
    Let $\mu$ be a distribution over $\{0,1\}^V$.
For two monotone Markov chains $P$ and $Q$  on $\{0,1\}^V$, we say $P\preceq_{mc} Q$ if for any increasing function $f,g:\{0,1\}^V\to \mathbb{R}_+$, we have
\begin{align*}
    \left \langle Pf,g \right \rangle_{\mu} \leq  \left \langle Qf,g \right \rangle_{\mu},
\end{align*}
where $\left \langle f_1,f_2 \right \rangle_{\mu}:=\sum_{x\in \{0,1\}^V} f_1(x)f_2(x)\mu(x)$ for any functions $f_1,f_2:\{0,1\}^V\to \mathbb{R}$.
\end{definition}

Fix a distribution $\mu$ over $\{0,1\}^V$.
For any block $B \subseteq V$, let $P_B$ be the transition matrix of the block update on $B$: given any $\sigma \in \{0,1\}^V$, $P_B$ resamples the configuration on $B$ conditional on the current configuration of other variables: $\sigma(B) \sim \mu_{B}^{\sigma(V \setminus B)}$. Similarly, $P_{B \cap S}$ is the transition matrix of the block update on $B\cap S$. The following monotonicity result is known.
\begin{lemma}[\text{\cite{BlancaCV20}}, Proof of Lemma 15]\label{lemma:monotone-each-block}
For any block $B\subseteq V$ and subset $S\subseteq V$, 
\begin{align*}
    P_{B}\preceq_{mc} P_{B\cap S}.
\end{align*}
\end{lemma}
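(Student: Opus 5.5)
We must show that for increasing $f,g\ge 0$, $\langle P_B f,g\rangle_\mu \le \langle P_{B\cap S} f,g\rangle_\mu$. The plan is to factor the larger block update through the smaller one. Write $B' = B\cap S$ and $A = B\setminus S$, so $B = A\uplus B'$. A heat-bath update on $B$ resamples $(A,B')$ jointly from $\mu_B^{\sigma(V\setminus B)}$. Resampling $A$ first from its conditional marginal and then $B'$ given $A$ gives the same law. Since $B'\subseteq B$, conditioning on more variables does not change the target, so $P_B = P_{B'}P_B = P_B P_{B'}$. Moreover each block update is a $\mu$-orthogonal projection onto the functions not depending on the block: $P_B$ is self-adjoint and idempotent in $L^2(\mu)$, and the range of $P_B$ is contained in the range of $P_{B'}$.

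Using these identities, I would rewrite both sides as
\begin{align*}
\langle P_B f,g\rangle_\mu = \langle P_B f, P_B g\rangle_\mu, \qquad \langle P_{B'} f,g\rangle_\mu = \langle P_{B'} f, P_{B'} g\rangle_\mu .
\end{align*}
Set $F = P_{B'}f$ and $G = P_{B'}g$. Then $P_B f = P_B F$ and $P_B g = P_B G$, so it suffices to show
\begin{align*}
\langle P_B F, P_B G\rangle_\mu \le \langle F, G\rangle_\mu .
\end{align*}
Conditioning on $\sigma(V\setminus B)$, the left side is $\mathbb{E}\bigl[\mathbb{E}[F\mid\cdot]\,\mathbb{E}[G\mid\cdot]\bigr]$ and the right side is $\mathbb{E}[FG]$. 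Hence the claim reduces to the pointwise positive-correlation statement $\mathrm{Cov}_{\mu_B^{\eta}}(F,G)\ge 0$ for every boundary condition $\eta$ on $V\setminus B$.

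First, by \Cref{prop:monotone-of-ferro-ising} (equivalently \Cref{lemma:monotone-glauber-dynamics} applied along a monotone sequential coupling), $P_{B'}$ maps increasing functions to increasing functions, so $F$ and $G$ are increasing. Second, the conditional measure $\mu_B^\eta$ is again a ferromagnetic two-spin system by \Cref{obs:self-reducibility}. Ferromagnetic two-spin systems satisfy the FKG lattice condition $\mu(x\vee y)\mu(x\wedge y)\ge\mu(x)\mu(y)$: the vertex factors multiply out exactly, and for each edge the inequality reduces to $\beta_e\gamma_e\ge 1$. Harris--FKG then gives nonnegative covariance of increasing functions.

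The main obstacle is this covariance step. It requires verifying the edgewise lattice inequality carefully, including the mixed case where $x$ and $y$ disagree on both endpoints. If only the monotone-coupling facts from the appendix are to be used, I would instead prove positive correlation directly by induction on $|B|$, revealing one vertex at a time and using \Cref{lemma:monotone-glauber-dynamics} to compare conditional laws. The remaining identities are routine. Nonnegativity of $f,g$ is not needed beyond integrability, and the monotonicity required of $P$ and $Q$ in the definition of $\preceq_{mc}$ follows from the first step.
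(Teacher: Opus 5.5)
Your proof is correct. The paper gives no argument of its own here; it only cites the proof of Lemma~15 in \cite{BlancaCV20}. Your route is the standard argument behind that citation: you treat $P_B$ and $P_{B\cap S}$ as nested $\mu$-conditional expectations, so that with $F=P_{B\cap S}f$ and $G=P_{B\cap S}g$ increasing, $\langle P_{B\cap S}f,g\rangle_\mu-\langle P_Bf,g\rangle_\mu=\mathbb{E}_\mu\bigl[\mathrm{Cov}(F,G\mid X(V\setminus B))\bigr]$, and you then use positive correlation of the block conditional measure. The step you flagged is not really an obstacle. The only non-trivial edge case of the lattice condition (configurations disagreeing in opposite directions on both endpoints) is exactly $\beta_e\gamma_e\ge 1$, and your fallback induction via \Cref{prop:monotone-of-ferro-ising} works equally well.
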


Next, recall that $\preceq_D$ is the stochastic dominance relation for two distributions defined in \Cref{claim:censoring}.
\begin{proposition}[\text{\cite[Proposition 22.7]{LP17}}]\label{prop:monotone-equiv-definition}
For any Markov chain $P$ on $\{0,1\}^V$, the following three statements are equivalent:
\begin{itemize}
    \item $P$ is a monotone Markov chain;
    \item For any two configurations $\sigma,\sigma'\in \{0,1\}^V$ with $\sigma\preceq \sigma'$, we have $P(\sigma,\cdot)\preceq_D P(\sigma',\cdot)$;
    \item for any two distributions $\nu_0\preceq_D \nu_1$, we have $\nu_0P\preceq_D \nu_1P$.
\end{itemize}
\end{proposition}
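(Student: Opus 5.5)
The statement to prove is the equivalence in \Cref{prop:monotone-equiv-definition}. I would prove the cycle of implications (1)$\Rightarrow$(3)$\Rightarrow$(2)$\Rightarrow$(1). The basic tool is the standard characterisation of stochastic dominance on the finite poset $\{0,1\}^V$ with the order $\preceq$: $\nu_0\preceq_D\nu_1$ if and only if $\nu_0 f\le \nu_1 f$ for every increasing $f$. One direction is immediate from the definition of $\preceq_D$ via a monotone coupling $(X,Y)$, since then $f(X)\le f(Y)$ almost surely. The converse is Strassen's theorem on finite posets. It can be proved by max-flow/min-cut, or equivalently by LP duality applied to the transportation problem restricted to the pairs $x\preceq y$; the dual certificate is exactly an indicator of an up-set. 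I would cite this (it is standard, e.g.\ in \cite{LP17}) rather than reprove it.

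\textbf{(1)$\Rightarrow$(3).} Suppose $P$ is monotone and $\nu_0\preceq_D\nu_1$. For any increasing $f$, the function $Pf$ is increasing, so
\[
(\nu_0P)f=\nu_0(Pf)\le \nu_1(Pf)=(\nu_1P)f.
\]
By the characterisation above, this gives $\nu_0P\preceq_D\nu_1P$.

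\textbf{(3)$\Rightarrow$(2).} Take $\nu_0=\delta_\sigma$ and $\nu_1=\delta_{\sigma'}$. Since $\sigma\preceq\sigma'$, the trivial coupling $(\sigma,\sigma')$ shows $\delta_\sigma\preceq_D\delta_{\sigma'}$. Moreover $\delta_\sigma P=P(\sigma,\cdot)$, so (3) gives $P(\sigma,\cdot)\preceq_D P(\sigma',\cdot)$.

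\textbf{(2)$\Rightarrow$(1).} Let $f$ be increasing and $\sigma\preceq\sigma'$. By (2) and the easy direction of the characterisation,
\[
Pf(\sigma)=P(\sigma,\cdot)f\le P(\sigma',\cdot)f=Pf(\sigma'),
\]
so $Pf$ is increasing and $P$ is monotone.

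Only the first and third implications use the characterisation of $\preceq_D$ through increasing test functions. The main obstacle is therefore the nontrivial direction of Strassen's theorem: going from ``$\nu_0 f\le\nu_1 f$ for all increasing $f$'' to the existence of a monotone coupling. If a self-contained proof is required, I would argue via the max-flow min-cut theorem on the bipartite network with arcs $x\to y$ for $x\preceq y$. In that network every cut of capacity less than $1$ would produce an up-set $U$ with $\nu_0(U)>\nu_1(U)$, and $f=\mathbb 1_U$ is increasing, which contradicts the hypothesis.
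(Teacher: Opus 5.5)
Your argument is correct. The cycle (1)$\Rightarrow$(3)$\Rightarrow$(2)$\Rightarrow$(1) uses point masses for (3)$\Rightarrow$(2) and Strassen's theorem to pass between the paper's coupling definition of $\preceq_D$ and the increasing-test-function criterion; this is the standard proof behind \cite[Proposition 22.7]{LP17}, which the paper cites without reproducing, and you correctly identify the coupling direction of Strassen's theorem in (1)$\Rightarrow$(3) as the only nontrivial input, with a valid max-flow/min-cut sketch for it.
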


By the proof of \Cref{prop:monotone-of-ferro-ising}, the second condition in \Cref{prop:monotone-equiv-definition} holds for transition matrix $P_B$ corresponding to any block $B\subseteq V$. Hence, $P_B$ is a monotone Markov chain. 

\begin{lemma}[\text{\cite{fill2013comparison}}, Prop. 2.3, 2.4]\label{lemma:monotone-mc-property}
Let $P_i$, $Q_i$ are Markov chains that are reversible w.r.t. $\mu$ and monotone for $i\in \{1,2,\ldots,\ell\}$, the following statements hold:
\begin{itemize}
    \item If $P_i\preceq_{mc} Q_i$ for each $i\in \{1,2,\ldots,\ell\}$, then $\frac{1}{\ell}\sum_{i=1}^\ell P_i \preceq_{mc} \frac{1}{\ell}\sum_{i=1}^\ell Q_i$; 
    \item If $P_i\preceq_{mc} Q_i$ for each $i\in \{1,2,\ldots,\ell\}$, then $P_1P_2\cdots P_\ell \preceq_{mc} Q_1Q_2\cdots Q_\ell$.
\end{itemize}
\end{lemma}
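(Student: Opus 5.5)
Both items follow from the definition of $\preceq_{mc}$, the reversibility of each $P_i,Q_i$ with respect to $\mu$, and monotonicity. The first item needs only linearity. For increasing $f,g\ge 0$,
\[
\left\langle \tfrac1\ell\textstyle\sum_i P_i f, g\right\rangle_\mu=\tfrac1\ell\textstyle\sum_i\langle P_if,g\rangle_\mu\le \tfrac1\ell\textstyle\sum_i\langle Q_if,g\rangle_\mu .
\]
An average of monotone chains is again monotone, since it maps increasing functions to convex combinations of increasing functions. Hence the averaged chains satisfy the definition of $\preceq_{mc}$.

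For the product I would induct on $\ell$ by telescoping, so the core case is $\ell=2$: show $P_1P_2\preceq_{mc}Q_1Q_2$. I would insert the intermediate term $\langle Q_1P_2f,g\rangle_\mu$. For the first comparison, set $h=P_2f$. Since $P_2$ is monotone and preserves nonnegativity, $h$ is increasing and nonnegative, so $P_1\preceq_{mc}Q_1$ gives
\[
\langle P_1P_2f,g\rangle_\mu=\langle P_1h,g\rangle_\mu\le\langle Q_1h,g\rangle_\mu .
\]
For the second comparison I use reversibility of $Q_1$, which makes $Q_1$ self-adjoint in $L^2(\mu)$. Thus $\langle Q_1P_2f,g\rangle_\mu=\langle P_2f,Q_1g\rangle_\mu$, and $Q_1g$ is increasing and nonnegative because $Q_1$ is monotone. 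Applying $P_2\preceq_{mc}Q_2$ to the pair $(f,Q_1g)$ gives
\[
\langle P_2f,Q_1g\rangle_\mu\le\langle Q_2f,Q_1g\rangle_\mu=\langle Q_1Q_2f,g\rangle_\mu .
\]
The products are monotone because compositions of monotone chains map increasing functions to increasing functions.

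For general $\ell$ I would telescope through the hybrids $H_j=Q_1\cdots Q_jP_{j+1}\cdots P_\ell$. To compare $H_{j-1}$ with $H_j$, I move $Q_1\cdots Q_{j-1}$ to the other side using self-adjointness of each factor, giving the adjoint $Q_{j-1}\cdots Q_1$. This adjoint is a composition of monotone maps, so it sends $g$ to an increasing nonnegative function. On the other side, $P_{j+1}\cdots P_\ell f$ is increasing and nonnegative. Applying $P_j\preceq_{mc}Q_j$ to these two functions gives $\langle H_{j-1}f,g\rangle_\mu\le\langle H_jf,g\rangle_\mu$, and chaining over $j$ proves the claim.

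The only delicate point is the adjoint step. Reversibility is needed precisely so that the left factors can be transferred onto $g$ while keeping it increasing. I also need that $\preceq_{mc}$ is tested only on nonnegative increasing functions, and that monotone stochastic matrices preserve nonnegativity, which is immediate. No other obstacle is expected.
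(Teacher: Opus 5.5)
Your proof is correct. The paper does not prove this lemma itself and only cites Fill--Kahn. Your argument is the standard one behind those propositions: linearity for the averaged chains, and for products a telescoping through the hybrids $Q_1\cdots Q_jP_{j+1}\cdots P_\ell$. In that telescoping you use self-adjointness of the reversible factors $Q_1,\dots,Q_{j-1}$ in $L^2(\mu)$ to move them onto $g$, which keeps the test functions nonnegative and increasing.

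Two points worth noting: your argument uses reversibility only of the $Q_i$, not of the $P_i$. You also correctly checked that the averaged and composed chains are themselves monotone, which the definition of $\preceq_{mc}$ requires.
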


The following lemma holds for both the heat-bath and the systematic scan block dynamics.

\begin{lemma}\label{lemma:monotone-censoring}
Let $P$ be the transition matrix of a block dynamics on $\mu$ with a set of blocks $B = \{B_1, B_2, \ldots, B_r\}$. Let $P_S^{\textnormal{censored}}$ be the transition matrix of the censored block dynamics on $\mu$ w.r.t. $S\subseteq V$, then $P\preceq_{mc} P_S^{\textnormal{censored}}$.
\end{lemma}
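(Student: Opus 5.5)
The plan is to reduce the statement to the two closure properties in \Cref{lemma:monotone-mc-property}, using \Cref{lemma:monotone-each-block} as the building block. Write $P_{B_i}$ for the exact block update on $B_i$ and $P_{B_i\cap S}$ for the censored block update on $B_i\cap S$.

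In the heat-bath case, $P=\frac{1}{r}\sum_{i=1}^r P_{B_i}$ and $P_S^{\textnormal{censored}}=\frac{1}{r}\sum_{i=1}^r P_{B_i\cap S}$. In the systematic scan case, $P=P_{B_r}\cdots P_{B_1}$ and $P_S^{\textnormal{censored}}=P_{B_r\cap S}\cdots P_{B_1\cap S}$. Both identifications follow directly from the definitions, since the censored dynamics picks or scans blocks in the same way and only replaces the update on $B_i$ by the update on $B_i\cap S$. By \Cref{lemma:monotone-each-block}, $P_{B_i}\preceq_{mc}P_{B_i\cap S}$ holds termwise for every $i$.

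Before applying \Cref{lemma:monotone-mc-property}, I must check its hypotheses for every factor. Each $P_B$ (including $B\cap S$, and the trivial identity update when $B\cap S=\emptyset$) is a heat-bath update. For any $\sigma,\tau$, $\mu(\sigma)P_B(\sigma,\tau)=\mu(\sigma)\mu_B^{\sigma(V\setminus B)}(\tau(B))\mathbb{1}[\sigma(V\setminus B)=\tau(V\setminus B)]$, which is symmetric in $(\sigma,\tau)$, so $P_B$ is reversible with respect to $\mu$. Monotonicity of each $P_B$ holds because the coupling built in the proof of \Cref{prop:monotone-coupling} gives the second condition of \Cref{prop:monotone-equiv-definition}. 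Note that the lemma only needs reversibility of the individual factors, not of the product.

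The two cases then follow from the two bullets of \Cref{lemma:monotone-mc-property}: averaging gives $P\preceq_{mc}P_S^{\textnormal{censored}}$ for heat-bath, and the product bullet gives it for systematic scan.

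The only delicate point is the systematic scan case, where the product of reversible chains is not reversible. So I must make sure the product statement is being applied with hypotheses only on the factors $P_{B_i}$ and $P_{B_i\cap S}$, not on $P$ itself. As stated, \Cref{lemma:monotone-mc-property} does exactly this, so I expect no real obstacle beyond matching the ordering convention of the product to that of $P_{\text{Scan}}$. If an issue arose there, I would invoke the product bullet inductively, two factors at a time, noting that products of monotone chains are monotone by \Cref{prop:monotone-equiv-definition}.
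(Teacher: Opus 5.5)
Your proposal is correct and matches the paper's proof: you obtain the termwise comparison $P_{B_i}\preceq_{mc}P_{B_i\cap S}$ from \Cref{lemma:monotone-each-block}, then apply the averaging bullet of \Cref{lemma:monotone-mc-property} in the heat-bath case and the product bullet in the systematic-scan case. Your explicit checks that each factor $P_B$ is reversible with respect to $\mu$ and monotone are the same hypotheses the paper invokes (more briefly), so nothing is missing.
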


\begin{proof}
By \Cref{lemma:monotone-each-block}, we have $P_{B_i}\preceq_{mc} P_{B_i\cap S}$ for each $i\in [r]$. 
We also have $P_{B_i}$ and $P_{B_i\cap S}$ are reversible with stationary distribution $\mu$ for each $i\in [r]$. For the heat-bath block dynamics, by the first statement of \Cref{lemma:monotone-mc-property}, we have 
$$P = \frac{1}{r}\sum_{i=1}^r P_{B_i} \preceq_{mc} \frac{1}{r}\sum_{i=1}^r P_{B_i\cap S} = P_S^{\textnormal{censored}}.$$
 Hence, the result holds for heat-bath block dynamics.
For systematic scan block dynamics. By the second statement of \Cref{lemma:monotone-mc-property}, we have 
\begin{align*}
    P = P_{B_r}P_{B_{r-1}}\cdots P_{B_1} \preceq_{mc} P_{B_r\cap S}P_{B_{r-1}\cap S}\cdots P_{B_1\cap S} = P_S^{\textnormal{censored}}. 
\end{align*}
Hence, the result holds for systematic scan block dynamics.
\end{proof}

Let $\mu$ be a distribution over $\{0,1\}^V$.
Let $A \subseteq 2^V$ be a collection of censoring subsets.
For any block dynamics $P$ on $\{0,1\}^V$ with stationary distribution $\mu$, and $P\preceq_{mc} P_S^{\textnormal{censored}}$ for $S\in A$.
Let $S_1, S_2, \ldots$ be a sequence of censoring subsets in $A$.
Let $(X_t)_{t\geq 0}$ be the heat-bath or systematic scan block dynamics on $\mu$ with transition matrix $P$ and block set $\+B =\{B_1, B_2, \ldots, B_r\}$.
Let $(Y_t)_{t\geq 0}$ be the censored block dynamics on $\mu$ with transition matrix $P_{S_i}^{\textnormal{censored}}$ in step $i$.
Formally, the transition matrix of $(Y_t)_{t\geq 0}$ in $i$-th step is
\begin{align*}
\begin{cases}
P_{S_i}^{\textnormal{censored}} = \frac{1}{r}\sum_{j=1}^r P_{B_j\cap S_i} & \text{if $P$ is heat-bath block dynamics}, \\
P_{S_i}^{\textnormal{censored}} = P_{B_r\cap S_i}P_{B_{r-1}\cap S_i}\cdots P_{B_1\cap S_i} & \text{if $P$ is systematic scan block dynamics}.
\end{cases}
\end{align*}

We use the following result in our proof.
\begin{lemma}[\text{\cite{BlancaCV20}}, Theorem 7]\label{lemma:monotone-censoring-to-dominance}
Suppose two initial configurations $X_0$, $Y_0$ are both sampled from the same distribution $\nu$ over $\{0,1\}^V$.
The following properties hold:
\begin{itemize}
    \item If $\nu/\mu$ is increasing, where $\nu/\mu(x) = \frac{\nu(x)}{\mu(x)}$, then for any $t\geq 0$, $X_t\preceq_D Y_t$;
    \item If $-\nu/\mu$ is increasing, then for any $t\geq 0$, $ Y_t\preceq_D X_t$.
\end{itemize}
\end{lemma}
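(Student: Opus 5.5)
The plan is to follow the Peres--Winkler censoring argument, working with densities relative to $\mu$. Write $f^X_t \defeq \mathrm{Law}(X_t)/\mu$ and $f^Y_t \defeq \mathrm{Law}(Y_t)/\mu$, so $f^X_0=f^Y_0=\nu/\mu$. Stochastic dominance $X_t\preceq_D Y_t$ is equivalent to $\langle f^X_t,g\rangle_\mu\le\langle f^Y_t,g\rangle_\mu$ for every increasing $g$. Since $\langle f,1\rangle_\mu=1$ for any density, adding a constant to $g$ changes both sides equally, so we may assume $g\ge 0$. The proof is by induction on $t$ and has two ingredients: (a) the density of the censored chain stays increasing, and (b) a one-step comparison that uses $P\preceq_{mc}P^{\textnormal{censored}}_{S}$.

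\textbf{(a) Propagation of increasing densities.} For a single block kernel $K=P_B$ (or $P_{B\cap S}$), which is reversible with respect to $\mu$, the density of $\nu' K$ is
\[
\frac{(\nu'K)(y)}{\mu(y)}=\sum_x \frac{\nu'(x)}{\mu(x)}\cdot\frac{\mu(x)K(x,y)}{\mu(y)}=(Kh)(y), \qquad h \defeq \nu'/\mu .
\]
Each block update is monotone by \Cref{prop:monotone-equiv-definition} and the proof of \Cref{prop:monotone-of-ferro-ising}, so $Kh$ is increasing whenever $h$ is. The heat-bath censored step $\frac1r\sum_j P_{B_j\cap S_i}$ is an average of such kernels and therefore maps the density $h$ to $\frac1r\sum_j P_{B_j\cap S_i}h$, which is again increasing. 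The systematic-scan censored step is a product of such kernels, even though the product itself is not reversible. Applying the block kernels one at a time, the density becomes $P_{B_1\cap S_i}P_{B_2\cap S_i}\cdots P_{B_r\cap S_i}h$, i.e.\ the blocks act in reverse order, and each application preserves monotonicity. Hence, by induction, $f^Y_t$ is increasing and nonnegative for all $t$.

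\textbf{(b) One-step comparison.} Let $g\ge0$ be increasing. By the definition of the chains, $\langle f^X_t,g\rangle_\mu=\mathbb E[(Pg)(X_{t-1})]=\langle f^X_{t-1},Pg\rangle_\mu$. Since $P$ is monotone (it is an average or product of monotone kernels), $Pg$ is increasing. The induction hypothesis $X_{t-1}\preceq_D Y_{t-1}$ then gives
\[
\langle f^X_{t-1},Pg\rangle_\mu\le\langle f^Y_{t-1},Pg\rangle_\mu .
\]
Next we apply $P\preceq_{mc}P^{\textnormal{censored}}_{S_t}$ with the pair of increasing nonnegative functions $(g,f^Y_{t-1})$; this is exactly where (a) is needed. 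It yields
\[
\langle Pg,f^Y_{t-1}\rangle_\mu\le\langle P^{\textnormal{censored}}_{S_t}g,f^Y_{t-1}\rangle_\mu=\langle f^Y_t,g\rangle_\mu .
\]
Chaining the two inequalities closes the induction and proves the first item.

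\textbf{Second item.} Here $-\nu/\mu$ is increasing, so every $f^Y_t$ is decreasing; part (a) applies equally to decreasing functions. Write $f^Y_{t-1}=c-h$ with $h\ge 0$ increasing and $c$ a constant. Since $P$ and $P^{\textnormal{censored}}_{S_t}$ are both $\mu$-stationary, $\langle Pg,c\rangle_\mu=c\,\mathbb E_\mu g=\langle P^{\textnormal{censored}}_{S_t}g,c\rangle_\mu$. Therefore the $\preceq_{mc}$ inequality applied to $(g,h)$ reverses: $\langle Pg,f^Y_{t-1}\rangle_\mu\ge\langle P^{\textnormal{censored}}_{S_t}g,f^Y_{t-1}\rangle_\mu$. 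The rest is the same induction with all inequalities reversed.

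\textbf{Main obstacle.} The delicate step is (a) for the systematic scan, since the product kernel is not reversible. The resolution is to track the density through each reversible block factor separately rather than through the product. One should also verify that $\preceq_{mc}$ is applied to $P^{\textnormal{censored}}_{S_t}$ in the correct order for that step; this is guaranteed by \Cref{lemma:monotone-censoring}.
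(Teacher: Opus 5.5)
Your argument is correct. The paper gives no proof of this lemma; it imports the statement from \cite{BlancaCV20} (Theorem~7), which is a censoring inequality in the Peres--Winkler/Fill--Kahn tradition. What you have written is therefore a self-contained proof of the cited result, not a reconstruction of anything in the text.

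Your route needs only ingredients the paper already has: reversibility and monotonicity of each single-block kernel $P_{B_j\cap S_i}$, and the step-wise comparison $P\preceq_{mc}P^{\textnormal{censored}}_{S_t}$ from \Cref{lemma:monotone-censoring}. It also makes explicit the one delicate point a reader would otherwise have to check in the reference. For the systematic scan, the density of the censored chain stays monotone because it is pushed through each $\mu$-reversible factor separately, in reverse order, even though the scan product itself is not reversible. In addition, your proof shows directly that the lemma holds with time-varying censoring sets $S_t$ and for both heat-bath and systematic-scan block dynamics, which is exactly how it is used in the proof of \Cref{claim:censoring}. The citation buys brevity; your version buys transparency about these hypotheses.

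The only outside fact you invoke is Strassen's theorem. It is needed to pass between the functional order, $\mathbb{E}[g(X_t)]\le\mathbb{E}[g(Y_t)]$ for all increasing $g$, and the coupling definition of $\preceq_D$ used in the paper. For the application in \Cref{thm:mixing}, which only compares probabilities of the form $\Pr[\,\cdot\,(v)=1]$, the functional form alone would already suffice.
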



Now we are ready to prove \Cref{claim:censoring}.
For parameters in \Cref{lemma:monotone-censoring-to-dominance}, we set $A = \{V, S_v\}$, $B_i=V$ for $1\leq i\leq s$ and $B_i=S_v$ for $i>s$.
Applying the result in \Cref{lemma:monotone-censoring} we have $P\preceq_{mc} P_{V} = P$ and $P\preceq_{mc} P_{S_v}$.
We first let $\nu=1^V$ and apply the first statement in \Cref{lemma:monotone-censoring-to-dominance} to obtain that for any $j>s$, $X_j^+\preceq_D Y_j^+$. 
Then we let $\nu=0^V$ and apply the second statement in \Cref{lemma:monotone-censoring-to-dominance} to obtain that for any $j>s$, $Y_j^-\preceq_D X_j^-$.
By inductively applying the third statement in \Cref{prop:monotone-equiv-definition}, we have for any $j>s$, $X_j^-\preceq_D X_j^+$. Combining above three relationships, we have
\[\forall j \geq 0, \quad Y_j^-\preceq_D X_j^- \preceq_D X_j^+\preceq_D Y_j^+.\] 
Indeed, the above relationships hold for any $j\geq 0$. 
\end{document}